%% file: paper.tex
\documentclass[11pt]{article}
\usepackage[T1]{fontenc}
\usepackage[utf8]{inputenc}
\usepackage{lmodern,microtype}
\usepackage[margin=1.05in]{geometry}
\usepackage{amsmath,amssymb,amsthm,mathtools}
\usepackage{booktabs,longtable,array,graphicx,xcolor}
\usepackage[most]{tcolorbox}
\usepackage{tikz}
\usetikzlibrary{arrows.meta,positioning}
\usepackage[round,authoryear]{natbib}
\usepackage[colorlinks=true,linkcolor=blue!45!black,citecolor=blue!45!black,urlcolor=blue!55!black]{hyperref}
\usepackage{cleveref,aliascnt}
\usepackage{xurl}
\usepackage{pdflscape}
\usepackage{placeins,needspace,capt-of}
\definecolor{revisionred}{HTML}{B00020}
\definecolor{ink}{HTML}{16384D}
\definecolor{teal}{HTML}{087F8C}
\definecolor{ochre}{HTML}{B36A19}
\newtheorem{theorem}{Theorem}[section]
\newaliascnt{lemma}{theorem}
\newtheorem{lemma}[lemma]{Lemma}
\aliascntresetthe{lemma}
\crefname{lemma}{Lemma}{Lemmas}
\Crefname{lemma}{Lemma}{Lemmas}
\newaliascnt{proposition}{theorem}
\newtheorem{proposition}[proposition]{Proposition}
\aliascntresetthe{proposition}
\crefname{proposition}{Proposition}{Propositions}
\Crefname{proposition}{Proposition}{Propositions}
\newaliascnt{corollary}{theorem}
\newtheorem{corollary}[corollary]{Corollary}
\aliascntresetthe{corollary}
\crefname{corollary}{Corollary}{Corollaries}
\Crefname{corollary}{Corollary}{Corollaries}
\newaliascnt{definition}{theorem}

\aliascntresetthe{definition}
\crefname{definition}{Definition}{Definitions}
\Crefname{definition}{Definition}{Definitions}
\theoremstyle{remark}\newaliascnt{remark}{theorem}
\newtheorem{remark}[remark]{Remark}
\aliascntresetthe{remark}
\crefname{remark}{Remark}{Remarks}
\Crefname{remark}{Remark}{Remarks}
\crefname{section}{Section}{Sections}
\crefname{subsection}{Section}{Sections}
\crefname{figure}{Figure}{Figures}
\crefname{table}{Table}{Tables}
\crefname{theorem}{Theorem}{Theorems}
\crefname{appendix}{Appendix}{Appendices}
\Crefname{appendix}{Appendix}{Appendices}
\newcommand{\E}{\mathbb E}

\newcommand{\eps}{\varepsilon}
\newcommand{\hb}{h_2}

\newcommand{\one}{\mathbf 1}
\newcommand{\bits}{\{0,1\}}
\newcommand{\osc}{\operatorname{osc}}

\newcommand{\certpath}[1]{\textup{\path{#1}}}
\newtcolorbox{certbox}[1][]{enhanced,breakable,colback=ink!3,colframe=ink!65,boxrule=.5pt,arc=1pt,left=8pt,right=8pt,top=7pt,bottom=7pt,#1}
\newtcolorbox{examplebox}[1][]{enhanced,breakable,colback=teal!3,colframe=teal!60,boxrule=.5pt,arc=2pt,left=8pt,right=8pt,top=7pt,bottom=7pt,#1}

\hypersetup{pdftitle={Binary Deletion Channel Capacity to Within One Hundredth of a Bit},pdfauthor={Dimitris Papailiopoulos},pdfsubject={Uniform capacity approximation with certified numerical bounds}}
\title{\textbf{Binary Deletion Channel Capacity\\to Within One Hundredth of a Bit}}
\author{Dimitris Papailiopoulos\thanks{This work was carried out with interacting language-model agents, including GPT-5.6 Sol, GPT-6 Astra, and Fable 5.1, in Codex and Claude Code. The agents developed the mathematical arguments, designed and ran the numerical checks, reviewed the proofs, and drafted and revised the manuscript. They exchanged intermediate results and transferred work across research, verification, and writing phases. The author posed the original question, steered the research directions, and directed the presentation. The recorded project began on 25 August 2026; the uniform numerical certificate was completed on 10 September, followed by further review and numerical replay.}\\[.4em]{\normalsize Microsoft Research \& University of Wisconsin}}
\date{}
\begin{document}
\maketitle
\begin{abstract}
The exact capacity of the binary deletion channel remains unknown despite decades of work on achievable rates and converse bounds. We establish a computer-assisted approximation whose error is below $0.0095$ bits per transmitted bit, uniformly over all deletion probabilities. The mean certified error bound, with uniform weighting of deletion probability, is below $0.006522$. The estimate is the midpoint of explicit lower and upper bounds. For the converse, a stationary-source reduction is combined with finite inequalities covering every allowed input configuration. Two constructions control the unobserved input beyond a finite window: one uses a common outside survivor sequence and bounds omitted deletion patterns, while the other cancels an entropy term to make outside probabilities enter linearly. For the lower bound, finite-state inputs combine output-entropy estimates with selected disjoint counts of compatible deletion masks; independent-run inputs retain additional uncertainty about output-run boundaries. Directed numerical checks establish the finite inequalities. An analytic comparison between deletion probabilities then extends the pointwise bounds over the entire parameter range. The lower endpoint supplies rates within $0.019$ bits of capacity in the asymptotic coding sense. We give the derivations, recorded computational costs, and complete numerical inputs and programs needed to verify the result.
\end{abstract}
\clearpage
{\setlength{\parskip}{0pt}\small\tableofcontents}
\clearpage
\input{sections/introduction}
\par\bigskip\Needspace{8\baselineskip}
\part{Foundations and the approximation theorem}\label{part:foundations}
\input{sections/setting}
\input{sections/proof_overview}
\input{sections/chapter_upper}
\input{sections/survivor}
\input{sections/shared}
\input{sections/posterior}
\input{sections/small_deletion}
\input{sections/chapter_lower}
\input{sections/source_lower}
\input{sections/renewal}
\input{sections/chapter_assembly}

\input{sections/transport}
\input{sections/arithmetic}
\input{sections/reproduction}
\input{sections/discussion}
\appendix
\crefalias{section}{appendix}
\clearpage
\part*{Supplementary derivations}
\addcontentsline{toc}{part}{Supplementary derivations}
\input{sections/standard}
\input{sections/fair_input}

\input{sections/small_assembly}
\input{sections/equivalences}
\setlength{\bibsep}{3pt}
\bibliographystyle{plainnat}
\bibliography{references}
\end{document}

%% file: sections/introduction.tex
\section{Introduction}\label{sec:introduction}
The binary deletion channel independently removes each transmitted bit with probability $d$. The receiver sees the surviving bits in their original order, but does not know their former positions. Unlike an erasure, a deletion leaves no marker. Thus the channel removes both a bit and the information needed to locate the bits that follow it; \cref{fig:intro-channel} illustrates the distinction. Its capacity $C(d)$ is the largest number of information bits per transmitted bit that can be communicated with vanishing error probability as the transmitted length grows.

\input{figures/intro_channel}

The problem belongs to the early study of synchronization errors. Shannon used recovery of text with missing letters to illustrate source redundancy \citep[Section~7]{Shannon48}; this was not a formula for deletion-channel capacity. \Citet{Gallager61} studied decoding with synchronization errors, and \citet{Dob67} established a coding theorem for channels with insertions and deletions. That theorem expresses deletion-channel capacity as a limit of finite-block mutual-information maxima. Numerical work began soon afterward: \citet{VD68} already studied computer evaluation for a symbol-deletion channel. An exact expression for this capacity has remained unknown for nearly six decades. The difficulty is not independence of the deletions themselves, but the missing correspondence between input and output positions.

For example, output $0$ can result from input $00$ by deleting either bit. From input $01$, it can result only by deleting the second bit. The probability distribution of the transmitted sequence therefore affects how much uncertainty remains about its deleted positions. Long blocks are used in coding for both the binary symmetric channel and the deletion channel. The difference is that a memoryless channel with identified output positions admits a one-bit capacity optimization, whereas hidden deletion positions prevent that usual simplification. \Cref{sec:setting} defines the finite-block optimization precisely, and \cref{sec:standard} supplies the information inequalities used below.

Independent losses without position information arise in packet communication, and synchronization errors also occur in DNA synthesis and sequencing \citep{DG01,DG06,Maarouf23}. These applications often include substitutions, insertions, or multiple reads as well. The binary deletion channel isolates the cost of missing positions in a setting simple enough to state exactly, yet difficult enough that even its one-parameter capacity function remains unknown. An accurate capacity approximation gives a quantitative target against which source constructions and codes can be assessed.

\subsection{Capacity approximation}
We construct lower and upper functions $L(d)$ and $U(d)$ and use the midpoint of their interval to estimate capacity:
\begin{equation}\label{eq:intro-main}
 L(d)\le C(d)\le U(d),\qquad
 \widehat C(d)=\frac{L(d)+U(d)}2,\qquad
 |C(d)-\widehat C(d)|\le\eps_\star<\frac1{100}
 \quad(0\le d\le1).
\end{equation}
The first two inequalities put capacity in the interval. Every point in an interval is at most half its width from its midpoint; hence a width bound $U(d)-L(d)\le2\eps_\star$ proves the last inequality. The computer-assisted bound gives $\eps_\star<0.0095$ bit per input bit; \cref{thm:main} states its exact value. Its guarantee holds at every real deletion probability, including probabilities between the finitely many numerical evaluation points. \Cref{fig:intro-capacity} compares the enclosure with selected prior bounds.

\begin{figure}[!htbp]\centering
\includegraphics[width=.92\linewidth]{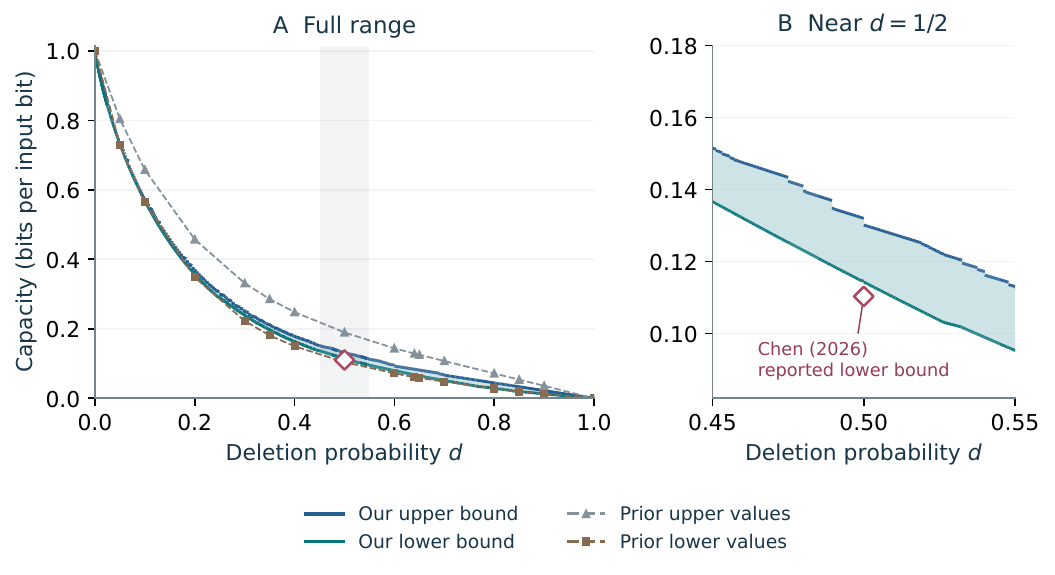}
\caption{Capacity bounds in bits per input bit. Left: our lower and upper bounds enclose capacity over the full parameter range; the gray vertical band marks the range enlarged on the right. Historical markers show the selected values of \citet{DM07,VTR13,RC23,PR26}; dashed lines between them are visual guides. The separate marker at $d=1/2$ shows the reported lower bound of \citet{Chen26}, whose computation was not independently replayed here. The right panel makes the new enclosure visible without changing the vertical units.}\label{fig:intro-capacity}
\end{figure}

The certified error varies with $d$. The mean of $(U(d)-L(d))/2$, with equal weight on all $d\in[0,1]$, is at most $0.006522$ bit per input bit. This is a bound on mean estimation error, not a measurement of the unknown error. The lower endpoint has a separate coding meaning: every nonnegative rate strictly below $L(d)$ is achievable in the Shannon sense: a sequence of increasing-length codes has error probability tending to zero. The midpoint need not be achievable. Accordingly, the midpoint guarantee is half the one-sided tolerance for the achievable lower bound. The one-bit result of \citet{etkin2008} concerns an achievable approximation to a capacity region; our statement concerns a numerical estimate of the scalar capacity function.

\subsection{Prior bounds and the contribution of this work}\label{sec:intro-related}
Lower bounds have improved by choosing input sequences with useful dependence and recovering information discarded by simpler decoding analyses. Markov inputs, in which a bit depends on a finite recent history, were studied by \citet{DG01,DG06}. Trellis methods provide another finite-state route \citep{CK15}. Run-based constructions describe the lengths of maximal equal-bit segments, such as the runs $000$ and $11$ in $00011$ \citep{DM06,DM07}. \Citet{KD10,VTR13} improved the accounting of the information such sources retain. \Citet{RC23} optimized run-length distributions and obtained stronger pointwise values as well as a lower bound proportional to $1-d$ throughout the range. \Cref{tab:intro-history} records selected numerical milestones, with the source of each value identified.

\Citet[Theorem~27]{Chen26} reports $C(1/2)>0.11032415$ for an explicit renewal source. We include this stronger reported comparison in \cref{tab:intro-history}. We have not independently replayed that computation; the public record inspected on September 12, 2026 listed a preprint but no machine-readable certificate.

\input{data/intro_history}

Upper bounds have followed several complementary routes. Revealing some deleted positions or block boundaries gives the receiver an easier channel whose capacity upper-bounds the original one \citep{DMP07,FD10}. Output-distribution bounds and convex duality offer another route \citep{DMP07,Cher19,CR19}; for suitable finite-state channels, graph-based output distributions can be combined with dynamic programming \citep{HSPKS21}. The recent calculations of \citet{PR26} improve finite auxiliary-channel bounds through a parallel Blahut--Arimoto implementation. Parameter comparisons allow computed bounds to extend beyond their evaluation points \citep{Dalai11,RD15}. These contributions address different parts of the problem; selected pointwise comparisons do not by themselves give a prior uniform approximation theorem.

We prove a uniform numerical enclosure of capacity from explicit finite inequalities. For upper bounds, the inequalities cover every input distribution; for lower bounds, finite calculations evaluate specified input processes. The principal analytic task is to replace dependence on arbitrarily distant input positions by finite expressions without restricting the allowed inputs. The principal numerical task is to verify all required expressions with controlled rounding and combine their values over the full parameter range.

The converse begins with the classical upper bound obtained by comparing the actual output distribution with a chosen auxiliary distribution \citep{DMP07,HSPKS21}. A short group of received bits can originate beyond any fixed input window. We derive two tests that cover that unknown input: the first checks every common outside survivor string and bounds the omitted deletion cases; the second cancels a conditional entropy so that the remaining expression is an average over outside strings, bounded by its largest value. Their derivation overviews define these constructions and give the full finite tests in \eqref{eq:shared-row} and~\eqref{eq:posterior-row}. A separate run-length converse follows the modified-deletion approach of \citet{KM10}. These are specific sufficient conditions for the deletion channel, not a new general duality framework.

For finite-state inputs, we lower-bound output entropy and residual deletion-mask entropy. For independent run lengths, we evaluate the classical run-group rate and restore conditional uncertainty about where output runs begin. Each complete rate uses one fixed source. The counting and conditioning principles have direct predecessors: the action-entropy identity in \citet{HOS16}, the monotone synchronization-state refinements in \citet{ISW16}, and the run-based analyses of \citet{DM07,KD10}. For renewal sources, the endpoint-window method, its monotone conditional-mutual-information increments, integer segmentation posteriors, and directed entropy evaluation were developed in \citet{Chen26}. We use these constructions in \cref{sec:renewal}. The raw-bit endpoints in \cref{sec:source} differ from renewal-run endpoints, but the underlying conditioning and counting principles are shared. Our contribution is the specified finite tests, source laws and selected event counts, together with their verified combination into a capacity approximation over the full deletion-probability interval.

Small-deletion expansions describe the endpoint near zero \citep{KMS10,KM13}; capacity-approaching coding theorems address asymptotic encoding and decoding \citep{TPFV22,Rub22,pernice2022}. Neither task by itself evaluates the unrestricted capacity function at interior parameters. Likewise, results for uniform input \citep{DSV12,PIW24,JP26}, finite block length \citep{MD26}, or a fixed number of deletions \citep{SSBY25} concern different quantities. The present work evaluates information rates; it does not supply an efficient finite encoder and decoder attaining the reported rates. Surveys provide broader context \citep{Mit09,MBT10,CR21}.

\subsection{From the methods to a uniform theorem}
The finite tests establish pointwise bounds. To obtain a uniform approximation, these bounds must also cover the probabilities between evaluation points.

For $d<1$, the normalized capacity $C(d)/(1-d)$ is nonincreasing \citep{RD15}. If $C(a)\le u$ and $C(b)\ge\ell$, then for $a\le d\le b<1$,
\[
 (1-d)\frac{\ell}{1-b}\le C(d)\le(1-d)\frac{u}{1-a}.
\]
Normalized capacity at $d$ lies between its values at $a$ and $b$; multiplying by $1-d$ proves the display. Thus one can bound the vertical difference between these capacity estimates throughout a parameter interval. Exact arithmetic checks these differences and verifies coverage. A lower bound proportional to $1-d$, valid throughout the range, handles the final interval approaching one; $C(0)=1$ and $C(1)=0$ are exact.

\Cref{tab:reading-route} gives a reading route by purpose. The short roadmap in \cref{sec:proof-overview} gives the whole argument. Each method then begins with its own derivation overview and continues directly to the complete proof and finite computation.

\begin{table}[!htbp]\centering\small
\begin{tabular}{@{}>{\raggedright\arraybackslash}p{5.5cm}>{\raggedright\arraybackslash}p{8.1cm}@{}}\toprule
Purpose & Where to read\\\midrule
Read the result at a high level & The Introduction and \cref{sec:proof-overview}; \cref{thm:main} states the guarantee.\\
Understand one upper-bound method & \Cref{sec:roadmap-shared,sec:roadmap-posterior,sec:roadmap-runs}; full proofs in \cref{sec:shared,sec:posterior,sec:small-deletion}.\\
Understand how a lower bound is computed & Finite-state inputs: \cref{sec:source-computation}. Renewal inputs: \cref{sec:renewal-baseline-computation,cor:renewal-universal-assembly}.\\
Separate the analytic proof from its numerical premises & Upper calculations: \cref{sec:shared-computation,sec:posterior-computation}. All method checks, experiments and reproduction: \cref{sec:reproduction}.\\
Understand prior work and what remains open & The contribution discussion above and the bottlenecks in \cref{sec:limits}.\\\bottomrule
\end{tabular}
\caption{Reading route through one paper. The global roadmap gives the whole argument; each method overview is located beside its complete derivation. Numerical verification and limitations have separate destinations.}\label{tab:reading-route}
\end{table}

%% file: figures/intro_channel.tex
\begin{figure}[!htbp]\centering
\begin{tikzpicture}[x=.72cm,y=.7cm,font=\small,text=ink]
\node[anchor=east] at(-.6,2.3){Input position};
\node[anchor=east] at(-.6,1.5){Transmitted bits};
\foreach \i/\b in {1/1,2/0,3/1,4/1,5/0,6/0}{
 \node at(\i,2.3){\i};
 \node[draw=black!35,rounded corners=1pt,minimum size=.5cm,fill=white] at(\i,1.5){$\b$};
}
\foreach \i in {2,5}{\draw[ink,line width=.8pt] (\i-.27,1.18)--(\i+.27,1.82);}
\node[anchor=west,align=left] at(7.0,1.5){Positions 2 and 5\\are removed in this realization.};
\draw[-{Latex},black!50] (3.5,.9)--(3.5,.2);
\node[anchor=east] at(-.6,-.35){Deletion output};
\foreach \i/\b in {1/1,2/1,3/1,4/0}{\node[draw=teal!65,fill=teal!4,rounded corners=1pt,minimum size=.5cm] at(\i,-.35){$\b$};}
\node[anchor=west,align=left] at(7.0,-.35){The receiver gets $1110$.\\The missing positions are not marked.};
\node[anchor=east] at(-.6,-1.55){Erasure output};
\foreach \i/\b in {1/1,2/{?},3/1,4/1,5/{?},6/0}{\node[draw=black!35,rounded corners=1pt,minimum size=.5cm] at(\i,-1.55){$\b$};}
\node[anchor=west,align=left] at(7.0,-1.55){An erasure marker preserves\\each surviving bit's input position.};
\end{tikzpicture}
\caption{A deletion removes a bit and closes the space it occupied. In this example, deleting positions 2 and 5 from $101100$ produces $1110$. The deletion receiver observes neither the erased-position markers nor the input-position indices shown for explanation. An erasure channel would instead return $1?11?0$, retaining the six position slots.}\label{fig:intro-channel}
\end{figure}

%% file: data/intro_history.tex
\begin{table}[!htbp]
\centering\small
\setlength{\tabcolsep}{4pt}
\begin{tabular}{@{}>{\raggedright\arraybackslash}p{4.5cm}rrrrr@{}}
\toprule
Source & $d=0.1$ & $d=0.3$ & $d=0.5$ & $d=0.7$ & $d=0.9$\\
\midrule
\multicolumn{6}{@{}l}{\textit{Capacity lower bounds: larger is stronger}}\\[3pt]
\citet{DM06,MD06} & .537944 & .161947 & .059250 & .035550 & .011850\\
\citet{DM07} & .561960 & .222430 & .101860 & .045324 & .012378\\
\citet{VTR13} & .563800 & .222500 & --- & --- & ---\\
\citet{RC23} & --- & --- & .104075 & .047726 & .012379\\
\citet{Chen26}, reported & --- & --- & .11032415 & --- & ---\\
\textbf{This work} & \textbf{.566606} & \textbf{.239035} & \textbf{.114540} & \textbf{.051276} & \textbf{.012928}\\[4pt]
\midrule
\multicolumn{6}{@{}l}{\textit{Capacity upper bounds: smaller is stronger}}\\[3pt]
\citet{DMP07} & .704279 & .446639 & .311087 & .208112 & .100000\\
\citet{FD10} & .689000 & .362000 & .212000 & .126000 & .049000\\
\citet{RD15} & --- & --- & --- & .124290 & .041430\\
\citet{RC23} & .676200 & .351300 & .201500 & .112350 & .037450\\
\citet{PR26} & .657700 & .331400 & .189600 & .107340 & .035780\\
\textbf{This work} & \textbf{.572443} & \textbf{.250840} & \textbf{.130162} & \textbf{.066530} & \textbf{.022177}\\
\bottomrule
\end{tabular}
\caption{Selected historical capacity bounds, in bits per input bit. A dash means that no value from that source is included in this comparison. The first lower row takes the stronger of \citet{DM06} and the universal $0.1185(1-d)$ bound of \citet{MD06}. The \citet{RD15} row uses $0.4143(1-d)$ for $d\ge0.65$. At $d=0.7,0.9$, the 2023 and 2026 upper rows use $0.3745(1-d)$ and $0.3578(1-d)$, respectively. The \citet{Chen26} entry reproduces its strict reported lower bound; we did not replay its numerical computation, and its public record inspected on September 12, 2026 did not supply a machine-readable certificate. Our displayed lower values are rounded down and upper values up. The \citet{DMP07} row uses Table~I of the authors' prepublication version. Other historical entries reproduce reported decimals, with trailing zeros for alignment. The table compares specified points, not prior uniform approximation guarantees.}
\label{tab:intro-history}
\end{table}

%% file: sections/setting.tex
\section{Channel model and capacity approximation}\label{sec:setting}
\subsection{Codes, input distributions, and capacity}
Fix a deletion probability $d$, known to the encoder and decoder. We write $X_1^n=(X_1,\ldots,X_n)$ for a binary input sequence of length $n$; this notation lists the first and last input indices. Independently of that sequence, let $D_i=1$ if position $i$ is deleted and $D_i=0$ otherwise. The $D_i$ are independent, with $\Pr(D_i=1)=d$. The output $Y_n$ consists of the undeleted bits in order. Its observed length is $N_n\sim\operatorname{Bin}(n,1-d)$. The receiver knows $Y_n$ and hence $N_n$, but not the deleted positions.

A length-$n$ codebook is a set $\mathcal C_n\subseteq\bits^n$ of $M_n$ distinct sequences. An encoder assigns one sequence to each of $M_n$ equally likely messages, and a decoder estimates the message from $Y_n$. Its rate is $\log_2M_n/n$ bits per transmitted bit. A rate is achievable if some sequence of codes has error probability tending to zero and rates with lower limit at least that rate. Capacity is the supremum of achievable rates. This definition is called \emph{operational} because it concerns encoders and decoders. All rates below use transmitted bits as the denominator, and logarithms have base two unless $\ln$ is written.

Dobrushin's theorem identifies this capacity with the information limit
\begin{equation}\label{eq:capacity-limit}
 C_n(d)=\frac1n\max_{P_{X_1^n}}I(X_1^n;Y_n),\qquad
 C(d)=\lim_{n\to\infty}C_n(d).
\end{equation}
The maximum is over probability distributions on all $2^n$ binary sequences, not over codebooks or decoders. For example, probabilities $3/4$ and $1/4$ on $00$ and $11$ are allowed; equally likely messages using the codebook $\{00,11\}$ instead give probabilities $1/2$ and $1/2$. The deletion channel meets Dobrushin's hypotheses: each input symbol independently produces itself or an empty string, its output length is bounded by one, and its mean is $1-d>0$ for $d<1$. The endpoint $d=1$ has constant empty output and capacity zero directly \citep{Dob67}; see also Theorem~2.7 of the extended version of \citet{pernice2022}.

The existence of this limit and its coding interpretation require a theorem. They do not follow for every channel merely by defining mutual information. For memoryless channels, the usual coding theorem reduces the optimization to a single input symbol. For the deletion channel, $C_n(d)$ is an upper bound on $C(d)$ and need not be achievable. For instance, $C_1(d)=1-d$: if the only bit survives, it is known, and otherwise the output is empty. A fair input bit attains this value. In a long transmission, however, an empty position is not identified. The finite-block comparison in \cref{sec:proof-overview} quantifies this distinction.

\subsection{The exact functions being reported}
The file \path{data/cells.csv} defines a capacity enclosure on 4,931 closed intervals covering $[0,1]$. A row contains interval endpoints $x_i,y_i$, normalized upper and lower coefficients $A_i,B_i$, and the identifiers of the bounds that justify them. All these numbers are stored as exact fractions. On the interior of that interval, define
\begin{equation}\label{eq:cell-functions}
 L_i(d)=(1-d)B_i,\qquad U_i(d)=(1-d)A_i.
\end{equation}
Thus the table specifies functions at every real parameter, rather than only values at a finite mesh.

The factor $1-d$ comes from normalized-capacity monotonicity \citep{RD15}. Specifically, an upper bound $u$ at $a\le x_i$ gives $A_i=u/(1-a)$, and a lower bound $\ell$ at $b\ge y_i$ gives $B_i=\ell/(1-b)$ when $b<1$. Multiplication by $1-d$ converts these rates per expected surviving bit back to rates per input bit. A bound $C(d)\ge c(1-d)$ valid for every $d$ supplies $B_i=c$ directly. The underlying point bounds, called \emph{anchors}, are listed in \path{data/anchors.csv}. Both tables, the manuscript source, and the verification programs are available at \url{https://github.com/anadim/binary-deletion-channel-capacity}. Section~\ref{sec:transport} proves every extension direction.

Set $L=L_i$ and $U=U_i$ on interval interiors. At a boundary belonging to two adjacent intervals, both rows give valid enclosures. Use their intersection: the larger lower value and the smaller upper value. Finally set $L(0)=U(0)=1$ and $L(1)=U(1)=0$, the noiseless and complete-deletion capacities. These rules define $L$, $U$, and $\widehat C=(L+U)/2$ everywhere.

\begin{theorem}[Capacity to within one hundredth of a bit]\label{thm:main}
Let $L$ and $U$ be defined from the rational table \path{data/cells.csv}. Under the arithmetic model in \cref{sec:arithmetic}, the analytic bounds and completed numerical checks establish
\[
 L(d)\le C(d)\le U(d),\qquad U(d)-L(d)\le2\eps_\star
 \quad(d\in[0,1]).
\]
Consequently their midpoint satisfies
\begin{equation}\label{eq:main}
\boxed{\quad |\widehat C(d)-C(d)|\le\eps_\star
=\frac{2908466681147}{306240000000000}<0.0095<\frac1{100}
\quad(d\in[0,1]).\quad}
\end{equation}
The largest computed cell width occurs on $[521/1000,1043/2000]$.
\end{theorem}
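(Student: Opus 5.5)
The proof is an assembly argument, and I would organize it cell by cell over the table \path{data/cells.csv}. It has four parts: verify coverage, justify each cell's coefficients, handle boundaries and endpoints, and bound the widths.

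\emph{Coverage.} First I would check in exact rational arithmetic that the 4,931 closed intervals $[x_i,y_i]$ cover $[0,1]$. Concretely, after sorting, $x_1=0$, $y_{\text{last}}=1$, and each $x_{i+1}\le y_i$. With coverage in hand, every $d\in(0,1)$ lies in the interior of some cell or on a shared boundary.

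\emph{Validity of each cell.} For each row I would confirm that $A_i$ and $B_i$ are justified by cited anchors. An upper anchor $C(a)\le u$ with $a\le x_i$ gives $A_i\ge u/(1-a)$. A lower anchor $C(b)\ge\ell$ with $y_i\le b<1$ gives $B_i\le \ell/(1-b)$. Alternatively, a universal bound $C(d)\ge c(1-d)$ gives $B_i\le c$. The monotonicity of $C(d)/(1-d)$ \citep{RD15}, recalled in the introduction and proved in \cref{sec:transport}, then gives $(1-d)B_i\le C(d)\le(1-d)A_i$ for all $d$ in the cell interior.

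The anchors themselves are the output of the earlier sections:
\begin{itemize}
\item upper anchors come from the shared-survivor tests \eqref{eq:shared-row}, the posterior tests \eqref{eq:posterior-row}, and the run-length converse of \cref{sec:small-deletion};
\item lower anchors come from the finite-state source computations of \cref{sec:source} and the renewal computations of \cref{sec:renewal}, together with the universal linear lower bound.
\end{itemize}
Each of these rests on finite inequalities verified under the directed-rounding model of \cref{sec:arithmetic}. I would take those pointwise anchor theorems as given, so this step is a bookkeeping check that every identifier in the table resolves to a certified anchor meeting the stated direction constraint.

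\emph{Boundaries and endpoints.} At a shared boundary point, both adjacent rows give valid enclosures. Intersecting them keeps $L\le C\le U$, and it can only shrink the width. At the endpoints, $C(0)=1$ and $C(1)=0$ are exact: the first is the noiseless channel, and the second follows from Dobrushin's remark on $d=1$.

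\emph{Width bound.} On cell $i$ the width is $U_i(d)-L_i(d)=(1-d)(A_i-B_i)$. This is affine in $d$ and nonnegative on the cell, because $A_i\ge B_i$ is forced by validity. Its supremum over the cell is therefore $(1-x_i)(A_i-B_i)$. Note that the factor $1-x_i$ makes this a sharp rational bound rather than a loose one. It then suffices to check exactly, for every $i$, that
\[
(1-x_i)(A_i-B_i)\le 2\eps_\star .
\]
The maximum over $i$ should be attained on $[521/1000,1043/2000]$ and should equal $2\eps_\star$ as given. Since the midpoint of an interval is within half its width of every point, this yields \eqref{eq:main}. Finally, $\eps_\star<0.0095$ is a direct rational comparison.

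The step I expect to be the main obstacle is not this assembly but the validity of the anchors. In particular, the upper anchors must hold over \emph{all} input distributions, which requires the outside-window control and the rounding model to be sound. Within the proof proper, I would concentrate care on two points. The first is the direction constraints: upper anchors must be taken at or left of the cell and lower anchors at or right of it, and an anchor at $b=1$ cannot be used. The second is the final cell near $d=1$, which relies on the universal linear lower bound rather than a transported point bound.
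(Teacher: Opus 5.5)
Your proposal is correct and follows essentially the same route as the paper's proof. The paper transports the certified anchors through the monotonicity of $C(d)/(1-d)$ to obtain the cell enclosures $(1-d)B_i\le C(d)\le(1-d)A_i$, bounds each cell's width by its left-endpoint value $(1-x_i)(A_i-B_i)$, checks all 4,931 rows in exact arithmetic (maximum $2\eps_\star$ on $[521/1000,1043/2000]$), notes that intersecting at shared boundaries can only shrink the width, and takes half the width as the midpoint error; your explicit coverage and anchor-direction checks are the same bookkeeping as the paper's verification algorithm in its transport section.
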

\begin{proof}[Proof structure]
The methods in \cref{sec:survivor,sec:shared,sec:posterior,sec:small-deletion,sec:source,sec:renewal} prove finite sufficient inequalities for upper and lower bounds. The completed numerical checks identified in \cref{sec:reproduction} establish those inequalities for the reported parameters. The comparison in \cref{sec:transport} then gives the cell enclosures above.

On cell $[x_i,y_i]$, subtraction in \eqref{eq:cell-functions} gives width $(1-d)(A_i-B_i)$. Since $A_i\ge B_i$ and $d\ge x_i$, this is at most $(1-x_i)(A_i-B_i)$. Exact rational arithmetic checks all 4,931 rows and finds maximum $2\eps_\star$. Intersecting enclosures at a shared boundary can only decrease their width. Capacity lies between the two endpoints, so its distance from their midpoint is at most half the width. The complete roadmap in \cref{sec:proof-overview} explains the finite inequalities that supply the point bounds.
\end{proof}

\begin{corollary}[Achievable rates within $0.019$ bits]\label{cor:achievable}
Every rate strictly below $L(d)$ is achievable, and
\[
 0\le C(d)-L(d)\le\Delta_\star:=2\eps_\star
 =\frac{2908466681147}{153120000000000}<0.019
 \quad(d\in[0,1]).
\]
\end{corollary}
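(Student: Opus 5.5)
The corollary has two parts: the inequality chain and the achievability claim. I would derive the first directly from \cref{thm:main} and the second from Dobrushin's coding theorem in \eqref{eq:capacity-limit}, combined with the way $L$ was built from lower-bound anchors. For the inequality chain, \cref{thm:main} gives $L(d)\le C(d)\le U(d)$ and $U(d)-L(d)\le 2\eps_\star$ for every $d\in[0,1]$. Subtracting then yields
\[
0\le C(d)-L(d)\le U(d)-L(d)\le 2\eps_\star=\Delta_\star .
\]
The value of $\Delta_\star$ is simply twice the rational $\eps_\star$ in \eqref{eq:main}, and its denominator halves to $153120000000000$. To check $\Delta_\star<0.019$, it is enough to double $\eps_\star<0.0095$.

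For achievability, I would rely on the operational meaning of capacity in \cref{sec:setting}. By Dobrushin's theorem, capacity defined as the supremum of achievable rates equals $C(d)$. The set of achievable rates is downward closed, because one can discard messages: a code of rate $R$ yields codes at any smaller nonnegative rate with no larger maximal error, and hence no larger average error after passing to subcodes of equally likely messages. It is also closed under limits from below in the sense of the supremum. So every rate $R<C(d)$ is achievable. Since $L(d)\le C(d)$, every rate $R<L(d)$ satisfies $R<C(d)$ and is therefore achievable. At $d=1$ we have $L(1)=0$, so the claim concerns no nonnegative rate and is vacuous. At $d=0$ it is the noiseless case.

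The main subtlety I expect is the step from ``below the supremum'' to ``achievable''. By definition, any $R<C(d)$ lies below some achievable rate $R'$, and a code sequence for $R'$ can be thinned to rate $R$: keep $\lfloor 2^{nR}\rfloor$ codewords. This preserves vanishing error if we work with maximal error, or if we first expurgate the worst half of the codewords under average error. That thinning step is the only point needing care. The substantive content, $L(d)\le C(d)$ and the width bound, is entirely inherited from \cref{thm:main}.
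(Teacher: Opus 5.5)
Your proposal is correct and follows the paper's proof: it subtracts $L(d)$ from $C(d)\le U(d)$ and applies the width bound of \cref{thm:main}, and it gets achievability from $L(d)\le C(d)$ together with the definition of capacity as the supremum of achievable rates. The thinning step you flag as the main subtlety is not needed. Under the definition in \cref{sec:setting}, a rate is achievable when some code sequence has vanishing error and rates whose lower limit is at least that rate, so the same sequence that witnesses an achievable $R'>R$ already witnesses $R$.
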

\begin{proof}
The lower bound gives $L(d)\le C(d)$, so every strictly smaller rate is achievable by the definition of capacity. Subtract $L(d)$ from $C(d)\le U(d)$ and apply the width bound in \cref{thm:main}.
\end{proof}
The midpoint is an estimate, and need not be achievable. The achievable-rate tolerance is twice its error guarantee. This distinction also separates our scalar estimation statement from the achievable-region convention of \citet{etkin2008}. Neither statement is a uniform relative-error guarantee.

\subsection{Accuracy over the parameter range}
\Cref{tab:points} gives representative values, and \cref{fig:approximation} enlarges the enclosure. For $r(d)=(U(d)-L(d))/2$, integration of the pointwise error inequality gives
\[
 \int_0^1|\widehat C(d)-C(d)|\,\mathrm dd
 \le\int_0^1r(d)\,\mathrm dd\le0.006521572012.
\]
The last bound is the exact integral of the piecewise-linear radius, rounded upward. It uses uniform weighting of deletion probability. Section~\ref{sec:discussion} derives the integral and the distribution of tighter tolerances; the actual error relative to unknown capacity is not observed.

\input{data/representative_points.tex}
\begin{center}\begin{minipage}{\linewidth}\centering
\includegraphics[width=\linewidth]{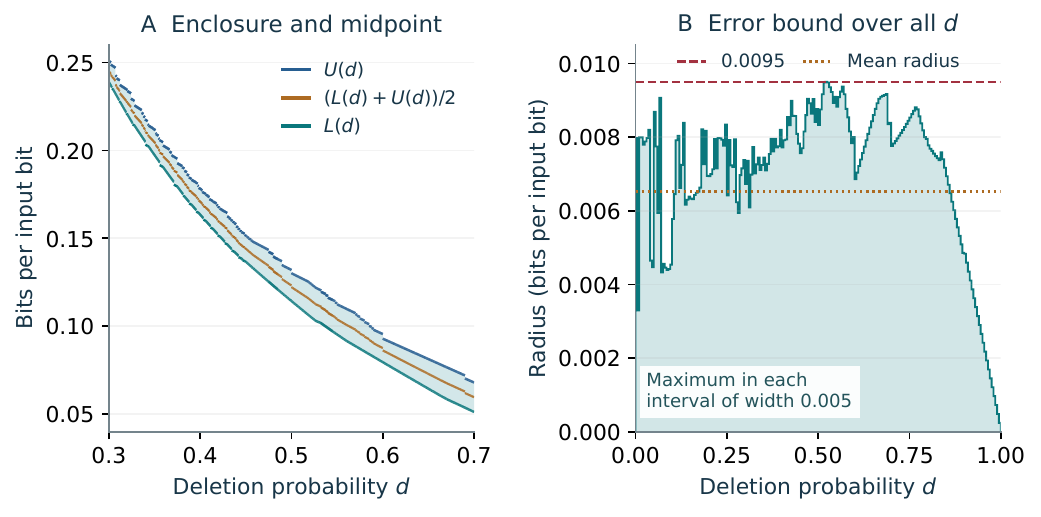}
\captionof{figure}{Capacity enclosure and approximation error. Left: the certified lower and upper bounds, with their midpoint, over $0.3\le d\le0.7$. Right: for each interval of width $0.005$, the plotted height is the largest certified radius $(U-L)/2$ in that interval. This conservative display bounds the radius at every $d$ without smoothing the certificate. The dashed line is $0.0095$ bit per input bit; the dotted line is the mean radius, computed from the exact integral over $[0,1]$.}\label{fig:approximation}
\end{minipage}\end{center}

\subsection{Analytic and numerical verification}
An analytic result states a finite set of inequalities sufficient for a capacity bound. A numerical checker then evaluates every required expression for fixed input probabilities, auxiliary output probabilities, and potentials. These arrays are exact numerical parameters of the proof, not communication codebooks. Interval arithmetic supplies enclosing endpoints for logarithms and entropy expressions: an enclosure $[0.127,0.129]$ proves an upper bound of $0.13$, whereas $[0.127,0.131]$ does not. Search chooses useful parameters; its convergence or optimality is unnecessary once the complete inequalities pass.

The proof assumes correct integer operations and the stated enclosing arithmetic, including the compiler and hardware operations used to implement directed rounding. Section~\ref{sec:arithmetic} gives their error bounds. Section~\ref{sec:reproduction} distinguishes evaluation of the 894,758-byte cell table, exact recomposition of recorded bounds, and replay of the numerical inequalities. It also identifies the large external arrays and the resources required for a full replay. No proof-assistant formalization of the analytic arguments or checker correctness is claimed.
\FloatBarrier

%% file: data/representative_points.tex
\begin{table}[!htbp]\centering\small
\begin{tabular}{@{}rrrrr@{}}\toprule
$d$ & Lower (down) & Upper (up) & Displayed estimate & Error at most\\\midrule
0 & 1.000000000 & 1.000000000 & 1.000000000 & 0.000000000\\
0.01 & 0.922119599 & 0.928537991 & 0.925328794 & 0.003209197\\
0.05 & 0.729835514 & 0.736458607 & 0.733147060 & 0.003311547\\
0.1 & 0.566606555 & 0.572442887 & 0.569524720 & 0.002918167\\
0.2 & 0.356748472 & 0.367278532 & 0.362013501 & 0.005265031\\
0.3 & 0.239035421 & 0.250839634 & 0.244937527 & 0.005902107\\
0.4 & 0.163509876 & 0.178217231 & 0.170863553 & 0.007353678\\
0.5 & 0.114540813 & 0.130161383 & 0.122351098 & 0.007810285\\
0.6 & 0.079600140 & 0.092932209 & 0.086266174 & 0.006666035\\
0.64 & 0.067490721 & 0.083638988 & 0.075564854 & 0.008074134\\
0.7 & 0.051276954 & 0.066529623 & 0.058903288 & 0.007626335\\
0.8 & 0.028844591 & 0.044353082 & 0.036598836 & 0.007754246\\
0.9 & 0.012928926 & 0.022176541 & 0.017552733 & 0.004623808\\
1 & 0.000000000 & 0.000000000 & 0.000000000 & 0.000000000\\
\bottomrule\end{tabular}
\caption{The current certificate at representative probabilities. The error column includes rounding of the displayed midpoint. Exact values and the selected anchors are retained in the data register.}\label{tab:points}
\end{table}

%% file: sections/proof_overview.tex
\section{Proof roadmap}\label{sec:proof-overview}
An upper bound must hold for every input distribution; a lower bound needs one input process with the asserted information rate. Figure~\ref{fig:proofmap} shows the overall argument. Each method begins with a derivation overview beside its full proof, so the reader can follow one construction from its entropy identity to the finite test without reading the other methods first.

\subsection{Why direct finite-block optimization is too large}
A direct first approach is to maximize the mutual information of $n$ transmitted bits. The comparison of \citet{FD10}, proved here in \cref{lem:block}, gives
\[
 C_n(d)-\frac{\log_2(n+1)}n\le C(d)\le C_n(d).
\]
For the upper direction, supply the receiver with boundaries between successive length-$n$ block outputs. This makes the channel easier. Removing those boundaries costs at most $\log_2(n+1)$ bits per block, because its output length has $n+1$ possible values; independent input blocks give the lower direction. At $n=512$, half this tolerance is below $10/1024<1/100$, since $513<2^{10}$. The optimization, however, has $2^{512}$ input sequences. The following methods replace that optimization by smaller finite sufficient tests while leaving transmitted length and input dependence unrestricted.

\input{figures/proof_map_inline}

\subsection{Three upper-bound constructions}
Start with $I(X;Y)=H(Y)-H(Y\mid X)$. For a converse, bound output entropy above and the residual channel randomness below. An auxiliary probability assignment $Q$ gives $H(Y)\le\mathbb E[-\log_2Q(Y)]$ by nonnegativity of relative entropy. The expectation uses the true output law. Section~\ref{sec:roadmap-upper} proves this inequality, removes the unknown input distribution by a maximum, and reduces the entropy difference to a fixed number of surviving output symbols.

Those symbols can originate arbitrarily far into the input. The first construction checks every inspected input word and every allowed outside survivor string, using the same outside string for all retained deletion cases. It pays explicitly for the cases not retained. The second rewrites the local cost, using an entropy cancellation, as an average over outside strings. That average is at most its largest term, so checking every fixed outside string covers every outside distribution. Their complete tests are \eqref{eq:shared-row} and \eqref{eq:posterior-row}; derivations and proofs are together in \cref{sec:shared,sec:posterior}.

A separate converse is useful near zero deletion. It reverses selected deletions, including those that can merge neighboring runs, and bounds the entropy cost of that change. An auxiliary law describes the resulting output runs. A scalar inequality is checked for every permitted run-length case; averaging then preserves the same bound for any input-run distribution. Section~\ref{sec:small-deletion} derives the run-description bound, its pair and triple corrections, and the finite row and tail tests. A scalar specialization at the end illustrates the calculation.

\subsection{Two lower-bound constructions}
For a chosen source, the deletion mask $D_1^n$ and input $X_1^n$ determine the trace $Y_n$. Expanding their conditional joint entropy in the two possible orders gives
\[
 I(X_1^n;Y_n)=H(Y_n)-nh_2(d)+H(D_1^n\mid X_1^n,Y_n).
\]
Thus output entropy and residual mask entropy both enter with a positive sign; the independent deletion-mask entropy is the subtracted term. Every estimate in this identity must refer to the same source.

For a finite-state input, the source transition matrix determines all input-word probabilities and the state transitions between survivors. Revealing a boundary state supplies a computable lower bound on entropy per survivor. Counting deletion masks compatible with an input and output supplies the second positive term. A nonnegative decomposition allows selected disjoint events to be counted with their original probabilities; omitted positive terms weaken the result safely. Section~\ref{sec:source} starts with the rate formula and its dependencies, then specifies the source and derives the two entropy estimates beside their finite calculations.

For independent input-run lengths, describe the input runs that merge into one output run. The classical run-group rate is supplemented by conditional uncertainty about where output runs begin. Finite run blocks compute that correction, while explicit bounds control every subtracted tail. A Poisson repeat calculation, with its own normalization and channel-transfer proof, supplies a lower bound proportional to $1-d$ valid throughout the parameter range. These calculations and their dependencies appear together in \cref{sec:renewal}.

\subsection{Where computation enters and how the bounds combine}
The mathematical results prove that their finite tests imply capacity bounds. Numerical search proposes source probabilities, auxiliary distributions, potentials, and truncation lengths. A selected candidate is frozen as an exact numerical object before verification. Interval arithmetic checks all expressions required by the selected theorem, including remainders. Upper tests use upper interval endpoints; lower tests use lower endpoints for positive terms and upper endpoints for subtracted terms. Optimizer scores alone are not proof premises.

After completing each bound, take the largest lower value and the smallest upper value that are valid at a given parameter. The monotonicity of $C(d)/(1-d)$ extends checked point bounds in the directions derived in \cref{sec:setting}. Exact rational arithmetic verifies interval coverage and width, then the midpoint gives the approximation. Section~\ref{sec:roadmap-numerics} identifies which families are selected and where; \cref{sec:arithmetic,sec:reproduction} give arithmetic and replay details. Section~\ref{sec:limits} states why the computation stopped, which interval currently controls the error, and what further improvements would require.
\FloatBarrier

%% file: figures/proof_map_inline.tex
\begin{figure}[!htbp]\centering
\begin{tikzpicture}[>=Stealth,text=ink,
 box/.style={draw=ink!40,rounded corners=2pt,align=left,text width=5.85cm,inner sep=7pt,font=\small\hyphenpenalty10000},
 wide/.style={box,text width=12.65cm,align=center},arr/.style={->,thick,draw=ink!60}]
\node[wide,fill=ink!3] (goal) {\textbf{One target, two different proof obligations}\\
$L(d)\le C(d)\le U(d)$; midpoint error at most $(U(d)-L(d))/2$.};
\node[box,minimum height=2.35cm,fill=ink!3,below=.45cm of goal.south west,anchor=north west] (u) {\textbf{Upper: cover every input distribution}\\
Start from $I=H(Y)-H(Y\mid X)$.\\
An auxiliary output law bounds $H(Y)$ above; \cref{lem:increments} bounds channel entropy below.};
\node[box,minimum height=2.35cm,fill=teal!4,right=.45cm of u] (l) {\textbf{Lower: choose one input process}\\
Start from $I=H(Y)-nh_2(d)+H(D\mid X,Y)$; \cref{lem:mask-identity}.\\
Both unknown entropy terms need lower bounds.};
\node[box,minimum height=2.55cm,fill=ink!3,below=.4cm of u] (uf) {\textbf{Make the converse finite}\\
\Cref{sec:shared}: one common outside sequence and a remainder.\\
\Cref{sec:posterior}: entropy cancellation.\\
Run alternative near zero: \cref{sec:small-deletion}.};
\node[box,minimum height=2.55cm,fill=teal!4,below=.4cm of l] (lf) {\textbf{Make the source rate finite}\\
\Cref{lem:birch,lem:trace-rate}: output entropy.\\
\Cref{lem:alignment-genie,lem:positive-cell}: residual mask entropy.\\
Alternative run-source calculation: \cref{sec:renewal}.};
\node[box,minimum height=2.65cm,fill=ochre!12,below=.4cm of uf] (uc) {\textbf{Compute an upper bound}\\
Choose probability tables and a potential.\\
Bound \emph{every} required row and remainder above: \eqref{eq:shared-row} or \eqref{eq:posterior-row}.\\
Run alternative: \eqref{eq:small-certificate-bound}.};
\node[box,minimum height=2.65cm,fill=ochre!12,below=.4cm of lf] (lc) {\textbf{Compute a lower bound}\\
Choose transition or run probabilities.\\
Evaluate weighted finite sums below:\\
$(1-d)b_- -h_+ +\mathcal A_-$; \cref{thm:lower}.\\
Run alternative: \eqref{eq:renewal-block-credit}.};
\node[wide,fill=ink!3,below=.5cm of uc.south east,anchor=north] (cover) {\textbf{Extend proved point bounds to every real $d$: \cref{lem:transport}}\\
An upper point at $a$ and a lower point at $b$ give, for $a\le d\le b<1$,\\[2pt]
$(1-d)\ell_b/(1-b)\le C(d)\le(1-d)u_a/(1-a)$.\\
Check coverage and interval widths by exact rational arithmetic; \eqref{eq:continuum-row}.};
\draw[arr](goal.south)--++(0,-.18)-|(u.north);\draw[arr](goal.south)--++(0,-.18)-|(l.north);
\draw[arr](u)--(uf);\draw[arr](l)--(lf);\draw[arr](uf)--(uc);\draw[arr](lf)--(lc);
\draw[arr](uc.south)--(uc.south|-cover.north);\draw[arr](lc.south)--(lc.south|-cover.north);
\end{tikzpicture}
\caption{How the lemmas lead to a numerical capacity approximation. Read either column downward. Pale boxes describe analytic reductions; shaded computation boxes identify the finite computation and its required inequality direction. Search selects the parameters in those boxes; controlled arithmetic checks them. The final comparison extends completed bounds between evaluated probabilities. Here $D$ is the deletion mask, $b_-$ lower-bounds entropy per survivor, $h_+$ upper-bounds $h_2(d)$, and $\mathcal A_-$ lower-bounds residual mask entropy per input bit. Method alternatives are combined only after each has proved a complete bound.}\label{fig:proofmap}
\end{figure}

%% file: sections/chapter_upper.tex
\part{Upper bounds}\label{part:upper}
An achievable rate needs one input process; an upper bound must cover them all. \Cref{sec:survivor} first proves a common entropy inequality. \Cref{sec:shared} bounds its uninspected input positions by a common outside survivor sequence and an omitted-probability term. \Cref{sec:posterior} instead cancels its nonlinear entropy term before taking a finite maximum. These are alternative sufficient tests, not successive improvements that must both be applied. \Cref{sec:small-deletion} develops a separate run-length converse near zero deletion probability.

Each construction ends with the finite inequalities used by its checker. A proposed probability table or potential enters the proof only through those inequalities; finding an optimal proposal is unnecessary. The final enclosure takes the smallest available valid upper bound at each deletion probability, as described in \cref{sec:roadmap-numerics}.

%% file: sections/survivor.tex
\section{A source-independent converse from survivor entropies}\label{sec:survivor}
Fix $0\le d<1$. For an input $X_1^n$ and its received sequence $Y_n$, we upper-bound $I(X_1^n;Y_n)=H(Y_n)-H(Y_n\mid X_1^n)$ using $m$ consecutive surviving bits, with $m$ fixed as $n\to\infty$. The output-entropy bound comes from an auxiliary distribution; the conditional-entropy bound retains deletion randomness after the input is specified. We first justify stationarity as a proof device, then derive both estimates and their difference.

\input{sections/roadmap_upper}

\subsection{Reduction from arbitrary inputs to stationary inputs}
A stationary process has the same finite-word probabilities at every starting position. This makes a sum of local expected costs equal to its length times one expectation. The following block construction shows that proving the same upper bound for every stationary input loses no capacity.

\begin{lemma}[Stationary bounds imply an unrestricted converse]\label{lem:stationary}
Fix $d$. Suppose every stationary binary source satisfies
\[
\limsup_{n\to\infty}\frac1n I(X_1^n;Y_n)\le u.
\]
Then $C(d)\le u$.
\end{lemma}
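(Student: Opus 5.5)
We reduce to the Dobrushin characterization \eqref{eq:capacity-limit}, in which $C(d)=\lim_k C_k(d)$. Fix $k$ and let $P^\star_k$ be a maximizer for $C_k(d)$. From it we build a stationary source, check that its information rate is at least about $C_k(d)$, apply the hypothesis to get $C_k(d)\lesssim u$, and let $k\to\infty$. The natural candidate is the \emph{block-stationarized} source. Take an i.i.d.\ sequence of length-$k$ blocks $B_1,B_2,\ldots$, each with law $P^\star_k$, concatenate them, and shift the result by an independent phase $\Theta$ uniform on $\{0,\ldots,k-1\}$. In other words, $X_i=Z_{i+\Theta}$, where $Z$ is the concatenation. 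This process is stationary, since shifting by one position maps the uniform phase to a uniform phase.

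The key step is to lower-bound $I(X_1^n;Y_n)$ for this stationary source by roughly $(n/k)\,kC_k(d)=nC_k(d)$, up to $o(n)+O(\frac{n}{k}\log k)$. We proceed as follows.
\begin{enumerate}
\item Introduce the phase as side information. Since $\Theta$ takes $k$ values, $I(X_1^n;Y_n)\ge I(X_1^n;Y_n\mid\Theta)-\log k$. This holds because $I(X;Y)\ge I(X;Y\mid\Theta)-H(\Theta)$, which follows from the chain rule after expanding $I(X,\Theta;Y)$.
\item Condition on $\Theta=\theta$. The window $X_1^n$ then consists of a partial head block, $\lfloor (n-\theta)/k\rfloor$ full independent blocks $B_j$, and a partial tail. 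Discarding the head and tail costs at most $2k$ bits of mutual information. The justification is that the deletion process acts independently on each position, so the output is the concatenation of the per-segment outputs. Dropping the partial segments from $X$ loses at most their entropy, at most $2k$ bits. Dropping their outputs from $Y$ can only reduce what is available to the receiver.
\item The remaining problem is the product channel on $m$ full blocks, but with the boundaries between the block outputs hidden. Let $N^{(j)}$ be the output length of block $j$. Revealing all of $N^{(1)},\ldots,N^{(m)}$ changes the mutual information by at most $H(N^{(1)},\ldots,N^{(m)})\le m\log_2(k+1)$. With the lengths revealed, the channel splits into $m$ independent uses, giving exactly $m\,kC_k(d)$.
\end{enumerate}
This is the same boundary argument as the lower direction of \cref{lem:block}, and we can cite that lemma directly if its proof is stated for i.i.d.\ blocks.

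Combining the three steps, for every $n$,
\[
\frac1n I(X_1^n;Y_n)\ \ge\ \frac{n-2k}{n}\Bigl(C_k(d)-\frac{\log_2(k+1)}{k}\Bigr)-\frac{2k+\log k}{n}.
\]
Taking $\limsup_n$ and using the hypothesis gives $C_k(d)-\log_2(k+1)/k\le u$. Letting $k\to\infty$ in \eqref{eq:capacity-limit} yields $C(d)\le u$.

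The main obstacle is step~3: making rigorous that hiding output boundaries costs only $O(\log k)$ per block, while the window is misaligned with the blocks and may contain partial ones. The clean route is to treat the misalignment and the boundary count together, revealing both the phase and all the block output lengths as a single genie variable $G=(\Theta,N^{(1)},\ldots)$ and using $I(X;Y)\ge I(X;Y\mid G)-H(G)$. Here $I(X;Y\mid G)$ is exactly a sum over independent blocks, up to the edge terms, which can be bounded by entropy of at most $2k$ bits.
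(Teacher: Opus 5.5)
Your proposal is correct and is essentially the paper's proof: a randomly phased concatenation of i.i.d.\ length-$k$ blocks, a genie consisting of the phase and the per-block trace lengths (entropy at most $\log_2k+(n/k+2)\log_2(k+1)$), additivity of mutual information over the complete blocks, and then $n\to\infty$ followed by $k\to\infty$. One caution: your step~2 claims that dropping the partial blocks' outputs ``can only reduce'' information, but this is not a data-processing step, because $Y_n$ alone does not show where those outputs end. The single genie in your final paragraph is exactly the paper's $G$, including the partial blocks' trace lengths. It is what makes that step rigorous: the edge blocks then become separate nonnegative terms that can be discarded at no cost beyond the length entropy already counted in $H(G)$.
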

\begin{proof}
Fix a block length $q$ and an arbitrary law $P$ on binary words of length $q$. Form a doubly infinite input by concatenating independent words drawn from $P$. Randomize the position of the block boundaries by an independent offset uniform on $\{0,\ldots,q-1\}$. Shifting the resulting sequence by one position cycles the offset and, when necessary, relabels independent blocks. Its distribution is therefore unchanged by a shift, which proves stationarity.

A length-$n$ segment intersects at most $n/q+2$ blocks and contains at least $n/q-2$ complete blocks. The two incomplete end blocks explain both constants. Supply the receiver with the offset and the individual trace length of each intersected block, calling this extra information $G$. The offset has $q$ possibilities, and each trace length has at most $q+1$ possibilities. Entropy is at most log alphabet size, and joint entropy is at most the sum of individual entropies by \cref{lem:entropy-chain,lem:side-information}. Consequently
\[
H(G)\le\log_2q+(n/q+2)\log_2(q+1).
\]
Given $G$, the receiver can split the concatenated trace at the block boundaries. Conditional on the offset, the complete input blocks are independent, their deletion processes are independent, and their separate input--trace pairs are therefore independent. The chain rule then adds their mutual informations. Each complete block contributes $I_P(X_1^q;Y_q)$, while discarding the end blocks cannot increase information by data processing. It follows that
\[
I(X_1^n;Y_n,G)\ge(n/q-2)I_P(X_1^q;Y_q).
\]
For this comparison, expand the revealed mutual information first with respect to the offset. Its nonnegative offset term can be discarded, and the conditional term contains the independent complete-block pairs just described. To remove $G$, we do not assume that it is independent of the input. We use the general side-information inequality, which gives $I(X_1^n;Y_n)\ge I(X_1^n;Y_n,G)-H(G)$. Combining the preceding bounds, dividing by $n$, and taking $n\to\infty$ gives
\[
\liminf_{n\to\infty}\frac1n I(X_1^n;Y_n)
\ge\frac1q I_P(X_1^q;Y_q)-\frac{\log_2(q+1)}q.
\]
In this limit $q$ was fixed: only the number of concatenated blocks grew. The assumed upper bound for stationary processes is $u$, so the right side is at most $u$. The distribution $P$ was arbitrary; maximizing $I_P/q$ gives $C_q(d)$. We now let $q\to\infty$. By \eqref{eq:capacity-limit}, $C_q(d)\to C(d)$, while $\log_2(q+1)/q\to0$. Thus $C(d)\le u$.
\end{proof}
It therefore suffices to prove one upper bound for every stationary input process. In the rest of this section we fix such a process, with no further restriction on its distribution or dependence between bits.

\subsection{The surviving sequence and its conditional entropy}
Place the raw origin immediately before input position one, and write $F=(X_1,X_2,\ldots)$ for the entire input after that origin. We condition on $F$ to measure the randomness introduced by the channel, not to give the receiver the input. Let $G_1$ be the position of the first retained bit. For $j\ge2$, let $G_j$ be the distance from retained position $j-1$ to retained position $j$. A distance of $r$ means $r-1$ consecutive deletions followed by one retained bit. These events are independent of the input and give
\[
\Pr(G_j=r)=d^{r-1}(1-d),\qquad r=1,2,\ldots.
\]
The sum of these probabilities is one because $d<1$. The spacings are independent, and the $j$th survivor is the input bit at the cumulative position,
\[
Z_j=X_{G_1+\cdots+G_j}.
\]
This equation defines the infinite survivor sequence $Z$. It uses the same deletion rule as the finite channel, but continues beyond the end of a transmitted block.

Fixing $F$ fixes every input bit but leaves the deletion positions random. If the first two input bits are $01$, retaining position one produces a first output zero, whereas deleting position one and retaining position two produces a first output one. The conditional entropy $A_k=H(Z_1^k\mid F)$ averages this channel randomness over input realizations for the first $k$ survivors. Set $A_0=0$. By the chain rule, the increase when we include one more survivor is
\[
a_k=A_k-A_{k-1}=H(Z_k\mid Z_1^{k-1},F).
\]
The quantity $a_k$ is therefore the entropy of survivor $k$ conditional on the entire raw input and survivors $1,\ldots,k-1$. Although $F$ appears in the conditioning, conditional entropy averages the entropy for each input realization with its source probability. The deletion process supplies the remaining randomness.

The next lemma turns one local quantity $a_m$ into a bound for a long received string. It first proves that later conditional entropies are at least $a_m$, then accounts for the random number of survivors.

\begin{lemma}[Increasing conditional survivor increments]\label{lem:increments}
The numbers $a_k$ just defined are nondecreasing and lie in $[0,1]$. For every fixed output-word length $m\ge1$,
\[
\liminf_{n\to\infty}\frac1nH(Y_n\mid X_1^n)\ge(1-d)a_m.
\]
\end{lemma}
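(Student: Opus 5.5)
The argument has three parts. First, $a_k\in[0,1]$. Second, monotonicity of $a_k$, which rests on the renewal structure of the survivor process together with stationarity of the input. Third, a comparison between $H(Y_n\mid X_1^n)$ and the infinite survivor sequence.

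\textbf{Range.} The bound $0\le a_k\le1$ is immediate: $a_k$ is the conditional entropy of one binary symbol, $Z_k$.

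\textbf{Monotonicity.} I would show $a_{k+1}\ge a_k$ by conditioning on extra information and then shifting. Since conditioning reduces entropy,
\[
a_{k+1}=H(Z_{k+1}\mid Z_1^k,F)\ge H(Z_{k+1}\mid Z_1^k,F,G_1).
\]
Given $G_1=g$, the first survivor is $X_g$, a function of $F$, so $Z_1$ drops out of the conditioning. The remaining survivors $Z_2,Z_3,\ldots$ are produced from the shifted input $(X_{g+1},X_{g+2},\ldots)$ by fresh independent geometric spacings $G_2,G_3,\ldots$. These spacings have the same law as $G_1,G_2,\ldots$ and are independent of $G_1$ and $F$.

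The prefix $(X_1,\ldots,X_g)$ sits in the conditioning but carries no channel randomness relevant to the later survivors. Conditioning on it only replaces the input law by the law of the shifted input given its past. The quantity $H(Z_{k+1}\mid Z_2^k,F,G_1=g)$ therefore equals the survivor increment $a_k$ computed for the shifted process under that conditional law. Averaging over the past and using stationarity of the input, this averages back to $a_k$ for the original process. The subtle point is that $a_k$ conditions on the \emph{entire} future input $F$ but not the past. Because the deletion process is independent of $F$, the conditional entropy given $F$ is an average of per-realization entropies. Per realization, the shifted computation is exactly the same functional evaluated at the shifted sequence. By stationarity, the shifted sequence has the same law as $F$. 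So the average equals $a_k$, and $a_{k+1}\ge a_k$.

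\textbf{Relating $H(Y_n\mid X_1^n)$ to $Z$.} For a finite block, $Y_n=Z_1^{N_n}$ with $N_n=\#\{j:G_1+\cdots+G_j\le n\}$. Consider $H(Z_1^{N_n},N_n\mid F)$. The pair $(Y_n,N_n)$ determines $N_n$, so $H(Y_n\mid F)=H(Z_1^{N_n}\mid F)$. The output $Y_n$ depends only on $X_1^n$ and the deletions. Because the deletions are independent of $F$, $H(Y_n\mid X_1^n)=H(Y_n\mid F)$.

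Fix an integer $k$. Then
\[
H(Z_1^{N_n}\mid F)\ge H(Z_1^{k}\mid F)-H(Z_1^k\mid Z_1^{N_n},N_n,F)-H(N_n).
\]
On the event $N_n\ge k$, the second term vanishes. When $N_n<k$, it is bounded by $k\,\Pr(N_n<k)$ or by a tail term. The choice I would make is $k=\lfloor(1-d)n-n^{2/3}\rfloor$. With this choice, $\Pr(N_n<k)$ is exponentially small by Chernoff, so that term is $o(n)$. The term $H(N_n)\le\log_2(n+1)$ is also $o(n)$.

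By the chain rule and monotonicity, $A_k=\sum_{j\le k}a_j\ge(k-m)a_m$. Dividing by $n$ and letting $n\to\infty$ gives
\[
\liminf_{n\to\infty}\frac1nH(Y_n\mid X_1^n)\ge(1-d)a_m.
\]

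The main obstacle I expect is the shift step in the monotonicity argument. It has to be made rigorous so that conditioning on the prefix before survivor 1 genuinely reproduces $a_k$, using only stationarity together with the independence of the deletions from $F$.
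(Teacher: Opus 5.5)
Your proof is correct and follows the paper's argument step for step: you get monotonicity by revealing $G_1$ and using stationarity of the shifted future, and you get the rate bound from the random-prefix comparison $A_k\le H(Y_n\mid F)+k\Pr(N_n<k)$ with $k_n=\lfloor n(1-d)-n^{2/3}\rfloor$ followed by summing the increments. The differences are harmless: the paper uses only the inequality $H(Y_n\mid X_1^n)\ge H(Y_n\mid F)$ and Chebyshev, your equality is true but not needed, and your extra $-H(N_n)$ term only weakens the bound. One small correction: a deviation of $n^{2/3}$ gives the Hoeffding bound $e^{-2n^{1/3}}$, which is superpolynomially small but not exponentially small, and either rate suffices.
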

\begin{proof}
\emph{Comparison of successive increments.}
The chain-rule expression for $a_{k+1}$ is $H(Z_{k+1}\mid Z_1^k,F)$. Revealing the first spacing $G_1$ can only reduce this conditional entropy, by \cref{lem:side-information}. We will show that the reduced entropy is exactly $a_k$; together these statements give
\[
a_{k+1}\ge H(Z_{k+1}\mid Z_1^k,F,G_1)=a_k.
\]
To verify the equality, first fix $G_1=r$. Once $F$ and $r$ are known, $Z_1=X_r$ is known, so conditioning on $Z_1$ adds nothing. The remaining survivors use the independent spacings $G_2,G_3,\ldots$ and the shifted input $(X_{r+1},X_{r+2},\ldots)$. The discarded prefix $(X_1,\ldots,X_r)$ does not affect their deletion randomness once that shifted input is fixed. Hence the conditional uncertainty in $Z_{k+1}$ after $Z_2,\ldots,Z_k$ is the same functional of the shifted raw future as $a_k$ is of the original raw future. Stationarity makes the distributions of these two futures identical. Averaging over the raw input therefore gives $a_k$ for each $r$, and averaging over $G_1$ keeps that value. Finally, the conditioned symbol is binary, so its entropy is between zero and one. This proves monotonicity and boundedness.

\emph{Comparison of finite and infinite outputs.}
Use the same input and the same deletion indicators to construct both the finite trace $Y_n$ and the infinite survivors $Z$. If $N_n$ bits survive among the first $n$ input positions, then $Y_n=(Z_1,\ldots,Z_{N_n})$. For any deterministic $k$, \cref{lem:random-prefix} therefore applies and yields
\[
A_k=H(Z_1^k\mid F)\le H(Y_n\mid F)+k\Pr(N_n<k).
\]
If $Y_n$ has at least $k$ symbols, it determines $Z_1^k$. Otherwise the missing suffix has entropy at most $k$ bits. Multiplying this upper bound by the probability of the second event gives $k\Pr(N_n<k)$. The length of $Y_n$ determines which event occurred, so no extra event indicator is needed.

The full raw future $F$ contains the finite input $X_1^n$. Revealing more input cannot increase conditional entropy, again by \cref{lem:side-information}. Combining that fact with the preceding inequality gives
\[
H(Y_n\mid X_1^n)\ge H(Y_n\mid F)
\ge A_k-k\Pr(N_n<k).
\]
This is the finite inequality from which the asymptotic lower bound follows. It is valid for every $k$, with no replacement of random length by its mean.

\emph{Choice of the prefix length.}
Choose $k=k_n=\lfloor n(1-d)-n^{2/3}\rfloor$, which is positive for all sufficiently large $n$ because $d<1$ is fixed. The event $N_n<k_n$ implies that $N_n$ is at least $n^{2/3}$ below its mean. The binomial variance bound following \cref{lem:random-prefix} gives
\[
\Pr(N_n<k_n)\le d(1-d)n^{-1/3}\longrightarrow0,
\qquad \frac{k_n}{n}\longrightarrow1-d.
\]
To finish, monotonicity gives $a_j\ge a_m$ whenever $j\ge m$. Since the earlier increments are nonnegative, summing them gives $A_{k_n}=\sum_{j=1}^{k_n}a_j\ge(k_n-m+1)a_m$ once $k_n\ge m$. Substitute this into the finite inequality and divide by $n$:
\[
\frac1nH(Y_n\mid X_1^n)
\ge\frac{k_n-m+1}{n}a_m
-\frac{k_n}{n}\Pr(N_n<k_n).
\]
The first coefficient tends to $1-d$, and the second term tends to zero by the preceding probability bound. Taking the lower limit proves the claim.
\end{proof}
For any fixed $m$, the proof gives a lower bound $(1-d)a_m$ on the conditional output entropy per input bit as $n\to\infty$. The factor $1-d$ is the asymptotic number of survivors per input bit. This estimate will be subtracted from an upper bound on output entropy. Its validity does not require $m$ to increase with the transmitted length.

\subsection{An output predictor bounds output entropy}
We now upper-bound $H(Y_n)$ by a quantity that can be specified with a finite table. For each output history $v$ of length $m-1$, choose two positive numbers $Q(0\mid v)$ and $Q(1\mid v)$ that sum to one. The table can be chosen freely; it need not give the true next-bit probabilities of the source. It defines the logarithmic loss
\[
\ell_Q(z_1^m)=-\log_2Q(z_m\mid z_1^{m-1}).
\]
For example, assigning probability $1/4$ to the observed next bit gives loss two bits. We use the expected loss as an upper bound on entropy, by \cref{lem:cross-entropy} and the standard comparison-distribution argument \citep[Section~2.6]{CT06}; this is not a decoder or an estimate of decoding complexity. The same table is used for every input, and its arguments contain received symbols only. Positivity and finiteness of the table make all its logarithmic losses bounded.

\begin{lemma}[Fixed output-predictor upper bound]\label{lem:predictor}
Every stationary binary input satisfies
\[
\limsup_{n\to\infty}\frac1nH(Y_n)
\le(1-d)\E\ell_Q(Z_1^m),
\]
where $Z_1^m$ consists of $m$ consecutive survivors of that input.
\end{lemma}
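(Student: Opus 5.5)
We follow the pattern of \cref{lem:increments}: bound $H(Y_n)$ through the infinite survivor sequence $Z$, then use stationarity to reduce a long sum to one local expectation. The first step is to build a probability assignment on finite output strings from the table $Q$. For a string $y$ of length $N$, define $\widetilde Q(y)=\Pr_{\mathrm{len}}(N)\,\prod_{j=1}^{N}Q(y_j\mid \text{history})$. For $j\ge m$ the history is the previous $m-1$ symbols $y_{j-m+1}^{j-1}$. For $j<m$ we use a fixed fallback such as probability $1/2$ per symbol. The length factor is the true law of $N_n\sim\operatorname{Bin}(n,1-d)$, or alternatively the uniform law on $\{0,\ldots,n\}$. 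This $\widetilde Q$ is a probability distribution on $\bigcup_N\bits^N$.

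By \cref{lem:cross-entropy}, $H(Y_n)\le\E[-\log_2\widetilde Q(Y_n)]$. This bound is at most
\[
\log_2(n+1)+(m-1)+\E\sum_{j=m}^{N_n}\ell_Q(Z_{j-m+1}^{j}),
\]
using $Y_n=Z_1^{N_n}$ as in the proof of \cref{lem:increments}. The $\log_2(n+1)$ term pays for the length, and the $m-1$ term pays for the fallback symbols.

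The main obstacle is evaluating the random-length sum. Its number of terms $N_n$ depends on the deletions, so the sum cannot simply be replaced by $N_n\E\ell_Q$. We handle this by summing over survivors indexed by input position rather than by survivor count. Each retained input position $i\le n$ with at least $m-1$ earlier survivors contributes the loss of the window of $m$ survivors ending at $i$. That window consists of $X_i$ together with the previous $m-1$ retained bits, drawn from the two-sided process.

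Two effects must be controlled here. First, earlier survivors may lie before input position $1$. Windows ending at positions $i$ larger than some $k$ almost surely have all $m$ survivors inside $[1,n]$, apart from the first few output positions. These exceptional windows are the first $m-1$ outputs, which we have already paid for. Second, each loss is bounded by $\lambda=\max_{v,z}(-\log_2Q(z\mid v))<\infty$, because $Q$ is positive and finite.

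For the main terms we extend the input to a doubly infinite stationary process, which is possible by Kolmogorov extension. We also extend the deletions to all integers. Then, conditional on position $i$ being retained, the window ending at $i$ has the same law as $Z_1^m$. This follows from stationarity of the input, independence of the deletions from the input, and the fact that the preceding spacings are i.i.d.\ geometric. This step requires care: $Z_1^m$ starts at the first survivor after the origin, whereas a window ending at a retained position looks backward. Both are $m$ consecutive survivors, however. By the renewal structure together with stationarity, the law of $m$ consecutive survivors starting just after a fixed origin equals the law of the $m$ survivors ending at a retained position. Each of these coincides with the Palm distribution of the jointly stationary survivor process.

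We therefore get
\[
\E\sum_{\text{windows}}\ell_Q\le n(1-d)\E\ell_Q(Z_1^m)+(m-1)\lambda.
\]
Dividing by $n$, the terms $\log_2(n+1)/n$ and $O(m\lambda)/n$ vanish as $n\to\infty$, which yields the claimed $\limsup$ bound.
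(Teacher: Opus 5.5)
Your proposal is correct and follows essentially the paper's proof. It uses the same comparison distribution (a length code, fair bits for the first $m-1$ symbols, $Q$ thereafter) together with cross entropy, reassigns each in-trace loss to its retained raw position in a two-sided stationary extension, and identifies the law of the window ending at a retained position with that of $Z_1^m$ via stationarity and the i.i.d.\ geometric spacings. The differences are cosmetic: the paper simply drops the nonnegative losses of the first $m-1$ retained positions, so your extra $(m-1)\lambda$ slack is unnecessary, and it makes your Palm-type identification explicit by fixing the spacings, translating, and averaging.
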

\begin{proof}
\emph{Encode the length and then the observed symbols.}
Let $y=y_1\cdots y_k$ be any binary word of length $0\le k\le n$. Assign its length probability $1/(n+1)$, give its first $\min\{k,m-1\}$ symbols probability $1/2$ each, and use $Q$ for the remaining symbols. This defines the comparison distribution
\[
q_n(y)=\frac{2^{-\min\{k,m-1\}}}{n+1}
\prod_{j=m}^{k}Q(y_j\mid y_{j-m+1}^{j-1}).
\]
A product with no factors equals one. For each fixed length, summing over the last symbol removes a factor whose two probabilities sum to one. Repeating this sum for each earlier symbol gives total mass $1/(n+1)$ at that length. Summing over the $n+1$ lengths proves normalization.

The actual output need not have the uniform length distribution or the conditional probabilities $Q$. Nevertheless, \cref{lem:cross-entropy} gives $H(Y_n)\le\E[-\log_2q_n(Y_n)]$, with expectation under the actual output distribution. Taking the negative logarithm of the displayed product produces $\log_2(n+1)$ for the length, at most $m-1$ for the initial symbols, and the sum of $\ell_Q$ over the remaining output symbols. These are entropy-bound terms measured in bits.

\emph{Assign output losses to raw input positions.}
It remains to bound that sum without assuming that trace length is deterministic. Extend the stationary input and the independent deletions to both sides of the origin. At each retained raw position, assign the loss of predicting its symbol from the preceding $m-1$ survivors. At deleted positions assign zero. For every trace of the first $n$ positions, its internal predictor losses are among these $n$ assigned costs. The extra costs concern only the earliest survivors whose predecessors lie before the block; they are nonnegative. Thus the expected sum of logarithmic losses within the trace is at most the expected sum of these $n$ assigned terms.

For example, retained positions $1,3,6$ contribute the internal two-symbol losses for $(X_1,X_3)$ and $(X_3,X_6)$ at positions $3$ and $6$. Position $1$ may carry an additional nonnegative loss using a survivor before the block. Dropping it can only reduce the sum.

\emph{Identify the law seen at a retained position.}
The expected cost at a raw position is its retention probability $1-d$ times the expected loss conditional on retention. We must check that this conditional loss has the same distribution as the forward survivor loss appearing in the statement. Fix the retained raw position at zero and first fix the $m-1$ spacings to its predecessors. In their left-to-right order, call these positive spacings $g_1,\ldots,g_{m-1}$. The sampled raw positions are $-(g_1+\cdots+g_{m-1}),\ldots,-g_{m-1},0$. Translation by $g_1+\cdots+g_{m-1}$ turns them into $0,g_1,g_1+g_2,\ldots,g_1+\cdots+g_{m-1}$. Stationarity gives the same ordered input-symbol law at these translated positions. 

Average over the spacings, whose independent geometric law is unaffected by their order and is independent of the input. In the forward construction of $Z_1^m$, the spacings $G_2,\ldots,G_m$ have exactly this law; its additional first spacing $G_1$ translates all $m$ sampled positions together. Conditioning on $G_1$ and using stationarity once more removes that common translation. Thus the two ordered survivor blocks have the same law. This uses translations and independent gap averaging, not a reversal of the input process.

\emph{Sum over positions and remove the finite boundary terms.}
Each raw position consequently has expected cost $(1-d)\E\ell_Q(Z_1^m)$. Adding the $n$ equal expectations gives the finite bound
\[
H(Y_n)\le\log_2(n+1)+(m-1)
+n(1-d)\E\ell_Q(Z_1^m).
\]
Divide by $n$. The first two terms vanish because $m$ is fixed, giving the claimed upper limit.
\end{proof}
The table $Q$ must be positive and normalized and use only the stated output history; no accuracy assumption is required.

\subsection{Combining the two entropy bounds}
Write $F=(X_1,X_2,\ldots)$ for the entire raw input after the chosen origin. Conditional on this input, deletion randomness produces a distribution $P_F$ on the first $m$ survivors. Here $m$ counts retained bits, not input positions. The comparison table $Q(z_m\mid z_1^{m-1})$ assigns probabilities to the last survivor given the preceding $m-1$ survivors.

For a nonnegative array $P$ on $m$-bit words, let $P_{m-1}(v)=P(v0)+P(v1)$ be its prefix marginal. Define the conditional relative-entropy expression

\begin{equation}\label{eq:psi}
\Psi_Q(P)=\sum_{z_1^m\in\bits^m}P(z_1^m)\log_2
\frac{P(z_1^m)}{P_{m-1}(z_1^{m-1})Q(z_m\mid z_1^{m-1})}.
\end{equation}
To interpret this formula for a probability distribution, split its logarithm into $-\log_2Q(z_m\mid z_1^{m-1})$ and $\log_2[P(z_1^m)/P_{m-1}(z_1^{m-1})]$. The first part averages the predictor loss. The second part is the negative conditional entropy of the last survivor given its prefix. Thus
\[
\Psi_Q(P)=\E_P\ell_Q(Z_1^m)-H_P(Z_m\mid Z_1^{m-1}).
\]
Averaging this identity for $P=P_F$ gives $\E\Psi_Q(P_F)=\E\ell_Q(Z_1^m)-a_m$, because $a_m$ is precisely that last-symbol entropy conditional also on $F$.

\begin{corollary}[The common survivor rate bound]\label{cor:survivor-rate}
For every stationary input and every positive normalized output table $Q$,
\[
\limsup_{n\to\infty}n^{-1}I(X_1^n;Y_n)
\le (1-d)\mathbb E\Psi_Q(P_F).
\]
Here $P_F$ is the first-$m$-survivor distribution conditional on the entire input $F$, and the expectation averages over that input.
\end{corollary}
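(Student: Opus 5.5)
The plan is to write $I(X_1^n;Y_n)=H(Y_n)-H(Y_n\mid X_1^n)$ and bound the two terms separately with the two preceding lemmas, using the same fixed $m$ and the same table $Q$ in both. For the upper term, \cref{lem:predictor} gives
\[
\limsup_{n\to\infty}\frac1nH(Y_n)\le(1-d)\E\ell_Q(Z_1^m).
\]
For the subtracted term, \cref{lem:increments} gives
\[
\liminf_{n\to\infty}\frac1nH(Y_n\mid X_1^n)\ge(1-d)a_m.
\]
Since $\limsup(x_n-y_n)\le\limsup x_n-\liminf y_n$, and both limits are finite (entropies per input bit lie in $[0,1]$ and $\ell_Q$ is bounded by positivity of $Q$), we obtain
\[
\limsup_{n\to\infty}\frac1nI(X_1^n;Y_n)\le(1-d)\bigl(\E\ell_Q(Z_1^m)-a_m\bigr).
\]

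It remains to identify the bracket with $\E\Psi_Q(P_F)$. The identity displayed just before the corollary gives, for each fixed input realization $F$,
\[
\Psi_Q(P_F)=\E_{P_F}\ell_Q(Z_1^m)-H_{P_F}(Z_m\mid Z_1^{m-1}).
\]
Now average over $F$. The first term averages to $\E\ell_Q(Z_1^m)$ by the tower property, since $P_F$ is the conditional law of $Z_1^m$ given $F$. The second term averages to $H(Z_m\mid Z_1^{m-1},F)=a_m$ by the definition of conditional entropy as an average of pointwise conditional entropies. Every quantity involved is bounded, so linearity of expectation is legitimate. This yields $\E\Psi_Q(P_F)=\E\ell_Q(Z_1^m)-a_m$, which proves the corollary.

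The only step needing care is that the two lemmas refer to the same object $Z_1^m$. \Cref{lem:predictor} is stated for $m$ consecutive survivors of the stationary input. \Cref{lem:increments} uses the forward survivors from the origin with geometric spacings. Both are the forward construction $Z_j=X_{G_1+\cdots+G_j}$, and the proof of \cref{lem:predictor} already showed that the law seen at a retained position matches this forward law. So no further argument is needed; I would only state that identification explicitly.
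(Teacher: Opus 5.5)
Your proposal is correct and follows the paper's proof essentially verbatim: bound $H(Y_n)$ by \cref{lem:predictor}, bound $H(Y_n\mid X_1^n)$ from below by \cref{lem:increments}, combine them with $\limsup(x_n-y_n)\le\limsup x_n-\liminf y_n$ for bounded sequences, and identify $\E\ell_Q(Z_1^m)-a_m$ with $\E\Psi_Q(P_F)$ by averaging the pointwise identity over $F$. Your explicit remark that both lemmas refer to the same forward survivor block $Z_1^m$ is a reasonable clarification; the paper leaves that identification implicit in the proof of \cref{lem:predictor}.
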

\begin{proof}
Apply \cref{lem:predictor} to the positive entropy term in $I(X_1^n;Y_n)$ and \cref{lem:increments} to the term being subtracted. For two bounded sequences, the upper limit of their difference is at most the upper limit of the first minus the lower limit of the second; this follows by bounding each sequence beyond a sufficiently large index. We obtain
\begin{equation}\label{eq:reward-bridge}
\limsup_{n\to\infty}\frac1n I(X_1^n;Y_n)
\le(1-d)\bigl(\E\ell_Q(Z_1^m)-a_m\bigr)
=(1-d)\E\Psi_Q(P_F).
\end{equation}
\end{proof}
The separate inequality $I(X_1^n;Y_n)\le H(X_1^n)$ follows from nonnegativity of conditional entropy. For a stationary input, division by $n$ and \cref{lem:entropy-rate} turn it into the alternative upper bound $h(X)$, its input entropy rate. Later certificates can combine these two valid bounds.

\begin{lemma}[Mixtures and omitted probability masses]\label{lem:survivor-cost}
For nonnegative arrays $P,P'$ on $m$-bit words, the function $\Psi_Q$ in \eqref{eq:psi} is nonnegative, convex, and homogeneous. In particular,
\[
\Psi_Q(P+P')\le\Psi_Q(P)+\Psi_Q(P').
\]
\end{lemma}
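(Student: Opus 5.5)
The plan is to rewrite $\Psi_Q$ as a sum, over prefixes $v\in\bits^{m-1}$, of a relative entropy between nonnegative two-point arrays, and then read off all three properties termwise. For a prefix $v$, write $p=(P(v0),P(v1))$ and $q_v=(Q(0\mid v),Q(1\mid v))$, and set $s=P_{m-1}(v)=p_0+p_1$. The $v$-term of \eqref{eq:psi} is then
\[
\sum_{b\in\bits}p_b\log_2\frac{p_b}{s\,Q(b\mid v)}.
\]
We use the convention $0\log(0/\cdot)=0$, and the whole term vanishes when $s=0$. This is exactly the unnormalized relative entropy $D(p\,\|\,s\,q_v)$, where both arrays have total mass $s$ because $q_v$ sums to one.

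Nonnegativity follows once $s>0$. Dividing by $s$ gives $s\,D(p/s\,\|\,q_v)$, which is nonnegative by nonnegativity of relative entropy (the same Gibbs inequality behind \cref{lem:cross-entropy}). Positive homogeneity $\Psi_Q(tP)=t\Psi_Q(P)$ for $t\ge0$ is immediate, since every logarithm argument is invariant under $P\mapsto tP$ while the prefactors scale by $t$. The case $t=0$ follows from the convention.

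Convexity is the one genuine step. For each $v$ and $b$, the summand $p_b\log_2\bigl(p_b/((p_0+p_1)Q(b\mid v))\bigr)$ is the perspective-type function $f(x,y)=x\log(x/y)$, evaluated at $x=p_b$ and the linear function $y=(p_0+p_1)Q(b\mid v)$. The function $f$ is jointly convex on $[0,\infty)^2$; this is the log-sum inequality, since $f$ is the perspective of the convex function $x\log x$. Composing a jointly convex function with a linear map preserves convexity. Summing over $b$ and $v$ therefore shows that $\Psi_Q$ is convex on the cone of nonnegative arrays. The point I would check with care is the boundary. At $x=0$ the summand is $0$. If $y=0$ then $s=0$, which forces $x=0$, so the summand is also $0$. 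Because $Q$ is positive, no $+\infty$ values arise. To stay safely on the domain, I would state the log-sum inequality directly in its two-term form, $\sum x_i\log(x_i/y_i)\ge(\sum x_i)\log(\sum x_i/\sum y_i)$, rather than appeal to general perspective theory.

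Subadditivity then follows from convexity and homogeneity:
\[
\Psi_Q(P+P')=2\Psi_Q\bigl(\tfrac12P+\tfrac12P'\bigr)\le\Psi_Q(P)+\Psi_Q(P').
\]
Equivalently, and more directly, the two-term log-sum inequality applied to $x=(P(vb),P'(vb))$ and $y=(sQ(b\mid v),s'Q(b\mid v))$ gives the same inequality termwise. I expect the only friction to be this zero-mass bookkeeping; the analytic content is just the log-sum inequality.
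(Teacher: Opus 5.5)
Your proposal is correct and follows essentially the paper's route: the paper also gets convexity from the log-sum inequality, applied prefix by prefix to the masses $\theta P(va),(1-\theta)P'(va)$ against the linear denominators $\theta P(v)Q(a\mid v),(1-\theta)P'(v)Q(a\mid v)$. It obtains homogeneity by cancellation of the scale factor and deduces subadditivity from midpoint convexity plus homogeneity, as you do. The only cosmetic difference is that the paper proves nonnegativity by a single log-sum over the entire array (the totals agree, so the bound is total mass times $\log_2 1=0$), whereas you normalize each prefix and invoke Gibbs' inequality.
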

\begin{proof}
In \eqref{eq:psi}, the denominator $P_{m-1}Q$ is linear in $P$ and has the same total mass as $P$. 

\emph{Mixtures cannot exceed their average cost.}
Fix $0<\theta<1$ and put $P_\theta=\theta P+(1-\theta)P'$. For each output prefix $v$ and last bit $a$, apply log-sum (\cref{lem:logsum}) to the two numerator masses $\theta P(va),(1-\theta)P'(va)$ and the corresponding denominator masses $\theta P(v)Q(a\mid v),(1-\theta)P'(v)Q(a\mid v)$. Their sums are $P_\theta(va)$ and $P_\theta(v)Q(a\mid v)$, because prefix marginalization is linear. Summing the resulting inequalities over $v,a$ gives
\[
 \Psi_Q(P_\theta)\le\theta\Psi_Q(P)+(1-\theta)\Psi_Q(P').
\]
This is convexity; weights zero and one give equality.

\emph{The cost is nonnegative and scales with mass.}
To prove nonnegativity, apply log-sum once to the entire numerator and denominator arrays. Their totals agree, so its right side is total mass times $\log_21=0$. Multiplying $P$ by a constant cancels that constant inside the ratio, so $\Psi_Q(cP)=c\Psi_Q(P)$ for $c\ge0$. 

\emph{Separate retained and omitted masses.}
Convexity at the midpoint then gives
\[
\Psi_Q(P+P')=2\Psi_Q\bigl((P+P')/2\bigr)
\le\Psi_Q(P)+\Psi_Q(P').
\]
This is the subadditivity that permits separate charges for retained and omitted probability masses. Zero entries are interpreted by continuity; the positivity of $Q$ prevents a finite positive numerator from meeting a zero code probability.

\end{proof}

The converse now reduces to bounding $(1-d)\mathbb E\Psi_Q(P_F)$ for every input process. The transmitted length has disappeared, but $P_F$ still depends on an infinite input. The next two sections replace that dependence by finite tests covering every possible uninspected sequence.

%% file: sections/roadmap_upper.tex
\subsection{Proof plan: bound the two entropies in opposite directions}\label{sec:roadmap-upper}
An upper bound must hold for every input distribution. We begin with
$I(X_1^n;Y_n)=H(Y_n)-H(Y_n\mid X_1^n)$: upper-bound the first entropy and lower-bound the second. Both are needed. For the fixed input $01$ at $d=1/2$, the outputs $\varnothing,0,1,01$ are equally likely, so both entropies equal two and their difference is zero. Random output need not convey information.

The standard comparison-distribution argument supplies the first direction \citep[Section~2.6]{CT06}. If $p$ is the true distribution of $Y$ and $q$ is any positive probability distribution, then
\begin{equation}\label{eq:roadmap-kl}
\mathbb E_p[-\log_2q(Y)]-H(Y)
=\sum_y p(y)\log_2\frac{p(y)}{q(y)}=D(p\Vert q)\ge0.
\end{equation}
Substituting the entropy sum and combining logarithms gives the equality; \cref{lem:logsum} gives the inequality. Only the complete weighted sum is nonnegative. The comparison need not hold separately for each output. The expectation still uses the actual channel probabilities, not the chosen $q$.

The finite-channel dual bound follows by subtracting conditional output entropy and bounding an average over inputs by its largest value:
\[
 I_P(X_1^n;Y_n)
 \le\sum_xP(x)D(W_n(\cdot\mid x)\Vert q_n)
 \le\max_xD(W_n(\cdot\mid x)\Vert q_n).
\]
Here $W_n(y\mid x)$ is the known deletion-channel probability. The unknown input law $P$ disappears from the last expression. Directly evaluating it at large $n$ is expensive because there are $2^n$ inputs. Output-distribution and dynamic-programming converses use this same general principle \citep{DMP07,HSPKS21}.

We replace the full-block calculation by one involving $m$ consecutive received bits. The proof has three dependencies:
\begin{enumerate}
\item \Cref{lem:stationary} reduces the problem to stationary inputs, so each input position has the same expected local cost.
\item \Cref{lem:predictor} bounds output entropy using a finite table $Q$ of next-output-bit probabilities. \Cref{lem:increments} lower-bounds conditional output entropy using the uncertainty of the $m$th survivor when the full input is known.
\item \Cref{cor:survivor-rate} subtracts the two estimates. \Cref{sec:shared,sec:posterior} then turn that rate inequality into finite tests covering every input.
\end{enumerate}
The present section is analytic. It does not yet provide a finite computation: the conditional survivor distribution still depends on input bits arbitrarily far away. The next two sections resolve precisely that dependence.
\Cref{fig:upper-proof-routes} shows the common entropy argument and the two finite-window reductions that follow.

%% file: sections/shared.tex
\section{A converse from a shared survivor tail}\label{sec:shared}
For the entire input $F$ after an origin, let $P_F$ be the distribution of its first $m$ survivors. The preceding section bounds its information rate by $(1-d)\mathbb E\Psi_Q(P_F)$, where $\Psi_Q$ is output logarithmic loss minus conditional survivor entropy. We inspect only $R$ input positions. The first survivor can lie beyond them, so we condition on a fixed number of outside survivors and bound the probability that they are insufficient.

The outside survivor sequence must be common to every deletion pattern inside the window. Indeed, fixing the input and all deletion indicators after position $R$ fixes that outside sequence; varying the first $R$ indicators does not change it. The conditioning is in the proof, not information supplied to the receiver.

\input{sections/roadmap_shared}

\subsection{A finite window and the probability it omits}
Choose integer parameters: a raw-window length $R\ge1$, a number $m\ge1$ of desired output symbols, and a cutoff $0\le b\le m$. This section uses $0\le d<1$; at $d=1$ capacity is exactly zero. The state $(w,u)$ contains the raw word $w\in\bits^R$ and the first $m-b$ surviving bits $u$ strictly beyond that window. The word $u$ records survivors, not the next $m-b$ raw bits.

Include precisely the deletion patterns with at least $b$ survivors in the window. If a pattern retains $j<m$ symbols there, append the first $m-j$ symbols of $u$; the inequality $m-j\le m-b$ ensures that $u$ contains enough symbols. If $j\ge m$, take the first $m$ survivors already in the window. For each resulting word $z$, let $M_{w,u}(z)$ be the sum of the original probabilities of the included patterns producing $z$. Thus $M_{w,u}$ is a nonnegative mass array on $m$-bit output words. Its total mass is the probability that the window contains at least $b$ survivors, which can be less than one.

The omitted probability does not depend on the contents of $w$. There are $\binom Rj$ masks with $j$ survivors. Each has probability $(1-d)^jd^{R-j}$, because its $j$ survival events and $R-j$ deletion events are independent. Therefore
\begin{equation}\label{eq:beta}
\beta=\sum_{j=0}^{\min\{b-1,R\}}\binom Rj(1-d)^jd^{R-j}.
\end{equation}
An empty sum is zero, and endpoint powers use the binomial-probability convention $0^0=1$. The upper summation limit also covers a cutoff larger than the raw window. The retained masks have total probability $1-\beta$, so $M_{w,u}$ has this same mass. We keep that mass unchanged: dividing by $1-\beta$ would change the probability weights in the converse.

The example in \cref{fig:intro-window} has $R=3,m=2,b=1,w=010,u=1$ and $d=1/2$. Its retained masses for $00,01,10,11$ are $1/8,4/8,1/8,1/8$; the only omitted pattern deletes all three bits. Their sum is $7/8$, not one. Keeping these original weights is necessary for the entropy subtraction.

\subsection{Bounding the omitted deletion patterns}
The function $\Psi_Q$ extends from probability distributions to nonnegative mass arrays such as $M_{w,u}$. For such an array $P$, the prefix mass is $P(v)=P(v0)+P(v1)$, where $v$ has length $m-1$. The chosen predictor assigns probabilities $Q(0\mid v),Q(1\mid v)$ summing to one, and its logarithmic loss is $\ell_Q(vz)=-\log_2Q(z\mid v)$. Expanding \eqref{eq:psi} gives
\begin{equation}\label{eq:roadmap-shared-cost}
\Psi_Q(P)=\sum_zP(z)\ell_Q(z)
-\sum_{v\in\bits^{m-1}}P(v)\hb\!\left(\frac{P(v1)}{P(v)}\right).
\end{equation}
A prefix of mass zero contributes zero. To obtain this formula, separate $\log(P(vz)/(P(v)Q(z\mid v)))$ into $\log(P(vz)/P(v))$ and $-\log Q(z\mid v)$, then sum over the two last bits. The first sum is logarithmic loss weighted by the original masses. In the second sum, each prefix contributes its mass times the binary entropy of the last bit conditional on that prefix. This interpretation applies even when the total mass is less than one.

By \cref{lem:survivor-cost}, $\Psi_Q(P+A)\le\Psi_Q(P)+\Psi_Q(A)$. We use this inequality to separate retained masks from omitted masks, while convexity lets us disclose the outside survivors for the bound.

\begin{lemma}[Shared-tail domination]\label{lem:shared-tail}
For each raw future $F$ with window $w$,
\[
\Psi_Q(P_F)\le\E[\Psi_Q(M_{w,U})\mid F]+\beta\ell_{\max},
\qquad \ell_{\max}=\max_{z\in\bits^m}\ell_Q(z).
\]
Here $U$ is the random survivor prefix beyond the window, and $\beta$ is the omitted-mask probability in \eqref{eq:beta}.
\end{lemma}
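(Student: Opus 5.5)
Fix the raw future $F$ with window $w=X_1^R$. The approach is to split $P_F$ into the part produced by deletion patterns with at least $b$ survivors in the window and the part produced by patterns with fewer than $b$. Then apply subadditivity (\cref{lem:survivor-cost}) to separate the two parts. Convexity handles the retained part after conditioning on the outside survivors, and a crude loss bound handles the omitted part.

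\emph{Decomposition.} Write $P_F=P^{\ge b}+P^{<b}$. Here $P^{\ge b}(z)$ is the probability that the first $m$ survivors equal $z$ and that at least $b$ of the first $R$ positions survive; $P^{<b}$ is the complement. Both are nonnegative arrays, and \cref{lem:survivor-cost} gives
\[
\Psi_Q(P_F)\le\Psi_Q(P^{\ge b})+\Psi_Q(P^{<b}).
\]

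\emph{Omitted part.} In \eqref{eq:roadmap-shared-cost}, the entropy sum is nonnegative, so $\Psi_Q(A)\le\sum_zA(z)\ell_Q(z)\le\ell_{\max}\sum_zA(z)$ for any nonnegative array $A$. The total mass of $P^{<b}$ is the probability that the window has fewer than $b$ survivors. The deletion indicators are independent of $F$, so this probability equals $\beta$ in \eqref{eq:beta}. This gives the term $\beta\ell_{\max}$.

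\emph{Retained part: the main step.} I would condition on the deletion indicators after position $R$. Given $F$, these indicators determine the outside survivor prefix $U$ of length $m-b$; it is well defined because the outside sequence is infinite and $d<1$. They are independent of the first $R$ indicators. For a pattern inside the window with $j\ge b$ survivors, the first $m$ overall survivors are the window survivors followed by the first $m-j$ symbols of $U$. Hence, conditional on $U=u$, the restricted output law is exactly $M_{w,u}$. Both have the same total mass $1-\beta$, and the window-mask probabilities are unchanged by the conditioning because of independence. Averaging over $U$ gives $P^{\ge b}=\E[M_{w,U}\mid F]$, which is a mixture of arrays of common mass.

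Convexity of $\Psi_Q$ from \cref{lem:survivor-cost} then yields $\Psi_Q(P^{\ge b})\le\E[\Psi_Q(M_{w,U})\mid F]$. This is Jensen's inequality for a convex, positively homogeneous function on a finite-dimensional cone. Because $U$ takes at most $2^{m-b}$ values, finite convex combinations suffice, obtained by induction from the two-point inequality.

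The delicate point is verifying the identity $P^{\ge b}=\E[M_{w,U}\mid F]$. Two facts carry it: the outside survivors are common to all window patterns, and the window and outside indicators are independent. Once both are checked, adding the two bounds gives the lemma.
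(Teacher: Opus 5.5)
Your proof is correct and uses the same ingredients as the paper: subadditivity of $\Psi_Q$ to separate retained from omitted window masks, the bound $\Psi_Q(A)\le\beta\ell_{\max}$ obtained by discarding the subtracted entropy, and convexity to average over the common outside survivors. The only difference is the order of the steps. You split $P_F=P^{\ge b}+P^{<b}$ before conditioning and apply convexity only to $P^{\ge b}=\E[M_{w,U}\mid F]$. The paper instead conditions on the entire outside survivor sequence first, applies subadditivity to $M_{w,u}+A$ for each realization, and then uses convexity on the full conditional law. Both arguments establish the lemma.
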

\begin{proof}
\emph{Separate the two classes of masks.}
First fix the entire survivor sequence beyond the window. Only the window mask remains random. Its full output-prefix law is $M_{w,u}+A$, where $A$ collects the omitted masks and has mass $\beta$. By the subadditivity just proved,
\[
\Psi_Q(M_{w,u}+A)\le\Psi_Q(M_{w,u})+\Psi_Q(A).
\]
\emph{Bound its logarithmic term and discard its subtracted entropy.}
The expression $\Psi_Q(A)$ is at most $\sum_z A(z)\ell_Q(z)$ because its entropy term is subtracted and is nonnegative. Each logarithmic loss is at most $\ell_{\max}$ and the masses $A(z)$ sum to $\beta$. Therefore
\[
\Psi_Q(A)\le\sum_zA(z)\ell_Q(z)
\le\ell_{\max}\sum_zA(z)=\beta\ell_{\max}.
\]
\emph{Remove the information that fixed the tail.}
For the fixed raw input $F$, the deletion indicators beyond the window make the outside survivor sequence random. Averaging the conditional distributions $M_{w,u}+A$ over that randomness gives $P_F$. Convexity bounds $\Psi_Q$ of this average by the average of $\Psi_Q(M_{w,u}+A)$. Substituting the preceding inequalities gives the claimed upper bound for this $F$.
\end{proof}
Multiplication by $1-d$ gives the omitted-pattern contribution $(1-d)\beta\ell_{\max}$ in bits per input bit. It depends only on the probability of omission and the largest logarithmic loss of $Q$, not on the unknown words produced by omitted patterns.

For the example in \cref{fig:intro-window}, $\beta=1/8$. If $Q$ assigns $1/2$ to both bits, then $\ell_{\max}=1$ and the omitted contribution is $(1/2)(1/8)=1/16$ per input bit. The retained contribution is still included; the final test checks all windows and outside words.

\Cref{fig:alignment-and-window} illustrates the deletion ambiguity and the backward state update used next.

\begin{figure}[tb]\centering
\includegraphics[width=\linewidth]{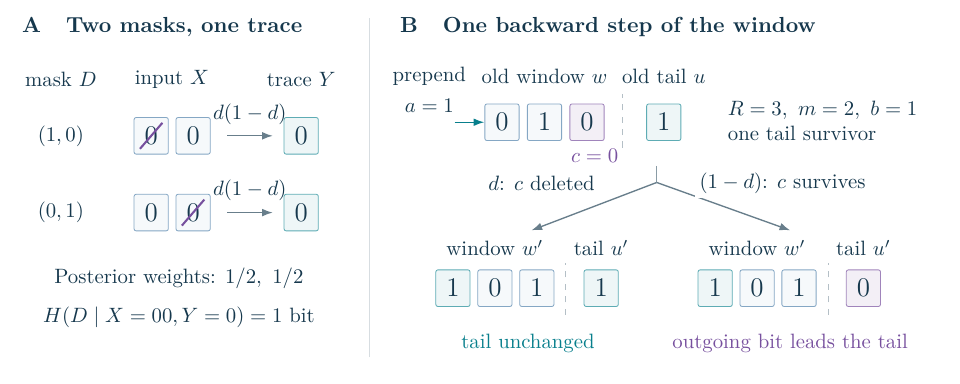}
\caption{Unobserved deletion positions and a shared-tail transition. A mask entry $D_i=1$ denotes deletion, with probability $d$, and a zero denotes survival, with probability $1-d$.
\textbf{A.} Both displayed masks take $X=00$ to $Y=0$ with probability $d(1-d)$. For $0<d<1$, conditioning on this input and trace leaves two equally likely masks and hence one bit of uncertainty.
\textbf{B.} The state contains a raw window $w$ of length $R=3$ and one survivor $u$ strictly beyond it. We seek $m=2$ survivors and retain window masks with at least $b=1$ survivor. Prepending $a=1$ changes $w=010$ to $w'=101$ and sends the outgoing bit $c=0$ into the tail. Deleting $c$ leaves $u'=1$; surviving $c$ gives $u'=0$. The old tail uses only deletion indicators after $c$, so its value gives no information about the deletion indicator of $c$. Bits run from earliest to latest, and dashed lines separate a raw window from its survivor tail.}
\label{fig:alignment-and-window}
\end{figure}

\subsection{A backward shift and its two possible next states}
To average finite-state inequalities, we need the actual transition of their states. Recall that $w=w_1\cdots w_R$ is a raw window and $u$ contains $m-b$ survivors beyond it. Moving the window one raw position backward prepends a bit $a$ and removes its former last bit $c$. In symbols,
\[
w'=aw_1\cdots w_{R-1},\qquad c=w_R.
\]
The new raw window is deterministic once $w,a$ are given. The outgoing bit $c$ now lies immediately before the old survivor tail. If it is deleted, that tail stays $u$. If it survives, prepend $c$ to $u$ and discard its last symbol to keep the length $m-b$. Call the resulting word $\psi_cu$. When $m=b$, both tails are the empty word, denoted $\varnothing$.

A potential $V$ is any real-valued function on this finite state set. Its conditional expected value after the move is therefore
\begin{equation}\label{eq:physical-transition}
(P_aV)(w,u)=dV(w',u)+(1-d)V(w',\psi_cu).
\end{equation}
The two coefficients are the probabilities that $c$ is deleted or retained. Its deletion indicator is independent of the raw bits $w,a$ and of the indicators strictly after $c$ that determine $u$. This independence justifies using the probabilities $d$ and $1-d$ conditional on the old state and prepended bit. It verifies the fixed-operator hypothesis of \cref{lem:potential} for every input law, regardless of its memory.

\paragraph{Worked example: balancing costs on a trajectory.}
Consider an abstract process that alternates between states $A$ and $B$, with local costs $2$ and $0$. A separate maximum over the two costs gives $2$, whereas each pair of steps costs $2$ and the average is $1$. Assign potentials $V(A)=1$ and $V(B)=0$. Adding the new potential minus the old potential changes the two costs to
\[
\underbrace{2+V(B)-V(A)}_{\text{step from }A}=1,
\qquad
\underbrace{0+V(A)-V(B)}_{\text{step from }B}=1.
\]
Each complete cycle still costs $2$: the amount subtracted on the first step is restored on the second. Over an incomplete trajectory, only the final potential minus the initial potential remains. These are abstract cost units, chosen to illustrate the averaging argument, rather than numerical bounds on capacity.

In the deletion-channel state process, the next state is random rather than forced to alternate. The expression $(P_aV)(w,u)$ is its expected next potential. At an absorbing state, where the next state equals the current state with probability one, this expected difference is zero. A positive probability of a self-loop alone does not force zero correction: transitions to other states also enter the expectation.

The potential enters the capacity bound only through $(P_aV)(w,u)-V(w,u)$. Under a stationary source, its expected value is zero because the old and new states have the same distribution. For a finite sequence of moves, the expected differences sum to the expected final potential minus the expected initial potential. The absolute value of this difference is bounded by the range of $V$, independently of the number of moves; see \cref{lem:potential}.

\subsection{Combining output information with input entropy}
All 177 shared-tail instances in the published certificate use the mixture weight $t=0$. Their test therefore uses only the output-information bound, the omitted-pattern term, and the potential difference; no input probability table $K$ enters its value. We give the more general mixture below to state the full valid test.
Let $W$ be the $R$-bit input window and $U$ the first $m-b$ survivors after it. The retained weights $M_{W,U}$ have mass $1-\beta$; $\ell_{\max}$ is the largest loss assigned by $Q$.

With this notation, the survivor argument and the missing-mask lemma give $(1-d)\E\Psi_Q(M_{W,U})+(1-d)\beta\ell_{\max}$. Independently, information cannot exceed the input entropy rate $h(X)$. A number bounded by both $A$ and $B$ is also bounded by $(1-t)A+tB$ for every $0\le t\le1$. We use this elementary observation to combine them.
The input entropy rate can also be bounded by a finite probability table. Let $X_i$ be the bit to be prepended, $W$ the next $R$ raw bits, and $U$ the first $m-b$ survivors after those bits. For each $(w,u)$, choose positive probabilities $K(0\mid w,u),K(1\mid w,u)$ summing to one. Conditional cross entropy bounds $H(X_i\mid W,U)$ by the expected negative logarithm of this table. The full comparison is
\[
h(X)=H(X_i\mid X_{i+1}^{\infty})
\le H(X_i\mid W,U)
\le\E[-\log_2K(X_i\mid W,U)].
\]
The first equality is the stationary entropy-rate formula read backward; finite-block entropy is unchanged by reversing the order of the chain rule. For the first inequality, conditioning also on the entire raw future can only reduce entropy. Once that future is known, the extra random coins generating $U$ are independent of $X_i$, so this additional conditioning gives exactly the left-hand entropy. The last inequality is conditional cross entropy, \cref{lem:cross-entropy}.

Thus $K$ bounds input entropy and $Q$ bounds output entropy. Both are auxiliary distributions; neither prescribes a receiver algorithm.

\begin{certbox}[title={The shared-tail certificate}]
Fix $d,t,Q,K,V,R,m,b$ as above. For every raw window $w\in\bits^R$, tail $u\in\bits^{m-b}$, and prepended bit $a\in\bits$, require
\begin{align}\label{eq:shared-row}
&(1-t)(1-d)\Psi_Q(M_{w,u})+(P_aV)(w,u)-V(w,u)\notag\\
&\quad-t\log_2K(a\mid w,u)
 +(1-t)(1-d)\beta\ell_{\max}\le u_\star.
\end{align}
The first and last terms bound output information, with weight $1-t$. The term involving $K$ bounds input entropy, with weight $t$. The potential change has zero stationary average. The retained measure $M_{w,u}$ keeps its original mass $1-\beta$.
\end{certbox}

\begin{theorem}[The finite shared-tail test implies a capacity bound]\label{thm:shared}
If \eqref{eq:shared-row} holds for all its rows, then $C(d)\le u_\star$.
\end{theorem}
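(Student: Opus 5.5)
By \cref{lem:stationary}, it suffices to prove that every stationary binary input satisfies $\limsup_n n^{-1}I(X_1^n;Y_n)\le u_\star$. Fix such a source and extend it, together with independent deletions, to a doubly infinite process. At a fixed raw position $i$ let $W=X_{i+1}^{i+R}$, let $U$ be the first $m-b$ survivors beyond position $i+R$, and let $a=X_i$ be the prepended bit. We then average the row inequality \eqref{eq:shared-row} over the stationary joint law of $(W,U,a)$. Since every row holds, the average is also at most $u_\star$. The plan is to show that the averaged left side dominates $\limsup_n n^{-1}I$, term by term.

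For the output-information part, we combine \cref{cor:survivor-rate} with \cref{lem:shared-tail}. The corollary bounds the rate by $(1-d)\E\Psi_Q(P_F)$ with $F$ the raw future after an origin. Taking the origin just after position $i$, the lemma bounds $\Psi_Q(P_F)$ by $\E[\Psi_Q(M_{W,U})\mid F]+\beta\ell_{\max}$. Averaging over $F$ gives the rate bound $(1-d)\E\Psi_Q(M_{W,U})+(1-d)\beta\ell_{\max}$. For the input-entropy part, the displayed chain preceding the certificate gives $h(X)\le\E[-\log_2K(X_i\mid W,U)]$. The remark after \cref{cor:survivor-rate}, via \cref{lem:entropy-rate}, gives $\limsup n^{-1}I\le h(X)$. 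Since the rate is bounded by both quantities, it is also bounded by their convex combination with weights $1-t$ and $t$.

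It remains to show that the potential term $\E[(P_aV)(W,U)-V(W,U)]$ vanishes. By \eqref{eq:physical-transition} and the independence argument stated there, $(P_aV)(W,U)$ is the conditional expectation, given $(W,U,a)$, of $V(W',U')$. Here $(W',U')$ is the state at position $i-1$: $W'=X_i^{i+R-1}$ and $U'$ is the first $m-b$ survivors beyond position $i+R-1$. This holds because the deletion indicator of $c=X_{i+R}$ is independent of $a$, of $W$, and of the indicators after $c$ that determine $U$, whatever the source's memory. By the tower property, the expected potential term is therefore $\E V(W',U')-\E V(W,U)$. Stationarity of the joint input–deletion process under shifts makes $(W',U')$ and $(W,U)$ equidistributed, so this difference is zero. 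Alternatively, one can invoke \cref{lem:potential}, since the fixed-operator hypothesis was verified after \eqref{eq:physical-transition}.

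The step needing the most care is this identification of $P_aV$ as the true conditional expectation of the shifted state. Its validity rests on two points. First, the conditioning must involve only $(W,U,a)$ and not further future information that correlates with $D_{i+R}$. Second, the weights $d$ and $1-d$ must be exactly right even though $U$ is defined through survivors rather than raw bits. Once this is settled, we put the pieces together: the averaged row gives
\[
(1-t)\bigl[(1-d)\E\Psi_Q(M_{W,U})+(1-d)\beta\ell_{\max}\bigr]+t\,\E[-\log_2K(a\mid W,U)]\le u_\star .
\]
The left side bounds $\limsup_n n^{-1}I(X_1^n;Y_n)$ for every stationary source, and \cref{lem:stationary} then yields $C(d)\le u_\star$.
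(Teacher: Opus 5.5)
Your proposal is correct and follows the paper's proof. You combine \cref{cor:survivor-rate} with \cref{lem:shared-tail} and the $K$-table input-entropy bound through a convex combination, cancel the expected potential change by stationarity, average the rows, and remove the stationarity restriction with \cref{lem:stationary}; your explicit tower-property argument for the potential term spells out the fixed-operator verification that the paper invokes through \eqref{eq:physical-transition} and \cref{lem:potential}.
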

\begin{proof}
\emph{Bound one arbitrary stationary source.}
Fix a stationary input and denote its information-rate limsup by $\overline I$. Combining the two bounds just derived gives
\[
\overline I\le\E\!\left[(1-t)(1-d)\Psi_Q(M_{W,U})
-t\log_2K(X_i\mid W,U)
+(1-t)(1-d)\beta\ell_{\max}\right].
\]
\emph{Average the finite inequalities.}
Adding the expected potential change leaves this right side unchanged, by stationarity. The resulting expression is the average of the left side of \eqref{eq:shared-row}. Every row is at most $u_\star$, so its average is at most $u_\star$.

\emph{Remove the stationarity reduction.}
This proves the bound for every stationary source. The block-concatenation argument in \cref{lem:stationary} extends it to unrestricted channel capacity.
\end{proof}
Thus the computation only has to verify a finite set of explicitly defined inequalities. The probability tables and potential need not minimize this maximum: every proposal passing the complete test gives the claimed bound.

\subsection{Where computation enters: evaluate every finite row}\label{sec:shared-computation}
The analytic task is complete: \cref{thm:shared} has reduced capacity to a stated finite test. The numerical input consists of exact parameters $d,R,m,b,t$, normalized tables $Q,K$, and potential values $V$. Search changes these parameters to lower the largest row value; verification fixes them and checks every row.

For each fixed proposal, each entropy term in the row can be evaluated by pairs of masses. Fix an output prefix $v$ of length $m-1$ and let $p_0=M_{w,u}(v0)$, $p_1=M_{w,u}(v1)$. Substituting these masses in the definition of $\Psi_Q$ gives its prefix contribution
\[
p_0\log_2\frac{p_0}{(p_0+p_1)Q(0\mid v)}
+p_1\log_2\frac{p_1}{(p_0+p_1)Q(1\mid v)}.
\]
There are only two possible last bits, which explains both summands and the shared prefix mass $p_0+p_1$. Zero masses contribute zero. Summing this expression over all prefixes gives $\Psi_Q(M_{w,u})$.

The window-mask probabilities are computed by processing one raw bit at a time. For each partial trace, deletion leaves the trace unchanged and multiplies its mass by $d$; survival appends the bit and multiplies the mass by $1-d$. Every mask has exactly one sequence of these choices, so after $R$ steps the recursion counts it exactly once. Keep the masks meeting the cutoff and complete their traces with $u$ as defined above.

For each row, the checker computes an interval containing its exact value and keeps the upper endpoint, including the potential difference and the complete omitted-pattern term. The arithmetic rules are given in \cref{sec:arithmetic}. There are $2^R$ possible raw windows, $2^{m-b}$ outside survivor words, and two prepended bits, hence $2^{R+m-b+1}$ rows. If every upper endpoint is at most $u_\star$, the hypothesis of \cref{thm:shared} holds.

Every stationary input generates only states among these checked rows. The remainder covers every omitted mask and the potential has zero mean change. Hence the finite computation proves an unrestricted capacity bound by \cref{lem:stationary}. Its cost is exponential in $R$ and $m-b$, but independent of transmitted length. Recorded costs and the division between search and verification are reported in \cref{sec:reproduction}; the theorem does not assume that the search found a best table.

%% file: sections/roadmap_shared.tex
\subsection{Proof plan: count a window, bound what it misses}\label{sec:roadmap-shared}
We inspect $R$ input bits but need $m$ received bits. Some of those received bits can come from outside the window. The construction records $m-b$ outside survivors and counts masks leaving at least $b$ survivors inside. All counted masks use the \emph{same} outside sequence: changing deletions inside the window cannot change deletions after it.

\begin{enumerate}
\item Form the unnormalized distribution $M_{w,u}$ of the counted cases. Its missing mass is the explicit binomial probability $\beta$ in \eqref{eq:beta}.
\item \Cref{lem:shared-tail} bounds their information cost by $\Psi_Q(M_{w,u})$ and charges at most $\beta\ell_{\max}$ for the omitted cases. This covers every possible distant input.
\item Add a bounded function's change between overlapping windows. Its stationary average is zero, but it can reduce the largest local cost. Combine with input entropy when useful.
\item Check the single finite inequality \eqref{eq:shared-row} for every window, outside word, and prepended bit. \Cref{thm:shared} proves that its largest checked value upper-bounds capacity.
\end{enumerate}
The tables $Q,K$ and the bounded function $V$ are chosen to make that largest value small. They are numerical parameters of a sufficient test; they do not restrict the unknown input distribution. The use of a bounded potential follows the average-reward method \citep{Puterman94,HSPKS21}. The deletion-specific work is the common outside sequence and the explicit omitted probability.

\input{figures/intro_window}

%% file: figures/intro_window.tex
\begin{figure}[!htbp]\centering
\begin{tcolorbox}[width=.98\linewidth,colback=teal!3,colframe=teal!65!black,coltext=ink,boxrule=.7pt,arc=2pt,left=9pt,right=9pt,top=8pt,bottom=8pt]
\textbf{Upper-bound example: three inspected input bits, two required output bits}
\par\smallskip
The local entropy calculation uses the \emph{first two bits that survive from the start of the inspected window}. A three-bit input window need not supply both. We must bound this calculation even when one or both output bits come from later input positions.
\par\smallskip
\centering
\fbox{\strut inspected input $010$}\quad\big|\quad
\fbox{\strut later surviving sequence begins with $\mathbf 1$}
\par\smallskip\raggedright
Fix the later input and its deletion pattern, with first outside survivor $1$. Vary only deletions inside $010$. At deletion probability $d=1/2$, its eight deletion patterns still each have probability $1/8$. A cross marks a deleted input bit; a bold $\mathbf1$ in the output comes from outside the window.
\par\smallskip\centering
\begin{tabular}{@{}ccc@{}}\toprule
Inspected input after deletions & First two output bits & Probability\\\midrule
$010$ & $01$ & $1/8$\\
$\times10$ & $10$ & $1/8$\\
$0\times0$ & $00$ & $1/8$\\
$01\times$ & $01$ & $1/8$\\
$0\times\times$ & $0\mathbf1$ & $1/8$\\
$\times1\times$ & $1\mathbf1$ & $1/8$\\
$\times\times0$ & $0\mathbf1$ & $1/8$\\\midrule
$\times\times\times$ & Both bits come from outside & $1/8$\\\bottomrule
\end{tabular}
\par\smallskip\raggedright
\textbf{Seven cases are determined.} The first seven rows use at most the same one outside survivor. Their output-pair masses are $1/8,4/8,1/8,1/8$ for $00,01,10,11$. These masses sum to $7/8$; they are not renormalized. Among retained cases whose first output bit is zero, four of five have second bit one. This probability $4/5$ is used to compute the channel entropy subtracted in the converse; it need not equal the chosen comparison probability $1/2$.
\par\smallskip
\textbf{One case is bounded.} The last row has probability $1/8$. If the comparison table assigns $1/2$ to each next bit, its negative logarithm is always one, so this row contributes at most $1/8$ to the local expected negative logarithm.
\end{tcolorbox}
\caption{A retained-mask calculation for the upper bound. The same outside survivor is used in all seven retained cases, with their original probabilities. The omitted case has mass $1/8$ and is bounded separately. Repeating for outside survivor $0$ and every three-bit input covers all window states. The final converse also subtracts conditional channel entropy and checks the overlapping-window inequalities; this example illustrates its finite calculation.}\label{fig:intro-window}
\end{figure}

%% file: sections/posterior.tex
\section{A converse from posterior coding and entropy cancellation}\label{sec:posterior}
Fix a window of $R$ input bits. Fix the entire input sequence and let $\mu$ be the conditional distribution, over deletions alone, of its first $m$ survivors strictly after that window. This is a distribution on $2^m$ words. It varies with the specified input; the test will cover every such distribution. The information-rate expression $(1-d)\mathbb E\Psi_Q(P_F)$ is nonlinear in these probabilities because it subtracts a conditional entropy. We add a bounded potential whose entropy change cancels that term. The remaining upper expression is linear in $\mu$, so checking it for every fixed outside word covers every distribution. This section derives the cancellation and the resulting finite test.

\input{sections/roadmap_posterior}
\subsection{The distribution of the next surviving word}
Fix $0\le d<1$, a raw-window length $R\ge1$, and an output-word length $m\ge1$. A tail word $z=z_1\cdots z_m$ contains the first $m$ surviving bits strictly after the raw window. If a raw bit $a$ is inserted before this survivor sequence and is retained, the new first $m$ survivors are $az_1\cdots z_{m-1}$. Define $B_az$ to be this word. For a probability distribution $\mu$, the distribution $B_a\mu$ assigns to each word $y$ the sum of $\mu(z)$ over words satisfying $B_az=y$. Thus $B_a$ acts on words by substitution and on distributions by summing the probabilities of their preimages.

The prepended bit is deleted with probability $d$ and retained with probability $1-d$, independently of the old survivor word. On deletion the word remains unchanged; on retention it becomes $B_az$. The new word distribution is therefore
\[
T_a\mu=d\mu+(1-d)B_a\mu.
\]
The two disjoint events have total probability one, so this is a normalized distribution.

Now let $w=w_1\cdots w_R$ be the known raw window. Starting from the distribution $\mu$ of the next $m$ survivors after it, first insert $w_R$, then $w_{R-1}$, and continue to $w_1$. The resulting distribution of the first $m$ survivors from the beginning of the window is
\[
P_w\mu=T_{w_1}\cdots T_{w_R}\mu.
\]
The rightmost operator acts first. For example, $P_{01}\mu=T_0(T_1\mu)$: incorporate the nearer raw bit $1$ into the tail description before incorporating the earlier $0$. In general the operators do not commute.

Now prepend a raw bit $a$ and move the window backward. Write $w'=aw_1\cdots w_{R-1}$ and $c=w_R$ for its new window and outgoing bit. The outgoing bit becomes part of the tail, whose law is now $T_c\mu$. Expanding both operator products gives
\begin{equation}\label{eq:intertwining}
P_{w'}T_c=T_aP_w.
\end{equation}
Indeed, the left side is $T_aT_{w_1}\cdots T_{w_{R-1}}T_{w_R}$, exactly the right side. Both describe the same next-$m$-survivor law after the move. This equality concerns composition order, not commutation.

\subsection{The entropy change of one deletion mixture}
We next compare the old and new word entropies. Let $Z$ have any distribution $\nu$ on length-$m$ words. The variable $J$ records whether the prepended raw bit $a$ is deleted or retained; it is independent of $Z$, with probabilities $d$ and $1-d$. Define the new word by $W=Z$ on deletion and $W=B_aZ$ on retention. Its distribution is $T_a\nu$. This calculation applies to every $\nu$, including the distribution obtained after processing a particular raw window.

\paragraph{Worked example: the output does not determine the deletion indicator.}
Take $m=2$, $d=1/2$, and let the old word $Z$ be uniform on $00,01,10,11$. Prepend the raw bit $a=0$. On deletion the new word is still $Z$; on survival it is $0Z_1$, so the old last bit $Z_2$ is discarded. There are eight equally likely pairs consisting of the deletion indicator and the old word. Their new words have the following probabilities.
\begin{center}
{\small Uniform old word $Z\in\{00,01,10,11\}$; deletion probability $d=1/2$; prepended bit $a=0$.}\par\smallskip
\begin{tabular}{@{}ccc@{}}\toprule
New word $W$ & Probability & $\Pr(J=\mathrm{delete}\mid W)$\\\midrule
$00$ & $3/8$ & $1/3$\\
$01$ & $3/8$ & $1/3$\\
$10$ & $1/8$ & $1$\\
$11$ & $1/8$ & $1$\\\bottomrule
\end{tabular}
\end{center}
For example, $00$ arises from deleting the prepended zero before old word $00$, or retaining it before old word $00$ or $01$. Each explanation has probability $1/8$. One of the three deletes the prepended bit, explaining the conditional probability $1/3$. If $W$ begins with $1$, survival is impossible because it would have put a zero first.

The old entropy is $H(Z)=2$. Revealing whether the prepended bit survived leaves entropy two on deletion and one on survival, so $H(W\mid J)=(2+1)/2=3/2$. In the table, the first new bit is zero with probability $3/4$, and the second is a fair bit independent of the first. The chain rule therefore gives $H(W)=\hb(1/4)+1$. Subtracting the entropy with the selector revealed gives
\[
I(J;W)=H(W)-H(W\mid J)=\hb(1/4)-\frac12.
\]
Thus the change in word entropy is
\[
H(W)-H(Z)=\hb(1/4)-1=-\frac12+I(J;W).
\]
The negative term is the survival probability $1/2$ times the one bit of uncertainty discarded from $Z_2$. The selector information accounts for not being told which branch occurred. The following identity holds for every old-word distribution $\nu$, not just the uniform example.

\begin{lemma}[Selector entropy identity]\label{lem:selector}
For the law and selector just defined,
\[
H(T_a\nu)-H(\nu)
=-(1-d)H_\nu(Z_m\mid Z_1^{m-1})+I(J;W).
\]
\end{lemma}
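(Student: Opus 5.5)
We expand $H(W)$ in two ways through the joint entropy $H(J,W)$. Since $W$ is a function of $(J,Z)$, we have $H(W)=H(W\mid J)+I(J;W)$. So it suffices to show
\[
H(W\mid J)=H(\nu)-(1-d)H_\nu(Z_m\mid Z_1^{m-1}).
\]
The selector $J$ is independent of $Z$, so conditioning on each branch leaves the law of $Z$ equal to $\nu$. We therefore evaluate the two branches separately and average them with weights $d$ and $1-d$.

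On the deletion branch, $W=Z$, so $H(W\mid J=\text{delete})=H(\nu)$. On the retention branch, $W=B_aZ=aZ_1\cdots Z_{m-1}$. The leading bit $a$ is a fixed constant, so $W$ is an injective function of the prefix $Z_1^{m-1}$. Hence $H(W\mid J=\text{retain})=H_\nu(Z_1^{m-1})$. By the chain rule (\cref{lem:entropy-chain}),
\[
H_\nu(Z_1^{m-1})=H(\nu)-H_\nu(Z_m\mid Z_1^{m-1}).
\]
Averaging the two branches gives $H(W\mid J)=dH(\nu)+(1-d)\bigl(H(\nu)-H_\nu(Z_m\mid Z_1^{m-1})\bigr)$, which is the displayed expression. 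Substituting this into $H(W)=H(W\mid J)+I(J;W)$ and recalling that $W$ has law $T_a\nu$ gives the identity.

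The edge case $m=1$ needs a check. There $Z_1^{0}$ is empty, $B_aZ=a$ is constant, and the conditional entropy term is just $H_\nu(Z_1)$. The same computation goes through with the convention that an empty prefix carries zero entropy. The degenerate values $d=0$ and $d=1$ also pass, because a deterministic $J$ gives $I(J;W)=0$ and the formula reduces to the single surviving branch.

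The main point to verify carefully is the retention-branch claim that $B_a$ restricted to a fixed $a$ loses exactly the last survivor and nothing else. This follows because $a$ is fixed, and independence of $J$ and $Z$ ensures each conditional law is exactly $\nu$. No inequality is used, so the identity is exact. As a check, the worked example satisfies it: $H(W)-H(Z)=\hb(1/4)-1$ and $I(J;W)=\hb(1/4)-1/2$.
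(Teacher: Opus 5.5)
Your proof is correct and follows essentially the same route as the paper's. You write $H(W)=H(W\mid J)+I(J;W)$, evaluate the deletion branch as $H(\nu)$, and evaluate the retention branch as $H_\nu(Z_1^{m-1})$ using injectivity of $B_a$, then finish with the chain rule. One small point: $H(W)=H(W\mid J)+I(J;W)$ is just the definition of mutual information, so the remark that $W$ is a function of $(J,Z)$ is not needed there.
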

\begin{proof}
\emph{Separate deletion from survival.}
Conditioning on the selector gives a weighted average:
\[
H(W\mid J)=dH(\nu)+(1-d)H(B_a\nu).
\]
The two weights are the probabilities of its two values. On survival, the first symbol $a$ is fixed and the remaining symbols are $Z_1^{m-1}$. Consequently $H(B_a\nu)=H_\nu(Z_1^{m-1})$.

\emph{Identify the discarded uncertainty.}
The chain rule, \cref{lem:entropy-chain}, also gives
$H(\nu)=H_\nu(Z_1^{m-1})+H_\nu(Z_m\mid Z_1^{m-1})$.
Finally $H(W)=H(W\mid J)+I(J;W)$ by \cref{lem:mixture}. Substitute the preceding expressions, subtract $H(\nu)$, and use $d+(1-d)=1$. The coefficient of the last-symbol conditional entropy is $-(1-d)$, proving the identity.
\end{proof}

\subsection{Bounding the conditional entropy of the deletion indicator}
After the entropy cancellation below, the rate bound will contain $-I(J;W)$. Since $I(J;W)=H(J)-H(J\mid W)$ and $H(J)=\hb(d)$, an upper bound on $H(J\mid W)$ gives an upper bound on $-I(J;W)$. For each raw window $w$, prepended bit $a$, and observed word $y$, choose probabilities $K(\mathrm{delete}\mid w,a,y)$ and $K(\mathrm{survive}\mid w,a,y)$ summing to one. We use their expected negative logarithm to upper-bound this conditional entropy.

Here $K$ is an auxiliary distribution for the deletion indicator; in the preceding section it described an input bit. The known prepended bit $a$ is now a conditioning argument.

The output word restricts which indicator values are possible. A word not beginning with $a$ cannot have arisen by survival, since survival prepends $a$. For such a word, set its deletion probability in $K$ to one. For a word beginning with $a$, choose both entries of $K$ positive. An impossible survival event is never evaluated inside a logarithm. In the worked example, the ideal selector code uses deletion probabilities $1/3,1/3,1,1$ in the four rows of the table. Its entries are the conditional probabilities of deletion and retention in that example. In the certificate, any code satisfying the stated support and normalization conditions gives a valid cross-entropy inequality.

Conditional cross entropy, \cref{lem:cross-entropy} \citep[Section~2.6]{CT06}, says $H(J\mid W)\le\E[-\log_2K(J\mid w,a,W)]$. Since $-I(J;W)=-H(J)+H(J\mid W)$, separating the two selector outcomes gives
\begin{align*}
-I(J;W)\le{}&-\hb(d)
+d\E_\nu[-\log_2K(\mathrm{delete}\mid w,a,Z)]\\
&+(1-d)\E_\nu[-\log_2K(\mathrm{survive}\mid w,a,B_aZ)].
\end{align*}
The different word arguments are essential: deletion shows $Z$, while survival shows $B_aZ$.

The output comparison distribution $Q$ assigns probabilities to the last of $m$ received symbols from the preceding $m-1$. Its logarithmic loss is $\ell_Q(z)=-\log_2Q(z_m\mid z_1^{m-1})$. To write the coming finite inequality, define the following sum of the output logarithmic loss and the two conditional logarithmic terms from $K$:
\begin{align}\label{eq:posterior-loss}
\mathcal L_{w,a}(z)={}&(1-d)\ell_Q(z)
+d[-\log_2K(\mathrm{delete}\mid w,a,z)]\notag\\
&+(1-d)[-\log_2K(\mathrm{survive}\mid w,a,B_az)].
\end{align}
The coefficient $1-d$ on the output term converts from survivors to input bits. The other two coefficients average the deletion and retention events for $J$. All three logarithmic terms are nonnegative because their arguments are probabilities at most one. This nonnegativity will permit the finite sums to be bounded term by term.

\subsection{Choosing a potential that removes the nonlinear term}
For this step, $\mu$ is the distribution of the first $m$ survivors after the raw window, and $\nu=P_w\mu$ is the distribution of the first $m$ survivors when the window is included. The rate expression uses the latter distribution. Choose one finite real coefficient $v_w(z)$ for each raw window $w$ and each outside survivor word $z$. Define the bounded potential\[
\mathcal V(w,\mu)=-H(P_w\mu)+\sum_{z\in\bits^m}v_w(z)\mu(z).
\]
The negative entropy is prescribed by the identity in \cref{lem:selector}; its change will cancel the conditional entropy in the rate bound. The coefficients $v_w(z)$ remain free parameters in the finite inequalities. Since $0\le H(P_w\mu)\le m$ and the second term is a probability-weighted average of finitely many coefficients, $\mathcal V$ is bounded uniformly over $w$ and $\mu$, including distributions with zero probabilities.

The potential is chosen before the unknown tail law is maximized over. Its entropy term is fixed by the channel identity; only the finite coefficients $v_w(z)$ are optimized. The next lemma carries out the cancellation and identifies precisely what the checker must evaluate.

\subsection{A finite test covering every tail distribution}
We explain the matrix notation in the finite test. For a tail word $z$, let $(P_w^\top\mathcal L_{w,a})(z)$ be the expected value of $\mathcal L_{w,a}$ when the tail is deterministically $z$ and the window mask is random. Averaging first over that mask and then over $z\sim\mu$ gives
\[
\E_{P_w\mu}\mathcal L_{w,a}
=\sum_z\mu(z)(P_w^\top\mathcal L_{w,a})(z).
\]
This is the finite interchange of the two sums; it is the meaning of the transpose here.

If the outside law assigns probabilities $\theta$ and $1-\theta$ to two words, linearity makes its upper expression the same weighted average of their row values. Thus bounding each row by $u_\star$ bounds every mixture by $\theta u_\star+(1-\theta)u_\star=u_\star$. The argument extends by summation to any distribution on the finite outside alphabet.

\begin{certbox}[title={The posterior-code certificate}]
For every window $w\in\bits^R$, prepended bit $a\in\bits$, and tail word $z\in\bits^m$, check
\begin{align}\label{eq:posterior-row}
G(w,a,z)={}&(P_w^\top\mathcal L_{w,a})(z)-\hb(d)\notag\\
&+d[v_{w'}(z)-v_w(z)]
+(1-d)[v_{w'}(B_cz)-v_w(z)]\le u_\star.
\end{align}
Here $w'=aw_1\cdots w_{R-1}$ and $c=w_R$. The output predictor $Q$ is shared across all raw windows. The selector code in $\mathcal L$ uses the prepended bit $a$; the potential's tail transition uses the outgoing bit $c$.
\end{certbox}

\begin{lemma}[The cancelled expression is an average of finite rows]\label{lem:posterior-linear}
For every $w\in\bits^R$, $a\in\bits$, and probability law $\mu$ on the $m$ outside survivors,
\[
(1-d)\Psi_Q(P_w\mu)+\mathcal V(w',T_c\mu)-\mathcal V(w,\mu)
\le\sum_z\mu(z)G(w,a,z).
\]
\end{lemma}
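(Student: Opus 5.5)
We expand the left side term by term and use the selector identity to cancel the nonlinear entropy. Throughout, write $\nu=P_w\mu$. Two facts about the moved state will be used. First, by the intertwining relation \eqref{eq:intertwining}, $P_{w'}T_c\mu=T_aP_w\mu=T_a\nu$. Second, the linear part of $\mathcal V(w',T_c\mu)$ is
\[
\sum_y v_{w'}(y)(T_c\mu)(y)=\sum_z\mu(z)\bigl[d\,v_{w'}(z)+(1-d)v_{w'}(B_cz)\bigr].
\]
This follows from $T_c\mu=d\mu+(1-d)B_c\mu$ and the definition of $B_c$ on distributions as a preimage sum. Subtracting $\sum_z\mu(z)v_w(z)$ produces exactly the two potential brackets in \eqref{eq:posterior-row}, averaged over $\mu$.

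\emph{Entropy part.} The entropy part of the potential change is $-H(T_a\nu)+H(\nu)$. By \cref{lem:selector}, applied with old law $\nu$ and prepended bit $a$, it equals
\[
(1-d)H_\nu(Z_m\mid Z_1^{m-1})-I(J;W).
\]

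\emph{Output-loss part.} For the probability distribution $\nu$, we have $\Psi_Q(\nu)=\E_\nu\ell_Q-H_\nu(Z_m\mid Z_1^{m-1})$. Hence
\[
(1-d)\Psi_Q(\nu)+(1-d)H_\nu(Z_m\mid Z_1^{m-1})=(1-d)\E_\nu\ell_Q .
\]
This is the cancellation: the conditional survivor entropy disappears.

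\emph{Selector-information part.} What remains is to bound $-I(J;W)$. We use the cross-entropy bound stated before \eqref{eq:posterior-loss}. It is valid because $K$ is positive on every feasible indicator value and assigns deletion probability one wherever survival is impossible. The bound reads
\[
-I(J;W)\le-\hb(d)+d\,\E_\nu[-\log_2K(\mathrm{delete}\mid w,a,Z)]+(1-d)\E_\nu[-\log_2K(\mathrm{survive}\mid w,a,B_aZ)].
\]
Combining this with the output term gives $\E_\nu\mathcal L_{w,a}-\hb(d)$.

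\emph{Assembly.} By the transpose interchange, $\E_{P_w\mu}\mathcal L_{w,a}=\sum_z\mu(z)(P_w^\top\mathcal L_{w,a})(z)$. The constant $-\hb(d)$ equals $\sum_z\mu(z)(-\hb(d))$ because $\mu$ is normalized. Collecting the three contributions gives $\sum_z\mu(z)G(w,a,z)$.

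\emph{Main obstacle.} The delicate step is the bookkeeping between two different bits. The selector acts on $\nu=P_w\mu$ with the prepended bit $a$, while the potential's tail transition uses the outgoing bit $c$. These are reconciled only by $P_{w'}T_c=T_aP_w$, so the entropy term of $\mathcal V(w',T_c\mu)$ must be rewritten as $-H(T_a\nu)$ before \cref{lem:selector} is invoked. A secondary check concerns the cross-entropy step. The conditional entropy is $H(J\mid W)$ with $w,a$ fixed, and the expectation over $W$ is taken under $T_a\nu$. The two branches therefore evaluate $K$ at $Z$ and at $B_aZ$ respectively, with $Z\sim\nu$. Zero-probability words are harmless because every logarithm of $K$ is taken on its support.
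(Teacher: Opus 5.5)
Your proposal is correct and follows essentially the same route as the paper. Both arguments rewrite the new entropy term as $-H(T_a\nu)$ via the intertwining relation, cancel the conditional survivor entropy with \cref{lem:selector}, bound $-I(J;W)$ by cross entropy with $K$, and expand the linear part of the potential through $T_c\mu=d\mu+(1-d)B_c\mu$ before the transpose interchange and the averaging of the constant $-\hb(d)$.
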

\begin{proof}
Put $\nu=P_w\mu$ and $h_\nu=H_\nu(Z_m\mid Z_1^{m-1})$. Expanding the rate expression and using \cref{lem:selector} gives
\[
\begin{aligned}
(1-d)\Psi_Q(\nu)&=(1-d)\mathbb E_\nu\ell_Q-(1-d)h_\nu,\\
-H(T_a\nu)+H(\nu)&=(1-d)h_\nu-I(J;W).
\end{aligned}
\]
The second line reverses the signs of the selector identity because the potential contains negative entropy. Equation~\eqref{eq:intertwining} identifies $T_a\nu$ with the new window's survivor distribution. Adding the lines cancels the two conditional-entropy terms. Applying the conditional cross-entropy bound for the deletion-decision table $K$ leaves at most $\mathbb E_{P_w\mu}\mathcal L_{w,a}-h_2(d)$.

It remains to expand the linear part of the potential. On deletion of the outgoing bit $c$, the tail word stays $z$; on survival it becomes $B_cz$. Hence its change is
\[
\sum_z\mu(z)\{d[v_{w'}(z)-v_w(z)]
 +(1-d)[v_{w'}(B_cz)-v_w(z)]\}.
\]
The old potential is subtracted once because $d+(1-d)=1$. Finally, average the window-mask expectation first for a fixed tail $z$, then over $z\sim\mu$. By the definition of $P_w^\top\mathcal L$, this gives the first term of $G$ for each $z$. The constant $-h_2(d)$ can also be averaged because $\sum_z\mu(z)=1$. Together these terms give exactly the claimed weighted sum of rows.
\end{proof}

\begin{theorem}[The finite entropy-cancellation test implies a capacity bound]\label{thm:posterior}
Fix $0\le d<1$ and integers $R,m\ge1$. Let $Q(\cdot\mid v)$ be positive and normalized for every $v\in\bits^{m-1}$. For every $w,a,y$, let $K(\cdot\mid w,a,y)$ be normalized, with both entries positive if $y_1=a$, and deletion probability one if $y_1\ne a$. Let all $v_w(z)$ be finite real numbers. If \eqref{eq:posterior-row} holds for every $(w,a,z)\in\bits^R\times\bits\times\bits^m$, then $C(d)\le u_\star$.
\end{theorem}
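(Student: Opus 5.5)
The plan mirrors the proof of \cref{thm:shared}, with \cref{lem:posterior-linear} replacing the row-by-row domination. By \cref{lem:stationary} it suffices to show that every stationary input satisfies $\limsup_n n^{-1}I(X_1^n;Y_n)\le u_\star$. So fix a stationary source and place the origin at an arbitrary raw position. Let $W$ be the next $R$ raw bits and $\mu_F$ the conditional law, over deletions alone, of the first $m$ survivors after that window given the raw input beyond it. Then $P_W\mu_F$ is exactly the first-$m$-survivor law $P_F$ of \cref{cor:survivor-rate}. That corollary gives
\[
\limsup_{n\to\infty}\tfrac1n I(X_1^n;Y_n)\le(1-d)\,\mathbb E\,\Psi_Q(P_W\mu_F).
\]

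\emph{Adding the potential.} Let $A$ be the raw bit immediately preceding the window and $c=W_R$. The backward shift sends the state $(W,\mu_F)$ to $(W',\mu')$. Here $W'=AW_1\cdots W_{R-1}$ and $\mu'$ is the law of the first $m$ survivors after the new window. Since the survivor law after $c$ is determined by raw bits after $c$, we get $\mu'=T_c\mu_F$ exactly, with no averaging needed. This uses the independence of $c$'s deletion indicator from the deletions after it, the same fact that justified \eqref{eq:physical-transition}, together with the intertwining relation \eqref{eq:intertwining}.

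Stationarity implies that the pair $(W',\mu')$ has the same distribution as $(W,\mu_F)$. This holds because $\mu_F$ is a fixed measurable function of the raw future, evaluated at a shifted origin. Since $\mathcal V$ is bounded, $\mathbb E[\mathcal V(W',T_c\mu_F)-\mathcal V(W,\mu_F)]=0$. Adding this zero to the bound above and applying \cref{lem:posterior-linear} pointwise for each realization $(W,A,\mu_F)$ bounds the rate by $\mathbb E\sum_z\mu_F(z)G(W,A,z)$.

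\emph{Finishing.} Each row satisfies $G\le u_\star$ by hypothesis and $\mu_F$ is a probability law, so this expectation is at most $u_\star$. Then \cref{lem:stationary} yields $C(d)\le u_\star$.

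\emph{Hypotheses to check along the way.} The support conditions on $K$ must make every logarithm in $\mathcal L_{w,a}$ finite on events of positive probability. Deletion gives $W=Z$ with any first bit, and there $K(\mathrm{delete})>0$ always. Survival gives $B_aZ$, whose first bit is $a$, and there $K(\mathrm{survive})>0$. Hence the conditional cross-entropy step used inside \cref{lem:posterior-linear} is legitimate. Positivity of $Q$ keeps $\Psi_Q$ finite. The $d=1$ endpoint is excluded by hypothesis.

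\emph{Main obstacle.} The delicate step is justifying the zero-mean potential change. The potential depends on the infinite-future object $\mu_F$ rather than on a finite state, so three things need care. First, $\mathcal V(w,\mu)$ must be a bounded measurable function of the raw future; the bound is $m+\max|v|$. Second, the shifted state must be identified exactly as $(W',T_c\mu_F)$, and not merely in distribution conditional on partial information. Third, the joint law of $(A,W)$ with the future must be used when applying stationarity. I would handle all three by defining everything as functions of the two-sided stationary raw sequence and the independent deletion coins, and by noting that $\mu$ at origin $i$ is a function only of raw bits beyond $i+R$. Shift invariance of the raw law then makes the two potential expectations equal.
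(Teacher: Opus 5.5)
Your proposal is correct and follows the paper's proof essentially step for step. You condition on the full raw input, so the backward update is the deterministic map $(w,\mu)\mapsto(w',T_c\mu)$; stationarity (which the paper phrases through \cref{lem:potential}) makes the bounded potential's expected change vanish; \cref{lem:posterior-linear} bounds what remains by an average of checked rows; and \cref{cor:survivor-rate} with \cref{lem:stationary} finishes. One minor attribution point: the identity $\mu'=T_c\mu_F$ needs only the independence of $c$'s deletion coin, whereas \eqref{eq:intertwining} is used inside \cref{lem:posterior-linear}, not at that step.
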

\begin{proof}
\emph{Cover every tail law by averaging the checked rows.}
By \cref{lem:posterior-linear}, for every tail law $\mu$,
\[
(1-d)\Psi_Q(P_w\mu)
+\mathcal V(w',T_c\mu)-\mathcal V(w,\mu)
\le\sum_z\mu(z)G(w,a,z)\le u_\star.
\]
The last inequality follows simply by multiplying each coordinate bound by the nonnegative mass $\mu(z)$ and adding; the masses sum to one. This is why deterministic tail words cover all tail distributions.

\emph{Cancel the potential along the physical state process.}
Apply this inequality to any stationary raw input, first conditioning on its entire raw sequence. Given that sequence, the outgoing bit $c$ is fixed and its deletion is independent of all deletions after it. The conditional outside-survivor law therefore changes from $\mu$ to $T_c\mu$ when the window moves backward. This assertion would not follow merely by conditioning on the finite window, because that window can be correlated with the unobserved input. On the state consisting of $w$ and this full-input conditional law $\mu$, the backward update is the deterministic map $(w,\mu)\mapsto(w',T_c\mu)$. This is the fixed transition in \cref{lem:potential}; it does not assert that an observed tail word determines $\mu$. Now average over the stationary raw sequence. The bounded potential has the same expected value before and after that shift, so its expected change is zero by \cref{lem:potential}. We conclude $(1-d)\E\Psi_Q(P_F)\le u_\star$. The survivor converse \eqref{eq:reward-bridge} bounds this source's information rate by that quantity.

\emph{Pass from stationary inputs to capacity.}
The source was arbitrary among stationary inputs. Finally \cref{lem:stationary} extends the bound to channel capacity.
\end{proof}

\subsection{Where computation enters: evaluate the finite maximum}\label{sec:posterior-computation}
The fixed numerical inputs are $d,R,m$, the positive normalized output table $Q$, the supported normalized deletion-decision table $K$, and the coefficients $v_w(z)$. Search seeks a small maximum of \eqref{eq:posterior-row}. Verification evaluates an upper bound on every row for the frozen proposal. No candidate input source is part of this upper-bound calculation.

For each window $w$, prepended bit $a$, and fixed outside survivor word $z$, the first term of \eqref{eq:posterior-row} averages $\mathcal L_{w,a}$ over the window masks. We compute this expectation by grouping masks with the same retained word.

Let $A_k(v)$ be the total probability of masks retaining exactly $k<m$ window symbols with trace $v$. Let $C(v)$ be the probability of masks retaining at least $m$ symbols whose first $m$ survivors are $v$.
These masses can be computed without listing the masks separately. Start with probability one at the empty trace. On reading each raw bit, deletion multiplies a trace's mass by $d$ and keeps its trace unchanged; survival multiplies by $1-d$ and appends the bit. Once a trace reaches length $m$, keep only its first $m$ symbols and leave that state unchanged when later bits are processed. The two branch probabilities then sum to one. Induction on the processed raw length shows that each state contains exactly the mass of its described masks.

A mask retaining $k<m$ window symbols needs exactly $\ell=m-k$ tail symbols. If its trace is $v$, its full word is $vz_1^\ell$. Group masks by this number $\ell$. For a tail prefix $t$ of length $\ell$, define
\[
F_\ell(t)=\sum_{v\in\bits^{m-\ell}}A_{m-\ell}(v)\mathcal L_{w,a}(vt)
\qquad(1\le\ell\le m).
\]
Each term is mask probability times its loss. For $\ell=0$, no tail is needed, so instead set $F_0(\varnothing)=\sum_{v\in\bits^m}C(v)\mathcal L_{w,a}(v)$. Every mask belongs to exactly one of these groups. Therefore
\begin{equation}\label{eq:prefix-contraction}
(P_w^\top\mathcal L_{w,a})(z)=\sum_{\ell=0}^mF_\ell(z_1^\ell).
\end{equation}
At level $\ell$, there are $2^\ell$ prefixes $t$ and $2^{m-\ell}$ words $v$, so computing the whole level uses $2^m$ nonnegative products. The $m+1$ levels require $O(m2^m)$ work. Using upper interval endpoints for the probabilities and logarithmic terms gives an upper bound because all these products are nonnegative. The signed potential differences are evaluated separately. Finally the complete coordinate check covers $2^R$ windows, two actions, and $2^m$ tails, hence $2^{R+m+1}$ rows.

The $O(m2^m)$ count is for constructing the displayed sums for one fixed $(w,a)$ from its mask masses, not for checking the whole certificate. There are $2^R$ windows and two prepended bits, and the potential terms must also be checked for all $2^m$ outside words. The cost therefore grows exponentially with the two window parameters even though it is independent of transmitted length. Recorded verification times appear in \cref{sec:reproduction}.

In the published cover, posterior anchors supply the selected upper bounds on cells from $d=27/50$ to $d=1$; the endpoint $C(1)=0$ is exact. These selections describe the numerical certificate, not the validity domain of \cref{thm:posterior}.

%% file: sections/roadmap_posterior.tex
\subsection{Proof plan: cancel entropy before maximizing}\label{sec:roadmap-posterior}
This construction retains all $m$ outside survivors, so no window masks are omitted. The difficulty is their unknown probability distribution. Taking its average is easy for expected loss, but the conditional entropy subtracted in \eqref{eq:reward-bridge} is nonlinear in those probabilities.

\Cref{lem:selector} supplies the needed cancellation. Prepending a retained input bit inserts one fixed symbol and removes the last symbol from the first-$m$-survivor word. Its entropy change therefore contains exactly the conditional entropy appearing in the rate bound. The proof proceeds as follows:
\begin{enumerate}
\item Derive the entropy change by conditioning on whether the prepended bit was deleted (\cref{lem:selector}).
\item Upper-bound the remaining uncertainty about this deletion decision with an auxiliary conditional distribution $K$, using cross entropy.
\item Include negative word entropy in a bounded potential. \Cref{lem:posterior-linear} shows explicitly that the conditional entropies cancel and the resulting bound is linear in the outside probabilities.
\item Check \eqref{eq:posterior-row} for every fixed outside word. Every outside distribution is an average of these words; \cref{thm:posterior} averages the rows and cancels the potential.
\end{enumerate}
Thus the finite maximum is taken \emph{after} the entropy cancellation. No enumeration of input distributions is needed. The comparison-distribution and potential framework is standard \citep{DMP07,HSPKS21}; the identity below specifies the entropy term used in this deletion-channel test. The complete row calculation follows the proof in the same section.

\input{figures/upper_proof_routes}

%% file: figures/upper_proof_routes.tex
\par\medskip\noindent\begin{minipage}{\linewidth}\centering
\begin{tikzpicture}[>=Stealth,text=ink,
 box/.style={draw=ink!50,rounded corners=2pt,align=center,text width=6.05cm,inner sep=6pt,font=\small\hyphenpenalty10000},
 wide/.style={box,text width=13.1cm},arr/.style={->,thick,draw=ink!60}]
\node[wide,fill=ink!3] (base) {\textbf{Common entropy bound: \cref{sec:survivor}}\\
\Cref{lem:predictor}: upper-bound output entropy.\quad
\Cref{lem:increments}: lower-bound conditional output entropy.\\
Their difference bounds every stationary information rate; \cref{lem:stationary} removes the stationarity restriction.};
\node[box,fill=ink!3,minimum height=2cm,below left=.5cm and .20cm of base.south] (s) {\textbf{Shared outside survivors: \cref{sec:shared}}\\
\Cref{lem:shared-tail}: count window masks using one common outside sequence; bound omitted probability explicitly.};
\node[box,fill=ink!3,minimum height=2cm,below right=.5cm and .20cm of base.south] (p) {\textbf{Entropy cancellation: \cref{sec:posterior}}\\
\Cref{lem:selector}: condition on whether a prepended input bit survives. The entropy change cancels the nonlinear conditional entropy.};
\node[box,fill=ochre!14,minimum height=2.35cm,below=.5cm of s] (st) {\textbf{Check \eqref{eq:shared-row} in every row}\\
Known mask probabilities, auxiliary distributions, a remainder, and a potential change.\\
\Cref{thm:shared}: all rows $\le u$ imply \mbox{$C(d)\le u$}.};
\node[box,fill=ochre!14,minimum height=2.35cm,below=.5cm of p] (pt) {\textbf{Check \eqref{eq:posterior-row} in every row}\\
The remaining bound is linear in the outside probabilities; fixed outside words suffice.\\
\Cref{thm:posterior}: all rows $\le u$ imply \mbox{$C(d)\le u$}.};
\draw[arr](base.south)-|(s.north);\draw[arr](base.south)-|(p.north);
\draw[arr](s)--(st);\draw[arr](p)--(pt);
\end{tikzpicture}
\captionof{figure}{Dependencies of the two upper bounds based on consecutive surviving bits. Both begin with the same entropy subtraction, but turn its dependence on distant input positions into different finite tests. A potential is a bounded function whose expected change cancels under stationarity. The shaded bottom boxes are the expressions evaluated by the numerical checkers; each row is a required inequality, not a sampled input. The threshold $u$ is in bits per input bit.}\label{fig:upper-proof-routes}
\end{minipage}\par\medskip

%% file: sections/small_deletion.tex
\section{A run-based converse for small deletion probabilities}
\label{sec:small-deletion}
A run is a maximal block of equal input bits; $000110$, for example, has runs $000$, $11$, and $0$. Deletions create two kinds of ambiguity. Within a run, several deleted positions can give the same trace: either deletion in $00$ leaves $0$. A deleted singleton also merges its same-bit neighbors: deleting the middle $1$ in $00100$ leaves $0000$. We bound the entropy of these ambiguities and the output entropy to obtain a converse.

The converse must cover every input law, including dependent run lengths and arbitrarily long runs. We first reverse selected deletions to make the remaining run counts recoverable. We bound the reversed-mask entropy, retaining dependence between neighboring masks. A probability-code inequality then removes the unknown run law. Its finite verification includes an analytic bound for every unrepresented run length.

Throughout this section $d$ is the deletion probability, so a bit survives with probability $1-d$. Logarithms have base two unless written $\ln$. We use the entropy of nonnegative masses $w_1,\ldots,w_r$ in the form
\[
\Phi(w_1,\ldots,w_r)=W\log W-\sum_{i=1}^r w_i\log w_i,
\qquad W=\sum_iw_i.
\]
For $W>0$ this is $W H(w_1/W,\ldots,w_r/W)\ge0$; at $W=0$ it is zero. We use $0\log0=0$ throughout. Its derivative in a positive mass $w_i$ is $\log(W/w_i)\ge0$, so it is increasing in each mass, with boundary values defined by continuity.

\input{sections/roadmap_runs}
\subsection{Run frequencies and the class of inputs needed for the converse}
Fix a distribution on length-$q$ binary blocks, concatenate independent blocks with that law, and choose the origin uniformly within a block. The input is stationary: every finite-word distribution is unchanged by a shift. Its run lengths may be dependent. By \cref{lem:stationary}, a common upper bound for these sources bounds capacity. We first fix $q$ and its law and let the transmitted length $n$ tend to infinity; only afterward do we increase $q$.

Unless the source is constant, for each fixed bit value the probability that a complete block consists entirely of that value is strictly below one. A long run requires many consecutive constant blocks of the same value, whose probability decreases geometrically by independence. Thus run lengths have exponential tails. Constant sources have zero information rate and are handled separately.

The two incomplete boundary runs have bounded expected lengths for this fixed source. Describing their lengths and the run count costs $O(\log n)$ bits in total; the first few complete context runs have bounded expected descriptions. These costs vanish after division by $n$. Their bounds may depend on $q$ and its law, because the final information-rate bound will not.

For this fixed nonconstant source, define the information rate $\overline I=\limsup_n n^{-1}I(X_1^n;Y_n)$. Define its output-entropy rate by $\overline H(Y)=\limsup_n n^{-1}H(Y_n)$. For a stationary mask $Z$, write $\overline H(Z\mid X)=\limsup_n n^{-1}H(Z_1^n\mid X_1^n)$. All are per input bit. The input entropy rate is $h(X)=\lim_n n^{-1}H(X_1^n)$.

Let $\rho=\Pr\{X_0\ne X_{-1}\}>0$ be the expected number of run beginnings per input bit. Condition on a run beginning at the origin and denote this law by $\Pr^\circ$; write $L_0,L_1,\ldots$ for its complete run lengths. This law is stationary under shifts by one run. For any bounded function of finitely many successive run lengths, compare its sum over beginnings in $[1,n]$ with the sum after advancing each beginning to its successor. At most the first and last terms differ. Taking expectations, dividing by $n$, and using translation stationarity proves invariance under this run shift; no sample-path frequency limit is assumed.

Define the per-input-bit frequencies
\[
\xi_\ell=\rho\Pr^\circ\{L_0=\ell\},\qquad
z_{\ell r}=\rho\Pr^\circ\{L_0=\ell,L_1=r\}.
\]
Here $\xi_\ell$ counts length-$\ell$ runs and $z_{\ell r}$ counts consecutive lengths $\ell,r$, both per input bit. Runs partition the input, and each has one predecessor and one successor. Hence
\[
\sum_\ell\ell\xi_\ell=1,\qquad
\sum_rz_{\ell r}=\sum_rz_{r\ell}=\xi_\ell.
\]
These identities can be obtained from expected counts in $[1,n]$: length-weighted run counts cover all but the two boundary pieces, and incoming and outgoing pair counts differ at the boundary only. Their expected boundary costs are bounded for the fixed block source. Divide by $n$ and let $n$ increase. Consequently $\rho\E^\circ L_0=1$ and $\sum_\ell\xi_\ell=\rho$: these are frequencies per input bit, not probability distributions. Write $L_{-1}$ for the preceding run length under the same law.

To combine the channel bound with $I(X_1^n;Y_n)\le H(X_1^n)$, we need an input-entropy bound in these frequencies. We obtain one by coding the next length from the preceding length. This does not assume that the actual run process is Markov.

Describe the initial bit, run count, boundary lengths, and first complete run. For each later run, assign conditional probability $z_{\ell r}/\xi_\ell$ after length $\ell$. Only rows with $\xi_\ell>0$ are used; at a zero-frequency length, choose any probability law. For a fixed run count, products of these probabilities form a normalized distribution. Cross entropy, \cref{lem:cross-entropy}, bounds the true entropy by its expected negative logarithm even if earlier runs affect the true conditional law. The descriptions already discussed cost $o(n)$, so
\begin{equation}
\overline I\le h(X)\le\rho H^\circ(L_1\mid L_0)
=-\sum_{\ell,r}z_{\ell r}\log\frac{z_{\ell r}}{\xi_\ell}.
\label{eq:run-input-entropy}
\end{equation}
The passage from finite descriptions to this sum is justified by their nonnegative expected code lengths. The next-run entropy is finite: comparison with the geometric code $2^{-r}$ bounds it by $\E^\circ L_1<\infty$. Sum over bounded run lengths first, then increase the cutoff. Boundary descriptions remain $o(n)$ for the fixed source. The first inequality follows from information being at most input entropy. The equality uses pair probability $z_{\ell r}/\rho$ and conditional probability $z_{\ell r}/\xi_\ell$. Omitting the preceding length gives the weaker bound $h(X)\le-\sum_\ell\xi_\ell\log(\xi_\ell/\rho)$, again without a run-independence assumption.

\subsection{A modified mask whose run counts can be parsed}
Let $D$ be the deletion mask, $X$ the input and $Y$ its trace. Independence of $D$ and $X$, and the fact that $(X,D)$ determines $Y$, give $I(X;Y)=H(Y)-H(D)+H(D\mid X,Y)$; see also \citet[eq.~(2), Example~3]{HOS16}. To obtain the identity, expand $H(Y,D\mid X)$ first as $H(D\mid X)+H(Y\mid X,D)=H(D)$ and then as $H(Y\mid X)+H(D\mid X,Y)$. A converse therefore needs an upper bound on the residual mask entropy. We split $D$ into a parsable mask and a record of reversed deletions.

Following the extended-run modification of \citet[Section IV-B]{KM10}, enlarge each input run by adjoining the input bit immediately before it and the input bit immediately after it. If the original mask has at least two deletions in this extended run, reverse every deletion inside the original run. Denote the remaining mask by $\widehat D$ and the reversed deletions by $Z^{\rm err}=D\mathbin\oplus\widehat D$. Thus $D$ is the disjoint union of its retained and reversed deletions.

Each input run now loses at most one symbol. A retained deletion has no original deletion in either adjacent boundary bit. If a singleton is deleted, neither neighbor has a retained deletion: any deletion there would meet the singleton in its extended run and be reversed.

For example, take consecutive complete runs $00$, $1$, $000$. Deleting only the singleton joins undeleted parents of lengths $2$ and $3$. Deleting both that singleton and the last bit of $00$ instead causes both deletions to be reversed. The error mask $Z^{\rm err}$ records those two positions; its entropy will pay for the modification.

\begin{lemma}[Extended-run parsing bound]
\label{lem:extended-run-parser}
For the fixed source above and $0<d\le1/2$, put
\[
w_\ell=\ell d(1-d)^{\ell+1}\log\ell.
\]
Let $\zeta_\ell$ be the entropy of the following law on binary length-$\ell$ vectors: its zero vector has mass $(1-d)^\ell+\ell d(1-d)^{\ell+1}$; each singleton has mass $d(1-d)^{\ell-1}[1-(1-d)^2]$; each vector of weight $j\ge2$ has mass $d^j(1-d)^{\ell-j}$. Then
\begin{align}
\overline I&\le\overline H(Y)-h_2(d)+\overline H(Z^{\rm err}\mid X)+\sum_\ell\xi_\ell w_\ell,\notag\\
\overline H(Z^{\rm err}\mid X)&\le\sum_\ell\xi_\ell\zeta_\ell.
\label{eq:modified-mask-upper}
\end{align}
Moreover, $\zeta_\ell\le\ell h_2(d)$.
\end{lemma}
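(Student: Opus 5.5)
The plan is to begin from the mask identity $I(X;Y)=H(Y)-H(D)+H(D\mid X,Y)$ and to split the residual mask entropy through the decomposition $D=\widehat D\oplus Z^{\rm err}$. Since $\widehat D$ and $Z^{\rm err}$ together determine $D$, we have
\[
H(D\mid X,Y)\le H(Z^{\rm err}\mid X,Y)+H(\widehat D\mid X,Y,Z^{\rm err})\le H(Z^{\rm err}\mid X)+H(\widehat D\mid X,Y,Z^{\rm err}).
\]
After dividing by $n$, the term $H(D)/n=h_2(d)$ produces $-h_2(d)$. Because $\overline H(Y)$ is a limsup, we must take the upper limit of a sum and bound it by the sum of the upper limits. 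The boundary runs cost $O(\log n)$, which disappears after division. The first inequality then reduces to the per-bit bound
\[
\limsup_n n^{-1}H(\widehat D\mid X,Y,Z^{\rm err})\le\sum_\ell\xi_\ell w_\ell .
\]

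\emph{Parsing step.} Given $X$ and $Z^{\rm err}$, the input is still known, and every run of $X$ loses at most one symbol under $\widehat D$. We would like $Y$ together with $X$ to determine, for each run, whether it lost a symbol. The extended-run rule is designed so that this holds. A retained deletion inside a run of length at least $2$ only shortens that output run by one. A retained singleton deletion merges its two neighbours, and those neighbours are known to be intact. Scanning left to right, the receiver compares the output run lengths with the known input runs and recovers which runs lost exactly one bit. A singleton deletion also loses a run but merges two runs whose lengths add up. I expect this to be the main obstacle: a rigorous induction on the scan must show that a merge can never be confused with an intact run, using the facts that merged parents have no other retained deletion and that the outer bits are protected.

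Once parsing is established, the only remaining uncertainty is the position of the deleted bit inside each run of length $\ell$ that lost one bit. That position takes at most $\ell$ values, giving at most $\log\ell$ bits per such run; singleton runs contribute $\log 1=0$.

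\emph{Weight and entropy bounds.} A run of length $\ell$ keeps exactly one retained deletion only if it has exactly one deletion and both adjoined boundary bits survive. This has probability $\ell d(1-d)^{\ell-1}(1-d)^2$, and multiplying by $\log\ell$ gives $w_\ell$. Summing per run with frequency $\xi_\ell$, using the run-shift stationarity of $\Pr^\circ$, gives $\sum_\ell\xi_\ell w_\ell$.

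For $Z^{\rm err}$, we use subadditivity over runs: $H(Z^{\rm err}\mid X)\le\sum_{\text{runs}}H(Z^{\rm err}|_{\text{run}}\mid X)$. Within a run of length $\ell$, the pattern $Z^{\rm err}$ is determined by the run's own mask and its two boundary deletion bits. Averaging over the boundary bits, the pattern $Z^{\rm err}|_{\text{run}}$ has the following law:
\begin{itemize}
\item a pattern of weight $j\ge2$ occurs exactly when it is the full mask, with probability $d^j(1-d)^{\ell-j}$;
\item a single deletion is reversed exactly when a boundary bit is also deleted, with probability $d(1-d)^{\ell-1}[1-(1-d)^2]$;
\item every other case gives the zero vector.
\end{itemize}
This is exactly the law defining $\zeta_\ell$. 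The remaining conditioning on $X$ beyond the run can only lower entropy, so $\overline H(Z^{\rm err}\mid X)\le\sum_\ell\xi_\ell\zeta_\ell$.

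Finally, $\zeta_\ell$ is the entropy of a function of the i.i.d.\ Bernoulli$(d)$ vector on the run together with its two boundary bits. The plan is to prove $\zeta_\ell\le\ell h_2(d)$ by comparing it directly with the i.i.d.\ Bernoulli$(d)$ law on the run itself. That law has entropy $\ell h_2(d)$ and differs only by moving the singleton masses $\ell d(1-d)^{\ell+1}$ onto the zero vector. I would show that this merging reduces entropy: a merge of masses cannot increase entropy, because the new law is a deterministic function of the old vector together with an independent coin. The condition $d\le1/2$ is kept available in case the monotonicity needs to be checked via the derivative of $\Phi$.
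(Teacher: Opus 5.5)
\textbf{The main gap: parsing uses the wrong trace.} Your decomposition $H(D\mid X,Y)\le H(Z^{\rm err}\mid X)+H(\widehat D\mid X,Y,Z^{\rm err})$, the retention probability $\ell d(1-d)^{\ell+1}$ behind $w_\ell$, and the per-run law of $Z^{\rm err}$ with subadditivity all agree with the paper. The trouble is that your scan compares the observed $Y$ with the runs of $X$. It assumes each run loses at most one bit and that merges occur only across a deleted singleton with intact neighbours. That describes the modified trace $\widehat Y$, not $Y$. The observed $Y$ also carries the reversed deletions, which can shorten a run by several bits or erase whole runs.

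In fact $(X,Y)$ alone does not determine which runs carry a retained deletion. Take input bits $1\,00\,1\,000\,1$ at positions $0,\ldots,7$. The mask $D=\{2,3\}$ has both deletions reversed, so $\widehat D=\emptyset$. The mask $D=\{1,3\}$ reverses position $1$ but retains the singleton's deletion, because the singleton's extended run $\{2,3,4\}$ contains only one deletion. Both masks give $Y=1\,0000\,1$. In the second case the merged output run has length $4$, not $2+3$. So $Z^{\rm err}$ must enter the parse, and your argument never uses it there.

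The missing idea, supplied by the paper, is to form $W=\operatorname{delete}(X,Z^{\rm err})$. Then $Y$ arises from $W$ by the retained deletions alone. You must check the parser's hypotheses against the runs of $W$, which may be shortened or merged original runs:
\begin{itemize}
\item a run touched by $Z^{\rm err}$ has every deletion reversed;
\item each contributor to a merged $W$-run borders a wholly erased run, so it has no retained deletion;
\item hence a $W$-run with a retained deletion is an intact original run losing exactly one bit;
\item if such a run is a singleton, its neighbouring $W$-runs carry no retained deletion.
\end{itemize}
Only then does parsing $(W,Y)$ recover the counts, equivalently $\widehat Y$. In the example, $W$ is $1\,0000\,1$ in the first case and $1\,0\,1\,000\,1$ in the second, and the two parses correctly differ. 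Your $\log\ell$ bound applies after this step.

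\textbf{A second error: the bound on $\zeta_\ell$.} Your argument for $\zeta_\ell\le\ell h_2(d)$ is invalid. A deterministic function of the old vector and an independent coin can have more entropy than the old vector; the general bound adds the coin's entropy. Moving only part of each singleton's mass to the zero vector is not a merge of atoms. The inequality actually fails for $d>1/2$: at $\ell=1$ and $d=9/10$ the law is $(0.109,0.891)$ rather than $(0.1,0.9)$, and $h_2(0.109)>h_2(0.1)$. Thus $d\le1/2$ is essential, not a fallback.

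Use the paper's argument instead. For $d\le1/2$, the zero atom $(1-d)^\ell$ is at least each singleton atom $d(1-d)^{\ell-1}$. Moving mass $x$ from an atom $q$ to an atom $p\ge q$ changes entropy at rate $\log_2\bigl((q-x)/(p+x)\bigr)\le0$. Since the zero atom only grows, this ordering persists through all $\ell$ transfers.
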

\begin{proof}
\textbf{Step 1: recover deletion counts when the mask has the required structure.}
Suppose a mask deletes at most one bit per run, and every deleted singleton has completely undeleted adjacent runs. Compare an input run of length $\ell$ with the same-bit output run of length $k$. With one parent, $k=\ell$ or $\ell-1$, fixing the deletion count. Multiple parents can join only across deleted opposite-bit singletons; all parents are undeleted, so $k>\ell$. Their strictly increasing cumulative lengths identify the final parent. An opposite first output bit signals a deleted initial singleton. Repeating determines every run count. Disclose the two incomplete boundary runs and their deletion masks; their bounded expected lengths and length descriptions cost $o(n)$ for the fixed source.

\textbf{Step 2: relate the parsable trace to the actual observation.}
To transfer this parser to the observed trace $Y$, we prove that $Y$ and the modified trace $\widehat Y$ determine each other given $X,Z^{\rm err}$.

The parser applies to $(X,\widehat Y)$. For the reverse direction, use the known word
\[
W=\operatorname{delete}(X,Z^{\rm err}).
\]
Removing the retained deletions from $W$ gives $Y$. Within an original run, either no deletion was reversed or all were. Thus a run shortened by $Z^{\rm err}$ has no retained deletion.

A merged $W$-run consists of original same-bit runs joined across runs erased by $Z^{\rm err}$. Each contributing run borders an erased original boundary bit. Any deletion in that contributor would therefore have been reversed. Hence merged $W$-runs have no retained deletions; a $W$-run with a retained deletion is an intact original run and loses at most one bit.

If an intact singleton of $W$ has a retained deletion, its neighboring original boundary bits survive. Other deletions in those neighboring runs were reversed, so the neighboring $W$-runs have no retained deletions, even if shortened or merged. The parser therefore applies to $(W,Y)$. Its positive counts belong to intact original runs; all other original counts are zero. These counts construct $\widehat Y$. Conversely, parsing $(X,\widehat Y)$ and adding the known reversed counts constructs $Y$. Neither direction requires the retained deletion positions.

\textbf{Step 3: bound the additional entropy by the error-mask entropy.}

Apply the chain rule, the interconversion just proved, and conditioning-reduces-entropy:
\begin{align*}
H(D\mid X,Y)
&\le H(Z^{\rm err}\mid X,Y)+H(D\mid X,Y,Z^{\rm err})\\
&\le H(Z^{\rm err}\mid X)+H(\widehat D\mid X,\widehat Y,Z^{\rm err})\\
&\le H(Z^{\rm err}\mid X)+H(\widehat D\mid X,\widehat Y).
\end{align*}
Given $Z^{\rm err}$, each of $D$ and $\widehat D$ determines the other. Disclosing incomplete boundary runs in the finite-word argument adds $o(n)$ entropy and does not affect the rate bound.

\textbf{Step 4: bound the entropy of retained deletion positions.}

The parser identifies the runs with retained deletions. A length-$\ell$ run has at most $\ell$ possible positions, giving conditional entropy at most $\log\ell$. A specified deletion remains exactly when the other $\ell-1$ run bits and both outside boundary bits survive, an event of probability $d(1-d)^{\ell+1}$. The $\ell$ positions are disjoint alternatives. Their total probability times $\log\ell$ is $w_\ell$, and multiplication by $\xi_\ell$ converts this to entropy per input bit.

\textbf{Step 5: evaluate the entropy of each run's error vector.}

For a length-$\ell$ run, the error vector is zero if no bit was deleted, or if exactly one was deleted and both outside bits survive. These events have total probability $(1-d)^\ell+\ell d(1-d)^{\ell+1}$. A specified singleton error requires its unique deletion and at least one outside deletion, giving $d(1-d)^{\ell-1}[1-(1-d)^2]$. Every specified weight-$j\ge2$ original mask is wholly reversed, with probability $d^j(1-d)^{\ell-j}$. These cases give $\zeta_\ell$. The chain rule bounds joint error entropy by the sum of the marginal run entropies, yielding $\overline H(Z^{\rm err}\mid X)\le\sum_\ell\xi_\ell\zeta_\ell$ without independence.

This law transfers mass $d(1-d)^{\ell+1}$ from each singleton to zero. When $d\le1/2$, zero initially has at least each singleton's mass. Transferring $x$ from mass $q$ to mass $p\ge q$ changes their entropy with derivative $\log_2((q-x)/(p+x))\le0$. Hence $\zeta_\ell\le\ell h_2(d)$. The infinite run sums are finite: $\zeta_\ell\le\ell h_2(d)$ and $w_\ell\le\ell\log_2\ell$, whose expectations are finite under the exponential run tail. Apply the finite-word bound first, take expectations, and then divide by input length. Nonnegative sums pass to the limit by increasing finite cutoffs; the displayed summable bounds control the omitted lengths. This gives \eqref{eq:modified-mask-upper}.
\end{proof}

For evaluation, put $\eta(t)=-t\log_2t$. Only the zero and singleton probabilities differ from the independent mask law, so
\begin{align}
\zeta_\ell={}&\ell h_2(d)
+\eta((1-d)^\ell+\ell d(1-d)^{\ell+1})-\eta((1-d)^\ell)\notag\\
&+\ell\{\eta(d(1-d)^{\ell-1}[1-(1-d)^2])-\eta(d(1-d)^{\ell-1})\}.
\label{eq:zeta-row}
\end{align}
There is one zero vector and $\ell$ singleton vectors, explaining their respective coefficients. The higher-weight vectors cancel in this subtraction.

\subsection{An output-entropy bound from run counts}
To bound $\overline H(Y)$, let $y_k$ count length-$k$ output runs per input bit and assign them any positive probability law $R_k$. For a fixed total trace length, products of the $R_k$ over its compositions sum to at most one: they are disjoint events in which independent lengths first reach that total. Completing this subprobability law by an unused symbol permits cross entropy. Describing the trace length and first bit costs $O(\log n)$, hence
\[
\overline H(Y)\le\sum_{k\ge1}y_k[-\log_2R_k].
\]
Each length-$k$ output run incurs negative logarithm $-\log_2R_k$, explaining the frequency coefficient $y_k$.

\begin{lemma}[Output-run counts from input-run frequencies]\label{lem:output-run-counts}
For the fixed stationary source, let $y_k$ be the number of length-$k$ output runs per input bit and $\xi_\ell$ the number of length-$\ell$ input runs per input bit. Then
\begin{align}
\sum_kky_k&=1-d,\notag\\
\sum_ky_k&\le\sum_\ell\xi_\ell[1-(1+(1-d)^2)d^\ell],\notag\\
y_k&\ge\sum_{\ell\ge k}\xi_\ell(1-d)^2\binom\ell k(1-d)^kd^{\ell-k}.
\label{eq:run-count-constraints}
\end{align}
\end{lemma}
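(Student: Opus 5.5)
We prove the three relations in \eqref{eq:run-count-constraints} by counting expected objects in the first $n$ input positions, dividing by $n$, and letting $n\to\infty$. The fixed block source has exponential run tails, so boundary runs and the two incomplete end pieces change every count by $O(1)$ in expectation. As in the derivation of $\sum_\ell\ell\xi_\ell=1$, the frequency $y_k$ is then the limit of expected counts of length-$k$ output runs divided by $n$. If only a $\limsup$ is available we take subsequential limits. The inequalities are linear in the frequencies, so they pass to such limits.

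\emph{Length identity.} Output runs partition the trace, and $\E N_n=n(1-d)$. Summing $k$ times the count of length-$k$ output runs gives the trace length up to the two boundary output runs. Their expected lengths are bounded because they are contained in boundary input runs, possibly merged across deleted singletons, and the exponential tails control such merges. Dividing by $n$ gives $\sum_kky_k=1-d$.

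\emph{Upper bound on the run count.} Each output run begins with the first surviving bit of some input run $\rho_0$ of length $\ell$ containing at least one survivor. Moreover, the previous survivor, if any, must have the opposite bit. Assign each output run to this input run; the assignment is injective. An input run is \emph{not} assigned when all its bits are deleted, with probability $d^\ell$. It is also not assigned when it survives but its first surviving bit continues the preceding output run. This happens when the preceding opposite-bit run is entirely deleted and the run before that has a survivor. To get a bound depending only on $\ell$, we use only a cruder event. Call the preceding opposite run $A$ and the run before it $B$. We need the probability that $\rho_0$ is fully deleted or else blocked. The stated bound requires at least $(1+(1-d)^2)d^\ell$ for the non-assignment probability, so the right event must be found. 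One candidate: $\rho_0$ is fully deleted ($d^\ell$), or $\rho_0$ is fully deleted... which does not produce $(1-d)^2d^\ell$ naturally.

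The intended event instead looks inward: the run $\rho_0$ is entirely deleted while both its adjacent boundary bits survive. Then the two neighbouring same-bit runs merge, and one output run is lost. That event has probability $d^\ell(1-d)^2$, and it occurs in addition to the loss of $\rho_0$'s own output run, which has probability $d^\ell$. Hence I would instead count output runs as (number of input runs containing a survivor) minus (number of merges). A fully deleted run whose two adjacent bits survive causes at least one merge, and distinct such runs cause distinct merges. The expected count of output runs is therefore at most $\sum_\ell\xi_\ell[(1-d^\ell)-(1-d)^2d^\ell]$, which is the stated bound. Verifying that merges are counted injectively, i.e.\ that each deleted run with surviving neighbours reduces the run count by at least one, is the main obstacle. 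I would argue it by a telescoping count over survivors: output run count equals one plus the number of adjacent survivor pairs with different bits. Each such pair lies across a boundary between input runs. A fully deleted run with surviving flanks converts two input boundaries into a pair of equal survivors, so it removes at least one sign change.

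\emph{Lower bound on $y_k$.} Take an input run of length $\ell\ge k$ and suppose exactly $k$ of its bits survive, with probability $\binom\ell k(1-d)^kd^{\ell-k}$. Suppose also that both adjacent boundary bits, which have the opposite value, survive, with probability $(1-d)^2$ by independence. Then these $k$ survivors form a complete output run of length exactly $k$. Distinct input runs give distinct output runs, so summing over runs gives the displayed lower bound.
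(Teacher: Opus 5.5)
Your final argument is correct and is essentially the paper's proof: counting survivors for the first line, counting nonempty input runs minus the distinct joins produced by fully deleted runs with surviving neighbours for the second, and using runs with exactly $k$ survivors and surviving flanks for the third. The only difference is that you require the two specific adjacent boundary bits to survive (probability exactly $(1-d)^2$), whereas the paper only needs the neighbouring runs to be nonempty (probability at least $(1-d)^2$, slack it later exploits in \eqref{eq:neighbor-edge}); the abandoned look-back attempt in your second paragraph is not used and should be removed.
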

\begin{proof}
The first line counts surviving bits. For the second, count every nonempty input run, then subtract one for each empty run flanked by nonempty neighbors: it joins two same-bit runs. These joins are distinct, and each neighbor survives with probability at least $1-d$, giving reduction at least $(1-d)^2d^\ell$. For the third, a length-$\ell$ run with exactly $k$ survivors and nonempty neighbors gives a distinct length-$k$ output run. The binomial probability times the neighbor lower bound proves the line. Only deletion coins in distinct runs are independent; their lengths need not be.
\end{proof}

To use these three relations, the output-code coefficients must have matching signs.

Choose the output code so that its length has an affine part and a nonpositive correction:
\[
-\log_2R_k=A_e+\gamma k+\Delta_k,
\qquad A_e\ge0,\quad\Delta_k\le0.
\]
The code constructed below has this form. In $A_e\sum_ky_k+\gamma\sum_kky_k+\sum_k\Delta_ky_k$, use the upper count bound for $A_e\ge0$, the exact bit count for $\gamma$, and the lower count bound for $\Delta_k\le0$. Together with \eqref{eq:modified-mask-upper}, this gives $\overline I\le B_0+\sum_\ell\xi_\ell g_\ell$, where
\begin{align}
B_0&=(1-d)\gamma-h_2(d),\notag\\
g_\ell&=\zeta_\ell+w_\ell+A_e[1-(1+(1-d)^2)d^\ell]
+\sum_{k=1}^\ell\Delta_k(1-d)^2\binom\ell k(1-d)^kd^{\ell-k}.
\label{eq:small-run-reward}
\end{align}
The terms $\zeta_\ell,w_\ell$ bound the error-mask and retained-position entropies. The other terms bound output description length. We will remove the unknown frequencies using $\sum_\ell\ell\xi_\ell=1$.

\subsection{Information shared by adjacent error masks}
The modification couples neighboring error vectors through their boundary deletion coins. Their joint entropy is therefore at most the sum of their marginal entropies minus their mutual information. We retain a computable lower bound on this subtraction.

For example, take adjacent singleton runs at $d=1/10$, with independent outside coins. Each reversed-mask bit has probability $d[1-(1-d)^2]=19/1000$ of being one, whereas both are one with probability $d^2=1/100$. Thus the reversed masks are dependent despite independent original coins. To retain this dependence with a fixed alphabet, classify an error vector relative to its facing boundary: $0$ for zero; $1$ for a boundary singleton; $2$ for another singleton; $3$ for weight at least two with boundary bit zero; and $4$ for weight at least two with boundary bit one.

\textbf{Step 1: keep a small, sufficient description of each error vector.}
These categories concern the reversed mask $Z^{\rm err}$. Inside a run, it takes the original mask if the extended run contains at least two original deletions, and is zero otherwise. The facing boundary is where neighboring modifications share coins.

For a length-$\ell$ run, let $A_\ell(t,b\mid c)$ be the probability of category $t$ and original boundary-mask bit $b$, conditional on the adjacent run's original boundary-mask bit $c$. The opposite outside bit is still random. Its nonzero entries are
\begin{align}
A_\ell(0,0\mid c)&=(1-d)^\ell[1+(\ell-1)d\mathbf1_{c=0}],\notag\\
A_\ell(0,1\mid c)&=d(1-d)^\ell\mathbf1_{c=0},\notag\\
A_\ell(1,1\mid c)&=d(1-d)^{\ell-1}[d+c(1-d)],\notag\\
A_\ell(2,0\mid c)&=(\ell-1)d(1-d)^{\ell-1}[d+c(1-d)],\notag\\
A_\ell(3,0\mid c)&=(1-d)-(1-d)^\ell-(\ell-1)d(1-d)^{\ell-1},\notag\\
A_\ell(4,1\mid c)&=d-d(1-d)^{\ell-1}.
\label{eq:five-category-table}
\end{align}
For category zero with $b=0$, either all run bits survive or one nonboundary deletion remains retained. The latter requires $c=0$ and survival of the other outside bit, giving $(\ell-1)d(1-d)^\ell\mathbf1_{c=0}$. With $b=1$, zero error requires the unique boundary deletion and both outside bits surviving. These are the first two lines.

A singleton error requires one original deletion and an outside deletion. Given $c$, the outside event has probability $d+c(1-d)$; there is one location for category $1$ and $\ell-1$ for category $2$. For category $3$, subtract the zero mask and nonboundary singletons from probability $1-d$ that the facing bit survives. For category $4$, subtract its singleton from probability $d$ that the facing bit is deleted. All masks of weight at least two are reversed.

\textbf{Step 2: join the two conditional tables without recounting a coin.}

For adjacent lengths $\ell,r$, orient both categories toward their common boundary. Sum over their original boundary bits $b,c$ to obtain
\begin{equation}
P_{\ell r}(t,u)=\sum_{b,c\in\{0,1\}}A_\ell(t,b\mid c)A_r(u,c\mid b).
\label{eq:pair-category-table}
\end{equation}
Each factor includes its own boundary coin once and conditions on the other. The remaining coins are independent; summing removes the two boundary bits.

\textbf{Step 3: turn the category dependence into an entropy saving.}

Let $c_{\ell r}=I(P_{\ell r})$ be the mutual information of the two categories under the table. It is nonnegative by \cref{lem:logsum}. For consecutive full error vectors $E_i$, the chain rule gives total entropy $\sum_iH(E_i)-\sum_iI(E_i;E_1,\ldots,E_{i-1})$. Keeping only $E_{i-1}$ decreases the subtracted information. Replacing both vectors by their facing categories decreases it again by \cref{lem:data-processing}. Apply this argument conditional on the input, average over adjacent lengths, and divide by input length:
\begin{equation}
\overline H(Z^{\rm err}\mid X)
\le\sum_\ell\xi_\ell\zeta_\ell-\sum_{\ell,r}z_{\ell r}c_{\ell r}.
\label{eq:pair-credit}
\end{equation}
The frequency $z_{\ell r}$ counts these pairs per input bit. Left- and right-facing categories of a run may differ, but both reductions start from its full error vector and are compatible.

\subsection{A further entropy reduction across a known middle mask}
To use one more neighboring run, retain the chain-rule term involving the two outer error vectors conditional on the full middle vector. Coarsening that conditioning could change the conditional mutual information in either direction.

For three consecutive error vectors, expand
$I(E_i;E_{i-1},E_{i-2})=I(E_i;E_{i-1})+I(E_i;E_{i-2}\mid E_{i-1})$.
The first term supplies the pair correction. Conditional data processing permits replacing only the outer vectors by their inward-facing categories $T,U$. For lengths $p,j,r$, the extra subtraction is $\delta_j(p,r)=I(T;U\mid Z^{\rm err}_{\rm middle})$. With $t_{pjr}=\rho\Pr^\circ\{L_{-1}=p,L_0=j,L_1=r\}$, its rate contribution is $\sum_{p,j,r}t_{pjr}\delta_j(p,r)$. Each term describes a triple with middle length $j$; the frequencies count its middle run once. This subtraction uses the original entropy chain rule, so overlapping triples do not double-count a separate entropy term.

We now derive a finite formula while retaining that full conditioning. For a nonnegative joint table $J$ with total mass $M$, row sums $J_{t+}$ and column sums $J_{+u}$, define
\[
\mathcal I(J)=\sum_{t,u}J_{tu}\log J_{tu}
-\sum_tJ_{t+}\log J_{t+}-\sum_uJ_{+u}\log J_{+u}+M\log M.
\]
Expanding the mutual information of $J/M$ gives $\mathcal I(J)=M I(J/M)\ge0$, with value zero at $M=0$. The four terms respectively account for joint masses, row masses, column masses and total mass. Keeping the factor $M$ is essential: conditional information averages over middle-vector probabilities. If $J$ contains the masses for one middle vector, this is its probability-weighted conditional-information contribution.

\textbf{Step 1: separate the possible complete middle error vectors.}
Form a separate joint table for each complete middle error vector and add its $\mathcal I$ value; averaging the tables first would change the conditioning.

Write $A=A_p$ and $B=A_r$ for the outer-run tables just derived: $A(t,a\mid b)$ is the probability that the left outer run has category $t$ and original boundary-mask bit $a$, conditional on the adjacent middle bit $b$; $B$ has the corresponding right-outer meaning. For $j\ge2$, let $J_{\rm zero}$ be the joint category table when the entire middle error vector is zero. This happens either when all middle bits survive or when one is deleted and both outer boundary bits survive. Splitting that singleton between first, last, and the $j-2$ interior positions gives
\begin{align*}
J_{\rm zero}(t,u)={}&(1-d)^j\Big(\sum_aA(t,a\mid0)\Big)\Big(\sum_cB(u,c\mid0)\Big)\\
&+d(1-d)^{j-1}\{A(t,0\mid1)B(u,0\mid0)
+A(t,0\mid0)B(u,0\mid1)\\
&\hspace{36mm}+(j-2)A(t,0\mid0)B(u,0\mid0)\}.
\end{align*}
When all middle bits survive, sum the outer facing bits $a,c$ freely. A retained middle singleton requires $a=c=0$; its first, last, or interior location gives conditional arguments $(1,0)$, $(0,1)$, or $(0,0)$, respectively. Each singleton mask has probability $d(1-d)^{j-1}$.

\textbf{Step 2: account for singleton and multiple-error middle vectors.}
A singleton error requires that the middle deletion was reversed, so at least one outside boundary bit must be deleted.

For a specified singleton with first and last mask bits $(b,e)$, the category table is
\[
J_{be}(t,u)=d(1-d)^{j-1}
\sum_{(a,c)\ne(0,0)}A(t,a\mid b)B(u,c\mid e),
\quad (b,e)\in\{(1,0),(0,1),(0,0)\}.
\]
The excluded pair $(a,c)=(0,0)$ belongs to $J_{\rm zero}$. A middle error of weight at least two fixes its entire original mask; the remaining left and right coins are independent and contribute zero information. Thus
\begin{equation}
\delta_j(p,r)=\mathcal I(J_{\rm zero})+\mathcal I(J_{10})+\mathcal I(J_{01})+(j-2)\mathcal I(J_{00}).
\label{eq:full-bridge}
\end{equation}
The $j-2$ interior singletons are distinct conditioning events with equal contributions; their tables are not merged.

\textbf{Step 3: handle a one-bit middle run separately.}
For $j=1$, the first and last positions are the same coin, so the preceding two-boundary formula cannot be substituted directly. A zero middle error means either this bit survives or it is deleted while both outside bits survive; a nonzero error means it is deleted with at least one outside deletion. Accordingly
\begin{align*}
J_{\rm zero}(t,u)&=(1-d)\Big(\sum_aA(t,a\mid0)\Big)\Big(\sum_cB(u,c\mid0)\Big)
+dA(t,0\mid1)B(u,0\mid1),\\
J_1(t,u)&=d\sum_{(a,c)\ne(0,0)}A(t,a\mid1)B(u,c\mid1).
\end{align*}
The two possible error vectors give $\delta_1(p,r)=\mathcal I(J_{\rm zero})+\mathcal I(J_1)\ge0$.

\subsection{A neighboring-run correction to output entropy}
The output bound used $1-d$ as a lower bound on a neighboring run's survival probability. For length $p$, the exact value is $1-d^p$. Retaining part of this improvement gives another pair correction, separate from the mask-entropy corrections.

Recall that $A_e\ge0$ and $\Delta_k\le0$ are the output-code coefficients. For a central length-$\ell$ run, collect its expected correction into
\[
F_\ell=\sum_{k=1}^\ell\Delta_k\binom\ell k(1-d)^kd^{\ell-k}\le0,
\qquad G_\ell=A_ed^\ell-F_\ell\ge0,
\qquad a_\ell=d-d^\ell\ge0.
\]
In the output bound, the coefficient of the probability that both neighbors survive is $-A_ed^\ell+F_\ell=-G_\ell\le0$. Increasing that probability therefore lowers the upper bound.

A neighbor of length $p$ survives with probability $1-d^p=(1-d)+a_p$. For neighbors of lengths $p,r$, independence of their deletion coins gives
\[
(1-d^p)(1-d^r)=(1-d)^2+(1-d)(a_p+a_r)+a_pa_r
\ge(1-d)^2+(1-d)(a_p+a_r).
\]
The inequality drops $a_pa_r\ge0$. The additional reduction for central length $\ell$ is at least $(1-d)(a_p+a_r)G_\ell$. Average with the triple frequencies and sum out each unused neighbor to obtain the pair coefficient
\begin{equation}
e_{\ell r}=(1-d)(a_\ell G_r+a_rG_\ell)\ge0.
\label{eq:neighbor-edge}
\end{equation}
The total subtraction is $\sum_{\ell,r}z_{\ell r}e_{\ell r}$: its two summands account for the left and right neighbors, without another factor of two. It can be added to the mask-entropy corrections because it bounds output entropy.

\subsection{Eliminating the unknown run law by a positive inequality}
We now remove the unknown run law. Both input entropy, bounded in \eqref{eq:run-input-entropy}, and the channel expression $B_0+\sum_{\ell,r}z_{\ell r}(g_r-c_{\ell r}-e_{\ell r})$ upper-bound the information rate. Their convex combination with weights $\omega$ and $1-\omega$, $0<\omega<1$, does too. We construct a subprobability code whose logarithm contains this same combination, allowing cross entropy to bound it for every run law.

Here $g_r$ is the uncorrected run coefficient, $c_{\ell r}$ the mask correction, and $e_{\ell r}$ the output correction; $B_0=(1-d)\gamma-h_2(d)$ is independent of run frequencies. We use their balance identities $\sum_rz_{\ell r}=\sum_rz_{r\ell}=\xi_\ell$ and length normalization $\sum_\ell\ell\xi_\ell=1$.

The target is $(1-\omega)B_0+\beta$. A term $-\beta r$ in the code logarithm becomes $-\beta$ after length normalization. A difference of bounded potentials cancels by frequency balance. This is the standard cross-entropy and average-reward potential argument, applied here to run lengths; see \cref{lem:cross-entropy} and \citet{Puterman94,HSPKS21}.

First omit the triple correction. Choose a positive vector $v=(v_\ell)$ with both $v$ and $1/v$ bounded. The scalar $\beta$ is the proposed rate contribution multiplying run length, not a deletion probability. Require
\begin{equation}
\sum_{r\ge1}2^{[(1-\omega)(g_r-c_{\ell r}-e_{\ell r})-\beta r]/\omega}v_r
\le v_\ell\qquad(\ell\ge1).
\label{eq:run-positive-rows}
\end{equation}
\begin{theorem}[Positive run rows give a capacity upper bound]\label{thm:run-positive}
Fix $0<d\le1/2$, the output-description coefficients in \eqref{eq:small-run-reward}, the valid nonnegative pair corrections $c_{\ell r},e_{\ell r}$, and $0<\omega<1$. If a positive vector $v$, with both $v$ and $1/v$ bounded, satisfies \eqref{eq:run-positive-rows} for every positive integer $\ell$, then
\[
 C(d)\le (1-\omega)B_0+\beta.
\]
\end{theorem}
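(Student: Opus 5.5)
The plan is to fix one nonconstant block source of the kind described above and prove $\overline I\le(1-\omega)B_0+\beta$. Constant sources have zero rate and need no argument, and \cref{lem:stationary} then passes to $C(d)$. Two upper bounds on $\overline I$ are already available for such a source. The first is the input-entropy bound \eqref{eq:run-input-entropy}, $\overline I\le-\sum_{\ell,r}z_{\ell r}\log(z_{\ell r}/\xi_\ell)$. The second is the channel bound
\[
\overline I\le B_0+\sum_{\ell,r}z_{\ell r}(g_r-c_{\ell r}-e_{\ell r}).
\]
It comes from \eqref{eq:modified-mask-upper}, the sign-matched output-run code and \eqref{eq:small-run-reward}, the pair credit \eqref{eq:pair-credit}, and the neighbor correction \eqref{eq:neighbor-edge}. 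In the channel bound I rewrite $\sum_\ell\xi_\ell g_\ell$ as $\sum_{\ell,r}z_{\ell r}g_r$ using the column balance $\sum_\ell z_{\ell r}=\xi_r$. Both are bounds on the same number, so the $(\omega,1-\omega)$ convex combination is also a bound. It therefore suffices to show
\[
-\omega\sum_{\ell,r}z_{\ell r}\log\frac{z_{\ell r}}{\xi_\ell}+(1-\omega)\sum_{\ell,r}z_{\ell r}f_{\ell r}\le\beta,
\qquad f_{\ell r}=g_r-c_{\ell r}-e_{\ell r}.
\]

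\textbf{Build a subprobability code and apply log-sum.} Set
\[
q_{\ell r}=2^{[(1-\omega)f_{\ell r}-\beta r]/\omega}\,\frac{v_r}{v_\ell}.
\]
By \eqref{eq:run-positive-rows}, each row $\ell$ satisfies $\sum_rq_{\ell r}\le1$. I then apply \cref{lem:logsum}, equivalently nonnegativity of relative entropy against a subprobability, to the row with masses $z_{\ell r}$ and total mass $\xi_\ell$:
\[
\sum_rz_{\ell r}\log\frac{z_{\ell r}}{\xi_\ell q_{\ell r}}\ge\xi_\ell\log\frac{\xi_\ell}{\sum_rq_{\ell r}}\ge0.
\]
Next I sum over $\ell$, expand $\log q_{\ell r}$ and multiply by $\omega$. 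This gives
\[
-\omega\sum z\log(z/\xi)+(1-\omega)\sum zf-\beta\sum_{\ell,r}z_{\ell r}r+\omega\sum_{\ell,r}z_{\ell r}(\log v_r-\log v_\ell)\le0.
\]
The potential term vanishes by the balance identities: its two sums both equal $\sum_\ell\xi_\ell\log v_\ell$, which is finite because $\log v$ is bounded and $\sum\xi_\ell=\rho<\infty$. The length term satisfies $\sum_{\ell,r}z_{\ell r}r=\sum_r r\xi_r=1$. What remains is exactly the target inequality.

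\textbf{Main obstacle.} The main difficulty is justifying these manipulations over infinitely many run lengths, where the sums are not finitely supported. Several points need checking.
\begin{itemize}
\item The log-sum argument has to be run on the countable row. I would do it on truncated sets $r\le N$ and let $N\to\infty$. The truncation only shrinks $\sum_rq_{\ell r}$, and the entropy part converges because the next-run entropy is finite, as noted after \eqref{eq:run-input-entropy}.
\item I must confirm that $\sum_{\ell,r}z_{\ell r}|f_{\ell r}|<\infty$, so that the rearrangements are legitimate. This uses $\zeta_\ell\le\ell h_2(d)$, $w_\ell\le\ell\log\ell$, the boundedness of $A_e$, the bound $|\Delta_k|$ relative to $\gamma k$ from the code form, nonnegativity of $c_{\ell r}$ and $e_{\ell r}$, and the exponential run tails of block sources.
\item $(1-\omega)f_{\ell r}$ may be large for some rows. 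This is harmless, because \eqref{eq:run-positive-rows} is assumed for every $\ell$, and the bounds on $v$ and $1/v$ keep the potential finite.
\end{itemize}

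Because these checks all happen at a fixed source before $q\to\infty$, the bound $(1-\omega)B_0+\beta$ is independent of the source. \Cref{lem:stationary} therefore completes the proof.
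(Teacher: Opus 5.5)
Your proposal is correct and follows the paper's proof essentially step for step. You take the convex combination of the input-entropy bound and the channel bound, read the row \eqref{eq:run-positive-rows} divided by $v_\ell$ as a subprobability code for the next run length, and apply cross entropy (your log-sum). The potential and length terms then cancel by frequency balance and $\sum_\ell \ell\xi_\ell=1$, followed by the same appeal to \cref{lem:stationary}.

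One small slip: the middle term of your log-sum display should be $\xi_\ell\log_2\bigl(1/\sum_r q_{\ell r}\bigr)$, because the second list $\xi_\ell q_{\ell r}$ has total $\xi_\ell\sum_r q_{\ell r}$. As written, the middle term can be negative. The conclusion that the left side is nonnegative is unaffected.
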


For example, $M=\bigl(\begin{smallmatrix}1/4&1\\1/8&1/4\end{smallmatrix}\bigr)$ and $v=(2,1)$ give $Mv=(3/2,1/2)\le v$. Although the first unweighted row sums to $5/4$, dividing the weighted rows by their current potentials gives masses $3/4$ and $1/2$. The potential accounts for each transition's destination.

\begin{proof}
\textbf{Step 1: interpret a checked row as a probability code.}
For $\xi_\ell>0$, the next-length probabilities under the run-origin law are $z_{\ell r}/\xi_\ell$. Entropies and expectations in this step use that law. Zero-frequency rows contribute nothing when averaged.
Divide the summand by $v_\ell$ to define $q_\ell(r)$, with $\sum_rq_\ell(r)\le1$. Normalize by this sum and apply cross entropy. Since its logarithm is nonpositive, the true next-run law satisfies $H(L_1\mid L_0=\ell)+\E[\log_2q_\ell(L_1)\mid L_0=\ell]\le0$.

\textbf{Step 2: recover the entropy and the known run coefficients.}

Expand the logarithm of the defined $q_\ell(r)$ and multiply by $\omega$. Writing $R$ here for the next run length only, we obtain
\begin{align*}
\omega H(R\mid\ell)&+(1-\omega)\E[g_R-c_{\ell R}-e_{\ell R}\mid\ell]
-\beta\E[R\mid\ell]\\
&\le\omega\log v_\ell-\omega\E[\log v_R\mid\ell].
\end{align*}
The exponent supplies the run coefficient and length term; $v_R/v_\ell$ supplies the potential difference. Exponential run tails and bounded $\log v$ make these expectations finite.

\textbf{Step 3: remove the potential and normalize by input length.}

Multiply by $\xi_\ell$ and sum. Incoming and outgoing frequencies of each length agree, cancelling the two potential sums. The comparison probabilities are not the unknown input-run transition law: this cancellation uses frequency balance, not a Markov assumption on that law. The length term is $\beta\sum_r r\xi_r=\beta$. Adding the constant $(1-\omega)B_0$ from the convex combination yields
\begin{equation}
\boxed{C(d)\le(1-\omega)B_0+\beta.}
\label{eq:small-certificate-bound}
\end{equation}
This bound holds for every fixed stationary concatenation of length-$q$ blocks. Its parameters depend on $d$ and the certificate, not the source or $q$. Boundary costs vanish with $q$ fixed; then \cref{lem:stationary} allows $q\to\infty$ and gives capacity. Constant sources have zero information rate.
\end{proof}

\subsection{Keeping a finite amount of neighboring-run memory}
To retain $\delta_j(p,r)$, store preceding length $p$ whenever current length $j\le J_0$, for a chosen cutoff $J_0$; otherwise store only $j$. Call the state $S$. After next length $r$, the state $\operatorname{next}(S,r)$ stores $r$ and, when $r\le J_0$, also $j$.

Define the retained correction by
\[
\Gamma(S,r)=c_{jr}+e_{jr}
+\begin{cases}
\delta_j(p,r),&\text{if }S\text{ stores the preceding length }p,\\
0,&\text{otherwise}.
\end{cases}
\]
The pair terms concern $(j,r)$; the triple term also needs $p$. Omitting any nonnegative correction remains valid. Thus the bridge checker can retain pair and triple mask corrections, and the neighboring-run checker pair mask and output corrections, within the same notation $\Gamma$.

Choose positive potential values $v(S)$, with both $v$ and $1/v$ bounded. The state version of \eqref{eq:run-positive-rows} is the explicit requirement
\begin{equation}\label{eq:memory-run-row}
\sum_{r\ge1}
2^{[(1-\omega)(g_r-\Gamma(S,r))-\beta r]/\omega}
 v\bigl(\operatorname{next}(S,r)\bigr)
\le v(S)\qquad\text{for every }S.
\end{equation}
The destination potential uses the full next state, including any stored predecessor.

For averaging, recall the run-origin law $\Pr^\circ$ and run density $\rho$. Define transition frequencies per input bit by
\[
\mu(S,r)=\rho\Pr^\circ\{S,L_1=r\},
\qquad \mu_S=\sum_r\mu(S,r).
\]
Run stationarity balances state visits and arrivals; next-run lengths count all input bits:
\begin{equation}\label{eq:memory-run-balance}
\mu_{S'}=\sum_{S,r:\,\operatorname{next}(S,r)=S'}\mu(S,r),
\qquad
\sum_{S,r}r\mu(S,r)=1.
\end{equation}
Also $\sum_S\mu_S=\rho$. These are per-input-bit frequencies, not a probability distribution over states.

Coding the next length given $S$, with initial context disclosed, gives $h(X)\le\rho H^\circ(L_1\mid S)$. Each next length updates the state deterministically. Products of the assigned conditional probabilities therefore form a normalized law for each run count, even if the true source uses longer memory. Divide the row summand by $v(S)$ to define the subprobability code $q_S(r)$. Cross entropy gives $H^\circ(L_1\mid S=s)+\E^\circ[\log_2q_s(L_1)\mid S=s]\le0$ at each state of positive frequency. Expand, multiply by $\omega\mu_s$, sum, and use the length identity:
\begin{align*}
&\omega\rho H^\circ(L_1\mid S)
 +(1-\omega)\sum_{S,r}\mu(S,r)[g_r-\Gamma(S,r)]-\beta\\
&\qquad\le\omega\sum_{S,r}\mu(S,r)
\left[\log_2v(S)-\log_2v\bigl(\operatorname{next}(S,r)\bigr)\right]=0.
\end{align*}
Frequency balance in \eqref{eq:memory-run-balance} cancels the potential sums. Bounded $\log v$ justifies rearranging them.

Combine $h(X)$ with $B_0+\sum_{S,r}\mu(S,r)[g_r-\Gamma(S,r)]$ as before. The display bounds their weighted average by $(1-\omega)B_0+\beta$. The same block-concatenation argument proves \eqref{eq:small-certificate-bound}.

\textbf{Finite reduction of the state space.}
The test still ranges over all current and next lengths. A finite calculation is sufficient only with analytic bounds for unrepresented lengths; omitting unavailable nonnegative corrections weakens the bound safely.

Choose $K\ge J_0$ and represent lengths greater than $K$ by $*$. A long current run has state $B_*$ and potential one. A current $j\le J_0$ stores a predecessor in $\{1,\ldots,K,*\}$; a current $J_0<j\le K$ stores none. This gives $J_0(K+1)+(K-J_0)+1=(J_0+1)K+1$ states. Omit corrections requiring unknown lengths. Every long next run leads to $B_*$.

For the pair-only construction, define the uncorrected positive weight and a bound on its tail by
\[
u_r=2^{[(1-\omega)g_r-\beta r]/\omega},\qquad
T\ge\sum_{r>K}u_r.
\]
Omitting corrections bounds unknown-length weights by $u_r$. Every short row uses its known short corrections and the complete tail bound $T$, since a long destination has potential one. For any current length greater than $K$, dropping corrections bounds the complete weighted row by $\sum_{r\le K}u_rv_r+T$, which is checked against its potential one. This single row covers every long current row, so $K+1$ checks prove the infinite pair inequality. For states with triple memory, the corresponding long-current row is
\[
\sum_{r\le K}u_r v(\operatorname{next}(B_*,r))+T\le1.
\]
It uses the same tail bound and omission rule. When a short next run stores a long predecessor, its destination is the state with predecessor $*$, whose own potential is retained. Only a long next run uses destination potential one.

\subsection{An explicit bound for every long-run contribution}
To supply $T\ge\sum_{r>K}u_r$, we bound $g_r$ by an affine function, exponentiate, and sum a geometric series. We choose the output-run law $R_k$ to support both this bound and the earlier coefficient signs. Here $K$ is the input-run cutoff. Recall that $0<\omega<1$ weights the input-entropy bound, $\beta$ multiplies the next input-run length, and $g_r$ collects its channel-entropy terms. These are fixed test parameters and coefficients, not parameters of the unknown input law. Choose rational $1/2\le\theta<1$, factors $b_k\ge1$ for $1\le k\le H$, and $\epsilon_0=d^2$, with $0<d\le1/2$. Set
\begin{align*}
Z_R&=\sum_{k=1}^Hb_k\theta^k+\frac{\theta^{H+1}}{1-\theta},\\
R_k^g&=\frac{\theta^k}{Z_R}
\begin{cases}b_k,&k\le H,\\1,&k>H,\end{cases}
\qquad
R_k=(1-\epsilon_0)R_k^g+\frac{\epsilon_0}{k(k+1)}.
\end{align*}
The normalizer $Z_R$ sums the finite prefix and geometric tail of $R^g$. Also $1/[k(k+1)]=1/k-1/(k+1)$ sums to one. Thus their mixture $R$ is a positive probability law.

\textbf{Step 1: obtain two bounds from the two code components.}
The geometric component bounds description length by a line; the $1/[k(k+1)]$ component gives a logarithmic bound. Each follows by lower-bounding the mixture probability by one component.

Define $\gamma=-\log_2\theta$ and $A_e=\log_2Z_R-\log_2(1-\epsilon_0)$. Since $b_k\ge1$ and $\theta\ge1/2$, we have $Z_R\ge\theta/(1-\theta)\ge1$, hence $A_e\ge0$. Also $R_k\ge(1-\epsilon_0)\theta^k/Z_R$, so $-\log_2R_k\le A_e+\gamma k$. Thus $\Delta_k=-\log_2R_k-A_e-\gamma k\le0$, as required by the output-count argument.

The heavier mixture component gives a second inequality, $R_k\ge\epsilon_0/[k(k+1)]$. Taking negative logarithms and subtracting the affine part yields
\[
\Delta_k\le-\gamma k-\log_2\epsilon_0-A_e+\log_2(k(k+1)).
\]
\textbf{Step 2: average the output-run bound for one input run.}
The coefficient $g_\ell$ concerns an input run, whereas $R_k$ describes an output run. Their connection is the binomial survivor count.

Let $N_\ell\sim\operatorname{Bin}(\ell,1-d)$ and set $\Delta_0=0$ for the following expectation. Rewriting \eqref{eq:small-run-reward} gives
\[
g_\ell=\zeta_\ell+w_\ell+A_e[1-(1+(1-d)^2)d^\ell]
+(1-d)^2\E[\Delta_{N_\ell}\mathbf1_{N_\ell\ge1}].
\]
The expectation replaces the positive-survivor sum. Here $\zeta_\ell$ is the reversed-mask entropy and $w_\ell$ the remaining deletion-position entropy bound for a length-$\ell$ input run. Since $\E N_\ell=\ell(1-d)$, the term $-\gamma k$ contributes $-\gamma\ell(1-d)$. Bound its constant by $C_0=\max(0,-\log_2\epsilon_0-A_e)$ and use $\log_2(k(k+1))\le2\log_2(\ell+1)$ for $1\le k\le\ell$. These nonnegative bounds need only probability at most one. Finally use $\zeta_\ell\le\ell h_2(d)$ and bound the $A_e$ term by $A_e$. Multiplication by the two-neighbor factor $(1-d)^2$ yields
\[
g_\ell\le\ell\bigl(h_2(d)-(1-d)^3\gamma\bigr)
+A_e+(1-d)^2C_0+2(1-d)^2\log_2(\ell+1)+w_\ell.
\]
The third power arises from two neighbor-survival factors and the mean survivor count inside the run.

\textbf{Step 3: replace the remaining nonlinear terms by a line.}

Choose $L\ge K$, $L\ge1$. Concavity of $\log_2(x+1)$ bounds it by its tangent at $L+1$, giving
\[
2(1-d)^2\log_2(\ell+1)
\le2(1-d)^2\log_2(L+2)-t(L+1)+t\ell,
\qquad t=\frac{2(1-d)^2}{(L+2)\ln2}.
\]
We also recall the remaining ambiguity term $w_\ell=\ell d(1-d)^{\ell+1}\log_2\ell$. For real $x>1$, the logarithmic derivative of $d(1-d)^{x+1}\log_2x$ is $\ln(1-d)+1/(x\ln x)$. Its second term decreases with $x$. The checker verifies it is already negative at $x=L+1$. Thus, for every $\ell\ge L+1$,
\[
 w_\ell/\ell\le\eta=d(1-d)^{L+2}\log_2(L+1).
\]

Collecting the constant and linear terms now gives $g_\ell\le I_0+S_0\ell$, where
\begin{align*}
I_0&=A_e+(1-d)^2C_0+2(1-d)^2\log_2(L+2)-t(L+1),\\
S_0&=h_2(d)-(1-d)^3\gamma+t+\eta.
\end{align*}
Substitute this affine bound into $u_\ell=2^{[(1-\omega)g_\ell-\beta\ell]/\omega}$. Because $0<\omega<1$, the direction is preserved: $u_\ell\le2^{\alpha I_0+\sigma\ell}$, with
\[
\alpha=\frac{1-\omega}{\omega},\qquad
\sigma=\frac{(1-\omega)S_0-\beta}{\omega}.
\]
\textbf{Step 4: sum the bound over every remaining run length.}
The slope must be negative to sum the infinite bound; a finite prefix alone cannot establish this.

The checker requires an entirely negative interval enclosure for $\sigma$. Then $0<2^\sigma<1$, so the geometric series sums exactly to
\begin{equation}
\sum_{\ell>L}u_\ell\le
2^{\alpha I_0}\sum_{\ell=L+1}^{\infty}(2^\sigma)^\ell
=\frac{2^{\alpha I_0+\sigma(L+1)}}{1-2^\sigma}.
\label{eq:complete-heavy-tail}
\end{equation}

Evaluate $K<r\le L$ upward and add \eqref{eq:complete-heavy-tail}. The resulting $T$ covers every omitted next length, without restricting input-run lengths.

\begin{corollary}[A finite check covers every run length]\label{cor:finite-run-check}
Fix the finite state representation and potential described above, including potential one for long current runs. Suppose the checked short rows include an upper bound $T$ on every contribution from next lengths $r>K$, and the checked long-current rows omit only nonnegative corrections. If all these upper row values satisfy \eqref{eq:memory-run-row}, then the bound \eqref{eq:small-certificate-bound} holds. The tail bound $T$ may be obtained by summing $K<r\le L$ and adding \eqref{eq:complete-heavy-tail}, provided its slope $\sigma$ is strictly negative.
\end{corollary}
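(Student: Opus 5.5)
The plan is to reduce the corollary to the infinite-state inequality \eqref{eq:memory-run-row}. Once that holds for every state of the memory construction, with a potential $v$ such that $v$ and $1/v$ are bounded, the averaging argument preceding the finite reduction gives \eqref{eq:small-certificate-bound}. That argument consists of the cross entropy of the subprobability codes $q_S$, cancellation of the potential by the balance identities \eqref{eq:memory-run-balance}, the length normalization, and \cref{lem:stationary}. So what remains is to extend the finite potential to all states and to show that every row of the infinite test is dominated by a checked row.

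\emph{Extension of the potential.} Every current length $j>K$ is assigned potential one, the value of $B_*$. A current length $j\le J_0$ whose stored predecessor $p$ exceeds $K$ receives the potential of the checked state with predecessor $*$. The resulting potential takes finitely many positive values, so $v$ and $1/v$ are bounded. I will also replace $\Gamma$ on every state by its retained version, which keeps only the corrections available in the finite representation. Every correction $c$, $e$, $\delta$ is nonnegative, and it enters the exponent with coefficient $-(1-\omega)/\omega<0$. Omitting a correction therefore increases each summand. The omitted corrections are exactly those that require unknown lengths. Discarding a valid nonnegative subtraction still leaves a valid upper bound on information, as the \eqref{eq:pair-credit} remark notes. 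Hence it suffices to prove \eqref{eq:memory-run-row} with this weakened $\Gamma$.

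\emph{Row domination.} Next I split each row by its destination. For a short current state, the terms with $r\le K$ are exactly those evaluated in the checked row; their destinations are represented states, including predecessor $*$ where one arises. Each term with $r>K$ has destination potential one and corrections omitted, so it is at most $u_r$. Their total is therefore at most $\sum_{r>K}u_r\le T$. Consequently the infinite row is bounded by the checked upper value, which is at most $v(S)$.

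For a long current state, the corrections are dropped and the potential is one. The row is then at most $\sum_{r\le K}u_r\,v(\operatorname{next}(B_*,r))+T$. This expression does not depend on which long $j$ occurs, since $g_r$ concerns the next length only. The single long-current check therefore covers all such states, in both the pair version and the triple-memory version.

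\emph{The tail bound.} The last step, which I expect to be the main obstacle, is to justify $T$ as the sum of the evaluated terms with $K<r\le L$ plus \eqref{eq:complete-heavy-tail}. This step needs the affine bound $g_\ell\le I_0+S_0\ell$ for all $\ell>L$, so I will recheck its ingredients in order:
\begin{itemize}
\item the sign conditions $A_e\ge0$ and $\Delta_k\le0$, together with the logarithmic bound on $\Delta_k$ coming from the $\epsilon_0/[k(k+1)]$ component;
\item $\zeta_\ell\le\ell h_2(d)$ from \cref{lem:extended-run-parser}, using $d\le1/2$;
\item the tangent-line bound on $\log_2(\ell+1)$ at $L+1$;
\item the bound $w_\ell/\ell\le\eta$ for $\ell\ge L+1$, which follows because the log-derivative $\ln(1-d)+1/(x\ln x)$ is decreasing and is checked to be negative at $x=L+1$.
\end{itemize}
Because $0<\omega<1$, exponentiation preserves the inequality, giving $u_\ell\le2^{\alpha I_0+\sigma\ell}$. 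The strictly negative enclosure of $\sigma$ then makes the geometric series converge to the stated closed form. This completes the proof of the corollary.
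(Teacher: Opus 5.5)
Your proposal is correct and follows essentially the same route as the paper. In both, each infinite row is dominated by a checked row: represented terms are bounded termwise, the $r>K$ terms by $T$ (destination potential one, nonnegative corrections omitted), and all long current lengths by the single $B_*$ row; then the state-frequency and cross-entropy argument applies, with $\sigma<0$ justifying the geometric tail. Your additional detail (assigning states with predecessor $p>K$ the potential of the $*$ state, and re-listing the ingredients of the affine bound on $g_\ell$) only makes explicit what the paper's brief proof leaves implicit.
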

\begin{proof}
For represented lengths, the checked terms bound the corresponding terms of the infinite row individually. For $r>K$, omitting a nonnegative correction increases the exponent, and the next state's potential is one. Thus the added $T$ bounds their complete sum. A long current length has no retained length-specific correction; its checked row consequently bounds every such current length. This proves every infinite row. The state-frequency and cross-entropy argument above then gives \eqref{eq:small-certificate-bound}. Finally, $\sigma<0$ makes the ratio $2^\sigma$ smaller than one, so \eqref{eq:complete-heavy-tail} bounds all lengths beyond $L$, rather than only a finite sample.
\end{proof}

\paragraph{Where computation enters.}
The input to this checker is a proposed output-run distribution, $\omega,\beta$, the finite cutoffs, and the positive potential values. These are comparison parameters, not an estimated run distribution of an input source. The checker evaluates the finite entropy tables defining $g_r$ and the retained corrections, bounds the finite sums upward, verifies a negative upper endpoint for $\sigma$, and adds the complete tail bound. It then checks every represented row against its potential value. \Cref{cor:finite-run-check} turns these numerical inequalities into $C(d)\le(1-\omega)B_0+\beta$. Search may change any permitted comparison parameters; a proof requires the complete check only for the final proposal. The finite cutoffs control cost and approximation error, and do not cap actual input-run lengths.

\subsection{The scalar construction used on the first positive interval}
The first positive interval uses potential one and no pair or triple corrections. All rows then reduce to one scalar sum, still requiring a complete tail bound. This construction is checked only on $17/10000\le d\le1/100$. Set
\begin{align*}
\theta&=\tfrac12+\tfrac{13}{8}d,\qquad
R_1=\tfrac{1-d}{2},\qquad R_2=\tfrac14-\tfrac d8,
\qquad R_3=\tfrac18+\tfrac d{16},\\
T_0&=\tfrac18+\tfrac{9d}{16},\qquad
R_k=T_0(1-\theta)\theta^{k-4}\quad(k\ge4).
\end{align*}
On this interval every displayed probability is positive and $0<\theta<1$. The tail sums to $T_0$ by the geometric-series formula. Direct addition gives $R_1+R_2+R_3=7/8-9d/16=1-T_0$, proving normalization.

Set $A=\log_2[\theta^4/(T_0(1-\theta))]$ and $\gamma=-\log_2\theta$. In this scalar construction, $A$ plays the role of $A_e$ above. For $k\ge4$, expand the logarithm of the geometric formula to obtain $-\log_2R_k=A+\gamma k$. For $k=1,2,3$ define $\Delta_k=-\log_2R_k-A-\gamma k$. The required conditions $A\ge0$ and $\Delta_k\le0$ reduce respectively to $\theta^4\ge T_0(1-\theta)$ and $R_k\theta^4\ge T_0(1-\theta)\theta^k$. They involve only rational products at each exact checked parameter.

With potential one and omitted nonnegative entropy corrections, all run rows are bounded by the same scalar condition
\[
\sum_{\ell\ge1}2^{[(1-\omega)g_\ell-\beta\ell]/\omega}\le1.
\]
This is \eqref{eq:run-positive-rows} with constant potential and zero corrections. For its tail, $\Delta_k\le0$ and $\zeta_\ell\le\ell h_2(d)$ imply $g_\ell\le\ell h_2(d)+A+w_\ell$. The same derivative test gives $w_\ell\le\eta\ell$ beyond $L$. With $\alpha=(1-\omega)/\omega$ and $\sigma=((1-\omega)h_2(d)-\beta)/\omega+\alpha\eta<0$, the remaining terms are bounded by $2^{\alpha A+\sigma\ell}$, whose complete sum is $2^{\alpha A+\sigma(L+1)}/(1-2^\sigma)$. This completes the scalar check.

\paragraph{Contribution to the final bound.}
The published enclosure selects the run-based family as a whole, including its non-scalar tests, between $d=0.0017$ and $d=3/128$. Its bound is valid wherever the stated row and remainder conditions hold; selection does not restrict validity. The fair-input lower refinements and interval assembly appear in \cref{sec:fair-input,sec:small-assembly}.

%% file: sections/roadmap_runs.tex
\subsection{Proof plan: describe ambiguities between runs}\label{sec:roadmap-runs}
This is a separate converse, used near zero deletion probability. It starts from
\[
 I(X_1^n;Y_n)=H(Y_n)-nh_2(d)+H(D_1^n\mid X_1^n,Y_n).
\]
Here an \emph{upper} bound on deletion-pattern ambiguity is required; the lower-bound constructions later in the paper need the opposite direction for the same identity.

\begin{enumerate}
\item Following \citet{KM10}, reverse selected deletions until the remaining deletion counts can be recovered run by run. \Cref{lem:extended-run-parser} pays for the reversed mask and bounds the remaining position ambiguity.
\item \Cref{lem:output-run-counts} bounds output entropy through run counts. Pair and triple corrections then subtract explicitly identified redundancies from the mask and output descriptions.
\item \Cref{thm:run-positive} removes the unknown run frequencies: a positive row with sum at most one defines a valid comparison distribution for the next run length.
\item \Cref{cor:finite-run-check} reduces the rows to finitely many lengths and an explicit bound on all longer lengths. These are the quantities checked numerically.
\end{enumerate}
Run lengths remain unrestricted and may be dependent. Short-run state variables belong to the comparison distribution, not to an assumed model of the input. Omitting a nonnegative correction gives a weaker valid bound; omitting an infinite tail does not.

\Cref{fig:run-proof-route} summarizes how these steps produce the final finite test.

\input{figures/run_proof_route}

%% file: figures/run_proof_route.tex
\begin{figure}[!htbp]\centering
\begin{tikzpicture}[>=Stealth,text=ink,
 box/.style={draw=ink!50,rounded corners=2pt,align=left,text width=12.8cm,inner sep=6pt,font=\small},arr/.style={->,thick,draw=ink!60}]
\node[box,fill=ink!3] (m) {\textbf{Simplify the deletion pattern and account for the change.}\\
\Cref{lem:extended-run-parser} bounds the remaining position uncertainty. Equations~\eqref{eq:modified-mask-upper} and~\eqref{eq:small-run-reward} include the reversed deletions and output-description cost.};
\node[box,fill=ink!3,below=.4cm of m] (r) {\textbf{Retain information about neighboring runs.}\\
Pair and triple event tables give nonnegative corrections. Insert the selected corrections in the positive row condition \eqref{eq:memory-run-row}; averaging removes the unknown run distribution.};
\node[box,fill=ochre!14,below=.4cm of r] (c) {\textbf{Check finitely many lengths and bound all longer ones.}\\
The explicit tail bound \eqref{eq:complete-heavy-tail} covers every omitted length. Passing all row and tail inequalities proves the upper bound \eqref{eq:small-certificate-bound}.};
\draw[arr](m)--(r);\draw[arr](r)--(c);
\end{tikzpicture}
\caption{The small-deletion converse in \cref{sec:small-deletion} starts with the modified-deletion idea of \citet{KM10}. The first stage accounts for what was changed, the second retains local information, and the third makes the inequalities finite without restricting possible input-run lengths. The final numerical bound depends on both checked finite rows and proved infinite-tail bounds.}\label{fig:run-proof-route}
\end{figure}

%% file: sections/chapter_lower.tex
\part{Lower bounds}\label{part:lower}
A capacity lower bound needs one specified input source with a sufficiently large information rate. For every block length $n$, its mutual information is at most $nC_n(d)$; division by $n$ and the capacity limit therefore turn a source-rate lower estimate into a capacity lower bound. This part develops two finite calculations, following the source-based approach of \citet{DG06,VTR13,RC23}.

\textbf{Finite-state inputs, \cref{sec:source}.} Choose an exact transition matrix, stationary state probabilities, and a binary symbol for each state. Compute lower estimates of output entropy and of deletion-mask entropy conditional on input and trace. The chain-rule identity adds these estimates and subtracts the known mask entropy. The state need not be determined by a finite history of observed input bits.

\textbf{Renewal inputs, \cref{sec:renewal}.} Choose independent lengths for alternating zero and one runs. The jigsaw construction of \citet{DM07} bounds the rate by describing groups of input runs that produce one output run. Retaining uncertainty about their starts, as in \citet{KD10}, improves the rate. Poisson simulation \citep{MD06} converts one repeat-channel calculation into a lower bound for every deletion probability.

Each complete calculation uses one source throughout: increasing mask ambiguity is useful only after accounting for the same source's output entropy. We maximize the resulting whole-source bounds. Supplementary fair-input refinements in \cref{sec:fair-input} improve some pointwise values and the mean approximation error; \cref{sec:family-simplification} checks their role in the worst-case tolerance.

%% file: sections/source_lower.tex
\section{Achievable rates from exact stationary inputs}\label{sec:source}
We choose an exact stationary binary source and compute a lower bound on its information rate, following the source-based approach of \citet{DG06,VTR13,RC23}. Stationarity means that every finite word has the same probability at every input position. Deletions are independent of the source. Throughout, $d<1$; the endpoint $C(1)=0$ is exact.

\input{sections/roadmap_lower}

\subsection{Specifying the input process}\label{sec:source-law}
Choose a finite state space, rational transition matrix $T$, stationary masses $\pi$, and one emitted bit per state. The entry $T_{ij}$ is the probability of the next state being $j$ given current state $i$. With $\one$ denoting the all-ones vector, check

\[
T_{ij}\ge0,\quad T\one=\one,\quad \pi^\top T=\pi^\top,
\quad\pi_i>0,\quad\pi^\top\one=1.
\]
These conditions normalize each transition row and $\pi$, and ensure that one transition preserves $\pi$. Starting with these masses makes the source stationary. Several states may emit the same bit. The certificate contains several exact finite-state sources; the 120-state source developed below is one worked instance. Each instance uses its own transition table and stationary masses. No finite-order Markov property of the emitted bits is assumed.

Let $E_b$ be diagonal, with entry one on states emitting $b$ and zero elsewhere. The probability of a raw word is
\[
 \Pr(X_1^h=x_1^h)=\pi^\top E_{x_1}T E_{x_2}\cdots T E_{x_h}\one.
\]
Each $E_{x_i}$ removes paths emitting the wrong bit at position $i$; each $T$ advances one input position. The final multiplication sums the ending state. Thus every word weight is a finite sum of rational products.

If the stored table labels transitions with bits, verify that all transitions entering a state carry that state's emitted bit. Complement symmetry may reduce enumeration only when a checked state permutation exchanges emitted bits while preserving $T$ and $\pi$.

The recorded sources include 32-state capped-run inputs, 64-state input-context chains, and 48-state run-context chains, as well as the 120-state refinement below. The parent source plan contains one 32-state source, three 64-state sources, and seven 48-state sources; later source refinements retain the same finite-state verification contract. These are different input laws evaluated by the same bound, not successive terms to be added together.

Embedding counts below depend only on the raw word and may be reused across sources. Their probability weights must be recomputed from that source's $T$ and $\pi$.

\paragraph{Why allow dependence between input bits?}
A run is a maximal segment of equal bits. A run of length $\ell$ disappears with probability $d^\ell$, since all of its bits must be deleted. Increasing repetition reduces this probability but also makes the input more predictable. Furthermore, deleting the middle run of $000\mid1\mid000$ merges its neighbors. These effects motivate adjusting run lengths and their dependence. Markov and run-length inputs have a long history in deletion-channel lower bounds \citep{DG06,DM07,KD10,RC23}; our finite-state representation makes selected features of that dependence adjustable.

The 120-state source used below remembers the current bit, the previous completed run length in classes $1,2,3+$, the run before that in classes $1,2+$, and the current run age in classes $1,2,\ldots,9,10+$. A plus sign means ``at least.'' Thus the state count is $2\cdot3\cdot2\cdot10=120$. Each state has two outgoing transitions: continue the current run, or switch bits and start a new run. Continuing increments the capped age; switching resets the age to one and updates the two completed-run classes. Complement symmetry ties corresponding zero and one states, leaving 60 free switching probabilities.

To select its 60 switching probabilities, search evaluates both finite entropy estimates in \eqref{eq:lower-plan}, adjusts those probabilities to increase their combined rate, and then freezes a candidate for exact verification. The detailed calculation is in \cref{sec:source-computation}.

This generator remembers run features, not a fixed number of preceding bits. Even a long current run need not erase its memory of the previous run. The chosen classes restrict the input family; no optimality theorem singles out these classes or the number 120. The numerical comparison in \cref{sec:source-computation} tests the benefit of refining the current-run age.

\subsection{The information identity and a finite alignment window}
Write $Y_n$ for the output of $X_1^n$ and $D_1^n$ for its deletion mask. The following identity is the starting point of the lower bound; it is the deletion specialization of the action-entropy identity in \citet{HOS16}.
\begin{lemma}[Separating input information from deletion randomness]\label{lem:mask-identity}
For every distribution of $X_1^n$ independent of the deletion mask,
\begin{equation}\label{eq:alignment-identity}
I(X_1^n;Y_n)=H(Y_n)-n\hb(d)+H(D_1^n\mid X_1^n,Y_n).
\end{equation}
\end{lemma}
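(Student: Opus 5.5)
We expand the conditional joint entropy $H(Y_n,D_1^n\mid X_1^n)$ by the chain rule in two orders, as the paper already sketches in \cref{sec:small-deletion}. Conditioning first on the mask gives
\[
H(Y_n,D_1^n\mid X_1^n)=H(D_1^n\mid X_1^n)+H(Y_n\mid X_1^n,D_1^n).
\]
Conditioning first on the trace gives
\[
H(Y_n,D_1^n\mid X_1^n)=H(Y_n\mid X_1^n)+H(D_1^n\mid X_1^n,Y_n).
\]
Both expansions are instances of \cref{lem:entropy-chain}. All variables take finitely many values for fixed $n$, so every entropy is finite and the subtractions below are legitimate.

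Three facts then simplify the first expansion.

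First, the trace is a deterministic function of the pair: $Y_n$ lists the bits $X_i$ with $D_i=0$ in order. Hence $H(Y_n\mid X_1^n,D_1^n)=0$.

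Second, the mask is independent of the input by hypothesis. Therefore $H(D_1^n\mid X_1^n)=H(D_1^n)$.

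Third, the coordinates $D_i$ are i.i.d.\ Bernoulli$(d)$, so additivity of entropy over independent coordinates gives $H(D_1^n)=n\hb(d)$. At $d\in\{0,1\}$ this value is zero under the convention $0\log0=0$, and the identity still holds.

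Equating the two expansions gives
\[
n\hb(d)=H(Y_n\mid X_1^n)+H(D_1^n\mid X_1^n,Y_n).
\]
Substituting $H(Y_n\mid X_1^n)=n\hb(d)-H(D_1^n\mid X_1^n,Y_n)$ into $I(X_1^n;Y_n)=H(Y_n)-H(Y_n\mid X_1^n)$ yields \eqref{eq:alignment-identity}.

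No step is a real obstacle. The only points needing care are that independence of $X_1^n$ and $D_1^n$ is exactly what the hypothesis supplies, and that determinism of the trace holds for every realization, not just on average.
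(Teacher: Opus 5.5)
Your proposal is correct and follows essentially the same route as the paper. Both expand $H(D_1^n,Y_n\mid X_1^n)$ in the two chain-rule orders, evaluate the mask-first expansion as $n\hb(d)$ using independence of the mask and determinism of the trace, and then substitute for $H(Y_n\mid X_1^n)$ in $I(X_1^n;Y_n)=H(Y_n)-H(Y_n\mid X_1^n)$.
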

\begin{proof}
Expand $H(D_1^n,Y_n\mid X_1^n)$ with the mask first. Its value is $n\hb(d)$: the mask is independent of the input, and input plus mask determines the output. Expanding with the output first gives $H(Y_n\mid X_1^n)+H(D_1^n\mid X_1^n,Y_n)$. Equate the two expansions and substitute for $H(Y_n\mid X_1^n)$ in $I(X_1^n;Y_n)=H(Y_n)-H(Y_n\mid X_1^n)$.
\end{proof}

Different masks can produce the same trace. Their remaining uncertainty is the positive correction in \eqref{eq:alignment-identity}. We lower-bound it by revealing additional mask entries: conditioning decreases entropy (\cref{lem:side-information}). These revelations are proof devices, not receiver observations.

For a window of $H$ input bits, $Y_H$ denotes its trace. The next lemma turns the uncertainty about its first deletion indicator into a lower bound per input bit.
\begin{lemma}[Finite-window alignment genie]\label{lem:alignment-genie}
For a stationary input and an integer $H\ge1$, put
\[
g_H(d)=H(D_1\mid X_1^H,Y_H).
\]
Then
\[
\liminf_{n\to\infty}\frac1nH(D_1^n\mid X_1^n,Y_n)\ge g_H(d).
\]
\end{lemma}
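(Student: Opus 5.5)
The plan is to expand $H(D_1^n\mid X_1^n,Y_n)$ by the chain rule over mask positions, $\sum_{i=1}^n H(D_i\mid D_1^{i-1},X_1^n,Y_n)$. Each term will then be lower-bounded by a stationary copy of $g_H(d)$. The key step is a side-information argument (\cref{lem:side-information}): conditioning on more variables can only reduce entropy. So for each $i\le n-H+1$ we will reveal the past mask $D_1^{i-1}$, the whole input $X_1^n$, and the future mask $D_{i+H}^n$. From these, the receiver-side proof device can strip off the output contributed by positions before $i$, namely the surviving bits of $X_1^{i-1}$ under $D_1^{i-1}$. It can also strip off the output contributed by positions after $i+H-1$. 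What remains is exactly the trace $Y^{(i)}_H$ of the window $X_i^{i+H-1}$ under $D_i^{i+H-1}$.

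With that conditioning, the term becomes
\[
H(D_i\mid D_1^{i-1},X_1^n,Y_n)\ge H(D_i\mid D_1^{i-1},D_{i+H}^n,X_1^n,Y^{(i)}_H).
\]
This uses that $Y_n$ is a deterministic invertible function of $Y^{(i)}_H$ once the outer masks and input are known. The next step is to discard the revealed outer masks and outer input bits. The window mask $D_i^{i+H-1}$ is independent of $(D_1^{i-1},D_{i+H}^n)$ and of $X$. Moreover, given $X_i^{i+H-1}$, the pair $(D_i,Y^{(i)}_H)$ depends only on the window mask. Hence $(D_i,Y^{(i)}_H)$ is conditionally independent of the outer variables given the window input. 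The conditional entropy therefore equals $H(D_i\mid X_i^{i+H-1},Y^{(i)}_H)$, which by stationarity of the input and i.i.d.\ deletions equals $g_H(d)$.

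Summing over the $n-H+1$ admissible positions and dropping the remaining nonnegative terms near the end gives
\[
H(D_1^n\mid X_1^n,Y_n)\ge (n-H+1)g_H(d).
\]
Dividing by $n$ and letting $n\to\infty$ with $H$ fixed yields the claim.

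The step needing most care is the reduction from $Y_n$ to the window trace, i.e.\ showing the conditioning really lets one reconstruct $Y^{(i)}_H$ from $Y_n$ and vice versa. The prefix part is clear, since its surviving bits are determined and have known count, so they can be removed from the front. For the suffix part, conditioning on $D_{i+H}^n$ and $X_{i+H}^n$ fixes the suffix trace and hence its length, so it can be removed from the end. The subtle point is that $D_i^{i+H-1}$ itself is not revealed. One must check that the split point of $Y_n$ is nonetheless known: the prefix and suffix lengths are fixed by the revealed masks, so the middle piece is exactly determined. The conditional independence used to drop the outer conditioning then follows from independence of the mask coordinates.
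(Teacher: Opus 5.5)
Your proposal is correct and follows the paper's proof essentially step for step. You expand by the chain rule over mask positions, reveal the mask entries beyond the length-$H$ window, and strip the known prefix and suffix traces. You then identify the remaining conditional entropy with $g_H(d)$ by stationarity and discard the final $H-1$ nonnegative summands. Your conditional-independence phrasing of that identification (weak union applied to $(D_i,Y^{(i)}_H)$ given the window input) is equivalent to the paper's remark that the posterior of the inside mask is proportional to its deletion probability times the indicator that it produces the window output.
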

\begin{proof}
The chain rule gives
\[
H(D_1^n\mid X_1^n,Y_n)
=\sum_{i=1}^n H(D_i\mid D_1^{i-1},X_1^n,Y_n).
\]
For $i\le n-H+1$, also reveal the mask entries after position $i+H-1$. Conditioning decreases entropy, so this summand is at least the entropy of $D_i$ with every deletion decision outside its length-$H$ window known. Only the first decision in the window is measured; the remaining $H-1$ decisions are left unknown.

The revealed mask and input determine the output before and after the window. Removing those two strings from $Y_n$ identifies the window output. For a fixed full input, the conditional probability of an inside mask is proportional to its deletion probability times the indicator that it produces this window output. Both factors depend only on the inside mask and input. Thus the unrevealed ambiguity is exactly the one defining $g_H(d)$. Stationarity makes its average independent of $i$.

The final $H-1$ chain-rule summands are nonnegative. Discard them to get
\[
 H(D_1^n\mid X_1^n,Y_n)\ge(n-H+1)g_H(d).
\]
Divide by $n$ and let $n\to\infty$ with $H$ fixed. Overlapping windows cause no double counting: each summand measures a different first decision $D_i$.
\end{proof}

\subsection{Increasing the window without recounting old uncertainty}
A larger window can leave more uncertainty about its first deletion decision. To see why, reveal only its final deletion decision and compare with the shorter window.
\begin{lemma}[Nonnegative increments of mask uncertainty]\label{lem:mask-increments}
For a stationary input, $g_1=0$, and for $k\ge2$,
\[
\Delta_k:=g_k-g_{k-1}
=I(D_1;D_k\mid X_1^k,Y_k)\ge0.
\]
Consequently $g_H=\sum_{k=2}^H\Delta_k$.
\end{lemma}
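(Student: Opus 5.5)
The plan is to compare $g_k$ with a version of itself in which the final deletion decision $D_k$ is also revealed. By the definition of conditional mutual information,
\[
g_k-H(D_1\mid X_1^k,Y_k,D_k)=H(D_1\mid X_1^k,Y_k)-H(D_1\mid X_1^k,Y_k,D_k)=I(D_1;D_k\mid X_1^k,Y_k).
\]
So the whole lemma reduces to one identity,
\[
H(D_1\mid X_1^k,Y_k,D_k)=g_{k-1}.
\]
Nonnegativity of $\Delta_k$ is then \cref{lem:side-information}, or equivalently nonnegativity of mutual information.

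\textbf{Step 1: replace the longer trace.} Given $X_k$ and $D_k$, the traces $Y_k$ and $Y_{k-1}$ determine each other.
\begin{itemize}
\item If $D_k=1$, then $Y_k=Y_{k-1}$.
\item If $D_k=0$, then $Y_k$ is $Y_{k-1}$ followed by $X_k$, so $Y_{k-1}$ is $Y_k$ with its last symbol removed.
\end{itemize}
Hence the conditioning $\sigma$-fields generated by $(X_1^k,Y_k,D_k)$ and by $(X_1^{k-1},Y_{k-1},X_k,D_k)$ coincide. This gives
\[
H(D_1\mid X_1^k,Y_k,D_k)=H(D_1\mid X_1^{k-1},Y_{k-1},X_k,D_k).
\]

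\textbf{Step 2: drop $(X_k,D_k)$ by a Markov chain.} This is the step needing care, because $X_k$ may be correlated with the earlier input bits. I will show that
\[
D_1 \;\text{--}\; (X_1^{k-1},Y_{k-1}) \;\text{--}\; (X_k,D_k)
\]
is a Markov chain.
\begin{itemize}
\item The joint law of $(X_1^k,D_1^k)$ factors as $P(x_1^k)\prod_i P(d_i)$.
\item Conditional on $X_1^{k-1}$, the pair $(X_k,D_k)$ is therefore independent of $D_1^{k-1}$.
\item The pair $(D_1,Y_{k-1})$ is a function of $X_1^{k-1}$ and $D_1^{k-1}$.
\end{itemize}
Consequently, conditional on $X_1^{k-1}$, the pair $(X_k,D_k)$ is independent of $(D_1,Y_{k-1})$. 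Conditioning further on $Y_{k-1}$ preserves this independence, which is the chain. The conditional entropy therefore equals $H(D_1\mid X_1^{k-1},Y_{k-1})$, and this is $g_{k-1}$ by definition, for window length $k-1$. Stationarity is not needed here. It only matters for the interpretation given in \cref{lem:alignment-genie}.

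\textbf{Step 3: the base case and the sum.} For $g_1=H(D_1\mid X_1,Y_1)$, the trace $Y_1$ is empty exactly when $D_1=1$. So $D_1$ is a function of $Y_1$, and $g_1=0$. Telescoping the increments gives $g_H=g_1+\sum_{k=2}^H\Delta_k=\sum_{k=2}^H\Delta_k$.

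I expect the only real obstacle to be Step 2, namely checking the conditional independence carefully while allowing arbitrary dependence of $X_k$ on the earlier input bits. Step 1 is bookkeeping.
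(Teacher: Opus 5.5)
Your proposal is correct and follows the paper's proof: reveal $D_k$, note that given $(X_k,D_k)$ the traces $Y_k$ and $Y_{k-1}$ determine each other, and use the conditional independence of $D_1$ and $(X_k,D_k)$ given $(X_1^{k-1},Y_{k-1})$ to get $H(D_1\mid X_1^k,Y_k,D_k)=g_{k-1}$, then finish with $g_1=0$ and telescoping. Your factorization argument in Step~2 simply writes out what the paper cites as the mask-probability calculation of \cref{lem:alignment-genie}, and your remark that stationarity is not used in this identity is accurate.
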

\begin{proof}
Knowing $D_k$ allows us to recover $Y_{k-1}$ from $Y_k$: remove its last symbol if position $k$ survived, and remove nothing if it was deleted. Conversely, $Y_{k-1},X_k,D_k$ determine $Y_k$. Conditional on $X_1^{k-1},Y_{k-1}$, the posterior of $D_1$ does not depend on $X_k$ or the independent decision $D_k$. This follows by the same mask-probability calculation used in \cref{lem:alignment-genie}. Therefore
\[
H(D_1\mid X_1^k,Y_k,D_k)=g_{k-1}.
\]
Subtract this equality from the definition of $g_k$. The difference is the displayed conditional mutual information, which is nonnegative by \cref{lem:side-information}. For one input bit the output length determines deletion, giving $g_1=0$. Summing the differences telescopes.
\end{proof}
Each $\Delta_k$ is an average of nonnegative conditional mutual informations over complete input-output observations. Keeping any subset of those observations gives a lower estimate. Individual logarithmic summands inside a mutual information need not be nonnegative; they cannot be discarded separately. The next calculation groups counts so that every retained contribution has the required sign.

\subsection{Counting compatible deletion patterns}
For input $x=x_1\cdots x_h$ and trace $y$ of length $j$, an embedding lists the retained positions that spell $y$. Each has mask probability $(1-d)^jd^{h-j}$. Let $a_x(y)$ count embeddings omitting position one and $b_x(y)$ count those using it.

These integers have a finite recursion. Let $N_t(v)$ count embeddings of $v=v_1\cdots v_q$ in $x_1^t$. For nonempty $v$,
\[
 N_t(v)=N_{t-1}(v)+\mathbf1_{\{x_t=v_q\}}N_{t-1}(v_1^{q-1}),
 \qquad N_t(\varnothing)=1\ (t\ge0),\quad N_0(v)=0\ (v\ne\varnothing).
\]
The first term omits position $t$; the second uses it and removes the matched last symbol. Apply this recursion to the suffix $x_2^h$. Its counts for $y$ give $a_x(y)$, and its counts for $y_2^j$, multiplied by $\mathbf1_{\{x_1=y_1\}}$, give $b_x(y)$. This avoids listing every mask separately, although enumerating all raw words is still exponential. Subsequence recursions are used in \citet{ISW16}; run-segmentation posteriors based on integer counts are given by \citet[Lemma~13]{Chen26}. The run-based endpoint hierarchy in that work \citep[Theorem~11]{Chen26} is analogous to, but uses different variables from, the raw-bit identity in \cref{lem:mask-increments}.

For a fixed input $x$ of length $h$ and output $y$ of length $j$, all compatible masks have probability $(1-d)^jd^{h-j}$. Thus, if the two counts are $a$ and $b$, the output has conditional probability $(a+b)(1-d)^jd^{h-j}$, and the posterior survival probability of position one is $b/(a+b)$. Their product with binary entropy is the contribution we need. Impossible outputs have both counts zero and contribute zero.

For nonnegative masses, $\Phi(u,v)=(u+v)\log_2(u+v)-u\log_2u-v\log_2v$, with $0\log_20=0$, equals total mass times binary conditional entropy. Multiplying $\Phi(a_x(y),b_x(y))$ by the common mask probability gives this trace's contribution.

Sum over traces, then average with the exact source-word probabilities:
\[
F_j(x)=\sum_{y\in\bits^j}\Phi(a_x(y),b_x(y)),\qquad
f_j(h)=\E F_j(X_1^h).
\]
Set impossible lengths, $F_0$, and $F_j(\varnothing)$ to zero. Empty traces determine the entire mask; for $h=1$, trace length also determines the first mask entry. These cases therefore have zero entropy.

Group masks by their survivor count and multiply each expected count entropy by the corresponding mask probability:
\begin{equation}\label{eq:gH}
g_H(d)=\sum_{j=1}^H (1-d)^jd^{H-j}f_j(H).
\end{equation}
For $00\to0$, the two classes have counts $(1,1)$, so $\Phi(1,1)=2$: observation probability $2d(1-d)$ times one bit of uncertainty. Under fair independent input, $00$ and $11$ each have probability $1/4$; $01$ and $10$ contribute zero. Hence $g_2(d)=d(1-d)$. By \cref{lem:alignment-genie}, this is a lower bound on residual mask entropy per input bit; the complete rate also includes the other two terms of \eqref{eq:alignment-identity}.

\subsection{Selecting contributions and reusing a rectangular set of counts}
To implement \cref{lem:mask-increments} using the count table, append position $h$. Revealing its mask entry leaves $F_j(x_1^{h-1})$ if deleted, or $F_{j-1}(x_1^{h-1})$ if retained. The additional count entropy is
\begin{equation}\label{eq:positive-cell}
E_j(x_1^h)=F_j(x_1^h)-F_j(x_1^{h-1})-F_{j-1}(x_1^{h-1}),
\qquad e_j(h)=\E E_j(X_1^h).
\end{equation}

\begin{examplebox}[title={A selected contribution: appending a zero to $01$}]
For $01$, each one-symbol trace determines whether the first bit survived, so $F_1(01)=0$. Appending zero gives $010$. Trace $0$ now has two embeddings with different first mask entries, while trace $1$ remains unambiguous. Therefore
\[
E_1(010)=F_1(010)-F_1(01)-F_0(01)=2-0-0=2.
\]
The subtractions remove the earlier uncertainty; the empty-trace term is zero. At $d=1/2$, the mask factor is $1/8$. Fair independent input assigns probability $1/8$ to $010$, so retaining this word contributes $2(1/8)(1/8)=1/32$ bit per input bit.

\end{examplebox}

\begin{lemma}[Every incremental alignment cell is nonnegative]\label{lem:positive-cell}
For all raw words and valid indices, $E_j(x_1^h)\ge0$.
\end{lemma}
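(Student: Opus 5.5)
The plan is to prove the inequality one trace at a time, using a decomposition of the embedding counts by the fate of the last position, together with superadditivity of $\Phi$. Write $x=x_1^h$ and $x'=x_1^{h-1}$. For a trace $y$ of length $j$, write $y^-=y_1^{j-1}$.

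\emph{Decomposing the counts.} Every embedding of $y$ in $x$ either omits position $h$ or uses it. If it omits position $h$, it is an embedding of $y$ in $x'$. If it uses position $h$, then $x_h=y_j$ and the embedding restricts to an embedding of $y^-$ in $x'$. This is the recursion for $N_t$ stated above. The split does not affect whether position one is used, provided $h\ge2$. Hence
\[
a_x(y)=a_{x'}(y)+\mathbf 1_{\{y_j=x_h\}}a_{x'}(y^-),\qquad
b_x(y)=b_{x'}(y)+\mathbf 1_{\{y_j=x_h\}}b_{x'}(y^-).
\]

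\emph{Superadditivity of $\Phi$.} For nonnegative masses, $\Phi(u,v)=(u+v)\hb(v/(u+v))$, and $\Phi(0,0)=0$ by continuity. This function is positively homogeneous of degree one. It is also concave, since it is the perspective of the concave function $\hb$. Equivalently, the two-term log-sum inequality (\cref{lem:logsum}) applies directly, in the same way it was used for $\Psi_Q$ in \cref{lem:survivor-cost}. Homogeneity and concavity together give
\[
\Phi(u+u',v+v')=2\Phi\bigl(\tfrac{u+u'}2,\tfrac{v+v'}2\bigr)\ge\Phi(u,v)+\Phi(u',v').
\]
Applying this inequality to the decomposition above gives, for each trace $y$,
\[
\Phi(a_x(y),b_x(y))\ge\Phi(a_{x'}(y),b_{x'}(y))+\mathbf 1_{\{y_j=x_h\}}\Phi(a_{x'}(y^-),b_{x'}(y^-)).
\]

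\emph{Summing over traces.} Sum the per-trace inequality over $y\in\bits^j$. The first terms sum to $F_j(x')$; if $j>h-1$, all these counts vanish and the sum is zero, consistent with the convention for impossible lengths. For the second terms, the map $y\mapsto y^-$ is a bijection from $\{y\in\bits^j:y_j=x_h\}$ onto $\bits^{j-1}$. Those terms therefore sum to $F_{j-1}(x')$. Subtracting gives $E_j(x)\ge0$.

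\emph{Boundary cases.} I expect these, rather than the inequality itself, to be the only real obstacle.
\begin{itemize}
\item If $j=1$, then $y^-=\varnothing$. Its counts are $a_{x'}(\varnothing)=1$ and $b_{x'}(\varnothing)=0$, so the corresponding term is $\Phi(1,0)=0=F_0(x')$, as required.
\item If $h=1$, then $x'=\varnothing$. All the $F$ terms on the right of \eqref{eq:positive-cell} are zero by convention. Also $F_1(x_1)=0$, because trace length determines $D_1$. Hence $E_j=0$.
\item For $h\ge2$, position $h$ differs from position one, which justifies keeping the omit/use classification of position one through the split.
\end{itemize}

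Since each $E_j(x_1^h)$ is nonnegative, its average $e_j(h)$ is nonnegative under every source law. This is what permits the selected cells to be retained individually.
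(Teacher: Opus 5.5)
Your proof is correct and follows the paper's own argument. Like the paper, you split each trace's count pair $(a,b)$ by whether position $h$ is used, apply the superadditivity $\Phi(v+w)\ge\Phi(v)+\Phi(w)$ of \cref{lem:logsum}, and sum over traces, identifying the retained-position terms with $F_{j-1}(x_1^{h-1})$ through the bijection $y\mapsto y_1^{j-1}$. Your explicit handling of the cases $j=1$, $j=h$ and $h=1$ supplies details that the paper's terse proof leaves implicit.
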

\begin{proof}
For each trace of length $j$, split its two-class count vector into embeddings omitting and using position $h$. Revealing that choice decreases uncertainty about the first mask entry. For count vectors $(u_0,u_1)$ and $(v_0,v_1)$, this is the homogeneous-entropy inequality $\Phi(u_0+v_0,u_1+v_1)\ge\Phi(u_0,u_1)+\Phi(v_0,v_1)$ (\cref{lem:logsum}).

Summed over traces, the omitted-position terms give $F_j(x_1^{h-1})$. In the retained-position terms, remove position $h$ and the matching final trace symbol; this gives exactly $F_{j-1}(x_1^{h-1})$. Hence $F_j(x_1^h)\ge F_j(x_1^{h-1})+F_{j-1}(x_1^{h-1})$, which is $E_j(x_1^h)\ge0$.
\end{proof}
The count increments are exactly the entropy increments of \cref{lem:mask-increments} after their deletion probabilities are restored:
\begin{equation}\label{eq:count-increment-bridge}
\Delta_h(d)=\sum_{j=1}^h(1-d)^jd^{h-j}e_j(h).
\end{equation}
Indeed, substitute $e_j(h)=f_j(h)-f_j(h-1)-f_{j-1}(h-1)$. The first weighted sum is $g_h$ by \eqref{eq:gH}; the second is $d g_{h-1}$; the third is $(1-d)g_{h-1}$ after shifting $j$ by one. Their difference is $g_h-g_{h-1}$. Every sum uses the same input law. Thus the finite coefficients below compute the previously defined nonnegative increments.

A count cell is indexed by raw depth $h$ and survivor count $j$. Since its increment is nonnegative, selected distinct cells give a lower sum. The following identity evaluates all cells with $j\le k$ and $h\le H$ from boundary counts.

\begin{lemma}[Finite rectangle identity]\label{lem:rectangle}
For integers $1\le k\le H$ and one fixed stationary source,
\begin{align}\label{eq:rectangle}
\sum_{j=1}^k\sum_{h=j}^H (1-d)^jd^{h-j}e_j(h)
={}&\sum_{c=k}^{H-1}(1-d)^{k+1}d^{c-k}f_k(c)
+\sum_{j=1}^k (1-d)^jd^{H-j}f_j(H).
\end{align}
For $k=H$, this is $g_H(d)$.
\end{lemma}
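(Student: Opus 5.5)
The identity is a telescoping rearrangement. I will substitute the definition of the count increments and collect the coefficient of each $f_j(c)$. By \eqref{eq:positive-cell} and linearity of expectation under the one fixed source,
\[
 e_j(h)=f_j(h)-f_j(h-1)-f_{j-1}(h-1).
\]
The conventions $f_0\equiv0$ (empty traces carry zero entropy) and $f_j(c)=0$ for $c<j$ (impossible lengths) make this valid for all $1\le j\le h$, including the boundary cell $h=j$. Write the left side of \eqref{eq:rectangle} as $S=\sum_{j=1}^k\sum_{h=j}^H w_{j,h}e_j(h)$ with weight $w_{j,h}=(1-d)^jd^{h-j}$. Substituting splits $S$ into three double sums.
\begin{enumerate}
\item $S_1=\sum w_{j,h}f_j(h)$.
\item $S_2=-\sum w_{j,h}f_j(h-1)$.
\item $S_3=-\sum w_{j,h}f_{j-1}(h-1)$.
\end{enumerate}

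Next I will fix a pair $(j,c)$ with $1\le j\le k$ and $j\le c\le H$, and read off the coefficient of $f_j(c)$ from each sum.
\begin{enumerate}
\item $S_1$ contributes $w_{j,c}$, from $h=c$.
\item $S_2$ contributes $-w_{j,c+1}=-d\,w_{j,c}$, from $h=c+1$, provided $c+1\le H$.
\item $S_3$ contributes $-w_{j+1,c+1}=-(1-d)w_{j,c}$, from index $j+1$ and $h=c+1$, provided $j+1\le k$ and $c+1\le H$.
\end{enumerate}
The step $w_{j+1,c+1}=(1-d)w_{j,c}$ is where the shift of $j$ by one enters, exactly as in the derivation of \eqref{eq:count-increment-bridge}. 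Terms with $c=j-1$ vanish because $f_j(j-1)=0$.

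Four cases then remain, and I will treat each in turn.
\begin{enumerate}
\item Interior, $j<k$ and $c<H$: the coefficient is $w_{j,c}(1-d-(1-d))=0$.
\item Top row, $j=k$ and $c<H$: the $S_3$ term is absent, so the coefficient is $(1-d)w_{k,c}=(1-d)^{k+1}d^{c-k}$. Summing over $k\le c\le H-1$ gives the first sum on the right of \eqref{eq:rectangle}.
\item Right column, $c=H$: only $S_1$ contributes, giving $w_{j,H}=(1-d)^jd^{H-j}$. Summing over $j$ gives the second sum.
\item Corner, $j=k$ and $c=H$: the coefficient is $w_{k,H}$, which the right-column case already includes exactly once. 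The top-row range therefore ends at $H-1$, and there is no double counting.
\end{enumerate}
This yields \eqref{eq:rectangle}. For $k=H$ the first sum is empty and the second is \eqref{eq:gH}, which gives $g_H(d)$.

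The main obstacle is bookkeeping at the boundaries: the $h=j$ cells, where $f_j(j-1)$ and $f_{j-1}(j-1)$ appear, the row $j=1$, where $f_0=0$, and the corner $(k,H)$. I will check each boundary case explicitly against the stated zero conventions rather than trust the generic cancellation. Finiteness is not an issue, since all sums are finite. Every term uses the same source law, so linearity of expectation applies throughout.
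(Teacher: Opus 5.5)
Your proposal is correct and follows the same telescoping argument as the paper. In both, you substitute $e_j(h)=f_j(h)-f_j(h-1)-f_{j-1}(h-1)$, collect the coefficient of each $f_j(c)$, cancel the interior coefficients using $d+(1-d)=1$, and keep the boundary coefficients $(1-d)^{k+1}d^{c-k}$ on the row $j=k$ and $(1-d)^jd^{H-j}$ on the column $c=H$. Your explicit use of the zero conventions $f_0\equiv0$ and $f_j(c)=0$ for $c<j$ at the $h=j$ cells only spells out what the paper's shorter proof leaves implicit.
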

\begin{proof}
Substitute the definition of $e_j(h)$. For $h<H$, the coefficient of $f_j(h)$ is its own weight $(1-d)^jd^{h-j}$ minus $(1-d)^jd^{h+1-j}$ from the next depth. When $j<k$, also subtract $(1-d)^{j+1}d^{h-j}$ from the next survivor count. Interior coefficients vanish because $d+(1-d)=1$.

At $j=k$, the last subtraction is absent, leaving $(1-d)^{k+1}d^{h-k}$. At $h=H$, neither later-depth term occurs, leaving $(1-d)^jd^{H-j}$. These are the two boundary sums. If $k=H$, the first sum is empty and the second is \eqref{eq:gH}.
\end{proof}
Any finite selection of cells lies in a larger complete triangle. Nonnegativity bounds the selected sum by that triangle's $g_H(d)$, and \cref{lem:alignment-genie} bounds $g_H(d)$ by the full mask-entropy rate. No infinite-sum interchange is needed.

The cached base uses $k=12$, $H=28$. Substituting these values into the rectangle identity gives
\begin{equation}\label{eq:base-rectangle}
\sum_{c=12}^{27}(1-d)^{13}d^{c-12}f_{12}(c)
+\sum_{j=1}^{12}(1-d)^jd^{28-j}f_j(28).
\end{equation}
Additional contributions must have $h>28$ or $j\ge13$ to avoid the base rectangle. Within one cell $(j,h)$, disjoint raw-word or prefix events may be summed with their original source probabilities. The certificate checks that these events are disjoint. Lower estimates covering the same event are alternatives: take their maximum, not their sum.

A prefix event means all length-$h$ words extending a specified prefix. A complete length-$h$ word is the special case in which that event contains one word. For example, prefixes $00$ and $01$ describe disjoint events, whereas $0$ and $00$ overlap. In general, two prefixes at the same count cell are disjoint exactly when neither extends the other. A bound for the whole cell overlaps every subevent and cannot be added to it. Omitted words and mass lost to downward rounding receive zero contribution; retained probabilities are never renormalized. Their combined role in the lower bound is
\[
\mathcal A_-\le
\underbrace{\text{base rectangle}}_{\eqref{eq:base-rectangle}}
+\underbrace{\text{selected disjoint extensions}}_{\text{nonnegative count contributions}}
\le\liminf_{n\to\infty}\frac{H(D_1^n\mid X_1^n,Y_n)}n.
\]
The current extension checker represents selected events by finite lists of complete binary words. In the 48-bit extension it checks 16,384 distinct stored words and excludes previously credited prefixes at each relevant shorter depth. This is a concrete implementation of the disjoint-event argument, not a checker for arbitrary symbolic prefix descriptions. The source-specific weighting and directed count evaluation are given in \cref{sec:alignment-arithmetic}.

\subsection{A lower bound on output entropy}
For the same finite-state input, we next lower-bound output entropy. Let $T$ and $\pi$ be its raw-state transition matrix and stationary masses. Between retained positions, a gap of $r$ contains $r-1$ deletions followed by a survivor, with probability $(1-d)d^{r-1}$. The state advances by $T^r$, so

\begin{equation}\label{eq:survivor-matrix}
M=\sum_{r\ge1}(1-d)d^{r-1}T^r=(I-dT)^{-1}(1-d)T.
\end{equation}
To verify the inverse, multiply $S_N=\sum_{r=0}^N(dT)^r$ by $I-dT$. Intermediate terms cancel, giving $I-(dT)^{N+1}$. Each row of the remaining power has mass $d^{N+1}\to0$, proving the geometric-series identity. The mixture weights sum to one and each $T^r$ preserves $\pi$, so $M$ is stochastic and $\pi^\top M=\pi^\top$.

Start $S_0\sim\pi$, advance states with $M$, and call their emitted bits $Z_r$. The index $r$ now counts survivors. The output entropy rate is $h(Z)$ per survivor.

A finite output history alone gives an upper entropy-rate estimate. For a lower estimate, \citet{Birch62} additionally reveals the boundary state. First consider an example where observed bits already identify the state.
\begin{examplebox}[title={A source whose survivor entropy can be calculated by hand}]
Take a stationary two-state source with equal state masses and flip probability $1/4$:
\[
T=\begin{pmatrix}3/4&1/4\\1/4&3/4\end{pmatrix}.
\]
At $d=1/2$, survivor gaps have probabilities $2^{-r}$. If $p_r$ is the flip probability after $r$ raw steps, then $p_{r+1}=\tfrac14(1-p_r)+\tfrac34p_r$ and $p_0=0$. Thus $1-2p_r=2^{-r}$.

Average over the survivor gap:
\[
\sum_{r\ge1}2^{-r}\frac{1-2^{-r}}2
=\frac12\left(1-\sum_{r\ge1}4^{-r}\right)
=\frac12(1-1/3)=\frac13.
\]
The geometric sums are $\sum_{r\ge1}2^{-r}=1$ and $\sum_{r\ge1}4^{-r}=1/3$. The survivor-time matrix is therefore
\[
M=\begin{pmatrix}2/3&1/3\\1/3&2/3\end{pmatrix}.
\]
Each observed bit identifies the state, so the output entropy is $h_2(1/3)$ per survivor, or $\tfrac12h_2(1/3)$ per input bit. The mask terms are still required for the complete information rate.

With several states emitting each bit, the state is not observed. Revealing it gives the lower bound below.
\end{examplebox}

\begin{lemma}[Birch lower bound]\label{lem:birch}
For $m\ge1$,
\[
B_m(d)=H(Z_m\mid Z_1^{m-1},S_0)\le h(Z).
\]
\end{lemma}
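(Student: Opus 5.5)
The plan is to follow Birch's classical argument for hidden Markov processes, applied to the survivor-time chain $(S_r)$ with transition matrix $M$ from \eqref{eq:survivor-matrix}. The chain starts from $\pi$, which is $M$-stationary, so the pair process $(S_r,Z_r)$ is stationary. The first step is to write the entropy rate as a limit of conditional entropies. By stationarity of $Z$ and \cref{lem:entropy-rate}, $h(Z)=\lim_{n\to\infty}H(Z_n\mid Z_1^{n-1})$, and this sequence is nonincreasing in $n$. It is therefore enough to prove $B_m(d)\le H(Z_n\mid Z_1^{n-1})$ for every $n\ge m$.

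Fix $n\ge m$ and set $k=n-m$. By \cref{lem:side-information}, conditioning also on the state $S_k$ can only reduce entropy:
\[
H(Z_n\mid Z_1^{n-1})\ge H(Z_n\mid Z_1^{n-1},S_k).
\]
Next comes the Markov step. Given $S_k$, the future states $S_{k+1},S_{k+2},\ldots$ and their emitted bits are conditionally independent of the past bits $Z_1^{k}$, because the chain is Markov and each emitted bit is a deterministic function of its state. The convention needs care here: $Z_k$ is the bit emitted by $S_k$, so it is a function of $S_k$. Hence the conditioning on $Z_1^{k}$ can be dropped:
\[
H(Z_n\mid Z_1^{n-1},S_k)=H(Z_n\mid Z_{k+1}^{n-1},S_k).
\]
Stationarity of $(S_r,Z_r)$ under $\pi$ then shifts the index by $k$. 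This turns the right side into $H(Z_m\mid Z_1^{m-1},S_0)=B_m(d)$. Letting $n\to\infty$ gives $B_m(d)\le h(Z)$.

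The point I expect to need the most care is the indexing convention for $S_0$ relative to $Z_1$. In the setup, $S_0\sim\pi$ is the initial state and $Z_r$ is the bit emitted after $r$ survivor steps, so $S_0$ precedes $Z_1$. The shift identity must therefore use $S_k$ with $Z_{k+1}$ as the first retained future bit, and the Markov factorization must be checked to cover exactly that boundary. I would verify it directly from the product formula $\Pr(S_k=s,Z_{k+1}^n=z)=\pi_s\,(M E_{z_{k+1}}\cdots M E_{z_n}\one)_s$. Together with the word formula for the past, this gives the conditional independence explicitly, with no appeal to general hidden-Markov theory. The remaining claims, that $M$ is stochastic and preserves $\pi$, are already established after \eqref{eq:survivor-matrix}.
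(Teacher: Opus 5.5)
Your proof is correct and is essentially the paper's Birch argument. The paper reveals $S_0$ against the infinite two-sided past $Z_{-\infty}^{m-1}$ and drops everything through time zero by the Markov property; you do the same with the finite past $Z_1^{n-1}$, revealing $S_{n-m}$, shifting by stationarity, and letting $n\to\infty$, which simply makes explicit the limit implicit in the paper's conditioning on an infinite past.
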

\begin{proof}
For the stationary two-sided survivor process, \cref{lem:entropy-rate} gives
\[
h(Z)=H(Z_m\mid Z_{-\infty}^{m-1})
\ge H(Z_m\mid Z_{-\infty}^{m-1},S_0)
=H(Z_m\mid Z_1^{m-1},S_0).
\]
The inequality reveals the boundary state, which can only reduce entropy. Given $S_0$, the Markov property makes outputs after time zero independent of outputs through time zero. This proves the last equality and the lower-bound direction.
\end{proof}

\begin{corollary}[Moving the disclosed state farther into the past]\label{cor:birch-monotone}
For the same source and deletion probability, $B_{m+1}(d)\ge B_m(d)$.
\end{corollary}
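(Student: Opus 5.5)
The plan is to compare the two conditional entropies through a time shift of the stationary survivor chain, followed by one conditioning step. The state process $(S_r)_{r\ge0}$ is a stationary Markov chain with transition matrix $M$ and initial law $\pi$, and $Z_r$ is a function of $S_r$. By \cref{lem:birch}, $B_{m+1}(d)=H(Z_{m+1}\mid Z_1^m,S_0)$.

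The first step is to apply stationarity of the state chain, shifting every index by one. This gives
\[
B_m(d)=H(Z_m\mid Z_1^{m-1},S_0)=H(Z_{m+1}\mid Z_2^{m},S_1).
\]
The shift is legitimate because the joint law of $(S_0,S_1,\ldots,S_m)$ equals that of $(S_1,\ldots,S_{m+1})$. This equality holds since $\pi^\top M=\pi^\top$, as verified after \eqref{eq:survivor-matrix}.

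The second step uses the Markov property. Conditional on $S_1$, the future $Z_{m+1}$ together with $Z_2^m$ is independent of $(S_0,Z_1)$. Here $Z_1$ is a function of $S_1$, and $S_0$ lies in the past of $S_1$. Therefore
\[
H(Z_{m+1}\mid Z_2^m,S_1)=H(Z_{m+1}\mid Z_1^m,S_0,S_1).
\]

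The third step is to remove $S_1$ from the conditioning. Removing conditioning can only increase entropy (\cref{lem:side-information}), so
\[
H(Z_{m+1}\mid Z_1^m,S_0,S_1)\le H(Z_{m+1}\mid Z_1^m,S_0)=B_{m+1}(d).
\]
Chaining the three displays gives $B_m\le B_{m+1}$.

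The main point needing care is the Markov step. The conditional independence must be stated for the joint process, so that the past $(S_0,Z_1)$ and the future $(Z_2^{m+1})$ are separated by $S_1$. This holds because $Z_1$ is a deterministic function of $S_1$, and $(Z_2,\ldots,Z_{m+1})$ are functions of $(S_2,\ldots,S_{m+1})$, which are conditionally independent of $S_0$ given $S_1$.
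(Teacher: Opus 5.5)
Your proposal is correct and follows the same argument as the paper. Both reveal the next survivor state $S_1$, use that $S_1$ determines $Z_1$ and separates the future from $S_0$ by the Markov property, and identify the result with $B_m(d)$ through stationarity ($\pi^\top M=\pi^\top$); the only difference is that you run the chain of (in)equalities starting from $B_m$ rather than from $B_{m+1}$.
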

\begin{proof}
Reveal also the next surviving state $S_1$. Conditioning decreases entropy, so
\[
H(Z_{m+1}\mid S_0,Z_1^m)
\ge H(Z_{m+1}\mid S_0,S_1,Z_1^m)
=H(Z_{m+1}\mid S_1,Z_2^m).
\]
The equality holds because $S_1$ determines $Z_1$ and separates the future from $S_0$. Stationarity identifies the last entropy with $B_m(d)$. Increasing $m$ moves the extra state disclosure farther from the predicted bit, reducing its advantage.
\end{proof}

Form $M_b$ by keeping columns of $M$ whose destination state emits $b$ and zeroing the others. Conditional on $S_0=i$, the joint probability of history $v=v_1\cdots v_{m-1}$ and next bit $b$ is $p_{i,v,b}$. With $e_i$ the coordinate vector for state $i$, it equals
\[
p_{i,v,b}=e_i^\top M_{v_1}\cdots M_{v_{m-1}}M_b\one.
\]
Each matrix restricts one emitted bit; multiplication by $\one$ sums the ending state. The history has mass $p=p_{i,v,0}+p_{i,v,1}$ and conditional next-bit probability $p_{i,v,1}/p$. Its entropy contribution is $p\hb(p_{i,v,1}/p)=\Phi(p_{i,v,0},p_{i,v,1})$, or zero if $p=0$. Sum over histories and initial states:
\begin{equation}\label{eq:birch-row}
B_m(d)=\sum_i\pi_i\sum_{v\in\bits^{m-1}}
\Phi(p_{i,v,0},p_{i,v,1}).
\end{equation}
Arb interval arithmetic encloses the matrix inverse and every history probability, with checked normalization and nonnegativity. The lower endpoint of the resulting entropy sum is the required $b_-$.

\subsection{Conversion to input-bit units and valid entropy reuse}
For the fixed input, $Z$ denotes the infinite survivor sequence and $Y_n$ the trace of $n$ raw bits. Its random length $N_n$ is binomial with mean $n(1-d)$. The following conversion accounts for that fluctuation.
\begin{lemma}[Trace entropy per input bit]\label{lem:trace-rate}
For a stationary binary input and $0\le d<1$,
\[
\lim_{n\to\infty}\frac{H(Y_n)}n= (1-d)h(Z).
\]
\end{lemma}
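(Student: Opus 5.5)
We prove matching upper and lower estimates, both by comparing $Y_n$ with the infinite survivor sequence $Z$ built from the same input and the same deletion indicators. In that coupling, $Y_n=Z_1^{N_n}$ with $N_n\sim\operatorname{Bin}(n,1-d)$. Here $Z$ is stationary, and $h(Z)=\lim_k k^{-1}H(Z_1^k)$ exists by \cref{lem:entropy-rate}. Two facts do all the work: the length $N_n$ costs only $O(\log n)$ bits, and $N_n$ concentrates within $n^{2/3}$ of its mean, exactly as in the proof of \cref{lem:increments}.

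\emph{Upper estimate.} We have $H(Y_n)\le H(N_n)+H(Y_n\mid N_n)\le\log_2(n+1)+\sum_k\Pr(N_n=k)H(Y_n\mid N_n=k)$. The difficulty is that conditioning on $N_n=k$ can bias the input. To avoid it, put $k_n^+=\lceil n(1-d)+n^{2/3}\rceil$ and embed $Y_n$ into $Z_1^{k_n^+}$. On the event $N_n\le k_n^+$, the pair $(N_n,Z_1^{k_n^+})$ determines $Y_n$. On the complementary event, $Y_n$ has at most $n$ bits. The random-prefix argument of \cref{lem:random-prefix} therefore gives
\[
H(Y_n)\le H(Z_1^{k_n^+})+\log_2(n+1)+1+n\Pr(N_n>k_n^+).
\]
The extra $1$ pays for the event indicator. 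The binomial variance bound gives $\Pr(N_n>k_n^+)\le d(1-d)n^{-1/3}$. This is not summable against the factor $n$, so at this step I would use a Chernoff or Hoeffding bound instead, which makes $n\Pr(\cdot)\to0$. Dividing by $n$ yields $\limsup_n H(Y_n)/n\le(1-d)h(Z)$. \Cref{lem:predictor} also supplies an alternative route: take $Q$ equal to the true survivor conditional law of order $m-1$, then let $m\to\infty$.

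\emph{Lower estimate.} Put $k_n=\lfloor n(1-d)-n^{2/3}\rfloor$. The coupling gives $H(Z_1^{k_n})\le H(Y_n)+k_n\Pr(N_n<k_n)+1$. When $N_n\ge k_n$, the trace $Y_n$ determines $Z_1^{k_n}$. Otherwise the missing symbols cost at most $k_n$ bits, and the indicator is a function of $Y_n$'s length. This is exactly the \cref{lem:random-prefix} inequality already used for conditional entropies in \cref{lem:increments}, now applied without conditioning on $F$. Dividing by $n$, we have $H(Z_1^{k_n})/k_n\to h(Z)$ and $k_n/n\to1-d$, while the error term vanishes by the variance bound (here the factor is $k_n/n$, so Chebyshev suffices). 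Hence $\liminf_n H(Y_n)/n\ge(1-d)h(Z)$, and together with the upper estimate the limit exists and has the stated value.

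The main obstacle is the upper direction. There the trace may be longer than the chosen prefix, and the excess must be paid for on a set of small probability with a cost linear in $n$. This is why exponential concentration of the binomial is used there rather than the variance bound. A second point to state carefully is that $Z$ is stationary as a survivor-indexed process. This follows from the translation-and-gap-averaging argument in the proof of \cref{lem:predictor}, which is what licenses using \cref{lem:entropy-rate} for $h(Z)$. The case $d=0$ is trivial, since then $Y_n=X_1^n=Z_1^n$.
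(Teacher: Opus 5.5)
Your proof is correct and is essentially the paper's argument: you couple $Y_n=Z_1^{N_n}$ with a deterministic prefix of $Z$, pay $\log_2(n+1)$ bits for the length plus the symbols where the two words differ, and finish with $H(Z_1^k)/k\to h(Z)$. The paper handles both directions at once with the single prefix $k=\lfloor n(1-d)\rfloor$ and charges only the missing suffix, getting $|H(Y_n)-H(Z_1^k)|\le H(N_n)+\E|N_n-k|$ with $\E|N_n-k|\le\sqrt{nd(1-d)}+1$, so it needs no deviation events. One correction: your stated ``main obstacle'' is not one. After dividing by $n$, your overflow term is just $\Pr(N_n>k_n^+)\le d(1-d)n^{-1/3}\to0$, so the Chebyshev bound already suffices, and the Chernoff/Hoeffding step is unnecessary, though harmless.
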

\begin{proof}
Couple $Y_n=Z_1^{N_n}$ with the fixed prefix $Z_1^k$, where $k=\lfloor n(1-d)\rfloor$. From either prefix, disclose the length and the missing suffix to recover the other. Each extra binary symbol has entropy at most one, giving
\begin{align*}
H(Y_n)&\le H(Z_1^k)+H(N_n)+\E(N_n-k)_+,\\
H(Z_1^k)&\le H(Y_n)+\E(k-N_n)_+.
\end{align*}
In the second inequality $Y_n$ already reveals $N_n$. Combining the two directions yields
\[
|H(Y_n)-H(Z_1^k)|\le H(N_n)+\E|N_n-k|.
\]
The count has at most $n+1$ values, so $H(N_n)\le\log_2(n+1)$. Its variance is $nd(1-d)$; Cauchy--Schwarz and $|k-\E N_n|\le1$ give $\E|N_n-k|\le\sqrt{nd(1-d)}+1$. The error thus vanishes after division by $n$. Finally $k/n\to1-d$ and $H(Z_1^k)/k\to h(Z)$, proving the result.
\end{proof}

An output-entropy estimate may be reused at a larger deletion probability only for the identical source. The needed comparison concerns entropy per survivor.
\begin{lemma}[Independent thinning increases entropy per survivor]\label{lem:thinning-entropy}
For a fixed stationary finite-alphabet process $Z$, let $V$ retain its symbols independently with a positive probability. Then $h(V)\ge h(Z)$. Consequently, for the \emph{same} stationary raw input law and $a\le d<1$, $h(Z^{(d)})\ge h(Z^{(a)})$.
\end{lemma}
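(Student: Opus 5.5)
The plan is to compare the two processes through their finite-block entropies per expected symbol count, and to measure the extra randomness from the independent retention coins. Let $p>0$ be the retention probability. Let $R_1,R_2,\ldots$ be independent $\operatorname{Bernoulli}(p)$ coins, independent of $Z$. The thinned process is $V$, listing $Z_i$ for those $i$ with $R_i=1$.

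\textbf{Stationarity of $V$.} I would first check that $V$ is stationary, so that $h(V)$ is defined through \cref{lem:entropy-rate}. This uses the argument of \cref{lem:predictor}. Condition on a retained position, translate by the geometric gaps, and invoke stationarity of $Z$. The law of consecutive $V$-symbols is then a gap-averaged law of $Z$ at translated positions, and this law is the same at every starting index.

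\textbf{Two-sided entropy comparison.} The core step is
\[
 H(V\text{-trace of }Z_1^n)=H(Z_1^n)-H(Z_1^n\mid\text{trace})\ge H(Z_1^n)-H(Z_1^n\mid\text{trace}).
\]
This alone is useless, so I would instead compare in the other direction using the retention mask. The joint entropy satisfies $H(Z_1^n,R_1^n)=H(Z_1^n)+nh_2(p)$. Also $(Z_1^n,R_1^n)$ is determined by the trace $T_n$, the mask $R_1^n$, and the deleted symbols. Hence
\[
 H(Z_1^n)\le H(T_n)+H(R_1^n\mid T_n)+H(\text{deleted symbols}\mid T_n,R_1^n)-nh_2(p)+nh_2(p),
\]
which does not close either. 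The cleaner route is the chain rule at the level of entropy rates. Write $h(Z)=H(Z_0\mid Z_{-\infty}^{-1})$, and note that $V$'s next symbol is $Z_{G}$ for an independent geometric gap $G$. Then
\[
 H(V_1\mid V_{-\infty}^0)\ge H(V_1\mid V_{-\infty}^0,\text{all of }Z_{-\infty}^{0},\text{positions})=H(Z_G\mid Z_{-\infty}^0,G).
\]
Here the past survivors are functions of the revealed past. Averaging over $G=r$ gives $H(Z_r\mid Z_{-\infty}^0)$. By stationarity this equals $H(Z_1^r\mid Z_{-\infty}^0)-H(Z_1^{r-1}\mid Z_{-\infty}^0)$ shifted appropriately. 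More directly, it is at least $H(Z_r\mid Z_{-\infty}^{r-1})=h(Z)$, since conditioning on more reduces entropy. Hence $h(V)=H(V_1\mid V_{-\infty}^0)\ge\sum_r\Pr(G=r)h(Z)=h(Z)$.

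\textbf{Main obstacle.} The step I expect to need care is the first equality, namely justifying $h(V)=H(V_1\mid V_{-\infty}^0)$ for the two-sided stationary thinned process. One must check that conditioning on the past positions and the past input is legitimate. The point is that the gap $G$ to the next survivor is independent of $Z$ and of all past coins. Given this, revealing $(Z_{-\infty}^0$, past coins$)$ can only lower entropy, by \cref{lem:side-information}.

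\textbf{Consequence for deletion probabilities.} For the second claim, fix the raw law and $a\le d<1$. Independent deletion with probability $d$ has the same law as deletion with probability $a$ followed by independent thinning of the survivors with retention $(1-d)/(1-a)>0$. This holds because the composed coins are independent with the right product retention probability. The survivor process $Z^{(d)}$ is therefore a thinning of $Z^{(a)}$, and the first part gives $h(Z^{(d)})\ge h(Z^{(a)})$.
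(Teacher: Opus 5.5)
Your final chain is essentially the paper's proof. You reveal the entire raw past together with the retention positions, use that conditioning cannot increase entropy, and use independence of the coins from $Z$ to reach $h(Z)$; the second claim then follows by the same composition of thinnings with retention probability $(1-d)/(1-a)$.

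The differences are cosmetic. The paper conditions on retention of the predicted position itself, so $V_0=Z_0$ and the bound is exactly $H(Z_0\mid Z_{-\infty}^{-1})=h(Z)$. Your one-gap-ahead version instead needs the extra step $H(Z_r\mid Z_{-\infty}^0)\ge h(Z)$. Also, your equality $=H(Z_G\mid Z_{-\infty}^0,G)$ is really an inequality $\ge$ unless $G$ is among the revealed positions; this is harmless because it points the right way. The abandoned first attempts should be deleted, including the false identity $H(T_n)=H(Z_1^n)-H(Z_1^n\mid T_n)$, which omits $H(T_n\mid Z_1^n)$.
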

\begin{proof}
Condition on retaining position zero, so $V_0=Z_0$. Let $\tau_{-k}^{-1}$ denote the original positions of the $k$ preceding retained symbols. Reveal these positions and the entire original past $Z_{-\infty}^{-1}$. Together they determine $V_{-k}^{-1}$, so conditioning reduces entropy:
\[
 H(V_0\mid V_{-k}^{-1})
 \ge H(Z_0\mid Z_{-\infty}^{-1},\tau_{-k}^{-1})
 =h(Z).
\]
The equality holds because the thinning positions are independent of the source, including after conditioning on retention of the origin. The entropy-rate identity in \cref{lem:entropy-rate} identifies the remaining conditional entropy with $h(Z)$. Letting $k\to\infty$ on the left gives $h(V)\ge h(Z)$.

For $a\le d<1$, retain each $a$-deletion survivor with probability $(1-d)/(1-a)$. The combined retention probability is $1-d$, producing the $d$-deletion output of the same source. Apply the first conclusion.
\end{proof}
The checker binds the source and verifies $a\le d$. Only entropy per survivor is reused; alignment contributions still use the target $d$.

\subsection{Assembly of the stationary-source certificate}
Evaluate both positive terms of \eqref{eq:alignment-identity} downward and the subtracted mask entropy upward. Output entropy alone needs the factor $1-d$ to convert survivor units to input-bit units.

\begin{certbox}[breakable=false,title={A stationary-source lower certificate}]
For one exact stationary source, let $b_-$ lower-bound \eqref{eq:birch-row}, using \cref{lem:thinning-entropy} if reused. Let $\mathcal A_-(d)$ be the base rectangle plus distinct additional mask contributions, per input bit, and let $h_+\ge\hb(d)$. Accept any value satisfying
\begin{equation}\label{eq:lower-row}
\ell_\star\le (1-d)b_- -h_+ +\mathcal A_-(d).
\end{equation}
All source masses refer to the same input law. All counted cells and prefixes are distinct in the stated sense.
\end{certbox}

\begin{theorem}[Validity of the stationary-source lower bound]\label{thm:lower}
A certificate satisfying the preceding requirements proves $C(d)\ge\ell_\star$.
\end{theorem}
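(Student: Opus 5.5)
The plan is to begin from \cref{lem:mask-identity} for the fixed exact stationary source. We divide \eqref{eq:alignment-identity} by $n$ and take lower limits term by term. The output entropy term is handled by \cref{lem:trace-rate}, which gives $\lim_n H(Y_n)/n=(1-d)h(Z)$. \Cref{lem:birch} gives $h(Z)\ge B_m(d)$, and $B_m(d)\ge b_-$ because $b_-$ is the lower interval endpoint of \eqref{eq:birch-row}. If the survivor entropy was computed at some $a\le d$ for the identical source, we insert \cref{lem:thinning-entropy}, $h(Z^{(d)})\ge h(Z^{(a)})$, before that step. The subtracted term is exact and equals $\hb(d)\le h_+$.

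For the residual mask entropy we use the chain recorded before the certificate. By \cref{lem:positive-cell} and \eqref{eq:count-increment-bridge}, every count cell $(j,h)$ contributes nonnegatively. The base rectangle \eqref{eq:base-rectangle} equals the cell sum over $j\le12$, $h\le28$, by \cref{lem:rectangle}. The selected extensions lie in cells outside that rectangle, or they are disjoint prefix events inside one cell; these are weighted by their true source probabilities and are not renormalized.

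The step I expect to need the most care is showing that these disjoint event contributions are bounded by the corresponding cell value $e_j(h)$ weighted by its mask factor. This holds because $E_j(x)\ge0$ pointwise for every raw word, so summing $E_j$ over a subset of words, with their probabilities, gives no more than the full expectation. Choosing $H$ large enough that every selected cell lies in the triangle $h\le H$, nonnegativity bounds the total by $g_H(d)$. \Cref{lem:alignment-genie} then bounds $g_H(d)$ by $\liminf_n n^{-1}H(D_1^n\mid X_1^n,Y_n)$. Directed rounding means the computed $\mathcal A_-(d)$ is at most the exact selected sum.

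Combining the three estimates gives $\liminf_n n^{-1}I(X_1^n;Y_n)\ge(1-d)b_--h_++\mathcal A_-(d)\ge\ell_\star$. The liminf of a sum is at least the sum of the liminfs, and the output term converges. Finally, for every $n$ we have $I(X_1^n;Y_n)\le nC_n(d)$. Dividing by $n$ and using $C_n(d)\to C(d)$ from \eqref{eq:capacity-limit} yields $C(d)\ge\ell_\star$. The case $d=1$ is excluded; it is covered by the exact value $C(1)=0$.
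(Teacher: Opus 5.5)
Your proposal is correct and follows the same route as the paper's proof: the identity \eqref{eq:alignment-identity}, the Birch and trace-rate lemmas (with \cref{lem:thinning-entropy} when the entropy is reused) for output entropy, the alignment genie with nonnegative count cells for the residual mask entropy, $h_+\ge h_2(d)$ for the subtracted term, and finally $I(X_1^n;Y_n)\le nC_n(d)$ with $C_n(d)\to C(d)$. You also spell out the cell bookkeeping that the paper's short proof leaves to the text before the certificate: pointwise $E_j\ge0$, the rectangle identity, embedding the finite selection in a triangle bounded by $g_H(d)$, and directed rounding. That bookkeeping is valid provided the extension events avoid the cells already fully credited by the base rectangle.
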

\begin{proof}
The finite-window lemma and nonnegative increments lower-bound mask entropy per input bit by $\mathcal A_-(d)$. The Birch bound and trace-rate lemma lower-bound output entropy by $(1-d)b_-$. Substitute into \eqref{eq:alignment-identity} and subtract $h_+\ge\hb(d)$ to obtain the right side of \eqref{eq:lower-row}.

For every $n$, the source mutual information is at most $nC_n(d)$. Divide by $n$ and use $C_n(d)\to C(d)$ from \cref{sec:setting}. Thus the accepted value lower-bounds capacity.
\end{proof}
These are information-theoretic achievable rates. The entropy calculation does not itself construct an efficient encoder or decoder.

\subsection{Source selection and the actual finite computation}\label{sec:source-computation}
Theorem~\ref{thm:lower} is valid for any source satisfying its hypotheses. Computation has two roles: search for a useful source, then establish its lower bound. These operations use the same analytic expression but have different guarantees.

\paragraph{The search objective.}
For fixed $d$, output length $m$, and a fixed selection $\mathcal S$ of nonnegative mask contributions, parameterize the input transition probabilities by $\theta$. Let $\underline g_{\mathcal S}(\theta,d)$ be their exact selected sum and $B_m(\theta,d)$ the Birch expression. Search tries to increase
\begin{equation}\label{eq:source-search-objective}
L_{m,\mathcal S}(\theta,d)
=(1-d)B_m(\theta,d)-h_2(d)+\underline g_{\mathcal S}(\theta,d).
\end{equation}
Every attempt recomputes the stationary distribution, the survivor matrix $M$, and the source-word weights. Only the embedding counts are unchanged. For each input depth $h$, aggregate the retained counts and deletion factors into a source-independent coefficient $A_{h,d}(x)\ge0$. Then
\[
\underline g_{\mathcal S}(\theta,d)=\sum_h\sum_{x\in\bits^h}
P_\theta(X_1^h=x)A_{h,d}(x).
\]
This is why a count cache can support many source searches. It does not make the complete word enumeration small: the base includes $2^{27}$ depth-28 words after complement symmetry, plus $2^{27}-2^{11}$ shorter-depth entries. Selected extensions avoid requiring all words at larger depths.

The recorded search uses floating-point matrix operations, differentiation of \eqref{eq:source-search-objective}, and L-BFGS-B updates of the transition parameters \citep{BLNZ95}. It optimizes a lower estimate of the source rate, not the unknown source rate itself. Earlier sampled searches supplied candidates; sampled objectives are not numerical proof premises.

\paragraph{A controlled enlargement of the source family.}
The recorded refinement at $d=3/5$ enlarged the current-age classes from $1,2,3,4+$ to $1,\ldots,4,5+$, then to $1,\ldots,7,8+$ and $1,\ldots,9,10+$. This gives 48, 60, 96 and 120 states while preserving the completed-run memory. At initialization, each newly split state inherits its old transition probabilities. Summing the new stationary masses over each old state reproduces its old mass, and the projected transitions reproduce the old chain. Thus the initial expanded generator gives exactly the same input law. Optimization then allows the formerly tied probabilities to differ. The records compare the resulting sources with the same $m=13$ and retained count selection; see \cref{tab:source-search}.

\paragraph{The certificate calculation.}
Freeze the selected probabilities as exact rationals. For the 120-state source they are multiples of $2^{-48}$, with an exactly solved stationary law. For this recorded $m=13$ example, evaluate the $120\cdot2^{12}=491{,}520$ state/history conditional-entropy terms in \eqref{eq:birch-row}, the base count weights in \eqref{eq:base-rectangle}, and the selected disjoint extensions. Interval and integer arithmetic give $b_-$, $\mathcal A_-$, and $h_+$ in \eqref{eq:lower-row}. The checker does not differentiate or rerun the optimizer.

For one recorded $d=3/5$ calculation, these numbers are approximately
\[
b_-=0.725031271517,\quad
\mathcal A_-=0.760404915678,\quad h_+=0.970950594455.
\]
These displayed decimals are explanatory approximations; the stored rational endpoints give the certified inequality
\[
C(3/5)\ge0.079466829830.
\]
Equivalently, the output entropy is bounded below by about $0.290012509$ bit per input bit, and conditional output entropy is bounded above by about $0.210545679$. Their difference is the proved rate. Later additional counts improved this point; \cref{sec:computation-experiments} separates those improvements from source search and reports the measured costs.

\begin{remark}[What a better source score establishes]\label{rem:source-score}
Let $a_\theta$ be the residual mask-entropy rate and $R_\theta$ the source information rate. Where these limits exist, \eqref{eq:alignment-identity} gives
\[
R_\theta-L_{m,\mathcal S}(\theta,d)
=(1-d)\bigl[h_\theta(Z)-B_m(\theta,d)\bigr]
+\bigl[a_\theta-\underline g_{\mathcal S}(\theta,d)\bigr].
\]
Each difference is nonnegative by the preceding lemmas. A better finite score can reflect a better source, tighter estimates for that source, or both. Further source optimization and more accurate evaluation of a fixed source address different uncertainties. Neither the chosen source family nor a terminated numerical search is claimed optimal.
\end{remark}

%% file: sections/roadmap_lower.tex
\subsection{The rate calculation and its dependencies}\label{sec:roadmap-lower}
Fix a stationary input process. There are two sources of uncertainty in its output: information about the transmitted bits and randomness introduced by deletions. The identity
\[
 I(X_1^n;Y_n)=H(Y_n)-nh_2(d)+H(D_1^n\mid X_1^n,Y_n)
\]
separates them. Here $D_i=1$ means deletion. The last term restores uncertainty between deletion patterns that give the same output; subtracting the full mask entropy would otherwise subtract too much.

We will prove a lower bound of the form
\begin{equation}\label{eq:lower-plan}
 C(d)\ge (1-d)B_m-h_2(d)+\underline g.
\end{equation}
The number $B_m$ lower-bounds entropy per received bit; $\underline g$ lower-bounds residual mask entropy per transmitted bit. The factor $1-d$ puts both terms in transmitted-bit units. Their proofs have separate tasks:
\begin{enumerate}
\item \Cref{lem:alignment-genie} turns uncertainty in a finite input window into a rate. \Cref{lem:mask-increments,lem:positive-cell,lem:rectangle} then express a lower estimate $\underline g$ as selected nonnegative counts.
\item \Cref{lem:birch,lem:trace-rate} give $B_m$ from the transition matrix between surviving positions and convert it to output entropy per input bit.
\item \Cref{thm:lower} combines those estimates for the \emph{same} input process. \Cref{sec:source-computation} evaluates this formula and explains how the process was selected.
\end{enumerate}
\begin{center}\small
\begin{tabular}{@{}p{4.2cm}p{4.2cm}p{4.2cm}@{}}\toprule
Target & Finite expression & Computed endpoint\\\midrule
Output entropy rate $h(Z)$ & Birch entropy $B_m$ & $b_-\le B_m$\\
Residual mask entropy per input bit & Selected sum $\underline g\le g_H$ & $\mathcal A_-\le\underline g$\\
Deletion entropy $h_2(d)$ & Explicit binary entropy & Upper endpoint $h_+\ge h_2(d)$\\\bottomrule
\end{tabular}
\end{center}
The next subsection specifies the generator of input bits. No auxiliary output distribution from the converse is used in this calculation.

%% file: sections/renewal.tex
\section{Achievable rates from renewal inputs}
\label{sec:renewal}
A run is a maximal constant segment: $111100$ has run lengths four and two. Choose independent complete run lengths with finite-support distribution $P=(P_\ell)$; successive binary runs alternate. Write $\overline L=\sum_\ell\ell P_\ell$ and $J=\max\{\ell:P_\ell>0\}$. Channel randomness is independent of these lengths.

We evaluate the jigsaw rate of \citet{DM07} and retain conditional run-start entropy, following \citet{KD10}. For the run-segmentation refinement, \citet[Theorem~11 and Lemma~13]{Chen26} gives an increasing endpoint-window entropy hierarchy and integer posterior counts; that work also develops directed entropy evaluation. These run endpoints differ from the individual raw-bit deletion indicators in \cref{sec:source}. The identity below separates the jigsaw term from the remaining start uncertainty; finite lower estimates of both give an achievable rate for the same $P$.

\input{sections/roadmap_renewal}
\subsection{A stationary input with prescribed run lengths}
Starting at a fresh run boundary is generally not stationary. A uniformly located input position is more likely to lie in a long run, in proportion to its length. We construct this stationary initial position explicitly.

State $(b,r)$ emits $b$ and records $r$ positions remaining, including the current one. If $r>1$, move to $(b,r-1)$; otherwise draw $\ell\sim P$ and move to $(1-b,\ell)$. Start with masses
\begin{equation}
 \pi(b,r)=\frac{\Pr_P\{\ell\ge r\}}{2\overline L}.
 \label{eq:renewal-stationary}
\end{equation}
For $r<J$, incoming mass to $(b,r)$ is $\pi(b,r+1)+P_r\pi(1-b,1)=[\Pr_P\{\ell\ge r+1\}+P_r]/(2\overline L)=\pi(b,r)$. At $r=J$ the first term is absent.

Summing over bits and remaining lengths gives $\sum_r\Pr_P\{\ell\ge r\}/\overline L=1$: each length $\ell$ contributes to exactly $\ell$ tail probabilities. The initial masses are therefore normalized and invariant. Fresh lengths at completed boundaries still have law $P$.

\emph{Example.} If lengths one and two each have probability $1/2$, then $\overline L=3/2$, $\pi(0,1)=\pi(1,1)=1/3$, and $\pi(0,2)=\pi(1,2)=1/6$. Every run visits remaining length one; only length-two runs visit remaining length two.

\subsection{Grouping input runs into output runs}
Deleting the middle run of $00\mid1\mid000$ can merge the two zero runs into one output run. The trace no longer marks where one zero run's contribution ends and the other's begins. The jigsaw calculation groups input runs by the output run they produce.

A \emph{type} lists the complete input runs from one positive-output run up to, but excluding, the next opposite-bit run with positive output. Independent fresh lengths and channel counts make successive complete types independent. We compute entropy per type, then divide by its mean raw length.

Supplying these group boundaries can reveal more than the input requires given its trace. The conditional entropy of the boundaries measures that excess description. Subtracting it from the conditional input description increases the achievable rate.

For $0\mid1\mid0\to0$, keeping only the first or last zero each has mass $d^2(1-d)$. With previous output bit one, they give equally likely different global run starts. With previous bit zero, both merge into the preceding output run and give no new start. A local first positive run is therefore not necessarily a global output-run start.

At $d=1/2$, the first case contributes $2(1/8)\cdot1=1/4$ bit of expected block entropy: posterior uncertainty must be multiplied by observation probability.

\paragraph{Step 1: describe the output of one input run.}
Let $\beta_\ell(k)$ be the probability that an input run of length $\ell$ produces $k$ output bits, and put $z_\ell=\beta_\ell(0)$. Independent survival gives the binomial count distribution
\[
 \beta_\ell(k)=\binom\ell k(1-d)^kd^{\ell-k},\qquad z_\ell=d^\ell.
\]
We record the Poisson version alongside the deletion law because the same run-group calculation will later yield the all-parameter bound in \cref{sec:renewal-poisson}. For Poisson repeats of intensity $\lambda$, each input bit produces an independent Poisson$(\lambda)$ count. Its generating function is $e^{\lambda(t-1)}$ by the exponential series. Multiplying $\ell$ such functions gives $e^{\lambda\ell(t-1)}$, whose coefficients are
\[
 \beta_\ell(k)=e^{-\lambda\ell}\frac{(\lambda\ell)^k}{k!},\qquad
 z_\ell=e^{-\lambda\ell}.
\]
Averaging disappearance over the source gives $D_0$. Bayes' rule gives length distributions $A$ conditional on positive output and $B$ conditional on no output:
\[
 D_0=\sum_\ell P_\ell z_\ell,\qquad
 A_\ell=\frac{P_\ell(1-z_\ell)}{1-D_0},\qquad
 B_\ell=\frac{P_\ell z_\ell}{D_0}.
\]
The evaluated parameters satisfy $0<D_0<1$. These conditional distributions are determined by $P$ and the channel, not free comparison distributions. Here $B_\ell$ is distinct from the start-set variables below.

\paragraph{Step 2: find the group that produces one output run.}
Starting at a positive run, let $I$ count vanished opposite-bit runs before the next positive opposite-bit run. Let $Z$ be the initial length, $S_i$ the vanished opposite-bit lengths, and $R_i$ the intervening same-bit lengths. The type is $T=(Z,S_1,R_1,\ldots,S_I,R_I)$. Here $T$ denotes a run list, not the transition matrix of \cref{sec:source}.

Each opposite-bit run independently vanishes with probability $D_0$. Initial and vanished lengths have the conditional laws above; intervening same-bit lengths are unrestricted. Hence
\begin{equation}
 \Pr\{I=i\}=(1-D_0)D_0^i,\qquad Z\sim A,\quad S_i\sim B,\quad R_i\sim P,
 \label{eq:renewal-type-law}
\end{equation}
Draw $I$, then draw the listed lengths independently from their indicated laws. Given them, draw channel counts with the corresponding conditions. Let $K$ be the total output-run length. The next positive opposite-bit run restarts this construction, making complete pairs $(T,K)$ independent and identically distributed.

\paragraph{Step 3: convert one group into a rate per input bit.}
Since $\Pr(I\ge j)=D_0^j$, summing tail probabilities gives $\E I=D_0/(1-D_0)$. Put $L_0=\sum_\ell \ell P_\ell z_\ell$. Averaging lengths under $A$, $B$, and $P$ gives
\[
 \mathbb EZ=\frac{\overline L-L_0}{1-D_0},\qquad
 \mathbb ES_1=\frac{L_0}{D_0},\qquad \mathbb ER_1=\overline L.
\]
A type contains its initial run and $I$ independent pairs, so
\[
 \mathbb E|T|_{\rm raw}
 =\mathbb EZ+\mathbb EI(\mathbb ES_1+\mathbb ER_1)
 =\frac{\overline L-L_0+L_0+D_0\overline L}{1-D_0}.
\]
The reciprocal mean length is the type density $\nu$ per input bit; the stopping argument below justifies this conversion.

Describe $I$, the initial length, and the $I$ pairs. Independence gives $H(T)=H(I)+H(A)+\E I[H(B)+H(P)]$. The geometric self-information is $-\log_2(1-D_0)-I\log_2D_0$, with mean $H(I)=h_2(D_0)/(1-D_0)$. Substitution yields
\begin{equation}
 \begin{aligned}
 \mathbb E|T|_{\rm raw}&=\frac{(1+D_0)\overline L}{1-D_0},
 &\nu&=\frac{1-D_0}{(1+D_0)\overline L},\\
 H(T)&=H(A)+\frac{h_2(D_0)+D_0(H(B)+H(P))}{1-D_0}.
 \end{aligned}
 \label{eq:renewal-type-entropy}
\end{equation}

The same elementary count bound will control boundary-output lengths in the proof and omitted Poisson counts in the computation.
\begin{lemma}[An entropy bound from a count's mean]\label{lem:renewal-count-entropy}
A nonnegative integer variable $N$ of mean $\mu$ satisfies
\[
 H(N)\le g(\mu):=(\mu+1)\log_2(\mu+1)-\mu\log_2\mu
 \le\log_2(\mu+1)+1/\ln2.
\]
Moreover, $g$ is concave on $[0,\infty)$, with $g(0)=0$.
\end{lemma}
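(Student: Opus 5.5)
The approach is to prove the first inequality with the maximum-entropy principle for nonnegative integer variables of given mean. The geometric law maximizes entropy among laws on $\{0,1,2,\ldots\}$ with mean $\mu$. I would prove this directly through \cref{lem:cross-entropy}, so that no external maximum-entropy theorem is needed. For $\mu>0$, take the comparison law $q(k)=\frac1{\mu+1}\bigl(\frac{\mu}{\mu+1}\bigr)^k$. It is positive and sums to one by the geometric series. Cross entropy gives
\[
H(N)\le\E[-\log_2q(N)]=\log_2(\mu+1)+\mu\log_2\frac{\mu+1}{\mu}=g(\mu),
\]
because the negative logarithm of $q$ is affine in $k$ and the expectation uses only $\E N=\mu$. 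If $\mu=0$, then $N=0$ almost surely, so $H(N)=0=g(0)$ under the convention $0\log0=0$. If $\mu=\infty$ is excluded by hypothesis, there is nothing further to check.

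For the second inequality, I would rewrite
\[
g(\mu)=\log_2(\mu+1)+\mu\log_2\Bigl(1+\frac1\mu\Bigr)
\]
and apply $\ln(1+x)\le x$ with $x=1/\mu$. This gives $\mu\log_2(1+1/\mu)\le 1/\ln2$, and the bound follows at once. At $\mu=0$ the extra term is $0$, which is also within the bound.

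For concavity, I would differentiate on $(0,\infty)$:
\[
g'(\mu)=\log_2\frac{\mu+1}{\mu},\qquad g''(\mu)=\frac1{\ln2}\Bigl(\frac1{\mu+1}-\frac1\mu\Bigr)<0.
\]
Here $g$ is continuous at $0$, since $\mu\log\mu\to0$. Strict concavity on the open interval together with continuity then extends concavity to the closed interval $[0,\infty)$. Equivalently, $g(\mu)=(\mu+1)h_2(\mu/(\mu+1))$ is a perspective-type expression, but the derivative check is the shortest route.

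No step is a real obstacle. The only care point is the endpoint $\mu=0$, where the comparison law degenerates and the conventions $0\log0=0$ and continuity must be invoked explicitly. The geometric comparison in the first step is the substantive idea.
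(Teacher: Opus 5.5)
Your proposal is correct and follows the paper's proof essentially step for step. Both arguments use cross entropy against the geometric law $(\mu+1)^{-1}[\mu/(\mu+1)]^k$ for $\mu>0$, the inequality $\ln(1+x)\le x$ for the second bound, and the computation $g''(\mu)=-1/[\ln(2)\mu(\mu+1)]<0$ for concavity, with continuity and $N=0$ almost surely handling the endpoint $\mu=0$.
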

\begin{proof}
For $\mu>0$, compare with the geometric probabilities $Q(j)=(\mu+1)^{-1}[\mu/(\mu+1)]^j$. Averaging their negative logarithms gives $\log_2(\mu+1)+\mu\log_2(1+1/\mu)=g(\mu)$. Cross entropy bounds $H(N)$ by this value, and $\ln(1+x)\le x$ gives the last inequality. Differentiation gives $g''(\mu)=-1/[\ln(2)\mu(1+\mu)]<0$. At mean zero, $N=0$ almost surely; continuity gives both the entropy bound and concavity at the endpoint.
\end{proof}

Let $B_n^{\rm start}$ mark input-run indices beginning actual global output runs in $X_1^n$. Positive contributions that merge have no separate marks. Write $I_P$ for this source's information rate. The next identity expresses it as a jigsaw term $J_P$ plus conditional start entropy $a_P$.
\begin{proposition}[The jigsaw rate and conditional run-start entropy]
\label{prop:renewal-identity}
The stationary source information rate satisfies
\begin{equation}
 I_P=J_P+a_P,\qquad
 J_P=\frac{H(P)}{\overline L}-\nu H(T\mid K),\qquad
 a_P=\lim_{n\to\infty}\frac1nH(B_n^{\rm start}\mid X_1^n,Y_n)\ge0.
 \label{eq:renewal-identity}
\end{equation}
In particular $J_P+a$ is a capacity lower bound whenever $a\le a_P$ is established for this same source.
\end{proposition}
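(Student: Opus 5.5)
We prove \cref{prop:renewal-identity} by comparing information with the input description through the type sequence. Work with the stationary renewal source of \eqref{eq:renewal-stationary} on $X_1^n$. Let $\mathcal T_n$ be the sequence of complete types meeting the block, together with boundary data: the partial first and last runs, their output counts, and the type count. The run lengths determine $X_1^n$ up to the first bit and boundary offsets. Therefore
\[
H(X_1^n)=nH(P)/\overline L+o(n)
\]
by renewal counting. Here the number of complete runs is $n/\overline L+O(\sqrt n)$ in mean, and the boundary lengths have bounded means, so \cref{lem:renewal-count-entropy} bounds their entropy by $O(\log n)$.

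The central step is an exact chain-rule identity. The starts $B_n^{\rm start}$, together with $Y_n$ and $X_1^n$, determine the type segmentation up to $o(n)$ boundary information, and the type segmentation determines which input runs start output runs. Given $Y_n$, revealing the starts is equivalent to revealing the output-run lengths $K$ attached to each type. I would write
\[
H(X_1^n\mid Y_n)=H(X_1^n,B_n^{\rm start}\mid Y_n)-H(B_n^{\rm start}\mid X_1^n,Y_n).
\]
I would then identify $H(X_1^n,B_n^{\rm start}\mid Y_n)$ with $H(\mathcal T_n\mid \mathcal K_n)+o(n)$. Given the trace, the pair (input, starts) is equivalent to the list of types. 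Each type's output is a single run of known bit and length $K$, which the trace parses once the starts are known.

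Independence of successive $(T,K)$ pairs, from Step~2, then gives
\[
H(\mathcal T_n\mid\mathcal K_n)=\#\text{types}\cdot H(T\mid K)+o(n).
\]
The number of complete types is a renewal count whose mean is $n\nu+o(n)$ by \eqref{eq:renewal-type-entropy}. This is the stopping argument: Wald's identity applied to the raw lengths of i.i.d.\ types. The number of types is random and determined by the data, so I would condition on it. Its description costs $O(\log n)$ and does not change the conditional sum. Wald's identity then converts the expected sum of $H(T_i\mid K_i)$ over a stopping-time count into the mean count times $H(T\mid K)$.

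Combining these steps, dividing by $n$, and using $I=H(X)-H(X\mid Y)$ gives
\[
I_P=\frac{H(P)}{\overline L}-\nu H(T\mid K)+a_P.
\]
The limit defining $a_P$ exists because the other two terms converge. The quantity $a_P$ is nonnegative as a limit of conditional entropies. The final clause follows from $C(d)\ge I_P$, as in \cref{thm:lower}.

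I expect the main obstacle to be the equivalence step. It has to be shown precisely that, given $Y_n$, the start marks and the input together determine the complete type-and-count sequence, and conversely, up to boundary effects. The difficulty is handling merged positive runs, which carry no separate marks, and making sure only $o(n)$ ambiguity lives at the two block ends. I would handle the ends by revealing the boundary types and their trace lengths as side information, and bound their cost with \cref{lem:renewal-count-entropy} and \cref{lem:side-information}.
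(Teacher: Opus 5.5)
Your plan follows the paper's proof: the chain rule with the start marks, i.i.d.\ $(T,K)$ pairs evaluated through stopped sums, and two-way boundary descriptions. The genuine gap is that you never show the information rate exists. Your identity is
\[
\frac1nI(X_1^n;Y_n)=\frac1nH(X_1^n)-\frac1nH(X_1^n,B_n^{\rm start}\mid Y_n)+\frac1nH(B_n^{\rm start}\mid X_1^n,Y_n).
\]
Convergence of the two middle terms gives convergence of the last term only if the left side already converges. So ``the limit defining $a_P$ exists because the other two terms converge'' is circular, and for this correlated input the limit is not automatic. The paper proves it by subadditivity. For adjacent blocks $X_A,X_B$ with separately observed traces, revealing the split cannot decrease information (\cref{lem:data-processing}). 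The traces are conditionally independent given the inputs, and $H(Y_A,Y_B)\le H(Y_A)+H(Y_B)$. Together these give $I(X_A,X_B;\operatorname{concat}(Y_A,Y_B))\le I(X_A;Y_A)+I(X_B;Y_B)$. Stationarity then gives $I_{m+n}\le I_m+I_n$, and Fekete's argument gives the limit. Without this step you get only $\liminf$ and $\limsup$ versions. Those suffice for the capacity clause, but not for the identity as stated.

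You also differ from the paper in how you evaluate $H(X_1^n,B_n^{\rm start}\mid Y_n)$. You compute it directly as a conditional entropy of the stopped type list. The paper instead computes $n^{-1}H(X_1^n,B_n^{\rm start},Y_n)\to\nu H(T,K)$ and $n^{-1}H(Y_n)\to\nu H(K)$ separately and subtracts. Your route works, but not for the reason you give. Once you condition on the number $\tau$ of types, the pairs are no longer independent, and Wald's identity concerns sums of random variables, not conditional entropies.

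The fix is to condition instead on the whole count sequence $K_1^\infty$. The types are then conditionally independent and $\tau$ is a stopping time for them. Hence the stopped list $\mathcal T_n$ has conditional probability $\prod_{i\le\tau}p(T_i\mid K_i)$, and Wald in the pair filtration gives $H(\mathcal T_n\mid K_1^\infty)=\E\tau\,H(T\mid K)$. The later counts $K_{\tau+1},K_{\tau+2},\ldots$ are independent of the stopped pairs. Therefore $H(\mathcal T_n\mid\mathcal K_n)=H(\mathcal T_n\mid K_1^\infty,\tau)=\E\tau\,H(T\mid K)-H(\tau\mid K_1^\infty)$. The correction is at most $H(\tau)=O(\log n)$ by \cref{lem:renewal-count-entropy}.

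Finally, the proposition is also applied to Poisson repeats, where output counts are unbounded. Your boundary descriptions must then control boundary output lengths too, and the last clause reads $C_{\rm PRC}(\lambda)\ge I_P$.
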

\begin{proof}
We first calculate entropy with the run starts supplied, then remove their conditional entropy by the chain rule. Finite input windows cut through at most two types; we control these boundary descriptions before taking rates.

\textbf{Step 1: encode the marked word and control its boundaries.}
Complete $(T,K)$ pairs and the two boundary fragments determine $(X_1^n,B_n^{\rm start},Y_n)$: concatenate the raw lengths with alternating bits, mark each type's start, and emit its $K$ output bits. Conversely, marked starts divide the raw-run list into types, and output-run lengths give $K$. An initial bit and boundary positions complete this two-way description.

Since $|T|_{\rm raw}\le J(2I+1)$, a raw type longer than $J(2q+1)$ requires $I>q$, of probability $D_0^{q+1}$. Thus raw type length has an exponential tail.

A stationary endpoint sees length $\ell$ with mass proportional to $\ell\Pr(|T|_{\rm raw}=\ell)$, because that type contains $\ell$ possible endpoint positions. This extra factor preserves all finite moments. For a fresh renewal sequence, summing over possible starting positions gives the same length-weighted upper bound. Both boundary raw descriptions consequently have uniformly bounded mean length.

Deletion output length is at most raw length. For Poisson repeats, conditional on a type of raw length $L$, its unconditioned same-bit count has mean at most $\lambda L$. Conditioning the first run to be positive divides probabilities by at least $1-e^{-\lambda}$, since its length is at least one. For $t>0$,
\[
 \mathbb E[e^{tK}\mid T]
 \le \frac{\exp\{\lambda L(e^t-1)\}}{1-e^{-\lambda}}.
\]
The right side has finite expectation for sufficiently small $t>0$: its exponent coefficient $\lambda(e^t-1)$ then lies below the raw-length exponential decay rate. Endpoint length bias preserves this integrability. Encode boundary raw lengths and output counts in unary, with bits literally; these descriptions have uniformly finite mean length and therefore bounded entropy.

\textbf{Step 2: compute the rate of the marked description.}
For a fresh type sequence, put $L_j=|T_j|_{\rm raw}$, $S_j=L_1+\cdots+L_j$, and $\tau=\min\{j:S_j\ge n\}$. The stopped list has product probability, so its self-information adds across types. The event $\tau\ge j$ depends only on earlier lengths, independently of type $j$. Its expected entropy is therefore $\E\tau\,H(T,K)$.

To bound the crossing type, sum over its possible starting position $r<n$. At most one renewal occurs at each integer $r$, and its next length is independent of arrival there. Thus
\[
 \mathbb E L_\tau
 \le\sum_{r=0}^{n-1}\mathbb E[L_1\mathbf1_{\{L_1\ge n-r\}}]
 \le\sum_{u\ge1}\mathbb E[L_1\mathbf1_{\{L_1\ge u\}}]
 =\mathbb E L_1^2.
\]
For a realized $L_1$, the last sum has $L_1$ identical terms, giving $\E L_1^2$. Since $n\le S_\tau<n+L_\tau$, the expected overshoot is bounded. Expanding $S_\tau=\sum_jL_j\mathbf1_{\{\tau\ge j\}}$ and using the same independence gives $\E S_\tau=\E\tau\,\E L_1=n+O(1)$. Hence $\E\tau/n\to1/\E L_1=\nu$.

Let $U$ be the stopped type list and $V$ the finite marked input--output object. The chain rule gives
\[
 |H(U)-H(V)|\le H(U\mid V)+H(V\mid U).
\]
Each conditional entropy is bounded by its two-way boundary description above. Removing the crossing type and adding the stationary initial residual therefore changes entropy by $o(n)$. Dividing by $n$ gives
\[
 \frac1nH(X_1^n,B_n^{\rm start},Y_n)\longrightarrow\nu H(T,K).
\]
\textbf{Step 3: compute the input and output rates in the same units.}
For input alone, apply the same stopped-sum argument to independent raw lengths of mean $\overline L$ and entropy $H(P)$. The entropy rate is $H(P)/\overline L$.

For output alone, complete run lengths have law $K$ and alternating bits. The same argument gives entropy $H(K)/\E K$ per output symbol; the initial bit and incomplete run have vanishing rate.

To convert to input units, let $N_n=|Y_n|$ and $q_n=\lfloor\mathbb EN_n\rfloor$. The prefix comparison in the proof of \cref{lem:trace-rate} gives
\[
 |H(Y_n)-H(V_1^{q_n})|\le H(N_n)+\mathbb E|N_n-q_n|,
\]
where $V$ is the same infinite output. The reason applies to either channel: disclose the length and missing binary suffix to recover either prefix from the other. The count mean and variance are $n(1-d),nd(1-d)$ for deletion and $\lambda n,\lambda n$ for repeats. Thus Cauchy--Schwarz gives $\mathbb E|N_n-q_n|\le\sqrt{\operatorname{Var}(N_n)}+1$, while \cref{lem:renewal-count-entropy} gives $H(N_n)=O(\log n)$. The total error is $O(\log n+\sqrt n)$ and vanishes after division by $n$.

Each type produces mean $\E K$ output bits and there are $\nu n+o(n)$ types in expectation. Boundary outputs have bounded mean, so $\E N_n/n\to\nu\E K$. Multiply this density by entropy $H(K)/\E K$ per output symbol to obtain $H(Y_n)/n\to\nu H(K)$; the random-length error above vanishes after division by $n$.

\textbf{Step 4: remove the marks and recover the omitted information.}
We now have all three entropy rates. The only entropy lost on removing the marks is the conditional uncertainty of their positions. The chain rule (\cref{lem:entropy-chain}) writes
$H(X_1^n,B_n^{\rm start},Y_n)=H(X_1^n,Y_n)+H(B_n^{\rm start}\mid X_1^n,Y_n)$.
Substitute this into $I(X_1^n;Y_n)=H(X_1^n)+H(Y_n)-H(X_1^n,Y_n)$ to obtain
\[
 I(X_1^n;Y_n)=H(X_1^n)+H(Y_n)-H(X_1^n,B_n^{\rm start},Y_n)
 +H(B_n^{\rm start}\mid X_1^n,Y_n).
\]
This is an exact finite-word identity. Before taking its limit, we must also show that the information rate exists for the correlated input. Let $X_A,X_B$ be two adjacent raw blocks, and let $Y_A,Y_B$ be their separately observed traces. The actual output is their concatenation. Revealing the division between traces can only increase mutual information by data processing (\cref{lem:data-processing}). Given the input blocks, the two channel outputs are independent, even though the input blocks themselves may be correlated. Expanding mutual information and then using $H(Y_A,Y_B)\le H(Y_A)+H(Y_B)$ gives
\begin{align*}
 I(X_A,X_B;\operatorname{concat}(Y_A,Y_B))
 &\le I(X_A,X_B;Y_A,Y_B)\\
 &=H(Y_A,Y_B)-H(Y_A\mid X_A)-H(Y_B\mid X_B)\\
 &\le I(X_A;Y_A)+I(X_B;Y_B).
\end{align*}
Stationarity identifies both block marginals, giving $I_{m+n}\le I_m+I_n$. For $n=kq+r$, $0\le r<q$, repeated subadditivity and $I_r\le r$ give $I_n\le kI_q+r$. Thus $\limsup_n I_n/n\le I_q/q$ for every $q$; the lower limit is at least $\inf_q I_q/q$ by definition. The rate exists. Substituting the established entropy limits into the chain-rule identity proves \eqref{eq:renewal-identity}; conditional entropy is nonnegative, so $a_P\ge0$.

The information rate of this source cannot exceed capacity (\cref{sec:setting}). Therefore lower estimates of $J_P$ and $a_P$ for the same $P$ give a capacity lower bound.
\end{proof}

\subsection{Where computation enters: the run-group baseline}\label{sec:renewal-baseline-computation}
Fix the run-length law $P$, with mean $\overline L$ and maximum $J$. A type $T$ groups input runs producing one output run of length $K$, and $\nu$ counts types per input bit. To evaluate $J_P=H(P)/\overline L-\nu H(T\mid K)$, use

$H(T\mid K)=H(T)+H(K\mid T)-H(K)$, by the chain rule. The finite formula \eqref{eq:renewal-type-entropy} gives $H(T)$. We need an upper bound for the subtracted conditional count entropy and a lower bound for $H(K)$. Merges can involve arbitrarily many runs even though each raw run has length at most $J$. The computation fixes $P$ and the channel parameter, then chooses a merge-count cutoff $q$ and an output-count cutoff $K_0$. These are numerical cutoffs, not restrictions on the channel. We first derive the finite probabilities and then the omitted-entropy bounds.
\paragraph{Step 1: compute the count law for a known type.}

A type contains an initial positive run of length $z$, then $i$ same-bit runs separated by vanished opposite-bit runs. Let $Q_i=P^{*i}$ be the distribution of their total same-bit length $r$, with $Q_0(0)=1$. This convolution is determined by $P$; it is unrelated to an upper-bound comparison distribution.

The unrestricted output count has law $\beta_{z+r}$. Subtract cases where the initial run vanishes, of mass $z_z\beta_r(k)$, then divide by its positive-output probability $1-z_z$:
\begin{equation}
 W_{z,r}(k)=\frac{\beta_{z+r}(k)-z_z\beta_r(k)}{1-z_z},\qquad k\ge1.
 \label{eq:renewal-W}
\end{equation}
Here $z_z$ is the disappearance probability at length $z$. Multiply the conditional count entropy by the type mass $(1-D_0)D_0^i$, initial-length mass $A_z=P_z(1-z_z)/(1-D_0)$, and added-length mass $Q_i(r)$. Cancel $1-D_0$ to obtain
\begin{equation}
 \mathbb EH(K\mid T)=\sum_{i\ge0}D_0^i\sum_zP_z(1-z_z)
 \sum_r Q_i(r)H(W_{z,r}).
 \label{eq:renewal-conditional-sum}
\end{equation}

\paragraph{Step 2: compute the output-run law without knowing its type.}
For the positive term $H(K)$, let $F(t)=\sum_\ell P_\ell\sum_{k\ge0}\beta_\ell(k)t^k$ and $f_k=[t^k]F(t)$, so $F(0)=D_0$. A positive initial run has generating function $(F(t)-D_0)/(1-D_0)$. Averaging the number of additional same-bit runs gives $\sum_{i\ge0}(1-D_0)D_0^iF(t)^i=(1-D_0)/(1-D_0F(t))$. Multiply these factors:
\begin{equation}
 \mathbb Et^K=\frac{F(t)-D_0}{1-D_0F(t)}.
 \label{eq:renewal-pgf}
\end{equation}
Let $\kappa_k=\Pr\{K=k\}$; an output run is positive, so $\kappa_0=0$. Multiply \eqref{eq:renewal-pgf} by its denominator and compare coefficients. Since $f_0=D_0$, move the term $D_0^2\kappa_k$ to the left and divide:
\begin{equation}
 \kappa_k=\frac{f_k+D_0\sum_{j=1}^k f_j\kappa_{k-j}}{1-D_0^2}.
 \label{eq:kappa-recurrence}
\end{equation}
For deletion, $f_j=0$ for $j>J$, so the recurrence uses a finite polynomial. For either channel, $\sum_{k=1}^{K_0}-\kappa_k\log_2\kappa_k\le H(K)$: omitted entropy terms are nonnegative. This is the required lower direction because $H(K)$ enters $J_P$ positively.

\paragraph{Step 3: upper-bound the cost of omitted merge counts.}
For the subtracted $H(K\mid T)$, omitted terms require an upper bound. In deletion, a type with $I=i$ produces at most $J(i+1)$ positive output lengths, hence entropy at most $\log_2[J(i+1)]$. After retaining $I\le q$, the omitted mass is $D_0^{q+1}$. Conditional on omission, $I-(q+1)$ has the original geometric distribution, so $\E[I+1\mid I>q]=q+2+D_0/(1-D_0)$. Concavity of the logarithm bounds its mean by the logarithm at that mean:
\begin{equation}
 \sum_{i>q}\Pr(I=i)\mathbb E[H(K\mid T)\mid I=i]
 \le D_0^{q+1}\log_2\!\left[J\left(q+2+\frac{D_0}{1-D_0}\right)\right].
 \label{eq:renewal-deletion-tail}
\end{equation}
This tail bound completes the conditional-entropy estimate for deletion. The numerical inputs and rounding directions can now be stated independently of the recurrence implementation. The published deletion-channel replay uses merge cutoff $q=7$, output-count cutoff $K_0=192$, and 256-bit Arb arithmetic. Increasing $q$ enlarges the collection of retained run types; increasing $K_0$ retains more count probabilities. In each case the formulas above still bound the complete omitted tail.
\begin{lemma}[Finite entropy bounds give a run-group rate]\label{lem:renewal-finite-baseline}
For one chosen source $P$, let $r_-\le H(P)/\overline L$, $t_+\ge H(T)$, $c_+\ge H(K\mid T)$, and $k_-\le H(K)$. With the type density $\nu$ from \eqref{eq:renewal-type-entropy},
\[
 J_P\ge r_- -\nu(t_++c_+-k_-).
\]
A downward evaluation of the whole right side, including any interval uncertainty in $\nu$, is a valid lower bound $J_P^-$.
\end{lemma}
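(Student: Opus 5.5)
The plan is to start from the definition $J_P=H(P)/\overline L-\nu H(T\mid K)$ in \cref{prop:renewal-identity} and rewrite the subtracted conditional entropy by the chain rule (\cref{lem:entropy-chain}):
\[
 H(T\mid K)=H(T,K)-H(K)=H(T)+H(K\mid T)-H(K).
\]
Every quantity here is finite for the fixed finite-support source: $H(T)$ is given by \eqref{eq:renewal-type-entropy}, $H(K\mid T)$ is controlled by \eqref{eq:renewal-deletion-tail} or its Poisson analogue, and $H(K)\le g(\E K)$ by \cref{lem:renewal-count-entropy}. Hence the rearrangement involves no $\infty-\infty$ ambiguity. Substituting the expansion gives the exact identity
\[
 J_P=\frac{H(P)}{\overline L}-\nu\bigl(H(T)+H(K\mid T)-H(K)\bigr).
\]

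The second step is a monotonicity argument in each of the four quantities. Because $\nu=(1-D_0)/((1+D_0)\overline L)>0$, the right side is nondecreasing in $H(P)/\overline L$ and in $H(K)$. It is nonincreasing in $H(T)$ and in $H(K\mid T)$. Therefore I replace $H(P)/\overline L$ by $r_-$, $H(T)$ by $t_+$, $H(K\mid T)$ by $c_+$, and $H(K)$ by $k_-$, one at a time. Each replacement can only decrease the expression, which yields $J_P\ge r_- -\nu(t_++c_+-k_-)$.

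For the interval statement, I will not assume the factor $t_++c_+-k_-$ has a known sign; it is nonnegative in the exact case since it dominates $H(T\mid K)\ge0$, but the computed endpoints need not be. So I take an interval $[\nu_-,\nu_+]\ni\nu$ and define $J_P^-$ as the lower endpoint of the interval extension of $r_--\nu(t_++c_+-k_-)$ over $\nu\in[\nu_-,\nu_+]$. This lower endpoint is at most the value at the true $\nu$, which is at most $J_P$ by the second step. By \cref{prop:renewal-identity}, with $a=0\le a_P$, $J_P^-$ is then a capacity lower bound.

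The only point requiring care is justifying the finiteness of $H(T)$, $H(K)$ and $H(K\mid T)$ before the rearrangement. Here I would cite the exponential tail of $I$ and the count-mean bound in \cref{lem:renewal-count-entropy}.
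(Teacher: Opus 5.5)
Your proposal is correct and follows the paper's own argument: expand $H(T\mid K)=H(T)+H(K\mid T)-H(K)$ by the chain rule, bound it by $t_++c_+-k_-$, and subtract after multiplying by the positive density $\nu$. The paper leaves your finiteness check implicit, so including it does no harm. Your caution about the sign is unnecessary: the hypotheses already force $t_++c_+-k_-\ge H(T\mid K)\ge0$, so over an interval for $\nu$ the worst case is simply its upper endpoint.
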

\begin{proof}
By the chain rule, $H(T\mid K)=H(T)+H(K\mid T)-H(K)\le t_++c_+-k_-$. Multiply by the positive density $\nu$ and subtract from the lower estimate of $H(P)/\overline L$ in \eqref{eq:renewal-identity}. The inequality reverses on subtracting, giving the claim. The entropy quantities in parentheses are per group; multiplication by groups per input bit puts them in the same units as $r_-$.
\end{proof}

\emph{Example.} With $J=2$, $D_0=1/2$, and $q=2$, the omitted mass is $1/8$ and the conditional mean of $I+1$ is five. The tail is at most $\log_2(10)/8$ bits per type; multiplying by $\nu$ converts it to bits per input bit.

For Poisson repeats, output counts are unbounded. Put $M=\lambda\overline L$ and use the concave mean-entropy bound $g$ from \cref{lem:renewal-count-entropy}. This replaces the deletion channel's finite-alphabet bound $\log_2[J(i+1)]$.

Conditioning an initial length-$z$ run to be positive changes its mean count to $\lambda z/(1-z_z)$. Averaging against $A_z=P_z(1-z_z)/(1-D_0)$ cancels $1-z_z$ and gives $M/(1-D_0)$; every added same-bit run contributes mean $M$. Conditional on $I>q$, their mean number is $q+1+D_0/(1-D_0)$. The omitted conditional-entropy sum is at most its mass $D_0^{q+1}$ times $g$ of this mean, by concavity. The resulting upper bound is
\begin{equation}
 D_0^{q+1}g\!\left(M\left[q+1+\frac{1+D_0}{1-D_0}\right]\right).
 \label{eq:prc-merge-tail}
\end{equation}

\paragraph{Step 4: upper-bound the remaining Poisson count tail.}
Even for retained merge counts, each Poisson law $W_{z,r}$ has infinitely many output counts. For $\theta>1$, its generating function is at most $G=\exp(\lambda(z+r)(\theta-1))/(1-e^{-\lambda z})$: drop the negative conditioning correction from its numerator.

Nonnegative generating-function terms give $W_{z,r}(k)\le G\theta^{-k}$. Beyond cutoff $K_0$, put $B=G\theta^{-(K_0+1)}$ and $\rho=\theta^{-1}$. The omitted masses are bounded by $B\rho^j$. Verify $B<1/e$; on this range $-x\log_2x$ is increasing because its derivative is $-\log_2x-1/\ln2\ge0$. Replacing each probability by its upper bound and summing gives
\begin{equation}
 \sum_{k>K_0}-W_{z,r}(k)\log_2W_{z,r}(k)
 \le\frac{B}{1-\rho}\left[-\log_2B+\frac{\rho}{1-\rho}\log_2\theta\right].
 \label{eq:prc-output-tail}
\end{equation}
The summed expression is $\sum_{j\ge0}B\rho^j[-\log_2B+j\log_2\theta]$. The geometric sums are $\sum_j\rho^j=1/(1-\rho)$ and $\sum_jj\rho^j=\rho/(1-\rho)^2$; the latter follows by writing $j=\sum_{u=1}^j1$ and summing the nonnegative tails. Substitution proves the displayed bound. For each retained type, weight its Poisson entropy-tail bound by its original type probability. Add these weighted bounds and the omitted-merge bound \eqref{eq:prc-merge-tail} to the retained conditional-entropy sum. This gives $c_+$, and \cref{lem:renewal-finite-baseline} gives $J_P^-$. The count recurrence may stop at $K_0$ for the positive $H(K)$ term, but a corresponding cutoff in the subtracted $H(K\mid T)$ term requires precisely these tail charges.

\subsection{Alignment entropy from finite run blocks}
Recall that $a_P$ is conditional global-start entropy per input bit, $D_0=\sum_\ell P_\ell\beta_\ell(0)$ is a run's disappearance probability, and $\overline L=\sum_\ell\ell P_\ell$ is mean input-run length. To improve $J_P^-$, lower-bound $a_P$. Reveal the trace length of each finite input-run block. These lengths separate block traces but need not identify the input runs that started them.

Fix $k$ complete lengths $\ell=(\ell_1,\ldots,\ell_k)$ and set their first bit to zero by complement symmetry. Let $c$ be the preceding output bit and $B^c$ the input-run indices beginning global output runs. The first positive run is marked only if its bit differs from $c$. For block trace $Y_{\rm block}$, define
\[
 h_k(c)=\mathbb E_\ell H(B^c\mid\ell,Y_{\rm block},c).
\]
Average over independent lengths with law $P$. Looking backwards from a run boundary, the number of wholly vanished runs is geometric with parameter $1-D_0$. Even counts give preceding bit opposite with probability $(1-D_0)\sum_{j\ge0}D_0^{2j}=1/(1+D_0)$; odd counts give equal bit with probability $D_0/(1+D_0)$. This past is independent of the fresh block.

\begin{lemma}[A finite block reveals positive alignment entropy]
\label{lem:renewal-fixed-block}
For either channel,
\begin{equation}
 a_P\ge\widetilde A_k(P)=\frac{h_k(1)+D_0h_k(0)}{(1+D_0)k\overline L}.
 \label{eq:renewal-block-credit}
\end{equation}
No first-positive-run index is revealed in this bound.
\end{lemma}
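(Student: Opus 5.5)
The approach is to partition the input into consecutive disjoint blocks of $k$ complete runs, reveal side information, and apply the chain rule together with conditioning-reduces-entropy. The finite-window argument of \cref{lem:alignment-genie} is the model. Starting from a stationary run boundary, I would group the complete input runs of $X_1^n$ into roughly $n/(k\overline L)$ consecutive blocks of $k$ runs each, discarding the two boundary fragments. The marks $B_n^{\rm start}$ split into the sub-vectors $B_{(i)}$ indexed by the runs of block $i$. The chain rule gives
\[
H(B_n^{\rm start}\mid X_1^n,Y_n)\ge\sum_i H(B_{(i)}\mid B_{(1)},\ldots,B_{(i-1)},X_1^n,Y_n).
\]
In each summand I would additionally reveal every block's trace length, the full outputs of all other blocks, and the last output bit before block $i$. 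This is the bit $c$, which is a function of the earlier traces. Since conditioning reduces entropy, each summand only decreases.

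The key step is to show that after this revelation the summand equals exactly $H(B^c\mid \ell,Y_{\rm block},c)$ for block $i$, averaged over the revealed data. First, given the trace lengths, $Y_n$ splits into the per-block traces. Second, whether a run of block $i$ begins a global output run depends only on its own channel counts within the block and on $c$. It does not depend on later blocks, because a global start is decided by the first positive run relative to the preceding output bit. Third, the channel counts of different blocks are independent given the input. The renewal structure makes the block's lengths independent of the past, and its first bit is fixed by complement symmetry. Together these facts let the other blocks' marks and outputs drop out of the conditioning. It follows that block $i$ contributes $h_k(c_i)$ averaged over the law of $c_i$.

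To finish, I would use the backward geometric argument stated just before the lemma. The preceding output bit is opposite to the block's first bit with probability $1/(1+D_0)$ and equal with probability $D_0/(1+D_0)$, independently of the fresh block. In the convention where the first bit is zero, opposite means $c=1$. Each block therefore contributes $\frac{h_k(1)+D_0h_k(0)}{1+D_0}$. Summing over about $n/(k\overline L)+o(n)$ blocks, using the same renewal count $\E\tau/n\to1/\E L$ as in \cref{prop:renewal-identity}, dividing by $n$, and using that $a_P$ is a limit gives \eqref{eq:renewal-block-credit}. The bound holds for either channel because only independence of the per-bit channel counts was used.

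I expect the main obstacle to be the second step, justifying that conditioning on all other blocks leaves exactly $h_k(c)$. Two points need care. One is that $c$ is itself random and correlated with earlier marks. The other is that the first block starts from a stationary rather than a fresh position. I would handle the first by noting that $c$ is already revealed and that, given $c$, earlier data are independent of block $i$'s lengths and counts. I would handle the second by discarding the initial fragment and the boundary blocks at a cost of $o(n)$.
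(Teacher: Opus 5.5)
Your proof is correct, but it is organized differently from the paper's.

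\textbf{The paper's route.} The paper never conditions on the true preceding output bit inside a stationary window. It takes a fixed number $r$ of fresh $k$-run blocks and starts from the preceding-bit-free marks $B^+$, namely the first positive run and the later output-bit changes. It then adjoins an auxiliary bit $C_0$ drawn independently from the stationary mixture. Since $B^{C_0}$ is a function of $(B^+,Y,C_0)$, this only lowers the entropy, and every block boundary sees exactly the weights $1/(1+D_0)$ and $D_0/(1+D_0)$. Revealing the block output lengths then splits the entropy into exactly $r$ block terms. The price is paid afterwards, in two places:
\begin{itemize}
\item disclosing the first-positive-run index to pass from $B^+$ to the true global starts, at a cost of $\log_2(rk+1)$ bits;
\item coupling the random raw length of $rk$ runs with a fixed-length window, at a cost of $O(\sqrt r+\log r)$.
\end{itemize}

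\textbf{Your route.} You apply the chain rule directly to $H(B_n^{\rm start}\mid X_1^n,Y_n)$ and reveal the actual $c_i$, possibly a bit determined before the window. This needs neither $B^+$, nor the disclosed index, nor the random-length coupling.

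\textbf{The step you should write out rather than cite.} What you give up is exact renewal bookkeeping, for three reasons:
\begin{itemize}
\item which blocks fit inside $[1,n]$ depends on their own lengths;
\item whether block $i$ is reached before position $n$ is correlated with $c_i$, since both depend on whether the runs just before block $i$ vanish;
\item $\Pr(c_i\ \text{opposite})$ equals $1/(1+D_0)$ only up to an error geometric in $i$, because the initial run is length-biased.
\end{itemize}
So the Wald argument of \cref{prop:renewal-identity} does not apply verbatim to summands that depend on $c_i$.

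A clean repair is as follows. Keep only blocks $i\le I=\lfloor(n-n^{2/3})/(k\overline L)\rfloor$. Each block term lies in $[0,k]$, so the loss from the event that these blocks do not all fit is at most $kI$ times a Chebyshev probability of order $n^{-1/3}$, which is $O(n^{2/3})$. Conditional on everything before block $i$, the block-$i$ term then averages to $h_k(c_i)$, because the block's lengths are fresh. The geometric convergence of the law of $c_i$ adds only $O(1)$. All corrections are $o(n)$, so your route yields the same bound.
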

\begin{proof}
\textbf{Step 1: retain the correct start variable.}
Take $r$ blocks of $k$ complete runs, with full length list $\ell$. Let $B^+$ mark the first positive run and subsequent output-bit changes. Draw an independent preceding bit $C_0$ with the stationary relative probabilities above. Remove the first mark if its bit equals $C_0$, obtaining the actual starts $B^{C_0}$.

The removal is determined by $(B^+,Y,C_0)$, so it decreases conditional entropy. Revealing local output lengths $M_1,\ldots,M_r$ decreases it again:
\[
 H(B^+\mid\ell,Y)\ge H(B^{C_0}\mid\ell,Y,C_0)
 \ge H(B^{C_0}\mid\ell,Y,C_0,M_1,\ldots,M_r),
\]
\textbf{Step 2: use the disclosed boundaries to separate blocks.}
The lengths divide $Y$ into block traces, including empty ones. Together with $C_0$, they determine each block's preceding bit. Given the raw lengths, channel counts are independent across blocks; conditioning on separate block outputs preserves this product distribution. The last conditional entropy therefore adds over blocks.

At every boundary, the backwards geometric calculation gives the same preceding-bit mixture, independent of fresh lengths. Each block's mean contribution is $[h_k(1)+D_0h_k(0)]/(1+D_0)$.

\textbf{Step 3: remove boundary conventions and divide by raw length.}
To compare with the stationary global starts $B^{\rm glob}$, disclose the first positive raw-run index, with one extra value if all runs vanish. This recovers $B^+$ from $B^{\rm glob}$ and has at most $rk+1$ possibilities. Hence
\[
 H(B^+\mid\ell,Y)
 \le H(B^{\rm glob}\mid\ell,Y)+\log_2(rk+1).
\]
The discrepancy is logarithmic. The first-positive index only bounds this boundary error; it is not given in the local quantities $h_k(c)$.

The raw length $L^{(r)}$ of $rk$ independent runs has mean $rk\overline L$ and variance $rk\operatorname{Var}_P(\ell)$. Cauchy--Schwarz gives $\E|L^{(r)}-\lfloor rk\overline L\rfloor|\le\sqrt{rk\operatorname{Var}_P(\ell)}+1$. Couple the random-length word with this fixed-length window. To recover either marked input--output object, describe the differing raw suffix, output suffix, start indicators, and cropping lengths.

The expected raw suffix length is $O(\sqrt r)$. Deletion produces no more output than raw bits. For Poisson repeats, cropping depends only on input lengths, so the expected output suffix length is $\lambda$ times the raw suffix length. Its count has entropy $O(\log r)$ by the geometric bound. Binary suffixes and start indicators need at most one bit per position; cropping lengths cost $O(\log r)$. Both recovery descriptions thus have entropy $O(\sqrt r+\log r)$. Apply the chain-rule entropy comparison to the marked and unmarked objects, then subtract, to bound the change in conditional start entropy by the same order.

The stationary initial partial run contributes only an integrable boundary description. Divide the finite inequality by $rk\overline L$ and let $r$ grow. All boundary costs vanish, leaving \eqref{eq:renewal-block-credit}.
\end{proof}

\begin{lemma}[Count posterior alternatives before taking entropy]\label{lem:renewal-posterior-masses}
Fix input-run lengths, a received word, and the preceding output bit. Group channel-count paths by their \emph{actual global start set}. If these groups have original masses $p_1,\ldots,p_q$, their contribution to conditional start entropy is
\[
 \Phi(p_1,\ldots,p_q)=p\log_2p-\sum_i p_i\log_2p_i,
 \qquad p=\sum_i p_i.
\]
Dropping paths after this grouping, or dropping complete observations, gives a lower bound when all remaining masses are kept unchanged.
\end{lemma}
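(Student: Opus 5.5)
The plan is to read the statement as the posterior form of a conditional entropy and then to use the monotonicity of $\Phi$ under splitting. Fix the input-run lengths $\ell$, the received word $y$ and the preceding bit $c$; this fixed triple is the conditioning observation. Every channel-count path compatible with $(\ell,c)$ has a definite probability. It also determines its output, and hence its actual global start set. Let $p_i$ be the total original mass of the compatible paths producing $y$ whose start set is the $i$th possible value. Then $p=\sum_ip_i$ is the probability of observing $y$ given $(\ell,c)$, and the posterior of the start set given the observation is $(p_1/p,\ldots,p_q/p)$.

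The first step uses the definition of conditional entropy as an observation-weighted average of posterior entropies. This observation therefore contributes $p\,H(p_1/p,\ldots,p_q/p)$. Expanding $H$ gives
\[
p\,H(p_1/p,\ldots,p_q/p)=-\sum_ip_i\log_2(p_i/p)=p\log_2p-\sum_ip_i\log_2p_i=\Phi(p_1,\ldots,p_q).
\]
Zero masses are handled by $0\log0=0$, and $p=0$ gives zero. The grouping is essential. Paths with the same start set but different counts are not distinguished by $B^c$, so their masses must be added before $\Phi$ is applied. Splitting them would overstate the entropy, since $\Phi$ is superadditive under splitting.

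The second step covers dropping paths. Removing a path from group $i$ replaces $p_i$ by a smaller $p_i'\ge0$. As recorded in \cref{sec:small-deletion}, $\Phi$ is nondecreasing in each mass, because $\partial\Phi/\partial w_i=\log_2(W/w_i)\ge0$. The boundary cases follow by continuity. Decreasing masses one at a time therefore lowers $\Phi$, provided the remaining masses are not renormalized. Renormalizing could increase the value: $\Phi$ is homogeneous of degree one, so rescaling multiplies it by the scaling factor. Dropping a whole observation $y$ deletes a nonnegative term from the sum over observations, so the total conditional entropy $H(B^c\mid\ell,Y_{\rm block},c)$ can only decrease. Averaging over $\ell$ with law $P$ preserves all these inequalities.

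The only delicate point is the identification in the first step. It requires that the start set be a function of the path together with $c$, and that the observation be exactly $(\ell,y,c)$ and nothing coarser or finer. Otherwise the masses $p_i$ are not the posterior weights. I expect this to need no calculation beyond unwinding the definition of $B^c$ given before \cref{lem:renewal-fixed-block}.
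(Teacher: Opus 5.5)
Your proposal is correct and matches the paper's proof step for step. You identify $p$ as the observation probability and $p_i/p$ as the posterior of start set $i$, and scale the posterior entropy by $p$. For dropped paths you use $\partial\Phi/\partial p_i=\log_2(p/p_i)\ge0$ with continuity at zero masses, and for a dropped observation you discard a nonnegative term. You also rule out renormalization, as the paper does.

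One wording point: the fact that splitting a group overstates entropy is the grouping identity $\Phi(\ldots,a,b)=\Phi(\ldots,a+b)+\Phi(a,b)\ge\Phi(\ldots,a+b)$. It is not the superadditivity $\Phi(v+w)\ge\Phi(v)+\Phi(w)$ of \cref{lem:logsum}.
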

\begin{proof}
For this observation, $p$ is its probability and $p_i/p$ is the posterior probability of start set $i$. Multiplying $-\sum_i(p_i/p)\log_2(p_i/p)$ by $p$ gives the displayed expression. For $p_i>0$, its derivative in that mass is $\log_2(p/p_i)\ge0$; continuity covers zero masses. Removing paths decreases group masses, so it cannot increase the entropy contribution. Removing a complete observation discards a nonnegative contribution. Neither operation justifies renormalizing the retained probabilities.
\end{proof}
Two paths of mass $1/4$ each contribute $1/2$ bit if their start sets differ, but zero if both have the same start set and hence form one alternative. Counting paths instead of distinct start sets would therefore overstate the lower bound. The finite calculation must first identify the alternatives whose uncertainty is being measured.

Two estimates of the same conditional entropy may be compared by taking their maximum. Their sum is justified only when they bound disjoint terms of the stated entropy decomposition. We use this distinction in the short- and long-block constructions below.

\subsection{Three-run and four-run deletion formulas}
We now evaluate $h_k(c)$ for three and four runs. Keep the same $P$, $D_0=\sum_\ell P_\ell d^\ell$, and binomial masses $\beta_n(k)=\binom nk(1-d)^kd^{n-k}$, zero outside their support.

\paragraph{Three runs: uncertainty between two possible starts.}
For two same-bit runs of lengths $z,r$, with the intervening opposite run absent, output count $u$ starts in the first run with mass $\beta_{z+r}(u)-d^z\beta_r(u)$, or in the later run with mass $d^z\beta_r(u)$. The subtraction removes cases where the first run vanishes. The second alternative is impossible for $u>r$. Denote their total entropy contribution by
\[
 Q(z,r)=\sum_{u=1}^{r}\Phi\!\left(\beta_{z+r}(u)-d^z\beta_r(u),\ d^z\beta_r(u)\right).
\]
The middle run vanishes with average probability $D_0$. A new global start also requires preceding bit opposite, of probability $1/(1+D_0)$. Average outer lengths with $P_zP_r$ and divide by mean block length $3\overline L$:
\begin{equation}
 \widetilde A_3=\frac{D_0}{3\overline L(1+D_0)}\sum_{z,r}P_zP_rQ(z,r).
 \label{eq:renewal-A3}
\end{equation}

\paragraph{Four runs: merge indistinguishable start patterns first.}
For lengths $(a,b,c,e)$ with bits $0,1,0,1$ and output $0^u1^v$, possible start patterns are $(1,2)$, $(1,4)$, and $(3,4)$. Independent run counts give their masses $x,y,z$:
\begin{align*}
 x&=\beta_a(u)d^c[\beta_{b+e}(v)-d^b\beta_e(v)],\\
 y&=d^b[\beta_{a+c}(u)-d^a\beta_c(u)]\beta_e(v),\\
 z&=d^{a+b}\beta_c(u)\beta_e(v).
\end{align*}
For $x$, run three vanishes and run two starts the ones; its bracket removes cases where run two vanishes. For $y$, run two vanishes and run one starts the zeros. For $z$, runs one and two vanish. These events are disjoint.

With preceding bit one, the three global start sets differ, contributing $\Phi(x,y,z)$. With preceding bit zero, the first zero output merges into the past. Patterns $(1,4)$ and $(3,4)$ then both have start set $\{4\}$, so their masses must be merged, giving $\Phi(x,y+z)$.

Zero-only output requires runs two and four to vanish, giving $d^{b+e}Q(a,c)$ when the preceding bit is one. The corresponding one-only contribution is $d^{a+c}Q(b,e)$ for preceding bit zero. Adding these terms gives
\begin{align*}
 G_1(a,b,c,e)&=d^{b+e}Q(a,c)+\sum_{u=1}^{\max(a,c)}\sum_{v=1}^{e}\Phi(x,y,z),\\
 G_0(a,b,c,e)&=d^{a+c}Q(b,e)+\sum_{u=1}^{\max(a,c)}\sum_{v=1}^{e}\Phi(x,y+z).
\end{align*}
Outside these ranges, or with more output-run changes, at most one start pattern has positive mass and entropy is zero. Average the four source lengths, mix preceding-bit cases with relative weights $1,D_0$, and divide by $4\overline L$:
\begin{equation}
 \widetilde A_4=\frac{\sum_{a,b,c,e}P_aP_bP_cP_e[G_1+D_0G_0]}{4\overline L(1+D_0)}.
 \label{eq:renewal-A4}
\end{equation}
\paragraph{Complete the ordinary deletion-channel bound.}
Evaluate either block sum downward, allowing selected length words with their original masses. Both estimates lower-bound the same $a_P$, so take their maximum. Adding the jigsaw lower endpoint and using nonnegativity of capacity gives
\begin{equation}
 \boxed{C(d)\ge\max\{0,J_P^-+\max(\widetilde A_3^-,\widetilde A_4^-)\}.}
 \label{eq:renewal-ordinary-final}
\end{equation}
All terms must use the same source $P$; separate whole-source rates may then be maximized.

\subsection{Poisson simulation and an all-parameter lower bound}\label{sec:renewal-poisson}
To obtain a lower bound for every $d$ from one calculation, use the Poisson simulation of \citet{MD06}. Let $C_{\rm PRC}(\lambda)$ be capacity per input bit when each bit independently produces a Poisson$(\lambda)$ number of copies.

Repeat each input bit randomly before deletion, choosing the copying mean to leave Poisson$(\lambda)$ surviving copies. The proof verifies this count distribution and controls the random transmitted length.

\emph{Example.} At $d=3/4$, Poisson copying of mean two leaves mean $1/2$ after deletion. It simulates intensity $\lambda=1/2$ using about two transmitted bits per original bit, hence half the original rate.
\begin{lemma}[Poisson simulation]
\label{lem:poisson-transfer}
For every $\lambda>0$ and $0\le d\le1$,
\begin{equation}
 C(d)\ge\frac{1-d}{\lambda}C_{\rm PRC}(\lambda).
 \label{eq:poisson-transfer}
\end{equation}
\end{lemma}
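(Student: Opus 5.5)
We prove \eqref{eq:poisson-transfer} with a simulation argument. A code for the Poisson-repeat channel becomes a code for the deletion channel, at the cost of a random number of transmitted bits. The endpoints are handled first. For $d=1$ the right side is zero, and $C(1)=0$, so there is nothing to prove. We may also assume $C_{\rm PRC}(\lambda)>0$. Fix $0\le d<1$ and put $\mu=\lambda/(1-d)$.

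The first step is the count identity. Suppose each input bit is independently copied $N\sim\mathrm{Poisson}(\mu)$ times and each copy is then deleted independently with probability $d$. The number of surviving copies is then Poisson with mean $\mu(1-d)=\lambda$. This follows from the generating function: $\E[(d+(1-d)t)^N]=e^{\mu(1-d)(t-1)}=e^{\lambda(t-1)}$. Copies of distinct bits, and their deletions, are independent. Hence the composite channel (random repetition followed by deletion) has exactly the law of the Poisson-repeat channel of intensity $\lambda$, jointly over all input positions. Surviving copies of consecutive bits stay in order, so the output string coincides with the Poisson-repeat output.

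The second step handles the random transmitted length. Take an $n$-bit code for the Poisson-repeat channel with rate close to $C_{\rm PRC}(\lambda)$ and vanishing error. The encoder draws the repetition counts using private randomness and transmits the repeated word. Because the counts are drawn before deletion, a randomized encoder is acceptable: a standard averaging argument fixes one realization of the counts (or a small set of them) with small error. The transmitted length is $S_n=\sum_i N_i$, with mean $n\mu$ and variance $n\mu$. I would fix a block length $n'=\lceil n\mu(1+\epsilon)\rceil$. Whenever $S_n\le n'$, the transmitted word is padded to length $n'$ with arbitrary bits placed after the block; the decoder discards everything beyond the block's output. A cleaner alternative is to first simulate with Poisson counts of slightly smaller mean, so that truncation is unnecessary. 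The event $S_n>n'$ is declared an error, and by Chebyshev its probability tends to zero.

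The main obstacle is that padding bits also pass through the deletion channel. Their survivors append unknown extra symbols, so the decoder cannot tell where the simulated output ends. I would resolve this by observing that padding only appends a suffix, which is a degraded observation. Equivalently, the receiver can be given the count of surviving block bits as side information that costs $O(\log n)$ bits per block. This is the same mechanism as \cref{lem:block}: separating block outputs costs at most $\log_2(n'+1)$ bits, which vanishes per bit. With these two pieces, the rate per transmitted bit is $nR/n'\to R(1-d)/(\lambda(1+\epsilon))$. Letting $R\uparrow C_{\rm PRC}(\lambda)$ and $\epsilon\downarrow0$ gives \eqref{eq:poisson-transfer}. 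If $C_{\rm PRC}$ is defined only through information rates, the same argument can be run at the mutual-information level, using the chain-rule side-information bound.
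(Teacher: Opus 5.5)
\textbf{Gap: the padding boundary is not resolved in the operational framework you set up.} Your first step is exactly the paper's generating-function identity, and your fixed budget with padding is the paper's device.

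The claim that the padded output is a ``degraded observation'' points the wrong way. Degradedness would give $I(U;Y_{\rm padded})\le I(U;Y_{\rm PRC})$. You need the reverse: the deletion receiver must learn at least as much as the repeat-channel receiver, up to small losses.

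Giving the receiver the number $K_0$ of surviving block symbols is the right reverse statement, but it is not ``equivalent'' to degradedness. In a coding argument it is also not a legitimate step. A decoder cannot be handed $K_0$, and $H(K_0)\le\log_2(n'+1)$ controls a loss of mutual information, not an error probability. So your conclusion that actual deletion-channel codes of rate $nR/n'$ with vanishing error exist is not established.

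Your ``cleaner alternative'' does not help either. Poisson counts of any mean exceed a fixed budget with positive probability, and lowering the mean changes the simulated intensity.

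There are two ways to close the gap.

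\emph{The paper's route.} Stay at the level of mutual information throughout, as your last sentence suggests. Off the overflow event $E$, $Y_{\rm PRC}$ is a prefix of $Y_{\rm padded}$, so disclosing $(E,K_0)$ gives
\[
I(U_1^n;Y_{\rm PRC})\le I(U_1^n;Y_{\rm padded})+H(E,K_0)+n\Pr(E).
\]
Data processing through the randomized length-$n'$ input bounds $I(U_1^n;Y_{\rm padded})$ by $n'C_{n'}(d)$. Then maximize over input laws, divide by $n$, and use that both capacities equal their information limits; Dobrushin's hypotheses also hold for the repeat channel. Your Chebyshev bound suffices here, since $n\Pr(E)=O(1)$.

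\emph{A purely operational route.} Follow through on your own derandomization remark. The code's error equals $\E_N$ of the error with the count vector $N$ held fixed. Hence some single vector $N^*$, common to all messages and with $\sum_iN^*_i\le n'$, has error at most $P_e/\Pr(S_n\le n')$. The transmitted length is then deterministic, so no padding or boundary ever appears. This gives genuine deletion-channel codes of rate at least $nR/n'$. Fano's inequality and $C_n(d)\to C(d)$ handle the fact that the lengths $\sum_iN^*_i$ skip values. This route even avoids any coding theorem for the repeat channel.
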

\begin{proof}
\textbf{Step 1: match the channel law exactly.}
Fix $0<d<1$ and put $r=\lambda/(1-d)$. Independently replace each original bit by Poisson$(r)$ copies. Given copy count $L$, the survivor generating function is $[d+(1-d)t]^L$. Averaging the Poisson count gives
\[
 \mathbb E[d+(1-d)t]^L
 =\exp\{r[d+(1-d)t-1]\}
 =\exp\{\lambda(t-1)\}.
\]
This is Poisson$(\lambda)$. Independence of copying and deletion across original bits gives the complete repeat-channel distribution.

\textbf{Step 2: impose a fixed transmission budget.}
For slack $\delta>0$, impose transmission length $N=\lceil(r+\delta)n\rceil$. The preprocessing length $L$ is Poisson$(rn)$ and independent of the input. Markov's inequality applied to $e^{tL}$ gives, for $t>0$,
\[
 \Pr(L>N)
 \le \exp\{n[r(e^t-1)-t(r+\delta)]\}.
\]
The exponent's bracket has value zero and derivative $-\delta$ at $t=0$. It is negative at some small fixed $t>0$, so overflow is exponentially unlikely. On overflow transmit a fixed length-$N$ word; otherwise append padding.

Off overflow $E=\{L>N\}$, the ideal output $Y_{\rm PRC}$ is a prefix of the padded output $Y_{\rm padded}$. Let $K_0$ be its length there, zero on overflow. Its mean is at most $\lambda n$, so the geometric entropy bound gives $H(K_0)=O(\log n)$. Disclosing $(E,K_0)$ recovers the ideal output off overflow; on overflow it carries at most $n$ additional bits about the binary input $U_1^n$. The chain rule gives
\[
 I(U_1^n;Y_{\rm PRC})
 \le I(U_1^n;Y_{\rm padded})
       +H(E,K_0)+n\Pr(E).
\]
Conditional on $E=0$, no additional mutual information remains; conditional on $E=1$, it is at most $H(U_1^n\mid E=1)\le n$. Also $H(E,K_0)\le1+H(K_0)$ and $n\Pr(E)=o(n)$. Thus
\[
 I(U_1^n;Y_{\rm PRC})\le I(U_1^n;Y_{\rm padded})+O(\log n)+o(n)
 \le N C_N(d)+o(n).
\]
\textbf{Step 3: compare rates and remove the budget slack.}
The second inequality uses data processing through the randomized length-$N$ input, whose mutual information is at most $NC_N(d)$. Maximize over original input laws and divide by $n$. Both channels satisfy the capacity-limit hypotheses of \cref{sec:setting}: independent finite output words per input symbol and finite mean output length. Since $N/n\to r+\delta$, the result is $C_{\rm PRC}(\lambda)\le(r+\delta)C(d)$. Let $\delta\downarrow0$.

At $d=1$ both sides are zero. At $d=0$, use $C_{\rm PRC}(\lambda)\le\lambda$: binary output entropy is at most mean length $\lambda n$ plus $O(\log n)$ length entropy. Divide by $n$.
\end{proof}

For repeat intensity $\lambda$ and source $P$, put $M=\lambda\overline L$ and $D_0=\sum_\ell P_\ell e^{-\lambda\ell}$. Divide the jigsaw identity by $\lambda$, insert $a\le a_P$, and expand $H(T\mid K)$:
\[
 \frac{I_P}{\lambda}\ge
 \frac{H(P)}{M}-\frac{1-D_0}{(1+D_0)M}
 \{H(T)+\mathbb EH(K\mid T)-H(K)\}+\frac{a}{\lambda},
 \qquad M=\lambda\overline L.
\]
Evaluate the type and output-count entropies with the directed sums and tails above. A lower endpoint for $I_P/\lambda$, multiplied by $1-d$, then lower-bounds $C(d)$ for every $d$.

\subsection{Input-defined blocks and disjoint alignment layers}
To strengthen this repeat-channel rate, we sum start entropies from disjoint classes of input blocks. Their addition requires one common partition, defined before observing the output.

\paragraph{Choose a partition determined by the input.}
Choose anchor lengths $\mathcal A\subseteq\operatorname{supp}P$, with $p_A=P(\mathcal A)>0$, and split after every run in $\mathcal A$. These are run boundaries, unrelated to deletion-parameter point bounds.

A block has lengths $\ell=(\ell_1,\ldots,\ell_k)$, the first $k-1$ outside $\mathcal A$ and the last inside, with probability $\prod_iP_{\ell_i}$. Its run count is geometric of mean $1/p_A$. The stopped-sum identity gives mean raw length $\overline L/p_A$, since whether run $i$ is needed is determined before drawing its length.

\emph{Example.} With equiprobable lengths one and two, split after each two. For example, $(2)$, $(1,2)$, and $(1,1,2)$ have probabilities $1/2$, $1/4$, and $1/8$. The mean raw block length is three. Selecting the first two classes keeps masses $1/2,1/4$, not their renormalizations by $3/4$.

\paragraph{Adjust the previous-bit law to this partition.}
The preceding run now has an anchor length. Let $z_\ell=e^{-\lambda\ell}$, $D_0=\sum_\ell P_\ell z_\ell$, and $z_A$ be its average disappearance probability. The previous output bit equals or opposes the new block's first bit with probabilities
\[
 z_A=\mathbb E[z_\ell\mid\ell\in\mathcal A],\qquad
 \pi_{\rm same}=\frac{z_A}{1+D_0},\qquad \pi_{\rm opp}=1-\pi_{\rm same}.
\]
For equal bits, the preceding anchor must vanish, of probability $z_A$. Earlier lengths still have law $P$; summing the required parity of vanished runs gives $z_A(1-D_0)\sum_{j\ge0}D_0^{2j}=z_A/(1+D_0)$. The opposite case is the complement. The past is independent of the fresh block; a uniform binary phase removes parity bias.

\begin{lemma}[Start entropy from one input-defined partition]\label{lem:renewal-partition-credit}
Fix independent finite-support run lengths with law $P$ and mean $\overline L$, a Poisson-repeat intensity $\lambda>0$, and a set $\mathcal A$ of lengths with $p_A=P(\mathcal A)>0$. Split the input after each $\mathcal A$-run. For a block with lengths $\ell=(\ell_1,\ldots,\ell_k)$, let $h_\ell(c)$ be its conditional global-start entropy, with first input bit zero and preceding output bit $c$. With the partition-specific probabilities $\pi_{\rm opp},\pi_{\rm same}$ derived above,
\begin{equation}
 a_P\ge\frac{p_A}{\overline L}\sum_{\ell\ \rm anchor\ block}
 \left(\prod_iP_{\ell_i}\right)
 [\pi_{\rm opp}h_\ell(1)+\pi_{\rm same}h_\ell(0)].
 \label{eq:anchor-credit}
\end{equation}
\end{lemma}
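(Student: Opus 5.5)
The proof will mirror \cref{lem:renewal-fixed-block}, with fixed blocks of $k$ runs replaced by the input-defined anchor blocks. Take a long stretch consisting of $r$ consecutive anchor blocks; this partition is a function of the input alone. Let $B^+$ mark the first positive run together with every later output-bit change.

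\textbf{Step 1: reduce to one conditional entropy per block.} Draw an independent preceding bit $C_0$ from the stationary law. The conditioning chain is
\[
 H(B^+\mid\ell,Y)\ge H(B^{C_0}\mid\ell,Y,C_0)\ge H(B^{C_0}\mid\ell,Y,C_0,M_1,\ldots,M_r),
\]
where $M_i$ are the block trace lengths. The first inequality holds because removing the first mark is determined by $(B^+,Y,C_0)$. The second holds because conditioning on more variables reduces entropy. Given the full length list, the Poisson counts are independent across blocks. Once the $M_i$ are revealed, the trace splits into block traces. Together with $C_0$, these determine each block's preceding output bit. The last conditional entropy therefore adds over blocks, and each block contributes its own $h_\ell(c)$.

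\textbf{Step 2: the preceding-bit law at an anchor boundary.} This is the step I expect to be the main obstacle. A block's preceding bit is decided by the previous anchor run and the runs before it. The previous anchor vanishes with probability $z_A$. Earlier runs are i.i.d.\ $P$ and vanish with probability $D_0$, and the parity sum over vanished runs gives $\pi_{\rm same}=z_A/(1+D_0)$. I must check that this past is independent of the fresh block's lengths and counts. It is, because the split rule decides block membership run by run before each length is drawn. The block's lengths $(\ell_1,\dots,\ell_k)$ then have probability $\prod_iP_{\ell_i}$ with no renormalization. A uniform bit phase together with complement symmetry lets me fix the first input bit to zero. Averaging then gives each block's mean contribution, $\sum_\ell(\prod_iP_{\ell_i})[\pi_{\rm opp}h_\ell(1)+\pi_{\rm same}h_\ell(0)]$.

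\textbf{Step 3: normalization and boundary errors.} First compare $H(B^+\mid\ell,Y)$ with the global start set. Disclosing the first positive index costs $\log_2(\text{number of runs}+1)=O(\log r)$. Next, the random raw length of $r$ blocks is a stopped sum of i.i.d.\ block lengths. Its mean is $r\overline L/p_A$ by Wald's identity, since stopping is decided before each length is drawn. Its variance is finite because the block run count is geometric and the run lengths have finite support. I couple this random length with a fixed window exactly as in \cref{lem:renewal-fixed-block}. The raw suffix is $O(\sqrt r)$. The output suffix has mean $\lambda$ times the raw suffix, so its count entropy is $O(\log r)$ by \cref{lem:renewal-count-entropy}. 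The start indicators add at most one bit per position, so both recovery descriptions cost $O(\sqrt r+\log r)$. Finally, divide by $r\overline L/p_A$ and let $r\to\infty$. This yields \eqref{eq:anchor-credit}, using the limit $a_P$ defined in \cref{prop:renewal-identity}.
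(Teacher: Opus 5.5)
Your proposal is correct and takes essentially the same route as the paper. The paper proves this lemma by rerunning \cref{lem:renewal-fixed-block} on the input-defined anchor partition: reveal the block-output lengths, add the per-block start entropies under the mixture $\pi_{\rm opp},\pi_{\rm same}$, weight each block by $\prod_iP_{\ell_i}$, divide by the mean block length $\overline L/p_A$, and note that the boundary errors vanish. Your version spells out details the paper's short proof leaves to that earlier lemma, namely the independence of the fresh block from the stopping-rule split, the Wald mean, and the coupling costs.
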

\begin{proof}
The input already determines the partition; it is chosen without observing deletion or repeat counts. Reveal its block-output lengths, as in \cref{lem:renewal-fixed-block}. This decreases start entropy and separates the block observations. The preceding output bit is independent of the fresh block and has the partition-specific mixture computed above. A block has its original probability $\prod_iP_{\ell_i}$ and mean raw length $\overline L/p_A$, so averaging its entropy and dividing by that mean gives the display. Geometric block counts and bounded input-run lengths give the same vanishing boundary error as in the finite-block proof. Each summand is nonnegative. Consequently, disjoint selected classes of this one partition may be added; discarded classes receive zero contribution.
\end{proof}
Division by $\lambda$ replaces the block-density factor $p_A/\overline L$ by $p_A/M$, where $M=\lambda\overline L$. We use this normalized convention in the repeat-channel computations below.

\paragraph{Select the source and two disjoint classes of blocks.}
The intensity $\lambda=19/100$ has a direct precedent: \citet[Section~4.2]{RC23} use it in their Poisson-repeat lower-bound search. The hash-bound source payload identified by \path{evidence/parent/plans/prc.json} records its origin as probability optimization of the input-defined-partition bound. The relevant objective is a finite lower estimate of $I_P^{\rm PRC}(\lambda)/\lambda$: the normalized run-group rate plus selected start-entropy contributions for the same source. The payload preserves the resulting exact probabilities and labels its floating score \texttt{discovery\_coefficient\_only}; that score is not a proof premise.

The preserved payload does not record an optimization trajectory or establish why the seven support points were selected. We therefore treat the support and intensity as an inherited fixed proposal and certify its exact probabilities. We do not claim a global optimization over intensities or supports, or a theorem selecting these particular lengths. The mathematical guarantee comes from evaluating this specified law with complete entropy-tail bounds.

The certificate uses $\lambda=19/100$ and the exact support and anchor set
\[
 \{12,36,71,120,187,277,397\},\qquad
 \mathcal A=\{36,71,120,187,277,397\}.
\]
Only length 12 is nonanchor. Write $P_\ell=n_\ell/2^{28}$ with numerators
\begin{center}\small
\begin{tabular}{@{}rrrrrrr@{}}\toprule
$\ell=12$ & $36$ & $71$ & $120$ & $187$ & $277$ & $397$\\\midrule
137637854 & 68537725 & 38773509 & 16606658 & 5499720 & 1215838 & 164152\\\bottomrule
\end{tabular}
\end{center}
The numerators sum to $2^{28}$. The two entropy calculations retain $3\le k\le8$ and $9\le k\le16$ runs, respectively. These are disjoint terms of \eqref{eq:anchor-credit} and may be added. The earlier fixed four-run bound overlaps them and is not added.

\subsection{Counting long-block alignments by histograms}
For $\lambda=19/100$, consider the second block class: $9\le k\le16$ input runs with lengths $(a,\ldots,a,L)$, $a=12$, and terminal anchor $L\in\mathcal A$. We sum possible actual start sets for each received word, combining output-count orderings only when their entropy contributions agree.

Set the first input bit to zero and retain preceding output bit one. Consider outputs with $1\le m\le k-2$ runs, counts $u_1,\ldots,u_{m-1}\in\{1,\ldots,8\}$ and $u_m\in\{1,\ldots,64\}$. Here $m$ counts output runs and $k$ input runs. Omitted cases have nonnegative contributions; retained masses stay unchanged.

\paragraph{Step 1: assign a mass to each actual start path.}
Use zero-based input-run indices. A zero-first output starts at an even index $i_0$; alternating output bits require odd gaps $1+2r_j$, $r_j\ge0$, between starts. Put $t=\lambda a$. A nonterminal gap $1+2r$ contains $r$ vanished opposite-bit runs, of mass $e^{-tr}$, and $r+1$ same-bit runs with total Poisson mean $(r+1)t$.

The proposed first run must contribute positively. Given total count $u$, its zero-count probability is the ratio $e^{-(r+1)t}(rt)^u/u!$ to $e^{-(r+1)t}((r+1)t)^u/u!$, namely $(r/(r+1))^u$. Exclude this event and multiply the factors:
\begin{equation}
 f_r(u)=e^{-tr}\Pr\{\operatorname{Pois}((r+1)t)=u\}
 \left[1-\left(\frac r{r+1}\right)^u\right].
 \label{eq:histogram-edge}
\end{equation}
For the terminal output run starting at $i$, let $S$ and $O$ be the total same-bit and opposite-bit raw lengths from $i$ to the block end. Opposite bits must vanish, of mass $e^{-\lambda O}$; the same-bit count has mean $\lambda S$. Given count $u_m$, the proposed first run vanishes with probability $((S-\ell_i)/S)^{u_m}$, by the same Poisson ratio. The terminal factor is
\[
 e^{-\lambda O}\Pr\{\operatorname{Pois}(\lambda S)=u_m\}
 \left[1-\left(\frac{S-\ell_i}{S}\right)^{u_m}\right].
\]
Runs before $i_0$ must vanish, giving $e^{-ti_0}$. Prefix, edge, and terminal factors describe disjoint events indexed by actual start sets, after summing indistinguishable count histories.

\paragraph{Step 2: evaluate equal output-count orderings together.}
All nonterminal lengths equal $a$. Permuting the first $m-1$ counts together with their gaps preserves the edge-product mass, total gap, endpoints, and boundary factors. It bijects valid start paths: positive gaps keep every reordered partial path between those endpoints.

Therefore entropy depends only on the histogram $(n_1,\ldots,n_8)$ of nonterminal output counts. Its number of distinct orderings is $(m-1)!/\prod_un_u!$: label all positions, then divide by permutations of equal counts. Evaluate once and multiply by this integer.

\emph{Example.} Counts $112,121,211$ have histogram $n_1=2,n_2=1$ and multiplier $3!/2!=3$. The terminal count is separate. Distinct actual start sets remain separate posterior alternatives.

Define $G_u(z)=\sum_{r\ge0}f_r(u)z^r$, with formal variable $z$. In a product of $m-1$ edge factors, exponent $j=\sum r_i$ gives final start $i_0+(m-1)+2j$. It cannot exceed $k-1$, so coefficients above $\lfloor(k-m)/2\rfloor$ cannot contribute. For each coefficient, sum the terminal factor over allowed even $i_0$.

Histograms satisfy $n_1+\cdots+n_8=m-1$. Arranging $m-1$ stars and seven separators gives $\binom{m+6}{7}$ possibilities, including one when $m=1$. The checker enumerates that number and applies each multinomial weight.

\paragraph{Step 3: propagate mass and entropy without cancellation.}
Store each path collection as $(p,E)$, where $p=\sum_i p_i$ and $E=p\log_2p-\sum_i p_i\log_2p_i$. For the second collection, write $q=\sum_jq_j$ and $F=q\log_2q-\sum_jq_j\log_2q_j$. Independent products have masses $p_iq_j$. Expanding their logarithms gives
$\sum_{i,j}p_iq_j\log_2(p_iq_j)=q\sum_i p_i\log_2p_i+p\sum_jq_j\log_2q_j$.
Subtract from $pq\log_2(pq)$ to obtain
\[
 (p,E)\otimes(q,F)=(pq,qE+pF).
\]

For disjoint collections, entropy within the collections contributes $E+F$. Uncertainty about which collection occurred contributes $\Phi(p,q)=(p+q)\log_2(p+q)-p\log_2p-q\log_2q$, with $0\log_20=0$. Thus
\[
 (p,E)\oplus(q,F)=(p+q,E+F+\Phi(p,q)).
\]
All pair entries are nonnegative. The product formula is increasing in each entry, as is the union formula: $\partial\Phi(p,q)/\partial p=\log_2(1+q/p)\ge0$, with the same argument for $q$ and continuity at zero. Lower endpoints therefore propagate downward through both operations.

Polynomial convolution consists of disjoint sums of independent products, so these operations lower-bound the selected path entropy. Restore source mass $P(a)^{k-1}P(L)$ and multiply by $(p_A/M)\pi_{\rm opp}$ from \eqref{eq:anchor-credit}. Here $p_A=P(\mathcal A)$, $M=\lambda\overline L$, and $\pi_{\rm opp}$ is the preceding-opposite-bit probability: these factors supply block density and repeat normalization. Omitted terms receive no renormalization.

The first evaluator also allows the first positive bit to equal the preceding bit. It then merges paths differing only in that unmarked first start before evaluating entropy, as required by the definition of $B^c$.

\paragraph{The finite calculation and its normalization.}
For the repeat-channel baseline, the replay uses merge cutoff $q=5$, output-count cutoff $K_0=192$, and 224-bit Arb arithmetic. The subtracted conditional count entropies use separate adaptive cutoffs: each is enlarged until its geometric entropy-tail test passes. Thus $192$ is the cutoff for the positive output entropy $H(K)$, not a universal cap on every Poisson count. The short-block calculation retains at most six output runs, nonterminal output counts through eight, and a final count through 64; the long-block calculation uses the count ranges stated above and 256-bit Arb initialization. These choices limit computation, while upper tail charges or discarded nonnegative contributions preserve validity.

The repeat-channel calculation freezes $\lambda$, the rational law $P$, the input-defined block partition, the output-count cutoffs, and the two disjoint ranges of block lengths. It then performs three computations: the finite baseline sums and their upper tails; the short-block start-set sums; and the long-block histogram calculation just derived. Start-set masses and their entropies are bounded downward. The preceding-bit probabilities and the block density must use the same partition; replacing them by the fixed-block mixture would change the quantity being bounded.

Keep all three computational outputs in the normalized units used above. Let $\alpha_{\rm base}=J_P^-/\lambda$, and let $\alpha_{\rm short},\alpha_{\rm long}$ be the downward start-entropy sums using block-density factor $p_A/M$, for $3\le k\le8$ and $9\le k\le16$, respectively. These factors already include division by $\lambda$. An upper tail is needed for every truncated subtracted baseline entropy; omitted start-set contributions may be discarded because they are positive. Recorded component costs appear in \cref{sec:reproduction}.

\begin{corollary}[Assembly of the all-parameter renewal lower bound]\label{cor:renewal-universal-assembly}
For the same exact source, intensity, and input-defined partition, let $\alpha_{\rm base}$ be the certified normalized baseline, and let $\alpha_{\rm short},\alpha_{\rm long}$ be certified normalized contributions from the two disjoint block classes. If
\[
 \alpha_\star\le\alpha_{\rm base}+\alpha_{\rm short}+\alpha_{\rm long},
\]
then $C(d)\ge(1-d)\alpha_\star$ for every $0\le d\le1$. The stated source and finite checks certify
\[
 \alpha_\star=\frac{125150570101}{1000000000000}=0.125150570101.
\]
\end{corollary}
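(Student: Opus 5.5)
The plan is to assemble three previously proved ingredients, all for the one frozen source: the run-group identity (\cref{prop:renewal-identity}), the partition start-entropy bound (\cref{lem:renewal-partition-credit}) and the Poisson simulation (\cref{lem:poisson-transfer}). All are applied with the same exact law $P$, intensity $\lambda=19/100$ and anchor set $\mathcal A$, and the numerical value of $\alpha_\star$ enters only at the end.

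First I will treat the repeat channel. Applied to the Poisson-repeat channel, \cref{prop:renewal-identity} gives $I_P=J_P+a_P$. Here $I_P$ is the information rate per input bit of the stationary renewal source, and its definition already charges no uncounted boundary terms. Dividing by $\lambda$, \cref{lem:renewal-finite-baseline} with the Poisson tail charges \eqref{eq:prc-merge-tail} and \eqref{eq:prc-output-tail} gives $J_P/\lambda\ge\alpha_{\rm base}$. Next I bound $a_P/\lambda$ from below by $\alpha_{\rm short}+\alpha_{\rm long}$.

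The key point there is that the two block classes, $3\le k\le8$ and $9\le k\le16$ runs, are disjoint classes of anchor blocks of the \emph{same} input-defined partition. \Cref{lem:renewal-partition-credit} writes the lower bound on $a_P$ as one sum over anchor blocks of nonnegative terms, each weighted by its original probability $\prod_iP_{\ell_i}$. Any subfamily of blocks, and any lower estimate of each $h_\ell(c)$, therefore yields a valid lower bound; this is justified by \cref{lem:renewal-posterior-masses} and by the monotonicity of the $(p,E)$ operations. Dividing by $\lambda$ converts $p_A/\overline L$ into $p_A/M$, which is exactly the normalization built into $\alpha_{\rm short}$ and $\alpha_{\rm long}$. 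Summing the two estimates gives $I_P/\lambda\ge\alpha_\star$. Since a source rate cannot exceed capacity, this yields $C_{\rm PRC}(\lambda)/\lambda\ge\alpha_\star$.

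Next I transfer to the deletion channel. \Cref{lem:poisson-transfer} gives $C(d)\ge(1-d)C_{\rm PRC}(\lambda)/\lambda\ge(1-d)\alpha_\star$ for $0\le d\le1$, since that lemma already covers the endpoints $d=0,1$.

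Finally I must check that $\alpha_\star=0.125150570101$ satisfies the hypothesis. This is the main obstacle, because it is a computation rather than a deduction. The three downward endpoints are read from the recorded checker outputs, each evaluated in directed Arb arithmetic under the model of \cref{sec:arithmetic}. Their sum is then compared with the stated rational exactly. I also have to confirm that the exact numerators sum to $2^{28}$ and that every subtracted entropy carries its complete tail charge. In particular, the per-type adaptive output cutoffs in $H(K\mid T)$ must each pass the $B<1/e$ test, and the slope conditions must hold strictly.
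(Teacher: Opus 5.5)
Your proposal is correct and follows the paper's own assembly. You bound the run-group term by \cref{lem:renewal-finite-baseline} with the Poisson tail charges, and you add the two start-entropy credits via \cref{lem:renewal-partition-credit} because they are disjoint block classes of one input-defined partition; you then divide \cref{prop:renewal-identity} by $\lambda$, bound the source rate by repeat-channel capacity, and transfer with \cref{lem:poisson-transfer}. As in the paper, the numerical value of $\alpha_\star$ is left to the directed-arithmetic component checks. The only extraneous item in your list is the strict ``slope'' condition, which belongs to the small-deletion tail bound; the renewal tails instead need $\theta>1$ and $B<1/e$.
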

\begin{proof}
\Cref{lem:renewal-finite-baseline} bounds the run-group term. \Cref{lem:renewal-partition-credit} permits the two start-entropy contributions to be added because their block-length ranges are disjoint terms of one input-defined partition. Dividing \cref{prop:renewal-identity} by $\lambda$ therefore bounds $I_P^{\rm PRC}(\lambda)/\lambda$ below by the displayed sum. The source rate is at most repeat-channel capacity; \cref{lem:poisson-transfer} converts its normalized lower bound to $(1-d)\alpha_\star$.
\end{proof}
At each deletion probability, the final lower bound takes the maximum of this complete bound and the separate finite-state and ordinary-renewal rates.

%% file: sections/roadmap_renewal.tex
\subsection{Proof plan: one run-length source, two entropy terms}\label{sec:roadmap-renewal}
This is an alternative achievable-rate calculation for an input whose successive complete run lengths are independent. It does not add a correction to a different finite-state source. All terms below use the \emph{same} chosen run-length law $P$.

If the receiver were told which input runs begin each output run, the remaining ambiguity would separate into independent groups. The classical jigsaw rate charges for this extra description \citep{DM07}. Some of the supplied group boundaries remain uncertain even after the input and output are known; recovering that excess charge improves the lower bound \citep{KD10,Chen26}.

\begin{enumerate}
\item \Cref{prop:renewal-identity} gives $I_P=J_P+a_P$: the run-group rate $J_P$ plus conditional uncertainty $a_P$ about actual output-run starts. Both are measured per input bit.
\item \Cref{lem:renewal-finite-baseline} gives a finite lower estimate $J_P^-$. Its positive entropy term may omit nonnegative contributions; every omitted contribution to a subtracted entropy needs an upper tail bound.
\item \Cref{lem:renewal-fixed-block} lower-bounds $a_P$ by a finite block calculation. \Cref{lem:renewal-posterior-masses} specifies what is counted and which events must be merged before entropy is evaluated.
\end{enumerate}
For ordinary deletion, the resulting rate is \eqref{eq:renewal-ordinary-final}. A second calculation uses Poisson repeats: \cref{lem:poisson-transfer} converts its rate to a deletion-channel bound valid for every $d$. The repeat calculation uses one input-defined partition (\cref{lem:renewal-partition-credit}), so its additional block contributions are disjoint. \Cref{cor:renewal-universal-assembly} states their complete normalized sum and the transfer to all $d$. These two complete rates and the rates of \cref{sec:source} are alternatives; the final lower bound takes their maximum.

\Cref{fig:lower-proof-routes} summarizes how these steps produce the final finite test.

\input{figures/lower_proof_routes}

%% file: figures/lower_proof_routes.tex
\begin{figure}[tb]\centering
\begin{tikzpicture}[>=Stealth,text=ink,
 box/.style={draw=teal!60!black,rounded corners=2pt,align=center,text width=6.05cm,inner sep=6pt,font=\small\hyphenpenalty10000},arr/.style={->,thick,draw=ink!60}]
\node[box,fill=teal!4,minimum height=1.55cm] (s) {\textbf{Finite-state input, \cref{sec:source}}\\
An exact transition matrix and stationary probabilities specify the input law.};
\node[box,fill=teal!4,minimum height=1.55cm,right=.45cm of s] (r) {\textbf{Independent run lengths, \cref{sec:renewal}}\\
One run-length distribution specifies the input law.};
\node[box,fill=teal!4,minimum height=2.2cm,below=.4cm of s] (sc) {\Cref{lem:birch}: bound output entropy.\\
\Cref{lem:alignment-genie}: turn finite mask uncertainty into a rate.\\
\Cref{lem:positive-cell}: retain selected nonnegative count contributions.};
\node[box,fill=teal!4,minimum height=2.2cm,below=.4cm of r] (rc) {\Cref{prop:renewal-identity}: $I_P=J_P+a_P$.\\
\Cref{lem:renewal-finite-baseline}: bound $J_P$.\\
\Cref{lem:renewal-fixed-block,lem:renewal-partition-credit}: bound $a_P$ by disjoint finite blocks.};
\node[box,fill=ochre!14,minimum height=2.4cm,below=.4cm of sc] (st) {\textbf{Finite-state lower bound}\\
\mbox{$(1-d)b_- -h_+ +\mathcal A_-\ge\ell_\star$}\\
proves $C(d)\ge\ell_\star$ by \cref{thm:lower}.\\
All three terms use the same source.};
\node[box,fill=ochre!14,minimum height=2.4cm,below=.4cm of rc] (rt) {\textbf{Two complete renewal bounds}\\
Ordinary deletion: \eqref{eq:renewal-ordinary-final}.\\
Poisson repeats: \cref{cor:renewal-universal-assembly} combines the baseline and block credits.\\
\Cref{lem:poisson-transfer}: $C(d)\ge(1-d)\alpha_\star$.};
\draw[arr](s)--(sc);\draw[arr](r)--(rc);\draw[arr](sc)--(st);\draw[arr](rc)--(rt);
\end{tikzpicture}
\caption{How the lower bounds are computed. Each column evaluates one specified input process. For the finite-state source, $b_-$ bounds output entropy per survivor from below, $h_+$ bounds deletion entropy $h_2(d)$ from above, and $\mathcal A_-$ bounds residual mask entropy per input bit from below. For independent run lengths, $J_P$ is the run-group rate and $a_P$ is the conditional uncertainty about output-run starts. Ordinary deletion and Poisson repeats give separate completed bounds; the repeat calculation is normalized by intensity once before transfer. The final lower bound is the maximum of completed source rates.}\label{fig:lower-proof-routes}
\end{figure}

%% file: sections/chapter_assembly.tex
\part{Uniform approximation and numerical verification}\label{part:assembly}
The preceding parts give two kinds of sufficient conditions. A converse condition bounds the information rate of every input distribution from above. A source calculation bounds the information rate of one specified input process from below. We now evaluate these conditions and combine their conclusions into an approximation valid for every deletion probability $d\in[0,1]$. This requires both rigorous numerical endpoints and a proof that no parameter values between the evaluated points have been missed.

\Cref{sec:transport} proves the comparison between deletion probabilities and explains the exact interval check. \Cref{sec:small-assembly} records the supplementary combination of specialized bounds near zero, after both its upper and lower ingredients are available. \Cref{sec:arithmetic} derives the rounding and truncation rules used in the finite evaluations, and \cref{sec:reproduction} records the completed checks. The logical order is important. First, an analytic result states that specified inequalities imply a capacity bound. Next, a checker establishes every required inequality for a fixed numerical object, such as a predictor or an input law. Finally, the parameter comparison extends those capacity bounds to intervals. The numerical search that selected the objects is not a premise of these implications.

After these steps, the only remaining calculation is the error bound for a midpoint. Once $L(d)\le C(d)\le U(d)$ holds throughout the parameter interval, the midpoint has error at most $(U(d)-L(d))/2$. The computation verifies this upper bound on error, without needing to know the location of the true capacity inside the enclosure.

\input{sections/roadmap_numerics}

%% file: sections/roadmap_numerics.tex
\section{Combining the bounds and verifying their numerical premises}\label{sec:roadmap-numerics}
The preceding constructions end at explicit inequalities. Their proofs apply to any parameters satisfying those inequalities. Computation begins by choosing candidate input distributions, auxiliary distributions, truncation lengths, and potential values. It then bounds each required expression by an interval containing its exact value. For an upper test, the upper interval endpoint must pass; for a lower test, the lower endpoint must pass. For example, an expression enclosed in $[0.127,0.129]$ is proved at most $0.13$; an enclosure $[0.127,0.131]$ does not establish that assertion. \Cref{sec:arithmetic,sec:reproduction} specify the arithmetic and recorded checks.

These finite tests establish bounds at chosen deletion probabilities. To cover intermediate probabilities, use the theorem of \citet{RD15}, proved here in \cref{lem:transport}: $C(d)/(1-d)$ is nonincreasing for $d<1$. If $C(a)\le u$ and $C(b)\ge\ell$, then for $a\le d\le b<1$,
\[
 (1-d)\frac{\ell}{1-b}\le C(d)\le(1-d)\frac{u}{1-a}.
\]
Indeed, the normalized capacity at $d$ lies between its values at $a$ and $b$; multiplication by $1-d$ gives the display. The normalized endpoint bounds are constants on this interval. Their difference times $1-d$ is largest at its left endpoint, so one exact endpoint comparison bounds the width throughout the interval. Check that the intervals cover $[0,1]$, include the exact capacities at zero and one, and take the midpoint to obtain \cref{thm:main}.

Different methods are effective in different parameter ranges. Their values are combined by taking the largest completed lower bound and smallest completed upper bound after extension. One must not take output entropy from one source and mask entropy from another to form a lower bound. \Cref{tab:chapter-contributions} and \cref{fig:chapter-roles} identify the actual selections in this certificate; they are not restrictions on where the methods are valid.

\input{data/chapter_contributions}
\begin{figure}[!htbp]\centering
\input{figures/chapter_roles_inline}
\caption{Which construction supplies each bound in the published approximation. At every deletion probability, the upper and lower bounds are selected separately from complete valid candidates. The enlarged near-zero view shows additional lower estimates for independent fair input. Each legend entry points to its derivation; ``run-length upper bound'' is the modified-deletion run-length converse of \cref{sec:small-deletion}. These selected regions depend on the available certificate; they do not delimit the mathematical validity of a method.}\label{fig:chapter-roles}
\end{figure}

The achieved tolerance is not an optimality claim. Larger windows can improve entropy estimates, but the number of states and rows grows exponentially. More counting terms can improve a lower bound, but only when their probability weights and disjointness are controlled. A different input source may improve one entropy term while worsening the other. Numerical search and stronger analytic inequalities can both help; the stopping point of the present certified calculation does not prove that either has been exhausted. \Cref{sec:limits} separates these remaining obstacles from the numerical result established here.

%% file: data/chapter_contributions.tex
\begin{table}[!htbp]\centering\small
\begin{tabular}{@{}>{\raggedright\arraybackslash}p{3.0cm}>{\raggedright\arraybackslash}p{4.8cm}>{\raggedright\arraybackslash}p{4.0cm}>{\raggedright\arraybackslash}p{2.2cm}@{}}
\toprule
Construction & Contribution to the final enclosure & Where the published cells select it & Derivation\\\midrule
Erasure upper bound & Revealing positions gives $U(d)=1-d$ & $0<d<0.0017$ & \Cref{lem:erasure}\\[4pt]
Run-length upper bound & Handles rare deletions and unbounded input runs & $0.0017<d<3/128$ & \Cref{sec:small-deletion}\\[4pt]
Shared-tail upper bound & Finite window with a common outside survivor word and an omitted-case charge & $3/128<d<0.54$ & \Cref{sec:shared}\\[4pt]
Posterior upper bound & Entropy cancellation covers every outside-survivor distribution & $0.54<d<1$ & \Cref{sec:posterior}\\\midrule
Fair-input lower bounds & Additional lower-bound contributions near zero; not required for the worst-case tolerance & Selected cells below $0.016$, interleaved with finite-state sources & \Cref{sec:fair-input}\\[4pt]
Finite-state sources & A complete information-rate lower bound for each chosen source & Selected cells below $0.77$; all cells from $0.016$ to $0.77$ & \Cref{sec:source}\\[4pt]
Renewal lower bound & Achievable rates from prescribed run lengths & $0.77<d<0.9$ & \Cref{sec:renewal}\\[4pt]
Poisson-repeat lower bound & One achievable-rate coefficient times $1-d$ & $0.9<d<1$ & \Cref{sec:renewal}\\
\bottomrule
\end{tabular}
\caption{Where each construction enters the published enclosure. These are selections made by the certificate, not the domains on which the underlying theorems are valid. At a shared boundary, the larger lower endpoint and smaller upper endpoint are used; $d=0,1$ are exact. The two record families called Markov and stationary-source are grouped here under finite-state sources, and the iid and small-run lower records under fair-input refinements. The exact transition $0.0234375$ is $3/128$. The fair-input refinements still support the reported pointwise values and mean; \cref{sec:family-simplification} explains why the worst-case tolerance can be proved without them.}
\label{tab:chapter-contributions}
\end{table}

%% file: figures/chapter_roles_inline.tex
\definecolor{erasure}{HTML}{808994}\definecolor{runs}{HTML}{A64766}\definecolor{shared}{HTML}{28608C}\definecolor{posterior}{HTML}{77568F}\definecolor{fair}{HTML}{8CA65E}\definecolor{finite}{HTML}{198E92}\definecolor{renewal}{HTML}{B89422}\definecolor{poisson}{HTML}{CC7133}
\begin{tikzpicture}[x=12.5cm,y=1cm,font=\sffamily\small]
\node[anchor=west,font=\sffamily\bfseries\footnotesize] at(0,0.4){Full deletion range};
\node[anchor=east] at(-.015,-0.15){Upper};
\fill[erasure] (0.000000000000,-0.3) rectangle (0.001700000000,0);
\fill[runs] (0.001700000000,-0.3) rectangle (0.023437500000,0);
\fill[shared] (0.023437500000,-0.3) rectangle (0.540000000000,0);
\fill[posterior] (0.540000000000,-0.3) rectangle (1.000000000000,0);
\draw[black!30] (0,-0.3) rectangle (1,0);
\node[anchor=east] at(-.015,-0.7000000000000001){Lower};
\fill[fair] (0.000000000000,-0.8500000000000001) rectangle (0.000225000000,-0.55);
\fill[finite] (0.000225000000,-0.8500000000000001) rectangle (0.000244140625,-0.55);
\fill[fair] (0.000244140625,-0.8500000000000001) rectangle (0.000350000000,-0.55);
\fill[finite] (0.000350000000,-0.8500000000000001) rectangle (0.000366210938,-0.55);
\fill[fair] (0.000366210938,-0.8500000000000001) rectangle (0.000475000000,-0.55);
\fill[finite] (0.000475000000,-0.8500000000000001) rectangle (0.000488281250,-0.55);
\fill[fair] (0.000488281250,-0.8500000000000001) rectangle (0.000600000000,-0.55);
\fill[finite] (0.000600000000,-0.8500000000000001) rectangle (0.000610351562,-0.55);
\fill[fair] (0.000610351562,-0.8500000000000001) rectangle (0.000725000000,-0.55);
\fill[finite] (0.000725000000,-0.8500000000000001) rectangle (0.000732421875,-0.55);
\fill[fair] (0.000732421875,-0.8500000000000001) rectangle (0.000850000000,-0.55);
\fill[finite] (0.000850000000,-0.8500000000000001) rectangle (0.000854492188,-0.55);
\fill[fair] (0.000854492188,-0.8500000000000001) rectangle (0.000975000000,-0.55);
\fill[finite] (0.000975000000,-0.8500000000000001) rectangle (0.000976562500,-0.55);
\fill[fair] (0.000976562500,-0.8500000000000001) rectangle (0.001200000000,-0.55);
\fill[finite] (0.001200000000,-0.8500000000000001) rectangle (0.001220703125,-0.55);
\fill[fair] (0.001220703125,-0.8500000000000001) rectangle (0.001325000000,-0.55);
\fill[finite] (0.001325000000,-0.8500000000000001) rectangle (0.001342773438,-0.55);
\fill[fair] (0.001342773438,-0.8500000000000001) rectangle (0.001450000000,-0.55);
\fill[finite] (0.001450000000,-0.8500000000000001) rectangle (0.001464843750,-0.55);
\fill[fair] (0.001464843750,-0.8500000000000001) rectangle (0.001575000000,-0.55);
\fill[finite] (0.001575000000,-0.8500000000000001) rectangle (0.001586914062,-0.55);
\fill[fair] (0.001586914062,-0.8500000000000001) rectangle (0.001700000000,-0.55);
\fill[finite] (0.001700000000,-0.8500000000000001) rectangle (0.001708984375,-0.55);
\fill[fair] (0.001708984375,-0.8500000000000001) rectangle (0.001825000000,-0.55);
\fill[finite] (0.001825000000,-0.8500000000000001) rectangle (0.001831054688,-0.55);
\fill[fair] (0.001831054688,-0.8500000000000001) rectangle (0.001950000000,-0.55);
\fill[finite] (0.001950000000,-0.8500000000000001) rectangle (0.001953125000,-0.55);
\fill[fair] (0.001953125000,-0.8500000000000001) rectangle (0.002050000000,-0.55);
\fill[finite] (0.002050000000,-0.8500000000000001) rectangle (0.002075195312,-0.55);
\fill[fair] (0.002075195312,-0.8500000000000001) rectangle (0.002175000000,-0.55);
\fill[finite] (0.002175000000,-0.8500000000000001) rectangle (0.002197265625,-0.55);
\fill[fair] (0.002197265625,-0.8500000000000001) rectangle (0.002300000000,-0.55);
\fill[finite] (0.002300000000,-0.8500000000000001) rectangle (0.002319335938,-0.55);
\fill[fair] (0.002319335938,-0.8500000000000001) rectangle (0.002425000000,-0.55);
\fill[finite] (0.002425000000,-0.8500000000000001) rectangle (0.002441406250,-0.55);
\fill[fair] (0.002441406250,-0.8500000000000001) rectangle (0.002550000000,-0.55);
\fill[finite] (0.002550000000,-0.8500000000000001) rectangle (0.002563476562,-0.55);
\fill[fair] (0.002563476562,-0.8500000000000001) rectangle (0.002675000000,-0.55);
\fill[finite] (0.002675000000,-0.8500000000000001) rectangle (0.002685546875,-0.55);
\fill[fair] (0.002685546875,-0.8500000000000001) rectangle (0.002800000000,-0.55);
\fill[finite] (0.002800000000,-0.8500000000000001) rectangle (0.002807617188,-0.55);
\fill[fair] (0.002807617188,-0.8500000000000001) rectangle (0.002925000000,-0.55);
\fill[finite] (0.002925000000,-0.8500000000000001) rectangle (0.002929687500,-0.55);
\fill[fair] (0.002929687500,-0.8500000000000001) rectangle (0.003025000000,-0.55);
\fill[finite] (0.003025000000,-0.8500000000000001) rectangle (0.003051757812,-0.55);
\fill[fair] (0.003051757812,-0.8500000000000001) rectangle (0.003150000000,-0.55);
\fill[finite] (0.003150000000,-0.8500000000000001) rectangle (0.003173828125,-0.55);
\fill[fair] (0.003173828125,-0.8500000000000001) rectangle (0.003275000000,-0.55);
\fill[finite] (0.003275000000,-0.8500000000000001) rectangle (0.003295898438,-0.55);
\fill[fair] (0.003295898438,-0.8500000000000001) rectangle (0.003400000000,-0.55);
\fill[finite] (0.003400000000,-0.8500000000000001) rectangle (0.003417968750,-0.55);
\fill[fair] (0.003417968750,-0.8500000000000001) rectangle (0.003525000000,-0.55);
\fill[finite] (0.003525000000,-0.8500000000000001) rectangle (0.003540039062,-0.55);
\fill[fair] (0.003540039062,-0.8500000000000001) rectangle (0.003650000000,-0.55);
\fill[finite] (0.003650000000,-0.8500000000000001) rectangle (0.003662109375,-0.55);
\fill[fair] (0.003662109375,-0.8500000000000001) rectangle (0.003775000000,-0.55);
\fill[finite] (0.003775000000,-0.8500000000000001) rectangle (0.003784179688,-0.55);
\fill[fair] (0.003784179688,-0.8500000000000001) rectangle (0.003875000000,-0.55);
\fill[finite] (0.003875000000,-0.8500000000000001) rectangle (0.003906250000,-0.55);
\fill[fair] (0.003906250000,-0.8500000000000001) rectangle (0.004000000000,-0.55);
\fill[finite] (0.004000000000,-0.8500000000000001) rectangle (0.004028320312,-0.55);
\fill[fair] (0.004028320312,-0.8500000000000001) rectangle (0.004125000000,-0.55);
\fill[finite] (0.004125000000,-0.8500000000000001) rectangle (0.004150390625,-0.55);
\fill[fair] (0.004150390625,-0.8500000000000001) rectangle (0.004250000000,-0.55);
\fill[finite] (0.004250000000,-0.8500000000000001) rectangle (0.004272460938,-0.55);
\fill[fair] (0.004272460938,-0.8500000000000001) rectangle (0.004375000000,-0.55);
\fill[finite] (0.004375000000,-0.8500000000000001) rectangle (0.004394531250,-0.55);
\fill[fair] (0.004394531250,-0.8500000000000001) rectangle (0.004500000000,-0.55);
\fill[finite] (0.004500000000,-0.8500000000000001) rectangle (0.004516601562,-0.55);
\fill[fair] (0.004516601562,-0.8500000000000001) rectangle (0.004625000000,-0.55);
\fill[finite] (0.004625000000,-0.8500000000000001) rectangle (0.004638671875,-0.55);
\fill[fair] (0.004638671875,-0.8500000000000001) rectangle (0.004725000000,-0.55);
\fill[finite] (0.004725000000,-0.8500000000000001) rectangle (0.004760742188,-0.55);
\fill[fair] (0.004760742188,-0.8500000000000001) rectangle (0.004850000000,-0.55);
\fill[finite] (0.004850000000,-0.8500000000000001) rectangle (0.004882812500,-0.55);
\fill[fair] (0.004882812500,-0.8500000000000001) rectangle (0.004975000000,-0.55);
\fill[finite] (0.004975000000,-0.8500000000000001) rectangle (0.005004882812,-0.55);
\fill[fair] (0.005004882812,-0.8500000000000001) rectangle (0.005100000000,-0.55);
\fill[finite] (0.005100000000,-0.8500000000000001) rectangle (0.005126953125,-0.55);
\fill[fair] (0.005126953125,-0.8500000000000001) rectangle (0.005225000000,-0.55);
\fill[finite] (0.005225000000,-0.8500000000000001) rectangle (0.005249023438,-0.55);
\fill[fair] (0.005249023438,-0.8500000000000001) rectangle (0.005350000000,-0.55);
\fill[finite] (0.005350000000,-0.8500000000000001) rectangle (0.005371093750,-0.55);
\fill[fair] (0.005371093750,-0.8500000000000001) rectangle (0.005475000000,-0.55);
\fill[finite] (0.005475000000,-0.8500000000000001) rectangle (0.005493164062,-0.55);
\fill[fair] (0.005493164062,-0.8500000000000001) rectangle (0.005575000000,-0.55);
\fill[finite] (0.005575000000,-0.8500000000000001) rectangle (0.005615234375,-0.55);
\fill[fair] (0.005615234375,-0.8500000000000001) rectangle (0.005700000000,-0.55);
\fill[finite] (0.005700000000,-0.8500000000000001) rectangle (0.005737304688,-0.55);
\fill[fair] (0.005737304688,-0.8500000000000001) rectangle (0.005825000000,-0.55);
\fill[finite] (0.005825000000,-0.8500000000000001) rectangle (0.005859375000,-0.55);
\fill[fair] (0.005859375000,-0.8500000000000001) rectangle (0.005950000000,-0.55);
\fill[finite] (0.005950000000,-0.8500000000000001) rectangle (0.005981445312,-0.55);
\fill[fair] (0.005981445312,-0.8500000000000001) rectangle (0.006075000000,-0.55);
\fill[finite] (0.006075000000,-0.8500000000000001) rectangle (0.006103515625,-0.55);
\fill[fair] (0.006103515625,-0.8500000000000001) rectangle (0.006200000000,-0.55);
\fill[finite] (0.006200000000,-0.8500000000000001) rectangle (0.006225585938,-0.55);
\fill[fair] (0.006225585938,-0.8500000000000001) rectangle (0.006325000000,-0.55);
\fill[finite] (0.006325000000,-0.8500000000000001) rectangle (0.006347656250,-0.55);
\fill[fair] (0.006347656250,-0.8500000000000001) rectangle (0.006425000000,-0.55);
\fill[finite] (0.006425000000,-0.8500000000000001) rectangle (0.006469726562,-0.55);
\fill[fair] (0.006469726562,-0.8500000000000001) rectangle (0.006550000000,-0.55);
\fill[finite] (0.006550000000,-0.8500000000000001) rectangle (0.006591796875,-0.55);
\fill[fair] (0.006591796875,-0.8500000000000001) rectangle (0.006675000000,-0.55);
\fill[finite] (0.006675000000,-0.8500000000000001) rectangle (0.006713867188,-0.55);
\fill[fair] (0.006713867188,-0.8500000000000001) rectangle (0.006800000000,-0.55);
\fill[finite] (0.006800000000,-0.8500000000000001) rectangle (0.006835937500,-0.55);
\fill[fair] (0.006835937500,-0.8500000000000001) rectangle (0.006925000000,-0.55);
\fill[finite] (0.006925000000,-0.8500000000000001) rectangle (0.006958007812,-0.55);
\fill[fair] (0.006958007812,-0.8500000000000001) rectangle (0.007050000000,-0.55);
\fill[finite] (0.007050000000,-0.8500000000000001) rectangle (0.007080078125,-0.55);
\fill[fair] (0.007080078125,-0.8500000000000001) rectangle (0.007175000000,-0.55);
\fill[finite] (0.007175000000,-0.8500000000000001) rectangle (0.007202148438,-0.55);
\fill[fair] (0.007202148438,-0.8500000000000001) rectangle (0.007275000000,-0.55);
\fill[finite] (0.007275000000,-0.8500000000000001) rectangle (0.007324218750,-0.55);
\fill[fair] (0.007324218750,-0.8500000000000001) rectangle (0.007400000000,-0.55);
\fill[finite] (0.007400000000,-0.8500000000000001) rectangle (0.007446289062,-0.55);
\fill[fair] (0.007446289062,-0.8500000000000001) rectangle (0.007525000000,-0.55);
\fill[finite] (0.007525000000,-0.8500000000000001) rectangle (0.007568359375,-0.55);
\fill[fair] (0.007568359375,-0.8500000000000001) rectangle (0.007650000000,-0.55);
\fill[finite] (0.007650000000,-0.8500000000000001) rectangle (0.007690429688,-0.55);
\fill[fair] (0.007690429688,-0.8500000000000001) rectangle (0.007775000000,-0.55);
\fill[finite] (0.007775000000,-0.8500000000000001) rectangle (0.007812500000,-0.55);
\fill[fair] (0.007812500000,-0.8500000000000001) rectangle (0.007900000000,-0.55);
\fill[finite] (0.007900000000,-0.8500000000000001) rectangle (0.007934570312,-0.55);
\fill[fair] (0.007934570312,-0.8500000000000001) rectangle (0.008025000000,-0.55);
\fill[finite] (0.008025000000,-0.8500000000000001) rectangle (0.008056640625,-0.55);
\fill[fair] (0.008056640625,-0.8500000000000001) rectangle (0.008125000000,-0.55);
\fill[finite] (0.008125000000,-0.8500000000000001) rectangle (0.008178710938,-0.55);
\fill[fair] (0.008178710938,-0.8500000000000001) rectangle (0.008250000000,-0.55);
\fill[finite] (0.008250000000,-0.8500000000000001) rectangle (0.008300781250,-0.55);
\fill[fair] (0.008300781250,-0.8500000000000001) rectangle (0.008375000000,-0.55);
\fill[finite] (0.008375000000,-0.8500000000000001) rectangle (0.008422851562,-0.55);
\fill[fair] (0.008422851562,-0.8500000000000001) rectangle (0.008500000000,-0.55);
\fill[finite] (0.008500000000,-0.8500000000000001) rectangle (0.008544921875,-0.55);
\fill[fair] (0.008544921875,-0.8500000000000001) rectangle (0.008625000000,-0.55);
\fill[finite] (0.008625000000,-0.8500000000000001) rectangle (0.008666992188,-0.55);
\fill[fair] (0.008666992188,-0.8500000000000001) rectangle (0.008750000000,-0.55);
\fill[finite] (0.008750000000,-0.8500000000000001) rectangle (0.008789062500,-0.55);
\fill[fair] (0.008789062500,-0.8500000000000001) rectangle (0.008850000000,-0.55);
\fill[finite] (0.008850000000,-0.8500000000000001) rectangle (0.008911132812,-0.55);
\fill[fair] (0.008911132812,-0.8500000000000001) rectangle (0.008975000000,-0.55);
\fill[finite] (0.008975000000,-0.8500000000000001) rectangle (0.009033203125,-0.55);
\fill[fair] (0.009033203125,-0.8500000000000001) rectangle (0.009100000000,-0.55);
\fill[finite] (0.009100000000,-0.8500000000000001) rectangle (0.009155273438,-0.55);
\fill[fair] (0.009155273438,-0.8500000000000001) rectangle (0.009225000000,-0.55);
\fill[finite] (0.009225000000,-0.8500000000000001) rectangle (0.009277343750,-0.55);
\fill[fair] (0.009277343750,-0.8500000000000001) rectangle (0.009350000000,-0.55);
\fill[finite] (0.009350000000,-0.8500000000000001) rectangle (0.009399414062,-0.55);
\fill[fair] (0.009399414062,-0.8500000000000001) rectangle (0.009475000000,-0.55);
\fill[finite] (0.009475000000,-0.8500000000000001) rectangle (0.009521484375,-0.55);
\fill[fair] (0.009521484375,-0.8500000000000001) rectangle (0.009575000000,-0.55);
\fill[finite] (0.009575000000,-0.8500000000000001) rectangle (0.009643554688,-0.55);
\fill[fair] (0.009643554688,-0.8500000000000001) rectangle (0.009700000000,-0.55);
\fill[finite] (0.009700000000,-0.8500000000000001) rectangle (0.009765625000,-0.55);
\fill[fair] (0.009765625000,-0.8500000000000001) rectangle (0.009825000000,-0.55);
\fill[finite] (0.009825000000,-0.8500000000000001) rectangle (0.009887695312,-0.55);
\fill[fair] (0.009887695312,-0.8500000000000001) rectangle (0.009950000000,-0.55);
\fill[finite] (0.009950000000,-0.8500000000000001) rectangle (0.010009765625,-0.55);
\fill[fair] (0.010009765625,-0.8500000000000001) rectangle (0.010075000000,-0.55);
\fill[finite] (0.010075000000,-0.8500000000000001) rectangle (0.010131835938,-0.55);
\fill[fair] (0.010131835938,-0.8500000000000001) rectangle (0.010200000000,-0.55);
\fill[finite] (0.010200000000,-0.8500000000000001) rectangle (0.010253906250,-0.55);
\fill[fair] (0.010253906250,-0.8500000000000001) rectangle (0.010325000000,-0.55);
\fill[finite] (0.010325000000,-0.8500000000000001) rectangle (0.010375976562,-0.55);
\fill[fair] (0.010375976562,-0.8500000000000001) rectangle (0.010425000000,-0.55);
\fill[finite] (0.010425000000,-0.8500000000000001) rectangle (0.010498046875,-0.55);
\fill[fair] (0.010498046875,-0.8500000000000001) rectangle (0.010550000000,-0.55);
\fill[finite] (0.010550000000,-0.8500000000000001) rectangle (0.010620117188,-0.55);
\fill[fair] (0.010620117188,-0.8500000000000001) rectangle (0.010675000000,-0.55);
\fill[finite] (0.010675000000,-0.8500000000000001) rectangle (0.010742187500,-0.55);
\fill[fair] (0.010742187500,-0.8500000000000001) rectangle (0.010800000000,-0.55);
\fill[finite] (0.010800000000,-0.8500000000000001) rectangle (0.010864257812,-0.55);
\fill[fair] (0.010864257812,-0.8500000000000001) rectangle (0.010925000000,-0.55);
\fill[finite] (0.010925000000,-0.8500000000000001) rectangle (0.010986328125,-0.55);
\fill[fair] (0.010986328125,-0.8500000000000001) rectangle (0.011050000000,-0.55);
\fill[finite] (0.011050000000,-0.8500000000000001) rectangle (0.011108398438,-0.55);
\fill[fair] (0.011108398438,-0.8500000000000001) rectangle (0.011150000000,-0.55);
\fill[finite] (0.011150000000,-0.8500000000000001) rectangle (0.011230468750,-0.55);
\fill[fair] (0.011230468750,-0.8500000000000001) rectangle (0.011275000000,-0.55);
\fill[finite] (0.011275000000,-0.8500000000000001) rectangle (0.011352539062,-0.55);
\fill[fair] (0.011352539062,-0.8500000000000001) rectangle (0.011400000000,-0.55);
\fill[finite] (0.011400000000,-0.8500000000000001) rectangle (0.011474609375,-0.55);
\fill[fair] (0.011474609375,-0.8500000000000001) rectangle (0.011525000000,-0.55);
\fill[finite] (0.011525000000,-0.8500000000000001) rectangle (0.011596679688,-0.55);
\fill[fair] (0.011596679688,-0.8500000000000001) rectangle (0.011650000000,-0.55);
\fill[finite] (0.011650000000,-0.8500000000000001) rectangle (0.011718750000,-0.55);
\fill[fair] (0.011718750000,-0.8500000000000001) rectangle (0.011775000000,-0.55);
\fill[finite] (0.011775000000,-0.8500000000000001) rectangle (0.011840820312,-0.55);
\fill[fair] (0.011840820312,-0.8500000000000001) rectangle (0.011875000000,-0.55);
\fill[finite] (0.011875000000,-0.8500000000000001) rectangle (0.011962890625,-0.55);
\fill[fair] (0.011962890625,-0.8500000000000001) rectangle (0.012000000000,-0.55);
\fill[finite] (0.012000000000,-0.8500000000000001) rectangle (0.012084960938,-0.55);
\fill[fair] (0.012084960938,-0.8500000000000001) rectangle (0.012125000000,-0.55);
\fill[finite] (0.012125000000,-0.8500000000000001) rectangle (0.012207031250,-0.55);
\fill[fair] (0.012207031250,-0.8500000000000001) rectangle (0.012250000000,-0.55);
\fill[finite] (0.012250000000,-0.8500000000000001) rectangle (0.012329101562,-0.55);
\fill[fair] (0.012329101562,-0.8500000000000001) rectangle (0.012375000000,-0.55);
\fill[finite] (0.012375000000,-0.8500000000000001) rectangle (0.012451171875,-0.55);
\fill[fair] (0.012451171875,-0.8500000000000001) rectangle (0.012500000000,-0.55);
\fill[finite] (0.012500000000,-0.8500000000000001) rectangle (0.012573242188,-0.55);
\fill[fair] (0.012573242188,-0.8500000000000001) rectangle (0.012600000000,-0.55);
\fill[finite] (0.012600000000,-0.8500000000000001) rectangle (0.012695312500,-0.55);
\fill[fair] (0.012695312500,-0.8500000000000001) rectangle (0.012725000000,-0.55);
\fill[finite] (0.012725000000,-0.8500000000000001) rectangle (0.012817382812,-0.55);
\fill[fair] (0.012817382812,-0.8500000000000001) rectangle (0.012850000000,-0.55);
\fill[finite] (0.012850000000,-0.8500000000000001) rectangle (0.012939453125,-0.55);
\fill[fair] (0.012939453125,-0.8500000000000001) rectangle (0.012975000000,-0.55);
\fill[finite] (0.012975000000,-0.8500000000000001) rectangle (0.013061523438,-0.55);
\fill[fair] (0.013061523438,-0.8500000000000001) rectangle (0.013100000000,-0.55);
\fill[finite] (0.013100000000,-0.8500000000000001) rectangle (0.013183593750,-0.55);
\fill[fair] (0.013183593750,-0.8500000000000001) rectangle (0.013225000000,-0.55);
\fill[finite] (0.013225000000,-0.8500000000000001) rectangle (0.013305664062,-0.55);
\fill[fair] (0.013305664062,-0.8500000000000001) rectangle (0.013325000000,-0.55);
\fill[finite] (0.013325000000,-0.8500000000000001) rectangle (0.013427734375,-0.55);
\fill[fair] (0.013427734375,-0.8500000000000001) rectangle (0.013450000000,-0.55);
\fill[finite] (0.013450000000,-0.8500000000000001) rectangle (0.013549804688,-0.55);
\fill[fair] (0.013549804688,-0.8500000000000001) rectangle (0.013575000000,-0.55);
\fill[finite] (0.013575000000,-0.8500000000000001) rectangle (0.013671875000,-0.55);
\fill[fair] (0.013671875000,-0.8500000000000001) rectangle (0.013700000000,-0.55);
\fill[finite] (0.013700000000,-0.8500000000000001) rectangle (0.013793945312,-0.55);
\fill[fair] (0.013793945312,-0.8500000000000001) rectangle (0.013825000000,-0.55);
\fill[finite] (0.013825000000,-0.8500000000000001) rectangle (0.013916015625,-0.55);
\fill[fair] (0.013916015625,-0.8500000000000001) rectangle (0.013925000000,-0.55);
\fill[finite] (0.013925000000,-0.8500000000000001) rectangle (0.014038085938,-0.55);
\fill[fair] (0.014038085938,-0.8500000000000001) rectangle (0.014050000000,-0.55);
\fill[finite] (0.014050000000,-0.8500000000000001) rectangle (0.014160156250,-0.55);
\fill[fair] (0.014160156250,-0.8500000000000001) rectangle (0.014175000000,-0.55);
\fill[finite] (0.014175000000,-0.8500000000000001) rectangle (0.014282226562,-0.55);
\fill[fair] (0.014282226562,-0.8500000000000001) rectangle (0.014300000000,-0.55);
\fill[finite] (0.014300000000,-0.8500000000000001) rectangle (0.014404296875,-0.55);
\fill[fair] (0.014404296875,-0.8500000000000001) rectangle (0.014425000000,-0.55);
\fill[finite] (0.014425000000,-0.8500000000000001) rectangle (0.014526367188,-0.55);
\fill[fair] (0.014526367188,-0.8500000000000001) rectangle (0.014550000000,-0.55);
\fill[finite] (0.014550000000,-0.8500000000000001) rectangle (0.014648437500,-0.55);
\fill[fair] (0.014648437500,-0.8500000000000001) rectangle (0.014650000000,-0.55);
\fill[finite] (0.014650000000,-0.8500000000000001) rectangle (0.014770507812,-0.55);
\fill[fair] (0.014770507812,-0.8500000000000001) rectangle (0.014775000000,-0.55);
\fill[finite] (0.014775000000,-0.8500000000000001) rectangle (0.014892578125,-0.55);
\fill[fair] (0.014892578125,-0.8500000000000001) rectangle (0.014900000000,-0.55);
\fill[finite] (0.014900000000,-0.8500000000000001) rectangle (0.015014648438,-0.55);
\fill[fair] (0.015014648438,-0.8500000000000001) rectangle (0.015025000000,-0.55);
\fill[finite] (0.015025000000,-0.8500000000000001) rectangle (0.015136718750,-0.55);
\fill[fair] (0.015136718750,-0.8500000000000001) rectangle (0.015150000000,-0.55);
\fill[finite] (0.015150000000,-0.8500000000000001) rectangle (0.015258789062,-0.55);
\fill[fair] (0.015258789062,-0.8500000000000001) rectangle (0.015275000000,-0.55);
\fill[finite] (0.015275000000,-0.8500000000000001) rectangle (0.015747070312,-0.55);
\fill[fair] (0.015747070312,-0.8500000000000001) rectangle (0.015750000000,-0.55);
\fill[finite] (0.015750000000,-0.8500000000000001) rectangle (0.015869140625,-0.55);
\fill[fair] (0.015869140625,-0.8500000000000001) rectangle (0.015875000000,-0.55);
\fill[finite] (0.015875000000,-0.8500000000000001) rectangle (0.015991210938,-0.55);
\fill[fair] (0.015991210938,-0.8500000000000001) rectangle (0.016000000000,-0.55);
\fill[finite] (0.016000000000,-0.8500000000000001) rectangle (0.770000000000,-0.55);
\fill[renewal] (0.770000000000,-0.8500000000000001) rectangle (0.900000000000,-0.55);
\fill[poisson] (0.900000000000,-0.8500000000000001) rectangle (1.000000000000,-0.55);
\draw[black!30] (0,-0.8500000000000001) rectangle (1,-0.55);
\draw[black!50] (0,-1.0)--(1,-1.0);
\draw[black!50] (0.0,-1.0)--(0.0,-1.06) node[below,font=\sffamily\footnotesize]{0};
\draw[black!50] (0.2,-1.0)--(0.2,-1.06) node[below,font=\sffamily\footnotesize]{0.2};
\draw[black!50] (0.4,-1.0)--(0.4,-1.06) node[below,font=\sffamily\footnotesize]{0.4};
\draw[black!50] (0.6,-1.0)--(0.6,-1.06) node[below,font=\sffamily\footnotesize]{0.6};
\draw[black!50] (0.8,-1.0)--(0.8,-1.06) node[below,font=\sffamily\footnotesize]{0.8};
\draw[black!50] (1.0,-1.0)--(1.0,-1.06) node[below,font=\sffamily\footnotesize]{1};
\node[anchor=west,font=\sffamily\footnotesize] at(1.025,-1.08){$d$};
\node[anchor=west,font=\sffamily\bfseries\footnotesize] at(0,-2.25){Zoom near zero};
\node[anchor=east] at(-.015,-2.8){Upper};
\fill[erasure] (0.000000000000,-2.9499999999999997) rectangle (0.056666666667,-2.65);
\fill[runs] (0.056666666667,-2.9499999999999997) rectangle (0.781250000000,-2.65);
\fill[shared] (0.781250000000,-2.9499999999999997) rectangle (1.000000000000,-2.65);
\draw[black!30] (0,-2.9499999999999997) rectangle (1,-2.65);
\node[anchor=east] at(-.015,-3.35){Lower};
\fill[fair] (0.000000000000,-3.5) rectangle (0.007500000000,-3.2);
\fill[finite] (0.007500000000,-3.5) rectangle (0.008138020833,-3.2);
\fill[fair] (0.008138020833,-3.5) rectangle (0.011666666667,-3.2);
\fill[finite] (0.011666666667,-3.5) rectangle (0.012207031250,-3.2);
\fill[fair] (0.012207031250,-3.5) rectangle (0.015833333333,-3.2);
\fill[finite] (0.015833333333,-3.5) rectangle (0.016276041667,-3.2);
\fill[fair] (0.016276041667,-3.5) rectangle (0.020000000000,-3.2);
\fill[finite] (0.020000000000,-3.5) rectangle (0.020345052083,-3.2);
\fill[fair] (0.020345052083,-3.5) rectangle (0.024166666667,-3.2);
\fill[finite] (0.024166666667,-3.5) rectangle (0.024414062500,-3.2);
\fill[fair] (0.024414062500,-3.5) rectangle (0.028333333333,-3.2);
\fill[finite] (0.028333333333,-3.5) rectangle (0.028483072917,-3.2);
\fill[fair] (0.028483072917,-3.5) rectangle (0.032500000000,-3.2);
\fill[finite] (0.032500000000,-3.5) rectangle (0.032552083333,-3.2);
\fill[fair] (0.032552083333,-3.5) rectangle (0.040000000000,-3.2);
\fill[finite] (0.040000000000,-3.5) rectangle (0.040690104167,-3.2);
\fill[fair] (0.040690104167,-3.5) rectangle (0.044166666667,-3.2);
\fill[finite] (0.044166666667,-3.5) rectangle (0.044759114583,-3.2);
\fill[fair] (0.044759114583,-3.5) rectangle (0.048333333333,-3.2);
\fill[finite] (0.048333333333,-3.5) rectangle (0.048828125000,-3.2);
\fill[fair] (0.048828125000,-3.5) rectangle (0.052500000000,-3.2);
\fill[finite] (0.052500000000,-3.5) rectangle (0.052897135417,-3.2);
\fill[fair] (0.052897135417,-3.5) rectangle (0.056666666667,-3.2);
\fill[finite] (0.056666666667,-3.5) rectangle (0.056966145833,-3.2);
\fill[fair] (0.056966145833,-3.5) rectangle (0.060833333333,-3.2);
\fill[finite] (0.060833333333,-3.5) rectangle (0.061035156250,-3.2);
\fill[fair] (0.061035156250,-3.5) rectangle (0.065000000000,-3.2);
\fill[finite] (0.065000000000,-3.5) rectangle (0.065104166667,-3.2);
\fill[fair] (0.065104166667,-3.5) rectangle (0.068333333333,-3.2);
\fill[finite] (0.068333333333,-3.5) rectangle (0.069173177083,-3.2);
\fill[fair] (0.069173177083,-3.5) rectangle (0.072500000000,-3.2);
\fill[finite] (0.072500000000,-3.5) rectangle (0.073242187500,-3.2);
\fill[fair] (0.073242187500,-3.5) rectangle (0.076666666667,-3.2);
\fill[finite] (0.076666666667,-3.5) rectangle (0.077311197917,-3.2);
\fill[fair] (0.077311197917,-3.5) rectangle (0.080833333333,-3.2);
\fill[finite] (0.080833333333,-3.5) rectangle (0.081380208333,-3.2);
\fill[fair] (0.081380208333,-3.5) rectangle (0.085000000000,-3.2);
\fill[finite] (0.085000000000,-3.5) rectangle (0.085449218750,-3.2);
\fill[fair] (0.085449218750,-3.5) rectangle (0.089166666667,-3.2);
\fill[finite] (0.089166666667,-3.5) rectangle (0.089518229167,-3.2);
\fill[fair] (0.089518229167,-3.5) rectangle (0.093333333333,-3.2);
\fill[finite] (0.093333333333,-3.5) rectangle (0.093587239583,-3.2);
\fill[fair] (0.093587239583,-3.5) rectangle (0.097500000000,-3.2);
\fill[finite] (0.097500000000,-3.5) rectangle (0.097656250000,-3.2);
\fill[fair] (0.097656250000,-3.5) rectangle (0.100833333333,-3.2);
\fill[finite] (0.100833333333,-3.5) rectangle (0.101725260417,-3.2);
\fill[fair] (0.101725260417,-3.5) rectangle (0.105000000000,-3.2);
\fill[finite] (0.105000000000,-3.5) rectangle (0.105794270833,-3.2);
\fill[fair] (0.105794270833,-3.5) rectangle (0.109166666667,-3.2);
\fill[finite] (0.109166666667,-3.5) rectangle (0.109863281250,-3.2);
\fill[fair] (0.109863281250,-3.5) rectangle (0.113333333333,-3.2);
\fill[finite] (0.113333333333,-3.5) rectangle (0.113932291667,-3.2);
\fill[fair] (0.113932291667,-3.5) rectangle (0.117500000000,-3.2);
\fill[finite] (0.117500000000,-3.5) rectangle (0.118001302083,-3.2);
\fill[fair] (0.118001302083,-3.5) rectangle (0.121666666667,-3.2);
\fill[finite] (0.121666666667,-3.5) rectangle (0.122070312500,-3.2);
\fill[fair] (0.122070312500,-3.5) rectangle (0.125833333333,-3.2);
\fill[finite] (0.125833333333,-3.5) rectangle (0.126139322917,-3.2);
\fill[fair] (0.126139322917,-3.5) rectangle (0.129166666667,-3.2);
\fill[finite] (0.129166666667,-3.5) rectangle (0.130208333333,-3.2);
\fill[fair] (0.130208333333,-3.5) rectangle (0.133333333333,-3.2);
\fill[finite] (0.133333333333,-3.5) rectangle (0.134277343750,-3.2);
\fill[fair] (0.134277343750,-3.5) rectangle (0.137500000000,-3.2);
\fill[finite] (0.137500000000,-3.5) rectangle (0.138346354167,-3.2);
\fill[fair] (0.138346354167,-3.5) rectangle (0.141666666667,-3.2);
\fill[finite] (0.141666666667,-3.5) rectangle (0.142415364583,-3.2);
\fill[fair] (0.142415364583,-3.5) rectangle (0.145833333333,-3.2);
\fill[finite] (0.145833333333,-3.5) rectangle (0.146484375000,-3.2);
\fill[fair] (0.146484375000,-3.5) rectangle (0.150000000000,-3.2);
\fill[finite] (0.150000000000,-3.5) rectangle (0.150553385417,-3.2);
\fill[fair] (0.150553385417,-3.5) rectangle (0.154166666667,-3.2);
\fill[finite] (0.154166666667,-3.5) rectangle (0.154622395833,-3.2);
\fill[fair] (0.154622395833,-3.5) rectangle (0.157500000000,-3.2);
\fill[finite] (0.157500000000,-3.5) rectangle (0.158691406250,-3.2);
\fill[fair] (0.158691406250,-3.5) rectangle (0.161666666667,-3.2);
\fill[finite] (0.161666666667,-3.5) rectangle (0.162760416667,-3.2);
\fill[fair] (0.162760416667,-3.5) rectangle (0.165833333333,-3.2);
\fill[finite] (0.165833333333,-3.5) rectangle (0.166829427083,-3.2);
\fill[fair] (0.166829427083,-3.5) rectangle (0.170000000000,-3.2);
\fill[finite] (0.170000000000,-3.5) rectangle (0.170898437500,-3.2);
\fill[fair] (0.170898437500,-3.5) rectangle (0.174166666667,-3.2);
\fill[finite] (0.174166666667,-3.5) rectangle (0.174967447917,-3.2);
\fill[fair] (0.174967447917,-3.5) rectangle (0.178333333333,-3.2);
\fill[finite] (0.178333333333,-3.5) rectangle (0.179036458333,-3.2);
\fill[fair] (0.179036458333,-3.5) rectangle (0.182500000000,-3.2);
\fill[finite] (0.182500000000,-3.5) rectangle (0.183105468750,-3.2);
\fill[fair] (0.183105468750,-3.5) rectangle (0.185833333333,-3.2);
\fill[finite] (0.185833333333,-3.5) rectangle (0.187174479167,-3.2);
\fill[fair] (0.187174479167,-3.5) rectangle (0.190000000000,-3.2);
\fill[finite] (0.190000000000,-3.5) rectangle (0.191243489583,-3.2);
\fill[fair] (0.191243489583,-3.5) rectangle (0.194166666667,-3.2);
\fill[finite] (0.194166666667,-3.5) rectangle (0.195312500000,-3.2);
\fill[fair] (0.195312500000,-3.5) rectangle (0.198333333333,-3.2);
\fill[finite] (0.198333333333,-3.5) rectangle (0.199381510417,-3.2);
\fill[fair] (0.199381510417,-3.5) rectangle (0.202500000000,-3.2);
\fill[finite] (0.202500000000,-3.5) rectangle (0.203450520833,-3.2);
\fill[fair] (0.203450520833,-3.5) rectangle (0.206666666667,-3.2);
\fill[finite] (0.206666666667,-3.5) rectangle (0.207519531250,-3.2);
\fill[fair] (0.207519531250,-3.5) rectangle (0.210833333333,-3.2);
\fill[finite] (0.210833333333,-3.5) rectangle (0.211588541667,-3.2);
\fill[fair] (0.211588541667,-3.5) rectangle (0.214166666667,-3.2);
\fill[finite] (0.214166666667,-3.5) rectangle (0.215657552083,-3.2);
\fill[fair] (0.215657552083,-3.5) rectangle (0.218333333333,-3.2);
\fill[finite] (0.218333333333,-3.5) rectangle (0.219726562500,-3.2);
\fill[fair] (0.219726562500,-3.5) rectangle (0.222500000000,-3.2);
\fill[finite] (0.222500000000,-3.5) rectangle (0.223795572917,-3.2);
\fill[fair] (0.223795572917,-3.5) rectangle (0.226666666667,-3.2);
\fill[finite] (0.226666666667,-3.5) rectangle (0.227864583333,-3.2);
\fill[fair] (0.227864583333,-3.5) rectangle (0.230833333333,-3.2);
\fill[finite] (0.230833333333,-3.5) rectangle (0.231933593750,-3.2);
\fill[fair] (0.231933593750,-3.5) rectangle (0.235000000000,-3.2);
\fill[finite] (0.235000000000,-3.5) rectangle (0.236002604167,-3.2);
\fill[fair] (0.236002604167,-3.5) rectangle (0.239166666667,-3.2);
\fill[finite] (0.239166666667,-3.5) rectangle (0.240071614583,-3.2);
\fill[fair] (0.240071614583,-3.5) rectangle (0.242500000000,-3.2);
\fill[finite] (0.242500000000,-3.5) rectangle (0.244140625000,-3.2);
\fill[fair] (0.244140625000,-3.5) rectangle (0.246666666667,-3.2);
\fill[finite] (0.246666666667,-3.5) rectangle (0.248209635417,-3.2);
\fill[fair] (0.248209635417,-3.5) rectangle (0.250833333333,-3.2);
\fill[finite] (0.250833333333,-3.5) rectangle (0.252278645833,-3.2);
\fill[fair] (0.252278645833,-3.5) rectangle (0.255000000000,-3.2);
\fill[finite] (0.255000000000,-3.5) rectangle (0.256347656250,-3.2);
\fill[fair] (0.256347656250,-3.5) rectangle (0.259166666667,-3.2);
\fill[finite] (0.259166666667,-3.5) rectangle (0.260416666667,-3.2);
\fill[fair] (0.260416666667,-3.5) rectangle (0.263333333333,-3.2);
\fill[finite] (0.263333333333,-3.5) rectangle (0.264485677083,-3.2);
\fill[fair] (0.264485677083,-3.5) rectangle (0.267500000000,-3.2);
\fill[finite] (0.267500000000,-3.5) rectangle (0.268554687500,-3.2);
\fill[fair] (0.268554687500,-3.5) rectangle (0.270833333333,-3.2);
\fill[finite] (0.270833333333,-3.5) rectangle (0.272623697917,-3.2);
\fill[fair] (0.272623697917,-3.5) rectangle (0.275000000000,-3.2);
\fill[finite] (0.275000000000,-3.5) rectangle (0.276692708333,-3.2);
\fill[fair] (0.276692708333,-3.5) rectangle (0.279166666667,-3.2);
\fill[finite] (0.279166666667,-3.5) rectangle (0.280761718750,-3.2);
\fill[fair] (0.280761718750,-3.5) rectangle (0.283333333333,-3.2);
\fill[finite] (0.283333333333,-3.5) rectangle (0.284830729167,-3.2);
\fill[fair] (0.284830729167,-3.5) rectangle (0.287500000000,-3.2);
\fill[finite] (0.287500000000,-3.5) rectangle (0.288899739583,-3.2);
\fill[fair] (0.288899739583,-3.5) rectangle (0.291666666667,-3.2);
\fill[finite] (0.291666666667,-3.5) rectangle (0.292968750000,-3.2);
\fill[fair] (0.292968750000,-3.5) rectangle (0.295000000000,-3.2);
\fill[finite] (0.295000000000,-3.5) rectangle (0.297037760417,-3.2);
\fill[fair] (0.297037760417,-3.5) rectangle (0.299166666667,-3.2);
\fill[finite] (0.299166666667,-3.5) rectangle (0.301106770833,-3.2);
\fill[fair] (0.301106770833,-3.5) rectangle (0.303333333333,-3.2);
\fill[finite] (0.303333333333,-3.5) rectangle (0.305175781250,-3.2);
\fill[fair] (0.305175781250,-3.5) rectangle (0.307500000000,-3.2);
\fill[finite] (0.307500000000,-3.5) rectangle (0.309244791667,-3.2);
\fill[fair] (0.309244791667,-3.5) rectangle (0.311666666667,-3.2);
\fill[finite] (0.311666666667,-3.5) rectangle (0.313313802083,-3.2);
\fill[fair] (0.313313802083,-3.5) rectangle (0.315833333333,-3.2);
\fill[finite] (0.315833333333,-3.5) rectangle (0.317382812500,-3.2);
\fill[fair] (0.317382812500,-3.5) rectangle (0.319166666667,-3.2);
\fill[finite] (0.319166666667,-3.5) rectangle (0.321451822917,-3.2);
\fill[fair] (0.321451822917,-3.5) rectangle (0.323333333333,-3.2);
\fill[finite] (0.323333333333,-3.5) rectangle (0.325520833333,-3.2);
\fill[fair] (0.325520833333,-3.5) rectangle (0.327500000000,-3.2);
\fill[finite] (0.327500000000,-3.5) rectangle (0.329589843750,-3.2);
\fill[fair] (0.329589843750,-3.5) rectangle (0.331666666667,-3.2);
\fill[finite] (0.331666666667,-3.5) rectangle (0.333658854167,-3.2);
\fill[fair] (0.333658854167,-3.5) rectangle (0.335833333333,-3.2);
\fill[finite] (0.335833333333,-3.5) rectangle (0.337727864583,-3.2);
\fill[fair] (0.337727864583,-3.5) rectangle (0.340000000000,-3.2);
\fill[finite] (0.340000000000,-3.5) rectangle (0.341796875000,-3.2);
\fill[fair] (0.341796875000,-3.5) rectangle (0.344166666667,-3.2);
\fill[finite] (0.344166666667,-3.5) rectangle (0.345865885417,-3.2);
\fill[fair] (0.345865885417,-3.5) rectangle (0.347500000000,-3.2);
\fill[finite] (0.347500000000,-3.5) rectangle (0.349934895833,-3.2);
\fill[fair] (0.349934895833,-3.5) rectangle (0.351666666667,-3.2);
\fill[finite] (0.351666666667,-3.5) rectangle (0.354003906250,-3.2);
\fill[fair] (0.354003906250,-3.5) rectangle (0.355833333333,-3.2);
\fill[finite] (0.355833333333,-3.5) rectangle (0.358072916667,-3.2);
\fill[fair] (0.358072916667,-3.5) rectangle (0.360000000000,-3.2);
\fill[finite] (0.360000000000,-3.5) rectangle (0.362141927083,-3.2);
\fill[fair] (0.362141927083,-3.5) rectangle (0.364166666667,-3.2);
\fill[finite] (0.364166666667,-3.5) rectangle (0.366210937500,-3.2);
\fill[fair] (0.366210937500,-3.5) rectangle (0.368333333333,-3.2);
\fill[finite] (0.368333333333,-3.5) rectangle (0.370279947917,-3.2);
\fill[fair] (0.370279947917,-3.5) rectangle (0.371666666667,-3.2);
\fill[finite] (0.371666666667,-3.5) rectangle (0.374348958333,-3.2);
\fill[fair] (0.374348958333,-3.5) rectangle (0.375833333333,-3.2);
\fill[finite] (0.375833333333,-3.5) rectangle (0.378417968750,-3.2);
\fill[fair] (0.378417968750,-3.5) rectangle (0.380000000000,-3.2);
\fill[finite] (0.380000000000,-3.5) rectangle (0.382486979167,-3.2);
\fill[fair] (0.382486979167,-3.5) rectangle (0.384166666667,-3.2);
\fill[finite] (0.384166666667,-3.5) rectangle (0.386555989583,-3.2);
\fill[fair] (0.386555989583,-3.5) rectangle (0.388333333333,-3.2);
\fill[finite] (0.388333333333,-3.5) rectangle (0.390625000000,-3.2);
\fill[fair] (0.390625000000,-3.5) rectangle (0.392500000000,-3.2);
\fill[finite] (0.392500000000,-3.5) rectangle (0.394694010417,-3.2);
\fill[fair] (0.394694010417,-3.5) rectangle (0.395833333333,-3.2);
\fill[finite] (0.395833333333,-3.5) rectangle (0.398763020833,-3.2);
\fill[fair] (0.398763020833,-3.5) rectangle (0.400000000000,-3.2);
\fill[finite] (0.400000000000,-3.5) rectangle (0.402832031250,-3.2);
\fill[fair] (0.402832031250,-3.5) rectangle (0.404166666667,-3.2);
\fill[finite] (0.404166666667,-3.5) rectangle (0.406901041667,-3.2);
\fill[fair] (0.406901041667,-3.5) rectangle (0.408333333333,-3.2);
\fill[finite] (0.408333333333,-3.5) rectangle (0.410970052083,-3.2);
\fill[fair] (0.410970052083,-3.5) rectangle (0.412500000000,-3.2);
\fill[finite] (0.412500000000,-3.5) rectangle (0.415039062500,-3.2);
\fill[fair] (0.415039062500,-3.5) rectangle (0.416666666667,-3.2);
\fill[finite] (0.416666666667,-3.5) rectangle (0.419108072917,-3.2);
\fill[fair] (0.419108072917,-3.5) rectangle (0.420000000000,-3.2);
\fill[finite] (0.420000000000,-3.5) rectangle (0.423177083333,-3.2);
\fill[fair] (0.423177083333,-3.5) rectangle (0.424166666667,-3.2);
\fill[finite] (0.424166666667,-3.5) rectangle (0.427246093750,-3.2);
\fill[fair] (0.427246093750,-3.5) rectangle (0.428333333333,-3.2);
\fill[finite] (0.428333333333,-3.5) rectangle (0.431315104167,-3.2);
\fill[fair] (0.431315104167,-3.5) rectangle (0.432500000000,-3.2);
\fill[finite] (0.432500000000,-3.5) rectangle (0.435384114583,-3.2);
\fill[fair] (0.435384114583,-3.5) rectangle (0.436666666667,-3.2);
\fill[finite] (0.436666666667,-3.5) rectangle (0.439453125000,-3.2);
\fill[fair] (0.439453125000,-3.5) rectangle (0.440833333333,-3.2);
\fill[finite] (0.440833333333,-3.5) rectangle (0.443522135417,-3.2);
\fill[fair] (0.443522135417,-3.5) rectangle (0.444166666667,-3.2);
\fill[finite] (0.444166666667,-3.5) rectangle (0.447591145833,-3.2);
\fill[fair] (0.447591145833,-3.5) rectangle (0.448333333333,-3.2);
\fill[finite] (0.448333333333,-3.5) rectangle (0.451660156250,-3.2);
\fill[fair] (0.451660156250,-3.5) rectangle (0.452500000000,-3.2);
\fill[finite] (0.452500000000,-3.5) rectangle (0.455729166667,-3.2);
\fill[fair] (0.455729166667,-3.5) rectangle (0.456666666667,-3.2);
\fill[finite] (0.456666666667,-3.5) rectangle (0.459798177083,-3.2);
\fill[fair] (0.459798177083,-3.5) rectangle (0.460833333333,-3.2);
\fill[finite] (0.460833333333,-3.5) rectangle (0.463867187500,-3.2);
\fill[fair] (0.463867187500,-3.5) rectangle (0.464166666667,-3.2);
\fill[finite] (0.464166666667,-3.5) rectangle (0.467936197917,-3.2);
\fill[fair] (0.467936197917,-3.5) rectangle (0.468333333333,-3.2);
\fill[finite] (0.468333333333,-3.5) rectangle (0.472005208333,-3.2);
\fill[fair] (0.472005208333,-3.5) rectangle (0.472500000000,-3.2);
\fill[finite] (0.472500000000,-3.5) rectangle (0.476074218750,-3.2);
\fill[fair] (0.476074218750,-3.5) rectangle (0.476666666667,-3.2);
\fill[finite] (0.476666666667,-3.5) rectangle (0.480143229167,-3.2);
\fill[fair] (0.480143229167,-3.5) rectangle (0.480833333333,-3.2);
\fill[finite] (0.480833333333,-3.5) rectangle (0.484212239583,-3.2);
\fill[fair] (0.484212239583,-3.5) rectangle (0.485000000000,-3.2);
\fill[finite] (0.485000000000,-3.5) rectangle (0.488281250000,-3.2);
\fill[fair] (0.488281250000,-3.5) rectangle (0.488333333333,-3.2);
\fill[finite] (0.488333333333,-3.5) rectangle (0.492350260417,-3.2);
\fill[fair] (0.492350260417,-3.5) rectangle (0.492500000000,-3.2);
\fill[finite] (0.492500000000,-3.5) rectangle (0.496419270833,-3.2);
\fill[fair] (0.496419270833,-3.5) rectangle (0.496666666667,-3.2);
\fill[finite] (0.496666666667,-3.5) rectangle (0.500488281250,-3.2);
\fill[fair] (0.500488281250,-3.5) rectangle (0.500833333333,-3.2);
\fill[finite] (0.500833333333,-3.5) rectangle (0.504557291667,-3.2);
\fill[fair] (0.504557291667,-3.5) rectangle (0.505000000000,-3.2);
\fill[finite] (0.505000000000,-3.5) rectangle (0.508626302083,-3.2);
\fill[fair] (0.508626302083,-3.5) rectangle (0.509166666667,-3.2);
\fill[finite] (0.509166666667,-3.5) rectangle (0.524902343750,-3.2);
\fill[fair] (0.524902343750,-3.5) rectangle (0.525000000000,-3.2);
\fill[finite] (0.525000000000,-3.5) rectangle (0.528971354167,-3.2);
\fill[fair] (0.528971354167,-3.5) rectangle (0.529166666667,-3.2);
\fill[finite] (0.529166666667,-3.5) rectangle (0.533040364583,-3.2);
\fill[fair] (0.533040364583,-3.5) rectangle (0.533333333333,-3.2);
\fill[finite] (0.533333333333,-3.5) rectangle (1.000000000000,-3.2);
\draw[black!30] (0,-3.5) rectangle (1,-3.2);
\draw[black!50] (0,-3.65)--(1,-3.65);
\draw[black!50] (0.0,-3.65)--(0.0,-3.71) node[below,font=\sffamily\footnotesize]{0};
\draw[black!50] (0.3333333333333333,-3.65)--(0.3333333333333333,-3.71) node[below,font=\sffamily\footnotesize]{0.01};
\draw[black!50] (0.6666666666666666,-3.65)--(0.6666666666666666,-3.71) node[below,font=\sffamily\footnotesize]{0.02};
\draw[black!50] (1.0,-3.65)--(1.0,-3.71) node[below,font=\sffamily\footnotesize]{0.03};
\node[anchor=west,font=\sffamily\footnotesize] at(1.025,-3.73){$d$};
\fill[erasure] (0.0,-4.67) rectangle (0.022,-4.45);
\node[anchor=west,font=\sffamily\footnotesize] at(0.03,-4.55){Erasure upper, Lemma~\ref{lem:erasure}};
\fill[fair] (0.53,-4.67) rectangle (0.552,-4.45);
\node[anchor=west,font=\sffamily\footnotesize] at(0.56,-4.55){Fair-input lower, \S\ref{sec:fair-input}};
\fill[runs] (0.0,-5.09) rectangle (0.022,-4.87);
\node[anchor=west,font=\sffamily\footnotesize] at(0.03,-4.97){Run-length upper, \S\ref{sec:small-deletion}};
\fill[finite] (0.53,-5.09) rectangle (0.552,-4.87);
\node[anchor=west,font=\sffamily\footnotesize] at(0.56,-4.97){Finite-state lower, \S\ref{sec:source}};
\fill[shared] (0.0,-5.51) rectangle (0.022,-5.29);
\node[anchor=west,font=\sffamily\footnotesize] at(0.03,-5.39){Shared-tail upper, \S\ref{sec:shared}};
\fill[renewal] (0.53,-5.51) rectangle (0.552,-5.29);
\node[anchor=west,font=\sffamily\footnotesize] at(0.56,-5.39){Renewal lower, \S\ref{sec:renewal}};
\fill[posterior] (0.0,-5.93) rectangle (0.022,-5.71);
\node[anchor=west,font=\sffamily\footnotesize] at(0.03,-5.81){Posterior upper, \S\ref{sec:posterior}};
\fill[poisson] (0.53,-5.93) rectangle (0.552,-5.71);
\node[anchor=west,font=\sffamily\footnotesize] at(0.56,-5.81){Poisson-repeat lower, \S\ref{sec:renewal-poisson}};
\end{tikzpicture}

%% file: sections/transport.tex
\section{Extension from finite bounds to all deletion probabilities}\label{sec:transport}
\subsection{A comparison between two deletion probabilities}
Suppose a complete numerical check has proved a capacity bound at a deletion probability $a$. A list of such pointwise bounds leaves a second question: what follows at a probability $d$ where no bound was evaluated? The channel-mixture inequality of \citet{RD15} answers this question. We first prove the comparison, then use it to bound capacity on closed intervals and check that these intervals cover $[0,1]$. Choosing which parameters to evaluate belongs to the numerical search; extending a proved bound between parameters is the analytic step developed here.

Fix $a\le d<1$. Select each input position independently with probability $(1-d)/(1-a)$, and send the selected subsequence through an independent deletion channel with probability $a$. Each original bit reaches the receiver with probability $1-d$, exactly as in the $d$-deletion channel. Now also tell the receiver which positions were selected in the first stage. This supplies additional information and therefore gives an upper bound on the mutual information of the original channel. The number of selected positions supplies the multiplicative factor in the following inequality.
\begin{lemma}[Normalized-capacity transport]\label{lem:transport}
For $0\le a\le d<1$,
\begin{equation}\label{eq:transport}
\boxed{\quad C(d)\le\frac{1-d}{1-a}C(a).\quad}
\end{equation}
Thus $C(d)/(1-d)$ is nonincreasing on $[0,1)$.
\end{lemma}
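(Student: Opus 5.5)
We will prove \eqref{eq:transport} by the two-stage channel decomposition described just before the statement, working at the level of finite-block mutual information and then passing to the Dobrushin limit \eqref{eq:capacity-limit}. The case $a=d$ is trivial, so fix $a<d<1$ and put $s=(1-d)/(1-a)\in(0,1]$.

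\emph{Step 1: realize the $d$-channel as a cascade.} Let $S_1^n$ be i.i.d.\ Bernoulli$(s)$ selection indicators, independent of $X_1^n$. Send the selected subsequence $X_S$ through an independent $a$-deletion channel, producing $Y'$. Each position then survives independently with probability $s(1-a)=1-d$. Hence $(X_1^n,Y')$ has exactly the joint law of $(X_1^n,Y_n)$ for the $d$-deletion channel.

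\emph{Step 2: reveal the selection.} By data processing (\cref{lem:data-processing}) and independence of $S$ from $X$,
\[
I(X_1^n;Y_n)=I(X_1^n;Y')\le I(X_1^n;Y',S)=I(X_1^n;Y'\mid S).
\]
Given $S=s_1^n$ with $k=|s_1^n|$ selected positions, $Y'$ depends on $X_1^n$ only through the $k$ selected bits. The Markov chain $X_1^n\to X_S\to Y'$ then gives $I(X_1^n;Y'\mid S=s_1^n)\le kC_k(a)$. This holds for any conditional input law on those $k$ positions, by the definition of $C_k$.

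\emph{Step 3: average and take limits.} Combining the two steps,
\[
\tfrac1nI(X_1^n;Y_n)\le\tfrac1n\E[KC_K(a)],\qquad K\sim\operatorname{Bin}(n,s).
\]
This step needs the most care. We need $kC_k(a)\le kC(a)+o(k)$ uniformly, so that $\E[KC_K(a)]\le ns\,C(a)+o(n)$. We get this from \cref{lem:block}, which gives $C_k(a)\le C(a)+\log_2(k+1)/k$; the Fekete-type comparison underlying that lemma provides the same bound. The result is $kC_k(a)\le kC(a)+\log_2(k+1)$, and $\E\log_2(K+1)=O(\log n)$. Maximizing over input laws, dividing by $n$, and letting $n\to\infty$ then gives $C(d)\le sC(a)$, which is \eqref{eq:transport}.

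If \cref{lem:block} only supplies the lower direction $C_k-\log_2(k+1)/k\le C$, we will still obtain what we need, because that direction is exactly the inequality used above. Monotonicity of $C(d)/(1-d)$ follows by dividing \eqref{eq:transport} by $1-d>0$.
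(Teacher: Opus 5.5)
Your proof is correct and follows the paper's argument. It uses the same selection-then-$a$-deletion cascade, reveals the selection mask to reach $I(X_1^n;Y_n)\le\E[KC_K(a)]$, and divides by $1-d$ to get the monotonicity statement. The only difference is how you close the limit: the paper fixes $\eta>0$, picks $N_0$ with $C_N(a)\le C(a)+\eta$ for $N\ge N_0$, and bounds the short part by $C_N(a)\le1$, whereas you invoke the lower direction of \cref{lem:block}, namely $C_k(a)\le C(a)+\log_2(k+1)/k$, which does not itself rely on transport and so is not circular, and thereby obtain the explicit finite-block bound $C_n(d)\le\frac{1-d}{1-a}C(a)+\frac{\log_2(n+1)}{n}$.
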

\begin{proof}
Set $\lambda=(1-d)/(1-a)$. Since $a\le d<1$, we have $0<\lambda\le1$. Let $X_1^n$ be the transmitted input. Let $M_1,\ldots,M_n$ be independent indicators of first-stage selection, each equal to one with probability $\lambda$ and independent of $X_1^n$. Pass the selected input subsequence through an independent deletion channel with parameter $a$. An input bit appears in the final trace if it is selected and is not deleted in the second stage. This event has probability $\lambda(1-a)=1-d$. The two random indicators belonging to different input positions are independent, so the resulting conditional distribution of the trace is exactly that of the $d$-deletion channel.

Let $M=(M_1,\ldots,M_n)$ record the selected positions and let $Y$ be the final trace. The receiver is given $M$, not the values of the selected input bits. By \cref{lem:side-information}, adjoining $M$ cannot decrease mutual information. Independence gives $I(X_1^n;M)=0$, so the information chain rule gives
\[
I(X_1^n;Y)\le I(X_1^n;Y,M)=I(X_1^n;Y\mid M).
\]
Fix a mask selecting $N$ positions and call the corresponding input subsequence $S$. The conditional channel from $X_1^n$ to $Y$ depends only on $S$. Indeed, the unselected input positions cannot affect any received symbol. The information chain rule therefore gives $I(X_1^n;Y\mid M=m)=I(S;Y\mid M=m)$. The marginal law of $S$ may have arbitrary dependence, but the maximum in the definition of $C_N(a)$ covers every length-$N$ law. Thus this conditional information is at most $NC_N(a)$. Averaging over the masks gives
\[
I(X_1^n;Y)\le\E[NC_N(a)],\qquad N\sim\operatorname{Bin}(n,\lambda).
\]
When $N=0$, the trace is empty and its information is zero; we set the product $NC_N(a)$ to zero without defining $C_0$.

The bound still contains $C_N(a)$ at the random selected length $N$, whereas the desired result contains the limiting capacity $C(a)$. We control all small values of $N$ by a bounded remainder and all large values by convergence to capacity. Fix $\eta>0$. Since $C_N(a)\to C(a)$, there is an integer $N_0$ such that $C_N(a)\le C(a)+\eta$ whenever $N\ge N_0$. A binary input carries at most $N$ bits, so $0\le C_N(a)\le1$ for all positive $N$. Split the expectation according to whether $N<N_0$. The short part is at most $N_0$; the long part is at most $(C(a)+\eta)\E N$. As $\E N=n\lambda$, this gives
\[
\frac1n I(X_1^n;Y)\le\frac{N_0}{n}+\lambda\bigl(C(a)+\eta\bigr).
\]
Neither $N_0/n$ nor $\lambda(C(a)+\eta)$ depends on the chosen length-$n$ input distribution. Maximizing the left side therefore gives the same upper bound on $C_n(d)$. With $\eta$ fixed, $N_0$ is fixed, so $N_0/n$ tends to zero as $n\to\infty$. The capacity limit then gives $C(d)\le\lambda(C(a)+\eta)$. This holds for every positive $\eta$; letting $\eta\downarrow0$ proves $C(d)\le\lambda C(a)$.
\end{proof}
An upper bound at $a$ therefore extends to every $d\ge a$: substitute that upper bound for $C(a)$ in \eqref{eq:transport}. For a lower bound, fix $d\le b<1$ and apply the lemma with $d$ as its smaller parameter. It gives $C(b)\le(1-b)C(d)/(1-d)$. Multiplying by the positive factor $(1-d)/(1-b)$ shows that $C(d)\ge(1-d)C(b)/(1-b)$. Replacing $C(b)$ by a lower bound preserves this direction. Thus upper bounds extend right and lower bounds extend left.

\begin{examplebox}[title={Two point bounds control a whole interval}]
For a small numerical example, weaken two of the certified point bounds to $C(1/5)\le2/5$ and $C(1/2)\ge1/10$. The upper point lies to the left of $[3/10,2/5]$, and the lower point lies to its right. Dividing each bound by its survival probability gives the constant ratios
\[
\frac{2/5}{1-1/5}=\frac12,\qquad
\frac{1/10}{1-1/2}=\frac15.
\]
The two transport directions just proved therefore give
\[
\frac{1-d}{5}\le C(d)\le\frac{1-d}{2}
\qquad\text{for every }d\in[3/10,2/5].
\]
At the left endpoint the interval is $[7/50,7/20]$; at the right endpoint it is $[3/25,3/10]$. Subtracting lower from upper gives width $3(1-d)/10$, which decreases as $d$ increases. Its largest value is thus $21/100$, attained on the left. Half that width bounds the midpoint error everywhere in the interval. These deliberately loose numbers illustrate the comparison; the production anchors give much narrower intervals. No interpolation assumption about capacity is used.
\end{examplebox}

\subsection{Why the plotted lower bound has small downward steps}\label{sec:certificate-steps}
The lower bound is assembled from pointwise certificates with one-sided validity. A bound proved at $b$ can be transported to $d\le b$. Immediately to the right of $b$, that particular transport rule no longer applies. The strongest remaining bound may be smaller. This creates steps in the certificate even when all its components were evaluated accurately.
\begin{lemma}[Steps caused by expiring point certificates]\label{lem:certificate-steps}
Suppose $\ell_j\le C(b_j)$ are finitely many lower bounds with $b_j<1$, and $c(1-d)\le C(d)$ holds throughout the parameter range. For $0<d<1$, set
\[
B(d)=\max\left(\{c\}\cup
\left\{\frac{\ell_j}{1-b_j}:b_j\ge d\right\}\right),
\qquad L(d)=(1-d)B(d).
\]
Then $L(d)\le C(d)$, and $B$ is a nonincreasing step function. At an interior parameter $t$,
\[
L(t^+)-L(t^-)=(1-t)\bigl[B(t^+)-B(t^-)\bigr]\le0.
\]
At the point itself, $L(t)=L(t^-)$.
\end{lemma}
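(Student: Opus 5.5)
The plan is to prove the three assertions in turn: validity, monotonicity of $B$, and the jump formula together with left-continuity. Each follows from \cref{lem:transport} and the definition of $B$ as a maximum over a set that shrinks as $d$ grows.

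\emph{Validity.} Fix $0<d<1$. The maximum defining $B(d)$ is over a finite nonempty set, since $c$ is always present, so it is attained. If it is attained at $c$, then $L(d)=c(1-d)\le C(d)$ by hypothesis. Otherwise it is attained at some $j$ with $b_j\ge d$. Apply \cref{lem:transport} with smaller parameter $d$ and larger parameter $b_j<1$. This gives $C(b_j)\le(1-b_j)C(d)/(1-d)$. Multiplying by $(1-d)/(1-b_j)>0$ and inserting $\ell_j\le C(b_j)$ yields $(1-d)\ell_j/(1-b_j)\le C(d)$, which is exactly $L(d)\le C(d)$ in this case.

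\emph{Step structure.} If $d\le d'$, every index admissible at $d'$ (that is, $b_j\ge d'$) is also admissible at $d$. So the set maximized at $d$ contains the set maximized at $d'$, and $B(d)\ge B(d')$. The admissible set depends on $d$ only through which of the finitely many thresholds $b_j$ are at least $d$. Hence $B$ is constant on each open interval between consecutive distinct values of $b_j$, so $B$ is a nonincreasing step function.

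\emph{Jumps and the value at the point.} At an interior $t$, take a left neighborhood $(t-\epsilon,t)$ and a right neighborhood $(t,t+\epsilon)$ that contain no $b_j$. For $d$ in the left neighborhood, the admissible set is $\{j:b_j\ge t\}$, which is also the admissible set at $t$ itself. Therefore $B(t^-)=B(t)$. For $d$ in the right neighborhood, the admissible set is $\{j:b_j>t\}$, so $B(t^+)$ is the maximum over this possibly smaller set. Since $1-d$ is continuous, $L(t^\pm)=(1-t)B(t^\pm)$, and subtracting gives the displayed jump. The jump is nonpositive by monotonicity of $B$ and because $1-t>0$. Finally, $L(t)=(1-t)B(t)=(1-t)B(t^-)=L(t^-)$.

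No step is a serious obstacle. The only point needing care is the one-sided bookkeeping: the condition $b_j\ge d$ is non-strict, so a certificate at $b_j=t$ still counts at $t$ and expires only immediately to its right. This is what makes $L$ left-continuous rather than right-continuous.
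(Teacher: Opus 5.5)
Your proof is correct and follows the paper's argument: each eligible term is a valid normalized lower bound by \cref{lem:transport}; the eligible set shrinks as $d$ grows, which gives monotonicity and constancy between anchor parameters; multiplying the one-sided limits by $1-t$ gives the jump; and the non-strict condition $b_j\ge t$ gives the left-hand value at $t$. The main difference is that you write out the one-sided neighborhood bookkeeping explicitly, which the paper leaves implicit.
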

\begin{proof}
Every term in the maximum is a valid normalized lower bound by \cref{lem:transport}. Taking their maximum preserves the bound. As $d$ increases, the eligible set loses points and gains none. Its maximum cannot increase, and is constant between successive point parameters. Multiplication of the two one-sided limits by $1-t$ gives the stated jump. Equality $b_j=t$ remains eligible at $t$, which gives the left-hand value there.
\end{proof}
For upper points, the eligible set grows as $d$ increases, and one takes its minimum. The upper curve therefore takes its right-hand value at a step. Intersecting adjacent cell enclosures chooses exactly these stronger endpoint values.

The exact audit finds 4,419 downward lower steps and no upward lower steps. Every selected cell coefficient equals the strongest eligible coefficient among the saved bounds. The largest lower step occurs at $t=1393/4096$: the value at $t$ is $0.20488639494$, while the right-hand limit is $182099642271/896000000000$. Their difference is approximately $0.001650187048$ bit per input bit. Both neighboring certificates are stationary-source bounds. Merely changing source names is therefore not the explanation: the expiring domain of a point certificate is essential. A maximum of finitely many globally continuous rate curves would itself be continuous.

\Cref{fig:certificate-steps} contrasts the one-sided transport mechanism with its largest observed step.

\begin{figure}[!htbp]\centering
\includegraphics[width=\linewidth]{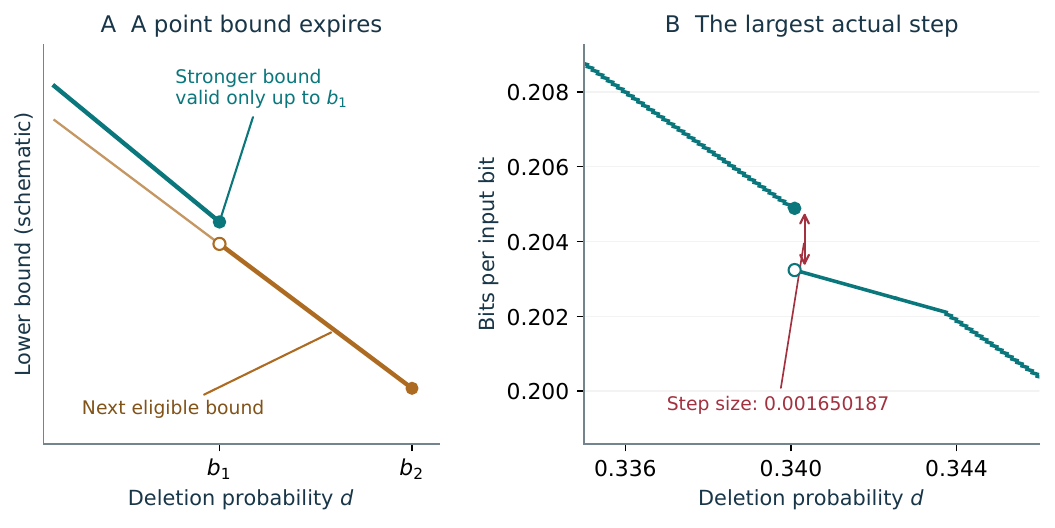}
\caption{The ruggedness of the reported lower bound comes from the finite parameter cover. Panel A is a schematic: a stronger lower point at $b_1$ extends only leftward; just to its right, the weaker certificate at $b_2$ remains available. Panel B plots exact linear cell segments from the published data near its largest downward step. A filled point is the bound at the parameter; an open point is the limit immediately to its right. Gaps are not interpolated. The $0.001650187$-bit step is small on the full vertical scale of \cref{fig:intro-capacity} but visible under magnification. These features describe the proved enclosure, not the shape of the unknown capacity function.}\label{fig:certificate-steps}
\end{figure}
Additional certified parameter points, or direct interval evaluation of a fixed source expression, can reduce these steps. Smoothing a plotted curve would not prove the smoothed values. The exact boundary audit and plotting data are supplied with the reproduction records.

\subsection{A single rational inequality verifies an entire cell}
A point bound will be called an \emph{anchor}. It consists of a parameter and a proved upper or lower value at that parameter. Consider a parameter interval $[x,y]$ and an upper anchor $C(a)\le u_a$ with $a\le x$ and $a<1$. The preceding comparison gives $C(d)\le(1-d)u_a/(1-a)$ throughout the interval. Similarly, a lower anchor $\ell_b\le C(b)$ with $b\ge y$ and $b<1$ gives $C(d)\ge(1-d)\ell_b/(1-b)$. Define the normalized coefficients to be these two constant ratios,
\[
A=\frac{u_a}{1-a},\qquad B=\frac{\ell_b}{1-b}.
\]
They are constant on the chosen interval; the only remaining dependence on $d$ is multiplication by $1-d$. The erasure bound in \cref{lem:erasure}, $C(d)\le1-d$, is the special upper coefficient $A=1$. A separately proved bound $C(d)\ge c(1-d)$ supplies the lower coefficient $B=c$ directly, including intervals ending at one. These analytic bounds avoid a division by $1-b$ at $b=1$.

\begin{certbox}[title={The exact continuum row}]
A cell record contains $(x,y,a,b,A,B)$ and certificate identifiers. Check the applicable anchor directions, exact coefficient identities, $A\ge B$, and
\begin{equation}\label{eq:continuum-row}
0\le x<y\le1,\qquad (1-x)(A-B)\le\frac{19}{1000}.
\end{equation}
Then throughout that closed cell,
\[
(1-d)B\le C(d)\le(1-d)A\le(1-d)B+\frac{19}{1000}.
\]
\end{certbox}

\begin{proof}[Proof of the continuum row]
The upper anchor and lower anchor have the directions established in \cref{lem:transport}; an analytic bound such as $C(d)\le1-d$ can be used directly. Thus the first two inequalities in the cell enclosure hold for every $d\in[x,y]$. Their difference is $(1-d)(A-B)$, which is nonincreasing in $d$ because $A\ge B$. Its maximum on $[x,y]$ is the left endpoint value in \eqref{eq:continuum-row}. For a cell ending at one, extend continuously and use $C(1)=0$. At zero, the noiseless channel has $C(0)=1$.
\end{proof}
The acceptance target $19/1000$ is fixed independently of the resulting cell values. After the enclosures are established, their exact largest width is computed and reported as $\Delta_\star$; it is smaller than that target. Every comparison is rational. Checking the cell's left endpoint is therefore an exact analytic maximization, not a mesh approximation.

\subsection{The complete finite verification algorithm}
Each numerical leaf is a record of a required numerical inequality and the objects used to check it. The assembly below takes already checked leaves as premises. It verifies their identities and how their capacity bounds are combined; it does not establish an unchecked numerical inequality merely by reading a recorded value.
\begin{certbox}[breakable=false,title={Verify the capacity approximation}]
\begin{enumerate}
\item Parse the main transport certificate with duplicate keys and approximate mathematical numbers rejected; component records use their separately specified schemas. Check each leaf identifier, parameter, rule, and content hash against the sealed inventory and completed replay.
\item For every point anchor, require its matching numerical bound and mathematical family. Require a separate transfer theorem for every universal lower coefficient.
\item Start at $x=0$. Read cells in their listed order. Require each left endpoint to equal $x$, require strict positive cell length, and check the continuum row with exact fractions. Set $x$ to its right endpoint.
\item Require $x=1$, the two endpoint identities, and equality between the recorded maximum and the maximum of all exact cell errors. Reject any unrecognized node type or missing dependency.
\end{enumerate}
\end{certbox}
This assembly terminates because its input lists 6,997 numerical nodes and 4,931 cells, and each fraction comparison uses finitely many integer operations. The checker for an individual numerical node must additionally enumerate every state specified by its analytic result and bound every omitted infinite sum. Those obligations were established before the node entered the assembly. The algorithm is a verification specification, not a claim that the search for such a certificate has a polynomial dependence on requested accuracy.

\subsection{The exact constant in the approximation theorem}
The certificate records the following interval endpoints and normalized coefficients for the cell with the largest computed enclosure width:
\[
[x,y]=\left[\frac{521}{1000},\frac{1043}{2000}\right],\quad
A=\frac{41527369}{160000000},\quad
B=\frac{105217927}{478500000}.
\]
The upper coefficient is supplied by shared-tail witness \texttt{U\_final\_grid\_d0520}. The lower coefficient is supplied by stationary-source point \texttt{I\_checked\_source\_extensions\_v1\_143}. The computed width is $(1-x)(A-B)$, the largest width of the linear enclosure on this cell, as proved above. Replacing $x,A,B$ by the preceding rational values, multiplying their numerators, and reducing the resulting fraction gives
\[
\left(1-\frac{521}{1000}\right)
\left(\frac{41527369}{160000000}-\frac{105217927}{478500000}\right)
=\frac{2908466681147}{153120000000000}.
\]
This rational equality identifies the constant in \cref{thm:main}. \Cref{fig:anchors} shows where the point bounds supplying this assembly were evaluated. It does not identify where actual capacity lies within the interval.

\begin{figure}[tb]
\centering\includegraphics[width=\linewidth]{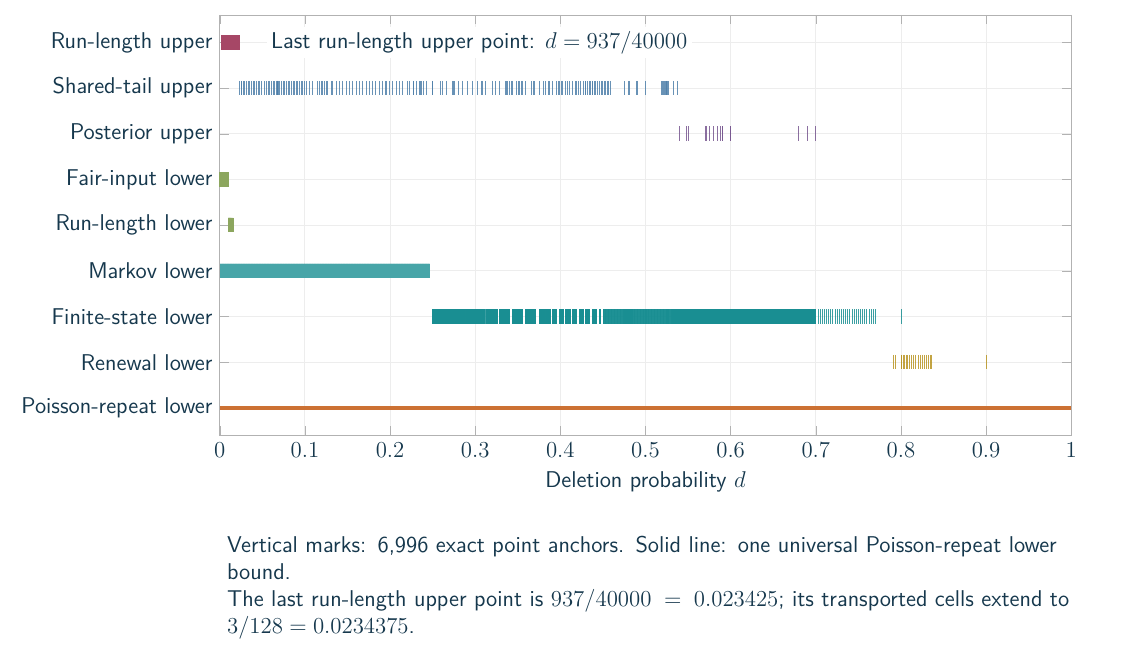}
\caption{Certificate families along the deletion parameter. Marks show the locations of exact point anchors, grouped by mathematical family. The row grouping distinguishes the mathematical roles of the anchors. The universal renewal coefficient is a separate all-parameter node, not a point at $d=1$.}\label{fig:anchors}
\end{figure}

\subsection{A shorter proof route for the uniform tolerance}\label{sec:family-simplification}
The published enclosure selects several lower-bound families near zero. This does not mean that every family is required to reach the worst-case tolerance. We can test that question using only the already admitted anchor values and the transport theorem, without running a new numerical experiment.

Remove the anchors labeled \texttt{lower\_iid} and \texttt{lower\_small\_run}. These are the two fair-input refinement families whose derivations are in \cref{sec:fair-input}. Retain the finite-memory and renewal lower bounds, the Poisson-repeat coefficient, and all upper anchors. On each interval between consecutive retained or original anchor parameters, choose the smallest eligible upper ratio and largest eligible lower ratio. Eligibility has exactly the directions proved above: an upper anchor lies at or left of the interval, and a lower anchor lies at or right of it. The resulting ratios therefore bound capacity throughout the closed interval.

This exact recomposition gives $4{,}813$ cells. Their maximum enclosure width is still
\[
\frac{2908466681147}{153120000000000}=2\eps_\star.
\]
The maximum occurs on $[521/1000,1043/2000]$, using the same two anchor values as the original maximizing cell. Every cell's endpoints, anchor identifiers, ratios, and exact width are recorded in \texttt{data/SIMPLIFIED\_UNIFORM\_COVER.json}. The checker verifies the anchor directions, the full interval cover, and each rational width. By the continuum-row argument, half this maximum bounds the midpoint error for every deletion probability.

The same retained bounds also certify the original published estimate, rather than requiring the reader to adopt a different midpoint. Write $m(d)=(L(d)+U(d))/2$ for that original estimate, and let $L_s(d),U_s(d)$ be the recomposed bounds. Since $C(d)$ lies between $L_s(d)$ and $U_s(d)$, its distance from $m(d)$ is at most the larger distance to an endpoint:
\[
|C(d)-m(d)|\le
\max\{m(d)-L_s(d),\,U_s(d)-m(d)\}.
\]
To verify the right side, subdivide at every boundary of either enclosure. Within each open subinterval, each displayed difference is a constant times $1-d$, so its supremum is found at a one-sided endpoint. At shared boundaries, use the prescribed intersections of adjacent intervals; at $d=0,1$, use the exact capacities. The rational check records a maximum of precisely $\eps_\star$, using only retained capacity bounds. The values defining $m$ enter this check as exact numbers, without assuming the validity of the omitted fair-input lower bounds. The supplied program \certpath{data/audit_original_midpoint.py} checks every common subinterval and boundary intersection and writes \certpath{data/ORIGINAL_MIDPOINT_RETAINED_CHECK.json}.

The two enclosures have different purposes. The shorter route establishes the uniform tolerance without the detailed fair-input refinements. The published enclosure retains those refinements to improve individual values and its mean certified radius. To make the distinction quantitative, the original mean radius is at most $0.006521572012$, while that of the recomposed enclosure is at most $0.006523144395$ bits per input bit, under uniform weighting of $d$. These are bounds on estimation error, not measured errors relative to an unknown capacity. The manuscript continues to report the original $L(d)$, $U(d)$, and mean; the simplified cover is a separate proof route for the worst-case accuracy.

\paragraph{What the omission test does and does not establish.}
The same check can remove one available family at a time. If a recomposed cover exceeds the requested width, this calculation has failed to prove that tolerance using the remaining stored anchors and these comparison rules. It does not show that the family is mathematically necessary, or that a better construction could not replace it. The successful simultaneous omission above is stronger: it supplies an explicit, checkable cover that attains the original tolerance. This is why the corresponding long derivations can be placed among supplementary refinements while the main lower-bound part focuses on finite-memory and renewal inputs.

%% file: sections/arithmetic.tex
\section{Numerical inequalities and rigorous arithmetic}
\label{sec:arithmetic}

The analytic results specify numerical inequalities that suffice for the capacity bounds. On the upper side, the inputs to a check are probability assignments used to bound output entropy and a bounded auxiliary function whose differences cancel when summed. On the lower side, the input is an exact probability law for a stationary source, together with finite entropy expressions and explicit remainders. To use a numerical evaluation in a proof, we must know on which side of the exact answer it lies. An upper bound must remain an upper bound after every rounding step, including any subtraction. A lower bound requires the opposite direction. This section explains the arithmetic rules that preserve those directions.

The certificate checkers use three arithmetic methods: exact integers and fractions; intervals from MPFR or Arb that contain the exact answer; and IEEE-754 operations with a specified rounding direction. A downward operation returns a representable value no larger than the exact result; an upward operation returns one no smaller. An interval calculation propagates both endpoints. The arithmetic contract, the permitted input ranges, and the rejection checks are part of the numerical premise of the theorem. For example, rounding $1/3$ to three decimal places gives the enclosure $[0.333,0.334]$: the lower endpoint is below the exact fraction and the upper endpoint is above it. Such \emph{outward rounding} preserves a statement of the form ``the exact value lies in this interval.'' If an upper bound of $0.334$ is required, this interval proves it. It cannot prove an upper bound of $0.3336$, even though the interval midpoint is smaller. The actual checkers use the precision specified for their families. Numerical search selects their inputs; directed arithmetic verifies the required inequalities for those fixed inputs.

\begin{examplebox}[title={Why subtraction reverses the rounding choice}]
Suppose an exact contribution is $E=A-B-C$, with
\[
A\in[9,10],\qquad B\in[3,4],\qquad C\in[5,6].
\]
To bound $E$ from below, make its positive term as small as permitted and its subtracted terms as large as permitted. This gives $E\ge9-4-6=-1$. If an analytic lemma has also proved $E\ge0$, the valid lower bound improves to $\max\{0,-1\}=0$.

Subtracting all lower endpoints would instead give $9-3-5=1$, which need not be a lower bound: the admissible values $A=9,B=4,C=5$ give $E=0$. The alignment-increment checker evaluates a difference with exactly these signs: a longer-prefix entropy term minus two shorter-prefix terms. It therefore rounds the first term downward and the subtracted terms upward. The additional maximum with zero is valid only because the analytic lemma proves that the exact difference is nonnegative. The intervals here are illustrative, not the precision used in the calculation.
\end{examplebox}

\subsection{Enclosing logarithms by a positive series}

For a probability assignment $q$, the entropy bound uses $-\log_2q$; entropy formulas likewise contain logarithms of masses. Thus enclosing these logarithms is necessary before enclosing either an upper information bound or a lower one. We first reduce a positive argument to a bounded interval. Write $x=2^e m$, where $e$ is an integer and $1\le m<2$. Then $\log_2x=e+\ln m/\ln2$, so it suffices to enclose the natural logarithm of $m$. Set $t=(m-1)/(m+1)$. This gives $m=(1+t)/(1-t)$ and $0\le t<1/3$.

The derivative of $\ln((1+t)/(1-t))$ is $2/(1-t^2)$, and its value at zero is zero. To integrate this derivative, use the finite geometric identity
\[
\frac1{1-z^2}=\sum_{j=0}^{N-1}z^{2j}+\frac{z^{2N}}{1-z^2}.
\]
Multiplying the right side by $1-z^2$ verifies the identity by cancellation of consecutive powers. Integrating from $z=0$ to $z=t$ therefore gives
\begin{equation}
 \ln m=2\sum_{j=0}^{N-1}\frac{t^{2j+1}}{2j+1}+R_N(t),\qquad
 0\le R_N(t)\le\frac{2t^{2N+1}}{(2N+1)(1-t^2)}.
 \label{eq:atanh-enclosure}
\end{equation}
Indeed, the remainder is the nonnegative integral $2\int_0^t z^{2N}/(1-z^2)\,dz$. Throughout that integral, $1/(1-z^2)\le1/(1-t^2)$. Replacing the denominator by this upper bound and integrating $z^{2N}$ proves the displayed remainder estimate. This argument also holds at $t=1/3$, which will be used to enclose $\ln2$.

The shared-tail and posterior checkers use $N=20$ to construct logarithm enclosures at the table points described below. Each checker encloses $t$ from below and above, propagates its positive powers and the partial sum with directed rounding, and adds the upper remainder only to the upper endpoint. Suppose the resulting interval for $\ln m$ is $[a_-,a_+]$ and an interval for $\ln2$ is $[\ell_-,\ell_+]$, with $\ell_->0$. Because division by a larger positive denominator gives a smaller value, the desired binary logarithm lies in
\[
 e+\frac{a_-}{\ell_+}\ \le\ \log_2x\ \le\ e+\frac{a_+}{\ell_-}.
\]
The endpoint divisions and additions are themselves directed. The exponent $e$ is exact; no decimal approximation to it is introduced.

To reduce repeated work, these checkers first enclose $\log_2c$ at the 4,096 points $c=1+k/4096$, $0\le k<4096$, using that 20-term calculation. For each mantissa $m$, choose the largest such $c\le m$. Then
\[
\log_2m=\log_2c+\frac{\ln(m/c)}{\ln2},\qquad
s=\frac{m-c}{m+c},\quad 0\le s<\frac1{8192}.
\]
The bound on $s$ follows from $m-c<1/4096$ and $m+c\ge2$. Apply \eqref{eq:atanh-enclosure} to $m/c=(1+s)/(1-s)$ with only $N=3$: its positive remainder is at most $2s^7/[7(1-s^2)]$. Directed addition of the table enclosure and this residual enclosure bounds each requested logarithm. Thus 20 terms are used once per table point, while three terms and an explicit tail are used per subsequent argument.

The adjacent binary64 endpoints used for $\ln2$ are
\[
 \ell_- = \texttt{0x1.62e42fefa39efp-1},\qquad
 \ell_+ = \texttt{0x1.62e42fefa39f0p-1}.
\]
These hexadecimal expressions specify exact dyadic rationals, meaning integers divided by powers of two. Their validity follows from a rational calculation: take $t=1/3$ and $N=20$ in \eqref{eq:atanh-enclosure}, and let $S_{20}$ be its partial sum and $R_{20}^{+}$ its upper remainder. Exact fractions give
\[
\ell_-<S_{20}<S_{20}+R_{20}^{+}<\ell_+.
\]
The series already places $\ln2$ between the middle two quantities, so these strict comparisons establish the claimed endpoints. The auxiliary arithmetic receipt records all four rational values and both strict margins. This check establishes the constants independently of a logarithm library.

Exponent and mantissa extraction are exact for finite binary64 inputs. A positive subnormal input is first multiplied by $2^{52}$, an exact scaling, and its exponent is subsequently adjusted back. Invalid or nonpositive logarithm inputs cause rejection. Structural zero entropy terms are handled separately by the convention $0\log0=0$.

Some lower bounds need only one side of this calculation. For the nonnegative alignment-entropy contributions in the repeat-channel lower bound, the checker keeps the first \emph{four} terms of the same series and discards the positive remainder. Its arguments are $1+v$ with $v\ge0$, so the binary exponent is nonnegative. Each positive operation is rounded down, and division uses an upper endpoint for $\ln2$. These choices produce a lower logarithm. The discarded remainder can only increase the exact logarithm. The shorter sum may weaken the resulting information-rate lower bound, but it cannot invalidate it.

\subsection{Nonnegative entropy and summation error}

Entropy expressed as a difference of large terms can lose numerical accuracy. Here it is useful to rewrite it as a sum of nonnegative terms before evaluating it. For nonnegative masses $w_1,\ldots,w_r$, let $W=\sum_iw_i$. Their homogeneous entropy is $W\log_2W-\sum_iw_i\log_2w_i$. For positive $w_i$, distributing the first term over the sum gives
\begin{align}
\Phi(w_1,\ldots,w_r)
 &=\sum_iw_i\bigl(\log_2W-\log_2w_i\bigr)\notag\\
 &=\sum_iw_i\log_2\left(1+\frac{\sum_{j\ne i}w_j}{w_i}\right).
 \label{eq:positive-Phi}
\end{align}
The last equality uses $W=w_i+\sum_{j\ne i}w_j$. Every logarithm has an argument at least one, and every summand is therefore nonnegative. A zero mass contributes zero by continuity.

Lower bounds on the masses also give a lower bound on $\Phi$. To see this directly, differentiate its original expression with respect to a positive coordinate $w_i$. The constant derivative terms cancel, leaving $\partial\Phi/\partial w_i=\log_2(W/w_i)\ge0$. Continuity extends this monotonicity to zero coordinates. Thus replacing any mass by a smaller nonnegative value can only reduce the entropy.

The checker rounds positive products and additions down. When it needs the other-mass sum in \eqref{eq:positive-Phi}, it subtracts the exact stored $w_i$ from a downward total and rounds down once more. The result is at most the exact sum of the other stored masses. Clamping a negative result to zero remains valid because that exact sum is nonnegative. Ratios are rounded down. If a ratio overflows, it may be clipped to $2^{1023}$, which is still below the exact ratio. The subsequent logarithm argument is checked to be finite and at least one. Applying the lower logarithm just proved then preserves the lower direction term by term.

One positive summation uses a round-to-nearest NumPy reduction and is corrected by an aggregate error bound. Let $x_1,\ldots,x_n$ be nonnegative binary64 values, let $u=2^{-53}$, and require $nu<1$. A normal, finite rounded addition has the form $(a+b)(1+\delta)$ with $|\delta|\le u$. Expanding a sequence of additions expresses the final result as a sum in which each $x_i$ is multiplied by a product of these factors. Each input traverses at most $n-1$ additions, whether the reduction is sequential or arranged as a tree.

Every such product is at most $(1+u)^n$. The binomial theorem and the inequality $\binom nk\le n^k$ give
\[
(1+u)^n
 =\sum_{k=0}^n\binom nk u^k
 \le\sum_{k=0}^{\infty}(nu)^k
 =\frac1{1-nu}.
\]
The final equality is the geometric series, whose convergence is exactly the reason for the condition $nu<1$. Since all $x_i$ are nonnegative, the multiplier bound can be applied to every term of the expanded reduction. Defining $\gamma_n=nu/(1-nu)$ therefore yields
\[
 \operatorname{fl}\!\left(\sum_{i=1}^nx_i\right)
 \le (1+\gamma_n)\sum_{i=1}^nx_i,
 \qquad 1+\gamma_n=\frac1{1-nu}.
\]
A positive addition with a subnormal exact result is exact: both inputs are integer multiples of the smallest positive subnormal, and their sum is representable while it remains in that range. Such an addition has $\delta=0$ and needs no separate underflow correction.

To obtain a lower bound on the exact sum from this upper bound on the rounding error, the checker divides the floating result by an upward binary64 enclosure of the exact rational $1+\gamma_n$ and rounds the quotient down. It rejects nonfinite values, negative inputs, and invalid reduction sizes. This is the specified Higham-type summation bound \citep{Higham02}; it controls this positive reduction, while the other operations retain their own directed rules.

\paragraph{Combining histogram masses and entropies.}
The histogram calculation groups outcomes that have the same recorded counts. For each collection of outcomes, it stores a pair $(p,E)$: $p$ is the sum of their probability masses, and $E$ is their homogeneous entropy. The two operations used by the checker correspond to taking products of collections and to joining disjoint collections. Their formulas follow directly from \eqref{eq:positive-Phi}.

For collections $(a_i)$ and $(b_j)$ with total masses $p$ and $q$, the product collection has masses $a_ib_j$ and total mass $pq$. Its entropy is
\[
\sum_{i,j}a_ib_j\log_2\frac{pq}{a_ib_j}
 =\sum_{i,j}a_ib_j\left(\log_2\frac p{a_i}+\log_2\frac q{b_j}\right)
 =qE+pF.
\]
In the last step, summing $b_j$ gives $q$ in the first term, and summing $a_i$ gives $p$ in the second. For the disjoint union, split $\log_2((p+q)/a_i)$ into $\log_2(p/a_i)+\log_2((p+q)/p)$, and do the same for the $b_j$ terms. This gives entropy $E+F+p\log_2((p+q)/p)+q\log_2((p+q)/q)$; its last two terms are $\Phi(p,q)$. Hence the two pair operations are
\[
 (p,E)\otimes(q,F)=(pq,qE+pF),\qquad
 (p,E)\oplus(q,F)=(p+q,E+F+\Phi(p,q)).
\]
The formulas extend to zero masses by continuity. Every resulting coordinate increases with the nonnegative input coordinates, so lower mass and entropy endpoints remain lower endpoints after either operation.

The initial histogram mass and entropy are enclosed by Arb and then rounded down into these positive operations. Arb may use cancellation internally; its returned interval already encloses the exact expression. A negative lower endpoint for a quantity known to be nonnegative can be clamped to zero. Subsequent positive arithmetic preserves that lower direction. The stated rules assume IEEE-754 binary64 operations and gradual underflow, and the reduction implementation is recorded.

\subsection{Dyadic potentials and rational probabilities}

A potential is a bounded auxiliary function $V$ of the state in an upper-bound construction. Its expected change can have either sign. Although these changes cancel in the information-rate argument, each finite inequality still contains them and must enclose them with the correct sign. An upper witness stores its potential values as dyadic rationals with bounded integer numerators. The allowed numerator and exponent ranges ensure that each potential difference is exactly representable in binary64. These representability conditions are checked before the potential is used.

For deletion probability $d=p/q$, the expected one-step potential change is
\[
 \frac pq\,\Delta V_{\rm delete}+\frac{q-p}{q}\,\Delta V_{\rm survive}.
\]
This is the conditional expectation over the two channel outcomes: deletion has probability $p/q$, and survival has probability $(q-p)/q$. The potential changes are the next value minus the current value in the corresponding outcome. The checker evaluates each signed numerator product upward, divides by the positive denominator upward, and then adds upward. Multiplication and division by a positive number preserve order even for negative potential differences, which proves that the result is an upper endpoint.

The required bound $q\le2^{53}$ ensures exact representation of the integers $p$, $q$, and $q-p$. Thus this calculation uses the specified rational probability, rather than treating its nearest binary64 or decimal approximation as exact. The shared and posterior kernels likewise use explicit directed operations for probability propagation and code lengths. When the upper inequality subtracts a conditional entropy, it subtracts a lower endpoint: if $h_-\le h$, then $-h\le-h_-$. This simple reversal under subtraction is necessary for the final upper direction.

The checker enumerates every state and action required by the upper-bound result. In the shared-tail construction, an additional analytic error bound covers the probability of deletion patterns that the numerical propagation omits. After checking all these inequalities, it rounds the final upper value upward to the exact rational endpoint used by the parameter-interval calculation. A selected set of favorable rows or a maximal cycle would not establish the inequality for all rows required by the theorem.

\subsection{Exact input laws and integer accumulation}

The lower-bound calculation evaluates the exact stationary rational Markov law specified in the source file. Integer propagation may replace a prefix probability by a smaller number, but that replacement is used only as a lower weight for the same source. To make this distinction precise, suppose $q_i(x)$ is a lower bound on the true probability of prefix $x$ and hidden state $i$. If $T_{ij}$ is a nonnegative transition probability, then $q_i(x)T_{ij}$ is no larger than the corresponding true extended-prefix contribution. Rounding this product down and summing the incoming contributions over $i$ preserves the inequality. Induction from the initial lower masses proves pointwise domination at every depth.

These propagated weights form a submeasure of the exact prefix distribution: they are nonnegative and may have smaller total mass. These weights multiply nonnegative contributions to the entropy of deletion positions given the input and trace. Replacing a probability by a smaller weight can only decrease that sum. The resulting value therefore remains a lower bound for the specified stationary source. The submeasure itself need not be stationary and is not used as a replacement input process.

The entropy-count cache encloses $n\log_2n$ over its entire finite integer domain. Its regeneration recipe uses directed MPFR evaluations of the scaled value. For non-powers of two, the lower and upper directed evaluations have the same integer floor. The common floor is therefore also the floor of the exact scaled value. The stored lower and upper integers, that floor minus two and plus three, lie on the required sides of the exact value. At zero and powers of two, the entropy term is exact. The complete regeneration comparison is the premise used by the current package; the historical empirical generator supplies no replacement for this comparison.

Integer operations require a separate size bound because exact formulas cease to be exact if their machine integers overflow. In the reconstructed base weighting, prefix numerators at each depth sum to at most $2^{61}$, while each nonnegative cached coefficient is less than $2^{62}$. If the prefix numerators are $a_x$, coefficients are $c_x$, and $c_{\max}=\max_xc_x<2^{62}$, the bound is
\[
\sum_x a_xc_x
 \le c_{\max}\sum_xa_x
 \le c_{\max}2^{61}<2^{62}2^{61}=2^{123}.
\]
The first inequality uses nonnegativity and the largest coefficient; the second uses the total prefix mass. The strict final inequality uses $c_{\max}<2^{62}$. Every partial worker sum is at most the complete nonnegative sum and satisfies the same bound. An unsigned 128-bit accumulator therefore has sufficient range. The reviewed transition-product bound is less than $2^{109}$ before its 48-bit shift, coming from the corresponding prefix and transition numerator bounds. This intermediate product also fits the accumulator before truncation. Counting the number of words and multiplying by a per-word maximum would lose the total-mass normalization on which the useful bound depends.

\subsection{Alignment cells and source weights}\label{sec:alignment-arithmetic}

An embedding of a trace in an input word is an increasing list of input positions whose bits equal the trace. Its count depends only on those two words, not on the probability distribution used to generate the input. They can therefore be computed once and later weighted for a specified source. For two nonnegative integer counts $a,b$, homogeneous entropy is $\Phi(a,b)=g(a+b)-g(a)-g(b)$, where $g(n)=n\log_2n$. Let $g_-$ and $g_+$ be scaled integer lower and upper endpoints for this function, all at the same scale. Subtracting the upper endpoints for the two negative terms gives the scaled lower bound
\[
\max\{0,g_-(a+b)-g_+(a)-g_+(b)\}.
\]
The expression inside the maximum is no larger than the exact scaled entropy. The maximum with zero is still no larger because entropy is nonnegative. Summing these lower integers over every required trace gives a lower endpoint for $F_j(x)$, the embedding-count entropy sum defined in \cref{sec:source}, for every input word $x$ at the specified depth.

The embedding counts themselves follow a simple exhaustive recurrence. There is one embedding of the empty word. When a new input bit is appended, every embedding either omits the new position or uses it as its last position; the latter choice is possible only when that bit matches the last trace bit. These two alternatives are disjoint and cover every embedding. Induction therefore gives the correct integer counts. Applying the recurrence to both possible appended bits, zero and one, covers every input word at the specified depth. This is an enumeration of all those words, not a sample from the chosen source. The source probabilities enter only when the resulting counts are weighted.

An additional alignment contribution $E_j(x_1^h)$ measures the increase in the embedding-count entropy sum when the input prefix grows from length $h-1$ to length $h$. Its defining expression contains one positive $F$ term and two negative ones, so lower endpoints for all three would not give a lower bound on the difference. At the common scale $2^{24}$, the directed row is
\[
E_{j,-}(x_1^h)=2^{-24}\max\{0,\,F^{[24]}_{j,-}(x_1^h)
-F^{[24]}_{j,+}(x_1^{h-1})-F^{[24]}_{j-1,+}(x_1^{h-1})\}.
\]
Here $F^{[24]}_{j,-}$ and $F^{[24]}_{j,+}$ are integer endpoints enclosing $2^{24}F_j$. The longer-prefix term has a positive sign and uses its lower endpoint. The two shorter-prefix terms are subtracted and use their upper endpoints. Their difference is therefore a lower bound on the exact cell. \Cref{lem:positive-cell} proves that the exact cell is nonnegative, which justifies the maximum with zero; dividing by $2^{24}$ returns to the original scale.

The implemented row is \texttt{nowlo[j]-previous[j]-previous[j-1]}, with \texttt{previous=nowhi}, in \certpath{kernels/lower/positive_cell_cache_batched_azure.py}. Exact integer accumulation and the checked count bounds prevent subtraction overflow. Source weights are then formed by downward path products, quantized separately at scale $2^{62}$ before exact state summation. The prefix induction in the preceding subsection proves that these weights remain dominated by the exact source probabilities. Multiplication by the cached nonnegative entropy contributions preserves the lower direction. Any probability lost by rounding is assigned contribution zero; this can weaken the lower bound but cannot raise it. Exact complement symmetry of the specified source justifies doubling the contribution of words starting in zero: complementing every bit pairs each such word with a word starting in one, with the same source probability and alignment contribution.

The current complete weighting has the same reviewed aggregate bounds: propagated prefix numerators total at most $2^{61}$ at each depth, and cached nonnegative coefficients are less than $2^{62}$. The preceding weighted-sum argument therefore bounds the complete result and every partial sum by $2^{123}$, within unsigned 128-bit range.

\subsection{Completion of the numerical verification}

Each family ends with a specified inequality. The small-deletion modules check every positive-matrix row and the analytic bound on all run lengths beyond the cutoff, using Arb enclosures of logarithms, exponentials and, in the first module, a square root. Ordinary renewal modules use exact rational source probabilities and directed entropy terms, together with the explicit bound on all omitted cases in which deleted runs join neighboring runs. The repeat-channel module combines a directed evaluation of the jigsaw information expression with nonnegative entropy contributions from disjoint blocks specified by its analytic construction. The shared and posterior upper modules check every local inequality specified in \cref{eq:shared-row,eq:posterior-row}. Finally, the transport assembly uses exact rational endpoints, verifies adjacent closed cells, and handles $d=0,1$ separately.

The verifier rejects malformed mathematical input, including duplicate JSON keys, approximate numerical tokens where exact values are required, nonfinite arrays, inconsistent source identities, missing objects, dimension mismatches, incomplete row coverage, illegal transport directions, and uncovered cells. These checks ensure that the numerical calculation applies to the objects and complete index sets named in the analytic theorem. Hashes identify the input objects and checker versions before and after a replay. Establishing an object's identity is separate from proving its inequalities, so a matching hash or a saved success label does not replace a completed replay.

These computations establish the numerical premises of a computer-assisted proof under the stated mathematical, integer-machine, MPFR/Arb, compiler, and directed-CUDA semantics. On the upper side, the analytic implication turns the checked local inequalities into bounds for every admissible input law. On the lower side, the finite entropy expressions give a rate for the one specified stationary source. In both arguments, the bounded end terms disappear after division by transmitted length. The exact parameter comparison then extends the resulting capacity bounds to every $d$. The finite size of a witness does not restrict the allowed coding-block length. It does not assert formal verification of the compiler or hardware. Independent literal-mask oracles, precision changes, and parser-corruption tests support implementation review; the numerical premises of the theorem remain the complete outward inequalities and their exact composition.

%% file: sections/reproduction.tex
\section{Computation, experiments, and reproduction}\label{sec:reproduction}
The analytic theorems reduce capacity bounds to explicit finite inequalities. This section identifies their numerical premises and separates three operations: evaluating the reported functions, checking their exact composition, and replaying the inequalities that established them.

\subsection{The boundary between a theorem and its numerical premises}
An analytic theorem first specifies a finite test. Numerical search chooses the probabilities and other parameters appearing in that test. A checker then evaluates the fixed choice with controlled rounding. Finally, exact arithmetic combines completed capacity bounds over the deletion parameter. Figure~\ref{fig:proofmap} separates these operations; \Cref{tab:compute-contracts} identifies the inputs and tests for each method.
\begin{table}[!htbp]\centering\small
\begin{tabular}{@{}>{\raggedright\arraybackslash}p{2.65cm}>{\raggedright\arraybackslash}p{4.8cm}>{\raggedright\arraybackslash}p{5.5cm}@{}}\toprule
Method and derivation & Chosen numerical inputs & What is actually checked\\\midrule
Shared outside sequence, \cref{sec:shared} & Output probabilities, a potential on raw-input windows and outside survivor words, inspected depth and retained masks & Every binary row of \eqref{eq:shared-row}, including the explicit omitted-pattern remainder. The maximum certified row gives an upper bound.\\
Posterior cancellation, \cref{sec:posterior} & Output and selector probabilities, finite potential and inspected depth & Every row of \eqref{eq:posterior-row}, after the entropy cancellation makes outside probabilities linear.\\
Small-deletion run converse, \cref{sec:small-deletion} & Run comparison probabilities, multipliers and truncation thresholds & The finite run inequalities and analytic conditions controlling all longer runs.\\
Finite-state input, \cref{sec:source-computation} & Transition probabilities, stationary law, output-history length and selected mask counts & Source normalization; lower output entropy; lower residual mask entropy; the arithmetic combination in \eqref{eq:lower-row}.\\
Renewal input, \cref{sec:renewal} & Run-length probabilities and retained finite run/output events & A lower run-group rate plus a lower conditional run-start entropy, with every subtracted tail bounded above.\\\bottomrule
\end{tabular}
\caption{The numerical work required by each analytic route. A probability table found by optimization is an input to a checker, not its conclusion. Every upper test covers all input choices specified by its theorem; every lower test uses one source throughout.}\label{tab:compute-contracts}
\end{table}
A finite maximum and a finite expectation require different checks. For a maximum, every row must be bounded; a few small sampled values cannot establish a converse. For an expectation with nonnegative summands, selected terms may be retained if their exact probability weights are used and overlap is excluded. This asymmetry is the reason selective counting is available to the lower bound while the upper bound needs complete coverage or a proved remainder.

\input{sections/computation_experiments}

\subsection{Why a finite computation proves a capacity bound}\label{sec:verification-meaning}
The analytic theorems specify finite tests whose successful completion implies a capacity bound. The checker establishes those hypotheses for fixed numerical inputs: all required rows and remainder bounds for an upper bound, or the weighted entropy and alignment terms of one fixed source for a lower bound. Optimization finds useful parameters; its convergence and optimality are not premises of the proof.

Controlled rounding makes the result of each numerical evaluation an interval known to contain the exact value. To prove that an expression is at most $u$, the checker requires the interval's upper endpoint to be at most $u$. For example, an enclosure $[0.1297,0.1298]$ proves an upper bound of $0.13$, whereas $[0.1297,0.1301]$ is inconclusive. To prove a lower bound, the checker instead uses a lower endpoint. These illustrative numbers describe the acceptance rule, not a capacity calculation. The arithmetic justifying the enclosures is given in \cref{sec:arithmetic}.

The finite computations supply pointwise capacity bounds at rational deletion probabilities. The transport theorem then extends those bounds to intervals of deletion probabilities. Exact rational arithmetic checks that the 4,931 closed intervals cover $[0,1]$, that each selected point bound is transported in its proved direction, and that the resulting enclosure has width below $0.019$. Its midpoint is consequently within $0.0095$ bits per transmitted bit of capacity. Thus the passage from finitely many calculations to every real deletion probability is an analytic argument followed by a finite coverage check.

\subsection{The complete release and its three checks}\label{sec:release-checks}
The public repository \url{https://github.com/anadim/binary-deletion-channel-capacity} contains the manuscript source, the exact certificate tables, the portable checkers, and a link to the complete release folder with the large numerical inputs. The complete release is organized as one folder containing the paper, its source and verification programs in \texttt{reproducibility.zip}, and 23 external proof files, including compressed numerical arrays and supporting records, under \texttt{proof-inputs/}. These files total 98,883,080,540 bytes, approximately 98.9 GB. They include the compressed parent calculation and the additional objects needed by its refinements. The small ZIP can be read and checked independently of the large downloads. The release also includes a file manifest and a single entry point, \texttt{verify\_release.py}. \Cref{tab:scopes} explains the conclusions of its checks.

\begin{table}[!htbp]\centering\small
\begin{tabular}{@{}>{\raggedright\arraybackslash}p{2.6cm}>{\raggedright\arraybackslash}p{6.0cm}>{\raggedright\arraybackslash}p{4.4cm}@{}}\toprule
Check & Conclusion & Resources\\\midrule
File integrity & Each inventoried file has the published size and SHA-256 digest. This identifies the numerical inputs; it does not prove an inequality. & Reads the complete release; no GPU.\\
Exact assembly & The included point-bound records imply the reported enclosure, coverage, midpoint, and error statistics. The recorded numerical premises are reused. & Small package and standard Python.\\
Numerical replay & Every required finite numerical inequality is recomputed, and the verified bounds are then combined over the full parameter range. & Complete inputs and the Linux/GPU environment below.\\\bottomrule
\end{tabular}
\caption{Three checks of the same release. File integrity identifies the supplied objects. Exact assembly checks their stated consequences. Numerical replay establishes the finite numerical premises again. The analytic proofs in this article explain why those premises bound capacity.}\label{tab:scopes}
\end{table}

A SHA-256 digest is a compact identifier computed from a file's bytes. Comparing a downloaded file with its published digest detects changes in transfer or storage. An unchanged digest does not imply that the file's assertions are true: that is the purpose of the numerical checker and the analytic proof. The manifest binds the manuscript, programs, and archived inputs to one release. It is kept separate from the original proof inventories, which remain unchanged.

\subsection{Running the checks}
Run the following commands from the complete release folder. The first checks the small files and the sizes of the archived inputs. The second checks the exact assembly of the capacity enclosure:
\begin{verbatim}
python3 verify_release.py quick
python3 verify_release.py light
\end{verbatim}
The wrapper uses a separate temporary directory for its programs, leaving the release inputs unchanged. To read and hash every supplied file, run
\begin{verbatim}
python3 verify_release.py files
\end{verbatim}
This reads approximately 99 GB. Its running time depends on the storage device; a successful result establishes file integrity only. A reader interested in a particular deletion probability can also query the exact bounds and midpoint:
\begin{verbatim}
python3 verify_release.py query 1/2
\end{verbatim}
These operations require Python 3.9 or later and no GPU. The query intersects the applicable enclosures at a shared endpoint and accounts for rounding in its displayed error bound.

For a full numerical replay, first prepare the archived inputs in a new directory on the compute machine. The example paths below are user-chosen locations outside the release folder:
\begin{verbatim}
python3 verify_release.py prepare --data /scratch/proof-data
python3 verify_release.py full --data /scratch/proof-data \
  --work /scratch/fresh-run --lock /scratch/bdc-gpu.lock
\end{verbatim}
\Needspace{11\baselineskip}
The preparation command checks the input identities and reconstructs the required directory layout. The full command runs the original numerical checkers and then the exact continuum calculation. Completion requires the status
\begin{quote}\small
\texttt{PASS\_FULL\_FRESH\_FINAL\_NUMERICAL\_AND\_CONTINUUM\_REPLAY}
\end{quote}
and a result record with \texttt{fresh\_numerics\_this\_invocation} set to \texttt{true}. A command that stops early, a stored earlier result, or the lightweight assembly check does not establish this completion status. Missing inputs and failed checks are reported as failures.

On 15 September 2026, all eight numerical components were replayed on the recorded Linux/A100 environment and combined by the original exact checker. Their saved outputs reproduce the same 4,931-cell enclosure and maximum width. The run began at 01:48 UTC and its final composition was recorded at 04:38 UTC, about two hours and fifty minutes later. It required one restart after six completed components: the 63 GB scratch disk had too little free space for the next reconstruction. After regenerable temporary arrays were removed, the remaining two components were run with the original arguments and all eight newly generated results were combined. No earlier component result was substituted. This establishes a completed numerical replay, but is not a successful uninterrupted execution on that disk.

The release preserves the outer component completion records, selected execution evidence, exact identities, timings, and a check of their consistency in \texttt{replay-20260915/}. The saved archive omits some intermediate outputs referenced by these records; a full replay regenerates them. The saved composition record uses its freshness flag for the resumed campaign; the final composition command itself reads the newly generated results. Rechecking the saved records establishes their binding and exact composition; only rerunning the numerical programs establishes another fresh replay.

\subsection{Execution environment and interpretation}
The recorded numerical environment is Linux with CUDA, an NVIDIA A100 with 80 GB device memory, and approximately 216 GB host RAM. The parent archive alone expands to approximately 188.3 GB. In addition to archived and extracted inputs, the original replay instructions require at least 64 GiB of scratch space. The complete dependency versions and commands are in \texttt{VERIFYING\_THE\_RESULT.md} and the reproducibility package. These describe a recorded working environment, not proved minimum resource requirements. The component times in \cref{sec:computation-experiments} measure the operations explicitly named there; they do not measure the search campaign or a newly timed installation from scratch.

The programs use integer arithmetic and outward numerical enclosures as specified in \cref{sec:arithmetic}. Their execution relies on Python and its numerical packages, the C++ and CUDA compilers, the interval libraries, and the processor instructions implementing the stated rounding operations. A numerical replay checks the inequalities through those implementations. It does not formally verify the analytic lemmas or the software and hardware in a proof assistant. This is the usual distinction between a computer-assisted proof and a machine-checked formalization of its entire derivation.

\subsection{Exact tables and algorithmic descriptions}
The large numerical inputs include auxiliary functions and conditional probability assignments for the upper bounds, together with finite source and alignment data for the lower bounds. They specify finite proof computations rather than samples from a channel experiment. A potential table may equally be specified by an algorithm that returns its exact entries. The finite inequalities and their analytic consequences are unchanged if those entries are reconstructed exactly.

The release already uses this principle. Shared objects are stored once, alignment counts for the lower bounds are regenerated by finite counting programs, and one posterior refinement is encoded as integer differences from parent tables. The checker reconstructs the integers and verifies their identities before evaluating the required rows. A shorter formula yielding different coefficients would also be admissible if the resulting inequalities passed at the claimed endpoints. Supplying only an optimization procedure would leave a further obligation: to show that it actually produces a passing finite certificate. Compact descriptions can reduce storage, but reconstruction time and complete numerical verification must still be accounted for.

%% file: sections/computation_experiments.tex
\subsection{Experiments that distinguish source design from bound evaluation}\label{sec:computation-experiments}
There are three ways to improve a computed lower bound: change the input process, improve its output-entropy estimate, or count more residual deletion uncertainty. The recorded experiments below separate these choices. Their timings describe particular completed jobs, not a reconstruction of the total research cost. The reproduction package supplies the exact source laws, finite counts, and full records behind these comparisons.

\paragraph{Changing the source while fixing the calculation.}
The source construction in \cref{sec:source-law} controls whether to end a run, conditional on the current run age and two completed-run classes. \Cref{tab:source-search} reports successive refinements at $d=3/5$. All use $m=13$, the depth-28 base rectangle with at most 12 survivors, and the same selected additional count coefficients. Each enlargement starts with the preceding input law represented exactly in the larger state space. Thus the comparison holds the entropy and count calculations fixed while permitting more source parameters.
\begin{center}\begin{minipage}{\linewidth}\centering\small
\begin{tabular}{@{}rrrrrl@{}}\toprule
States & Free probabilities & Certified lower bound & Evaluations & Search time & Termination\\\midrule
60 & 30 & $0.079296612362$ & 95 & $873.0$ s & Gradient tolerance\\
96 & 48 & $0.079394294360$ & 144 & $476.1$ s & Gradient tolerance\\
120 & 60 & $0.079466829830$ & 151 & $681.1$ s & Evaluation limit\\\bottomrule
\end{tabular}
\captionof{table}{Source refinement at deletion probability $3/5$. Values are rounded downward from separately certified rational results; search objectives themselves are floating-point proposals. Times include the recorded search job, not construction of the reusable count tables or later certification. A gradient stopping test does not prove a global optimum; the 120-state run did not even reach that stopping test.}\label{tab:source-search}
\end{minipage}\end{center}

The selected process uses its memory strongly. When a new run begins after two singleton runs, its probability of ending after its first bit is about $0.9053$. When the immediately preceding run has length two and the earlier run is a singleton, that probability is about $0.003919$. Both numbers are read from the same 120-state transition table, with complementary bits treated equally. They demonstrate dependence between neighboring run lengths. They do not prove that these particular probabilities are optimal or isolate the information gain caused by either transition.

\paragraph{Holding the source fixed and increasing the output history.}
For that same 120-state source at $d=3/5$, \cref{cor:birch-monotone} guarantees that a larger $m$ cannot weaken the exact Birch estimate. The computation enumerates $2^{m-1}$ binary histories for each initial state. \Cref{tab:birch-cost} reports the two interval checks.
\begin{center}\begin{minipage}{\linewidth}\centering\small
\begin{tabular}{@{}rrrrr@{}}\toprule
$m$ & Histories & State/history terms & $B_m$ (approximately) & Check time\\\midrule
13 & $4{,}096$ & $491{,}520$ & $0.725031271517$ & $10.7$ s\\
16 & $32{,}768$ & $3{,}932{,}160$ & $0.725039287627$ & $84.4$ s\\\bottomrule
\end{tabular}
\captionof{table}{Two full output-entropy checks for one fixed input law, both at 224-bit Arb precision. The number of terms increases eightfold. The lower bound on capacity increases by approximately $(1-d)(B_{16}-B_{13})=3.20644\times10^{-6}$ bit per input bit. These are output-entropy check times only; no deletion-pattern enumeration or source search is included.}\label{tab:birch-cost}
\end{minipage}\end{center}

This small gain for a substantially larger finite sum is a measured limitation of this refinement at this source and parameter. It is not a bound on the improvement available from a different source or converse.

\paragraph{Holding the source fixed and counting more ambiguity.}
After the output-history increase and intervening count refinements, the previously computed lower bound was approximately $0.079533471664$ at $d=3/5$. A subsequent extension using disjoint selected input words counted $8{,}060{,}928$ entries in 89 groups at $d=3/5$. Its recorded check took $131.6$ seconds and added approximately $0.000066669239$ bit per input bit to the inherited lower bound, giving the certified value $0.079600140902$. The additional terms retain their original probabilities and are disjoint from previously counted terms; \cref{lem:positive-cell} explains why omitted cases do not invalidate the bound. This timing is for the added component. It excludes the inherited source and count proofs.

\paragraph{A complete upper-bound calculation at the same parameter.}
The posterior certificate at $d=3/5$ uses $m=8$ surviving bits and $R=21$ inspected input bits. Its output table $Q$ has 256 entries, normalized in 128 pairs. Its potential has $2^{21}\cdot2^8=2^{29}$ integer entries; its deletion-decision table has the same number. Only one selector probability is free for each compatible output word: survival forces its first bit to equal the prepended bit, and normalization supplies the complementary deletion probability. Thus the selector needs $2^{R+m}$ stored entries, despite the two actions in the row test. Each of the latter arrays uses 4 GiB of 64-bit integer payload. These are comparison parameters for the all-input inequality, not a simulated input process.

For each of the $2^{21}$ input windows, the checker computes the grouped mask masses in \cref{sec:posterior-computation}. It then covers $2^8$ outside words and two prepended bits. The recorded coverage is therefore $2^{29}$ states and $2^{30}$ action rows. An upper endpoint for the largest row gives
\[
 C(3/5)\le\frac{3348235675749483}{36028797018963968}
 \approx0.0929322084772.
\]
The fraction is the certified upper value; the decimal is only explanatory. No assumption that a sampled input resembles a worst-case input enters this maximum.

The recorded reconstruction-and-replay job took $141.3$ seconds. It checked a streamed reconstruction of the compressed array changes and replayed the directed row computation, reusing already present arrays only after verifying that their bytes were identical. This is not a clean end-to-end download, decompression, and fresh-file-write benchmark. A separate earlier row-check receipt records $38.5$ seconds for the row checker alone, excluding reconstruction. It includes loading and identity checks as well as the row calculation. The reproduction package supplies both calculation records and the array dimensions. A separate calculation replayed eight upper points in $378.5$ seconds; the full inherited numerical replay took $5{,}014.9$ seconds. These jobs have different scopes and overlapping inherited work, so their times are not added into a claimed total.

\paragraph{The three contributions to the universal renewal bound.}
\Cref{cor:renewal-universal-assembly} combines three finite calculations for the same run-length law and Poisson intensity $\lambda=19/100$. \Cref{tab:poisson-components} shows their sizes and recorded evaluation times. The baseline uses merge-count cutoff $q=5$ and cutoff $K_0=192$ for the positive output entropy $H(K)$; the subtracted conditional count entropies use adaptive cutoffs and the explicit remainder terms of \cref{sec:renewal-baseline-computation}. The two corrections describe disjoint run groups of lengths $3$--$8$ and $9$--$16$, respectively. Their coefficients are already normalized by intensity; they must not be divided by $\lambda$ again.
\begin{center}\begin{minipage}{\linewidth}\centering\small
\begin{tabular}{@{}lrr@{}}\toprule
Contribution & Normalized coefficient (approximately) & Recorded time\\\midrule
Run-group baseline & $0.1217237467$ & $13.1$ s\\
Short-group correction, $3\le k\le8$ & $0.0030420710$ & $23.2$ s\\
Long-group correction, $9\le k\le16$ & $0.0003847524$ & $80.0$ s\\\midrule
Sum & $0.1251505701$ & ---\\\bottomrule
\end{tabular}
\captionof{table}{Components of the universal renewal coefficient. The exact lower endpoints in a recorded September 15 component replay sum to more than $0.12515$. The times concern these individual component evaluations; they come from a separate run from the GPU calculations above and do not measure an end-to-end replay. The first correction brings the coefficient close to the final value, while the longer groups add about $0.000385$.}\label{tab:poisson-components}
\end{minipage}\end{center}
The Poisson-transfer lemma therefore gives $C(d)\ge0.12515(1-d)$ for every deletion probability. This calculation explains what the additional conditional-entropy terms contribute beyond the run-group baseline; the smaller long-group contribution remains necessary for the particular coefficient stated here.

\paragraph{What is learned from these comparisons.}
Refining a source and evaluating it more accurately are distinct operations. The first changes the input distribution in both positive entropy terms of \eqref{eq:source-search-objective}. The second changes a lower estimate for a fixed distribution. The experiments provide evidence that both were useful, with different costs. They do not determine whether the remaining capacity uncertainty is primarily a loose converse or a suboptimal achievable rate. The decomposition in \cref{rem:source-score} states the unresolved quantities explicitly.

%% file: sections/discussion.tex
\section{Consequences and limitations}\label{sec:discussion}
\subsection{Accuracy beyond the worst case}
\Cref{tab:coverage} summarizes how much of the parameter interval meets each error tolerance.
\input{data/average_error.tex}
\Cref{fig:error-profile} gives the fraction of the deletion-probability interval covered at each certified error tolerance.
\begin{figure}[tb]\centering
\includegraphics[width=\linewidth]{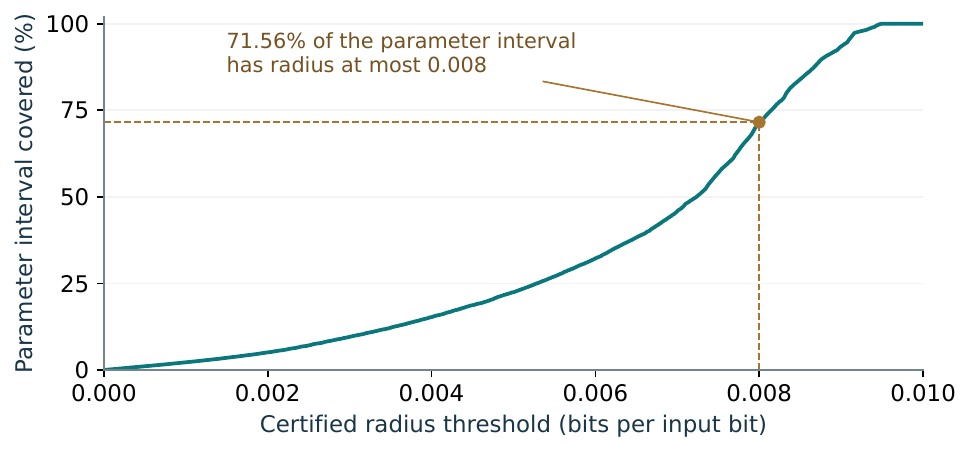}
\caption{Distribution of the certified approximation error over deletion probability. At horizontal value $r$, the curve gives the fraction of $d\in[0,1]$ for which $(U(d)-L(d))/2\le r$, with equal weight on all deletion probabilities. At least $71.56\%$ of the interval has radius at most $0.008$ bit per input bit; the whole interval has radius below $0.0095$. The fraction is computed from the linear radius on every certificate cell. Boundary intersections affect only finitely many points and therefore do not change these fractions.}\label{fig:error-profile}
\end{figure}
\FloatBarrier

\subsection{Comparison with published bounds}
\Cref{tab:literature} compares the reported bounds at the same deletion probabilities.
The comparison records printed values from primary sources at their stated precision. The present values are rounded outward from the exact certificate. Published computations supply historical comparisons; the new theorem follows from its own finite inequalities and completed numerical checks. In particular, the recent run-based lower-bound machinery overlaps with \citet{KD10,Chen26}, as identified in \cref{sec:intro-related,sec:renewal}; the claim here is the explicit uniform enclosure obtained with the specified constructions and parameters.
\input{data/literature_comparison.tex}
\FloatBarrier

\subsection{Remaining error and computational bottlenecks}\label{sec:limits}
The calculation stopped after certifying an interval width below $0.019$ for every deletion probability. It did not establish optimality of the auxiliary distributions, input sources, or truncation parameters. The largest remaining bound on interval width occurs on $[0.521,0.5215]$: a shared-tail upper bound and a finite-state lower bound determine the enclosure there. Thus the stopping point reflects an attained accuracy target, not a proved barrier to further improvement.

The finite-block comparison in \cref{lem:block} gives arbitrary accuracy in principle: increase $n$ until $\log_2(n+1)/n$ is small and evaluate the length-$n$ channel exactly. Its input alphabet already has $2^n$ elements. Our finite tests avoid that full capacity optimization, but their own sizes still limit straightforward enlargement. Table~\ref{tab:bottlenecks} separates the principal sources of conservatism and computational cost. These are limitations of the present constructions, not lower bounds on the complexity of approximating capacity.

\begin{table}[tb]\centering\small
\begin{tabular}{@{}>{\raggedright\arraybackslash}p{2.7cm}>{\raggedright\arraybackslash}p{5.0cm}>{\raggedright\arraybackslash}p{5.4cm}@{}}\toprule
Component & What can leave the bound loose & Cost or requirement of improving it\\\midrule
Shared-tail converse & Finite output history, worst-case outside values, and the explicitly bounded omitted probability & Increase input depth $R$, output order $m$, or retained-pattern coverage; all $2^{R+m-b+1}$ rows must still be covered.\\
Posterior converse & The chosen output and selector probabilities and the restricted finite potential & Optimize these parameters or enlarge the representation; the stated test has $2^{R+m+1}$ rows.\\
Finite-state lower bound & The selected source may be suboptimal; finite output conditioning and omitted nonnegative mask terms lose information & Improve the source law, deepen the output calculation, or count more disjoint events. All terms must use the same source.\\
Renewal lower bound & Independent run lengths restrict the source; finite run-window calculations leave a bounded remainder & Enlarge the run distribution or window, with valid normalization and complete tail control.\\
Parameter extension & A bound is reused away from its evaluation point & Add checked point bounds where the exact cell widths are largest; a denser plot alone changes nothing.\\\bottomrule
\end{tabular}
\caption{Where further improvement can enter the current proof. Here $R$ is the number of inspected input bits, $m$ the number of modeled survivors, and $b$ the retained survivor threshold in the shared-tail test. The row counts follow from enumerating the binary indices in \eqref{eq:shared-row} and \eqref{eq:posterior-row}.}\label{tab:bottlenecks}
\end{table}

One quantitative obstruction is visible before running a checker. An inspected window of $R$ input bits contains fewer than $m$ survivors with probability
\[
 \Pr\{\operatorname{Bin}(R,1-d)<m\}.
\]
The count is binomial because each position independently survives with probability $1-d$. Couple two possible outside sequences using the same deletions inside the window. Their first $m$ output bits agree whenever the window already contains $m$ survivors. Their output distributions can therefore differ only on the displayed event. Increasing $d$ makes a fixed window more likely to be exhausted; increasing $R$ reduces this probability but increases the state count exponentially. This probability measures sensitivity to the unseen input, not capacity error itself.

The lower bound has a different tradeoff. A selected length-$h$ input event with $j$ retained positions carries a factor $(1-d)^j d^{h-j}$ from its deletion mask. Counting such an event helps most where its probability is appreciable. Uncounted events remain valid omissions because their entropy contributions are nonnegative. Enlarging this calculation improves evaluation of a fixed source, but does not prove that the source is close to optimal. Similarly, an upper bound on one source's information rate is not an upper bound on capacity. The available certificate does not separate the remaining uncertainty into exact losses attributable to the converse, source choice, and source evaluation.

Two fair-input lower refinements are optional for the headline tolerance. The exact omission check in \cref{sec:family-simplification} removes both and retains the same uniform midpoint guarantee. They improve some local values and the mean radius, so their complete derivations remain in the appendices. This provides a shorter proof route without discarding the sharper published estimate.

\subsection{Mathematical and computational takeaways}
The enclosure establishes a numerical approximation of unrestricted capacity. It also distinguishes several mechanisms that a single final number would conceal.

\paragraph{Input dependence can be represented and tested concretely.}
The source refinements in \cref{sec:source-computation} allow switching probabilities to depend on more run ages while retaining the preceding source as an exact special case. Their certified improvements show that this extra flexibility was useful for the finite objective. They do not establish optimality of correlated runs or identify the exact capacity-achieving input process.

\paragraph{Finite expectations and universal maxima need different computation.}
The lower calculation can discard complete nonnegative contributions and reuse source-independent counts with new source weights. The upper calculation must cover every row or prove a bound on omitted rows. In both cases, the analytic sign or coverage argument is what licenses the finite computation. More GPU evaluations without that argument would not prove either claim.

\paragraph{Removing unknown probabilities can be more useful than enlarging a window.}
The posterior construction in \cref{sec:posterior} cancels a nonlinear entropy term before maximizing over outside input behavior. The shared-sequence construction in \cref{sec:shared} instead preserves common outside values while controlling omitted masks. These are two ways of reducing an infinite dependence to a finite all-input test. Their usefulness is demonstrated by completed upper certificates; the present data do not establish which is intrinsically stronger at equal computational cost.

\paragraph{The remaining uncertainty is not assigned to one side.}
The midpoint error bound gives no location of capacity inside the interval. It does not prove that the lower curve is nearly tight and the upper curve loose, or the converse. \Cref{rem:source-score} separates source evaluation from source quality; an additional bound on a fixed source's information rate would help diagnose that distinction, but would still not upper-bound unrestricted capacity. The small steps in the lower curve have a separate, completely identified cause: point certificates cease to be eligible under leftward transport, as proved in \cref{lem:certificate-steps}.

The transferable ideas are the finite sufficient tests, common-outside-sequence constraints, entropy cancellation, and reusable nonnegative counts. Their use for another synchronization channel requires that channel's own identities and remainder bounds. The present theorem concerns independent binary deletions; it does not by itself cover insertions, substitutions, correlated errors, or multiple reads \citep{Mit09,MBT10,CR21}.

%% file: data/average_error.tex
For each deletion probability $d$, the lower and upper functions enclose capacity. Define their half-distance by $r(d)=(U(d)-L(d))/2$. The midpoint inequality in \cref{thm:main} says that the actual absolute error is at most $r(d)$, and that $r(d)\le\eps_\star$ everywhere. To average this statement, specify a probability measure $\mu$ on $[0,1]$. Integrating inequalities preserves their direction, and the integral of the constant $\eps_\star$ is $\eps_\star$ because $\mu$ has total mass one. Therefore
\[
\int |\widehat C(d)-C(d)|\,\mathrm d\mu(d)\le\int r(d)\,\mathrm d\mu(d)\le\eps_\star.
\]
For the numerical mean reported in this paper, $\mu$ is the uniform distribution on deletion probability, or Lebesgue measure on $[0,1]$. On the interior of a cell $[x,y]$, the enclosure has lower value $(1-d)B$ and upper value $(1-d)A$. Subtracting and dividing by two gives $r(d)=(A-B)(1-d)/2$. The coefficients are constant on the cell, so they can be taken outside the integral. An antiderivative of $1-d$ is $d-d^2/2$; evaluating it at $y$ and $x$ gives
\[
\int_x^y r(d)\,\mathrm dd=\frac{A-B}{2}
\left[(y-x)-\frac{y^2-x^2}{2}\right].
\]
Summing these rational terms over all 4,931 cells gives the following bound, in bits per input bit:
\begin{equation}\label{eq:mean-error}
\boxed{\quad\int_0^1|\widehat C(d)-C(d)|\,\mathrm dd
\le\int_0^1r(d)\,\mathrm dd
\le0.006521572012\quad}
\end{equation}
The exact sum is retained in \texttt{data/AVERAGE\_ERROR.json}; its decimal expansion starts $0.006521572011751958\ldots$. Intersecting the bounds at shared endpoints changes a measure-zero set and does not change this integral. The computed value is the average \emph{certified radius}, hence an upper bound on average absolute error. It is not a measurement of the actual error.

The median certified radius belongs to $(0.007231523565,\,0.007231523566]$. This interval is verified by exact comparisons of the parameter measure with $1/2$. To compute a coverage probability at threshold $t$, solve the inequality $r(d)\le t$ separately on each cell. When $A>B$, substituting its formula gives $(1-d)(A-B)/2\le t$. Multiplying by $2/(A-B)$ and rearranging gives $d\ge1-2t/(A-B)$. Thus the part of the cell meeting the threshold is
\[
[x,y]\cap\left[1-\frac{2t}{A-B},1\right]
\]
and its length is its uniform probability. When $A=B$, the radius is zero throughout the cell, so the entire cell meets any nonnegative threshold. Adding the lengths of these disjoint cell interiors gives the coverage percentages in \cref{tab:coverage}; finitely many shared endpoints have zero uniform probability.
\begin{table}[tb]\centering
\begin{tabular}{@{}rr@{}}\toprule
Certified radius at most & Fraction of deletion parameters (at least)\\\midrule
$0.0050$ & $22.4604\%$\\
$0.0075$ & $56.8839\%$\\
$0.0080$ & $71.5630\%$\\
$0.0090$ & $93.4404\%$\\
$0.0100$ & $100\%$\\\bottomrule
\end{tabular}
\caption{Uniform-parameter coverage of tighter approximation tolerances. Percentages are rounded downward; all exact measures are supplied in the data. These are measures of the certificate's local error guarantee.}\label{tab:coverage}
\end{table}

%% file: data/literature_comparison.tex
\begin{table}[!htbp]\centering\small
\begin{tabular}{@{}rrrrrr@{}}\toprule
&\multicolumn{3}{c}{Reported literature comparison}&\multicolumn{2}{c}{Current certificate}\\
$d$ & Lower & Upper & Midpoint tolerance & Midpoint & Error at most\\\midrule
0.05 & 0.729100 & 0.8039 & 0.037400 & 0.733147060 & 0.003311547\\
0.1 & 0.563800 & 0.6577 & 0.046950 & 0.569524720 & 0.002918167\\
0.2 & 0.348200 & 0.4574 & 0.054600 & 0.362013501 & 0.005265031\\
0.3 & 0.222500 & 0.3314 & 0.054450 & 0.244937527 & 0.005902107\\
0.4 & 0.149810 & 0.2480 & 0.049095 & 0.170863553 & 0.007353678\\
0.5 & 0.110324 & 0.1896 & 0.039638 & 0.122351098 & 0.007810285\\
0.6 & 0.071838 & 0.1438 & 0.035981 & 0.086266174 & 0.006666035\\
0.64 & 0.060698 & 0.1288 & 0.034051 & 0.075564854 & 0.008074134\\
\bottomrule\end{tabular}
\caption{Primary-table comparison. Lower values use Venkataramanan--Tatikonda--Ramchandran and Rubinstein--Con, with Chen\textquotesingle s reported bound at $d=1/2$~\cite{Chen26}; upper values use Pinto--Ribeiro. The lower at $0.64$ transports Rubinstein--Con\textquotesingle s $0.65$ value. Literature bounds are displayed to the shown precision; the lower values at $0.5$ and $0.64$ are rounded down to six decimal places. These are not independently replayed enclosures. Their midpoint tolerances are arithmetic on those reported values. Current errors include rounding of the displayed estimate.}\label{tab:literature}
\end{table}

%% file: sections/standard.tex
\section{Elementary information inequalities}\label{sec:standard}
We collect the information inequalities used in the proofs, including their directions under conditioning and probability comparison. The entropy results are classical; see \citet[Chapters~2 and~4]{CT06}.

Entropy identities involving subtraction are stated for finite-valued variables, or more generally when the displayed entropies are finite. Conditioning on an infinite input sequence means averaging the conditional probabilities of the variable being described; it does not require the input sequence itself to have finite entropy.

\subsection{Entropy, conditional entropy, and mutual information}
Let $A$ have probabilities $p(a)$. Its entropy is the expected negative log probability:
\[
H(A)=\sum_{a:p(a)>0}p(a)\log_2\frac1{p(a)}.
\]
Zero-probability terms contribute zero. Conditional entropy averages the entropies of $p(a\mid b)$ with weights $p(b)$. Mutual information is
\[
I(A;B)=H(A)-H(A\mid B).
\]
Conditionally, $I(A;B\mid G)=H(A\mid G)-H(A\mid B,G)$. Logarithms are base two and entropies are in bits. Write $A_1^k=(A_1,\ldots,A_k)$, empty when $k=0$.

\begin{lemma}[Chain rule]\label{lem:entropy-chain}
The uncertainty of a pair can be evaluated by describing its first entry and then its second:
\[
H(A,B\mid G)=H(A\mid G)+H(B\mid A,G).
\]
Consequently $I(A;B)=I(B;A)$ and
\[
I(A;B,G)=I(A;B)+I(A;G\mid B).
\]
\end{lemma}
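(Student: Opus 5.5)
The plan is to prove the conditional chain rule first, pointwise in the logarithms, and then derive both consequences algebraically from it.

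\emph{Step 1: the unconditional identity.} For finite-valued variables, start from $p(a,b)=p(a)p(b\mid a)$, valid whenever $p(a,b)>0$. Take $-\log_2$ of both sides and average under the joint law. Only pairs of positive mass enter, so zero-probability terms cause no difficulty. The left side is $H(A,B)$. Averaging $-\log_2p(A)$ over the joint law gives $H(A)$, because marginalizing over $b$ recovers $p(a)$. Averaging $-\log_2p(B\mid A)$ gives $\sum_ap(a)H(B\mid A=a)=H(B\mid A)$, which matches the definition of conditional entropy.

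\emph{Step 2: add the conditioning.} Apply the same calculation to the conditional laws $p(\cdot\mid g)$ for each value $g$ of positive mass. This gives $H(A,B\mid G=g)=H(A\mid G=g)+H(B\mid A,G=g)$. Averaging over $g$ with weights $p(g)$ then yields the stated identity, since conditional entropy is defined as exactly this average.

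For an infinite conditioning variable, such as the raw future $F$ used earlier, I would interpret the identity through regular conditional distributions. The finite-valued argument then applies to each conditional law before averaging, so only finiteness of the displayed entropies is needed, as the section preamble stipulates. This measure-theoretic reading is the only step that is not purely mechanical. I will state it as the convention already announced rather than prove it from scratch.

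\emph{Step 3: the consequences.} Expand $H(A,B)$ in both orders, $H(A)+H(B\mid A)$ and $H(B)+H(A\mid B)$. Equating them gives $H(A)-H(A\mid B)=H(B)-H(B\mid A)$, which is $I(A;B)=I(B;A)$.

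For the last identity, write $I(A;B,G)=H(A)-H(A\mid B,G)$. Then add and subtract $H(A\mid B)$:
\[
 I(A;B,G)=\bigl[H(A)-H(A\mid B)\bigr]+\bigl[H(A\mid B)-H(A\mid B,G)\bigr].
\]
The first bracket is $I(A;B)$ by definition. The second bracket is $I(A;G\mid B)$ by the stated definition of conditional mutual information. All subtractions are legitimate because the entropies involved are finite.
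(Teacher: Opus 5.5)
Your proposal is correct and follows the paper's proof essentially step for step: factor the conditional joint law, take negative logarithms and average, expand $H(A,B)$ in both orders to get symmetry, and add and subtract $H(A\mid B)$ to get the information identity. Your remark about infinite conditioning variables matches the convention the appendix already states. It requires nothing beyond that convention.
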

\begin{proof}
Factor $p(a,b\mid g)=p(a\mid g)p(b\mid a,g)$, take negative logarithms, and average. This proves the entropy identity. Expanding $H(A,B)$ in both orders and rearranging proves symmetry. Adding and subtracting $H(A\mid B)$ in $H(A)-H(A\mid B,G)$ proves the information identity.
\end{proof}
Iteration gives $H(A_1^k\mid G)=\sum_{j=1}^kH(A_j\mid A_1^{j-1},G)$.

\subsection{Nonnegativity and the effect of side information}
For distributions $p,q$ with $q>0$ on the support of $p$, define relative entropy
\[
D(p\Vert q)=\sum_{a:p(a)>0}p(a)\log_2\frac{p(a)}{q(a)}.
\]
The function $t-1-\ln t$ has derivative $1-1/t$ and minimum zero at $t=1$, so $-\ln t\ge1-t$. Substitute $t=q(a)/p(a)$, multiply by $p(a)$, and sum:
\[
(\ln2)D(p\Vert q)\ge\sum_{a:p(a)>0}(p(a)-q(a))\ge0.
\]
The last inequality holds because $p$ sums to one and $q$ over a subset sums to at most one. This is Gibbs' inequality.

\begin{lemma}[Conditioning and side information]\label{lem:side-information}
Extra information cannot increase conditional entropy. Moreover, revealing a discrete variable $G$ increases mutual information by at most its entropy:
\[
H(A\mid B,G)\le H(A\mid G),\qquad
I(X;Y,G)-H(G)\le I(X;Y)\le I(X;Y,G).
\]
If $G$ is independent of $X$, then $I(X;Y,G)=I(X;Y\mid G)$.
\end{lemma}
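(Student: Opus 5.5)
The statement to prove is \cref{lem:side-information}, which has three parts: $H(A\mid B,G)\le H(A\mid G)$; the two-sided estimate $I(X;Y,G)-H(G)\le I(X;Y)\le I(X;Y,G)$; and the identity $I(X;Y,G)=I(X;Y\mid G)$ when $G$ is independent of $X$. The plan is to derive all three from Gibbs' inequality just established and from the chain rule of \cref{lem:entropy-chain}.

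\emph{Conditioning reduces entropy.}
First express the gap as a conditional mutual information:
\[
H(A\mid G)-H(A\mid B,G)=I(A;B\mid G).
\]
Fix a value $g$ with $p(g)>0$. The conditional information given $G=g$ is the relative entropy $D\bigl(p(a,b\mid g)\,\Vert\, p(a\mid g)p(b\mid g)\bigr)$. Gibbs' inequality makes this nonnegative. The product $p(a\mid g)p(b\mid g)$ is positive on the support of $p(a,b\mid g)$, so the hypothesis of Gibbs' inequality holds. Averaging over $g$ with weights $p(g)$ gives $I(A;B\mid G)\ge0$, which is the first inequality. For variables that need not be finite-valued, such as conditioning on an infinite input, the same argument applies termwise under the stated finiteness convention.

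\emph{Upper side information bound.}
The chain rule of \cref{lem:entropy-chain} gives
\[
I(X;Y,G)=I(X;Y)+I(X;G\mid Y).
\]
The last term is nonnegative by the previous step, so $I(X;Y)\le I(X;Y,G)$.

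\emph{Lower side information bound.}
Write $I(X;G\mid Y)=H(G\mid Y)-H(G\mid X,Y)$. The subtracted term is nonnegative because $G$ is discrete. The first step gives $H(G\mid Y)\le H(G)$. Hence $I(X;G\mid Y)\le H(G)$, and the left inequality $I(X;Y,G)-H(G)\le I(X;Y)$ follows.

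\emph{Independent $G$.}
Expand in the other order: $I(X;Y,G)=I(X;G)+I(X;Y\mid G)$. Independence of $G$ and $X$ gives $H(X\mid G)=H(X)$, so $I(X;G)=0$ and the identity holds.

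The only delicate point is bookkeeping rather than a genuine obstacle. The entropy subtractions must be finite, and for infinite conditioning this holds by the convention stated at the start of the appendix. The relative-entropy support condition must also be checked, as in the first step.
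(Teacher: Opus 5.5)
Your proposal is correct and follows the paper's proof essentially step for step. Both apply Gibbs' inequality to the divergence $D\bigl(p(a,b\mid g)\,\Vert\, p(a\mid g)p(b\mid g)\bigr)$ for each $g$ and then average. Both bound the chain-rule term $I(X;G\mid Y)$ between $0$ and $H(G\mid Y)\le H(G)$; your rewriting $I(X;G\mid Y)=H(G\mid Y)-H(G\mid X,Y)$ is the conditional symmetry of mutual information, which the paper cites explicitly. Finally, both expand in the reverse order and use $I(X;G)=0$ under independence.
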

\begin{proof}
For each $g$, expanding $D(p(a,b\mid g)\Vert p(a\mid g)p(b\mid g))$ gives $I(A;B\mid G=g)$. Gibbs' inequality makes it nonnegative. Average over $g$ and use the definition of conditional mutual information.

The chain rule gives $I(X;Y,G)-I(X;Y)=I(X;G\mid Y)\ge0$. By symmetry and nonnegative conditional entropy, this difference is $H(G\mid Y)-H(G\mid X,Y)\le H(G\mid Y)\le H(G)$. Finally, expansion in the other order gives $I(X;Y,G)=I(X;G)+I(X;Y\mid G)$, and independence makes its first term zero.
\end{proof}
Supplying extra information is often called a \emph{genie}. If it has at most $M$ possible values, its entropy is at most $\log_2M$: compare with uniform masses $1/M$ in Gibbs' inequality. Thus revealing it increases mutual information by at most $\log_2M$.

\begin{lemma}[Data processing]\label{lem:data-processing}
If, conditional on $B$, the variable $C$ is independent of $A$, then $I(A;C)\le I(A;B)$.
\end{lemma}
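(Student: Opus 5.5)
The plan is to derive the data-processing inequality from the chain rule for mutual information (\cref{lem:entropy-chain}) together with nonnegativity of conditional mutual information (\cref{lem:side-information}). The key observation is that the hypothesis, conditional independence of $C$ and $A$ given $B$, is equivalent to $I(A;C\mid B)=0$. Once that is in hand, the inequality follows by expanding $I(A;B,C)$ in two orders.

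\textbf{Step 1.} First I would verify that $I(A;C\mid B)=0$. For each $b$ with $p(b)>0$, conditional independence gives $p(a,c\mid b)=p(a\mid b)p(c\mid b)$. Hence the relative entropy
\[
D\bigl(p(a,c\mid b)\,\Vert\, p(a\mid b)p(c\mid b)\bigr)
\]
vanishes. In the proof of \cref{lem:side-information} this relative entropy was identified with $I(A;C\mid B=b)$. Averaging over $b$ therefore gives zero.

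\textbf{Step 2.} Next, apply the information identity of \cref{lem:entropy-chain}, $I(A;B,G)=I(A;B)+I(A;G\mid B)$, with the two conditioning variables listed in either order:
\[
I(A;B)+I(A;C\mid B)=I(A;B,C)=I(A;C)+I(A;B\mid C).
\]
The second equality requires $I(A;C,B)=I(A;B,C)$. This holds because the pair $(B,C)$ is the same variable as $(C,B)$ up to relabeling, so the entropies coincide.

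\textbf{Step 3.} Substitute $I(A;C\mid B)=0$ from Step~1 into the display, which gives $I(A;B)=I(A;C)+I(A;B\mid C)$. By \cref{lem:side-information}, the term $I(A;B\mid C)$ is nonnegative, so dropping it yields $I(A;C)\le I(A;B)$.

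\textbf{Finiteness.} The only delicate point is finiteness. The subtractions in the chain rule need the displayed entropies to be finite. I would first state the argument for finite-valued variables, which is the convention announced at the start of \cref{sec:standard}. If a general version is needed, I would obtain it by applying the finite case to finite quantizations of $A$ and $C$, then passing to the supremum over those quantizations. This uses the fact that mutual information of discrete variables is the supremum over finite partitions, which I would take as standard. Within the paper, every use of this lemma involves finite alphabets, so I expect no real obstacle there.
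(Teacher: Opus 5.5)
Your proof is correct and is the paper's argument: expand $I(A;B,C)$ by the chain rule in both orders, set $I(A;C\mid B)=0$ using the conditional independence, and drop the nonnegative term $I(A;B\mid C)$. Your relative-entropy check that $I(A;C\mid B)=0$ and your finiteness remark only make explicit details the paper leaves implicit.
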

\begin{proof}
The two chain-rule expansions give $I(A;B,C)=I(A;B)$, because $I(A;C\mid B)=0$, and $I(A;B,C)=I(A;C)+I(A;B\mid C)\ge I(A;C)$. Equate them.
\end{proof}
The same proof holds with an additional variable conditioned on throughout.

\begin{lemma}[Erasure upper bound]\label{lem:erasure}
For every $0\le d\le1$, $C(d)\le1-d$.
\end{lemma}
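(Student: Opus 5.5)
We prove $C(d)\le 1-d$ by giving the receiver extra information, as in the introduction's erasure comparison. The endpoint cases are immediate. At $d=1$ the output is empty and $C(1)=0$. At $d=0$ the bound is the trivial $C(0)\le1$, since a binary input carries at most one bit per symbol. So we fix $0<d<1$ and a block length $n$, and bound $\frac1n\max_{P}I(X_1^n;Y_n)=C_n(d)$. Passing to the limit in \eqref{eq:capacity-limit} then gives the claim.

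The genie reveals the deletion mask $D_1^n$. By \cref{lem:side-information},
\[
I(X_1^n;Y_n)\le I(X_1^n;Y_n,D_1^n).
\]
Because the mask is independent of the input, the same lemma rewrites the right side as $I(X_1^n;Y_n\mid D_1^n)$. For a fixed mask $D_1^n=m$ with $N$ retained positions, $Y_n$ is exactly the retained subsequence of the input, with its positions now known. Hence the conditional mutual information is at most
\[
H(Y_n\mid D_1^n=m)\le N.
\]
Here $H(Y_n\mid D_1^n=m)$ is the conditional output entropy when the mask equals $m$; it is at most $N$ bits because the conditional output is a binary word of length $N$.

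Averaging over the mask gives $I(X_1^n;Y_n)\le\E N=n(1-d)$, uniformly over input laws $P_{X_1^n}$. Therefore $C_n(d)\le 1-d$ for every $n$, and letting $n\to\infty$ gives $C(d)\le1-d$.

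There is no real obstacle here. The only point needing care is the use of independence of $D_1^n$ from $X_1^n$ to drop the term $I(X_1^n;D_1^n)=0$ in the chain rule, which the last clause of \cref{lem:side-information} states directly.
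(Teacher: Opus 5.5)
Your proof is correct and follows the paper's argument essentially step for step. You reveal the mask, use its independence from the input to pass to $I(X_1^n;Y_n\mid D_1^n)$, bound the conditional output entropy by the number of retained bits, average to get $n(1-d)$, and take the limit in \eqref{eq:capacity-limit}. The separate treatment of $d=0$ and $d=1$ is harmless but unnecessary, since the same argument already covers both endpoints.
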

\begin{proof}
Let $D_1^n$ be the independent deletion mask and $Y_n$ the trace of an arbitrary input $X_1^n$. Revealing the mask can only increase mutual information. Its independence of the input, followed by determinism of the trace given input and mask, gives
\[
I(X_1^n;Y_n)\le I(X_1^n;Y_n,D_1^n)
=I(X_1^n;Y_n\mid D_1^n)=H(Y_n\mid D_1^n).
\]
A fixed mask retaining $k$ bits leaves at most $2^k$ possible traces, so its conditional entropy is at most $k$. Averaging over masks yields $H(Y_n\mid D_1^n)\le\E N_n=n(1-d)$. Maximize over input laws, divide by $n$, and use \eqref{eq:capacity-limit}.
\end{proof}

\subsection{Probability codes and mixtures}
\begin{lemma}[Entropy is bounded by probability-code length]\label{lem:cross-entropy}
Let $A$ have law $p$, and let $q$ be any normalized probability law positive on the support of $p$. Then
\[
H(A)\le \E[-\log_2q(A)].
\]
This also holds conditionally, with a normalized law $q(\cdot\mid B=b)$ for each known value $b$.
\end{lemma}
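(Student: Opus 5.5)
The plan is to reduce the inequality to Gibbs' inequality, which was already established for relative entropy in the proof of \cref{lem:side-information}. Since $q$ is positive on the support of $p$, every term $-\log_2q(a)$ with $p(a)>0$ is finite, so the expectation $\E[-\log_2q(A)]$ is well defined in $[0,\infty]$. I will first treat the case where this expectation is finite and $H(A)$ is finite. On the support of $p$, split the expected code length as
\[
\E[-\log_2q(A)]=\sum_{a:p(a)>0}p(a)\log_2\frac1{p(a)}+\sum_{a:p(a)>0}p(a)\log_2\frac{p(a)}{q(a)}=H(A)+D(p\Vert q).
\]
The relative entropy term is nonnegative by Gibbs' inequality, and this gives the claim. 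The proof of Gibbs' inequality only used that $q$ sums to at most one over the support of $p$. So the same argument covers subprobability codes. That is convenient, because later sections apply the lemma after completing a subprobability law.

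The only delicate step is infinite alphabets. There one must justify splitting the sum into two series, which requires that at least one of the two pieces is finite. I will avoid this issue by arguing termwise instead. The inequality $-\ln t\ge 1-t$ with $t=q(a)/p(a)$ gives
\[
p(a)\log_2\frac1{q(a)}\ge p(a)\log_2\frac1{p(a)}+\frac{p(a)-q(a)}{\ln 2}
\]
for each $a$ in the support. Summing over a finite subset $S$ of the support gives
\[
\sum_{a\in S}p(a)\log_2\frac1{q(a)}\ge\sum_{a\in S}p(a)\log_2\frac1{p(a)}-\frac{1-p(S)}{\ln2},
\]
using $\sum_{a\in S}q(a)\le1$. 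Letting $S$ increase to the support, the left side converges to $\E[-\log_2q(A)]$ and the right side to $H(A)$. The remainder vanishes because $p(S)\to1$. This handles the case where both sides may be infinite, without assuming $H(A)<\infty$.

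For the conditional version, fix a value $b$ with $p(b)>0$ and apply the unconditional statement to the law $p(\cdot\mid b)$ and the code $q(\cdot\mid B=b)$. Then average these inequalities with weights $p(b)$; nonnegative averages preserve the inequality. By definition, the left side becomes $H(A\mid B)$ and the right side becomes $\E[-\log_2q(A\mid B)]$. The main obstacle is therefore only the convergence bookkeeping in the infinite case, which the finite-subset argument resolves.
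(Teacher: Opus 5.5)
Your proposal is correct and follows the paper's proof: the paper also writes $\E[-\log_2 q(A)]-H(A)=D(p\Vert q)\ge 0$ by Gibbs' inequality, and for the conditional version applies this at each $b$ and averages. Your finite-subset, termwise version of the same $-\ln t\ge 1-t$ step is an extra refinement. It removes the need for the paper's standing convention that subtracted entropies are finite, and your remark that only $\sum_a q(a)\le 1$ is used is also accurate.
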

\begin{proof}
Subtract $H(A)$ and combine logarithms: the difference is $\sum_a p(a)\log_2[p(a)/q(a)]=D(p\Vert q)\ge0$. Conditionally, apply this argument at each $b$ and average.
\end{proof}
The true distribution $p$ supplies the expectation; the chosen comparison $q$ supplies the logarithmic loss $-\log_2q(a)$, also called ideal code length. This is an entropy bound, not a decoding algorithm.

For example, a fair bit has entropy one. With $q(0)=3/4,q(1)=1/4$, its expected loss is $\tfrac12\log_2(4/3)+\tfrac12\log_24>1$. The weights remain the true probabilities $1/2,1/2$.

For a string, multiply normalized conditional table entries along its symbols. Summing these products over the last symbol gives one for that row, then summing successively backwards proves total normalization. This justifies sequential comparison distributions.

\begin{lemma}[Entropy of a selected distribution]\label{lem:mixture}
Suppose a random selector $J$ is drawn first, and then $Z$ is drawn from a distribution depending on $J$. For any normalized conditional code $K(j\mid z)$ positive on the possible pairs,
\[
H(Z)=H(Z\mid J)+H(J)-H(J\mid Z),\qquad
I(J;Z)\ge H(J)-\E[-\log_2K(J\mid Z)].
\]
\end{lemma}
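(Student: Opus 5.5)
The identity is the chain rule for the pair $(J,Z)$ expanded in both orders. The inequality then follows from the conditional cross-entropy bound, \cref{lem:cross-entropy}.

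For the identity, I would apply \cref{lem:entropy-chain} to $H(J,Z)$ twice:
\[
H(J,Z)=H(J)+H(Z\mid J)=H(Z)+H(J\mid Z).
\]
Equating the two expressions and solving for $H(Z)$ gives $H(Z)=H(Z\mid J)+H(J)-H(J\mid Z)$. This uses only finiteness of the displayed entropies, which the section's convention assumes. In the uses above, $J$ is binary and $Z$ ranges over $m$-bit words, so the entropies are bounded. Since $H(J)-H(J\mid Z)=I(J;Z)$ by definition, the identity is equivalent to $H(Z)=H(Z\mid J)+I(J;Z)$. This is the form invoked in the proof of \cref{lem:selector}.

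For the inequality, I would write $I(J;Z)=H(J)-H(J\mid Z)$. It then suffices to show $H(J\mid Z)\le\E[-\log_2K(J\mid Z)]$. For each value $z$ with positive probability, the conditional law $p(\cdot\mid z)$ of $J$ is supported on selector values $j$ for which $(j,z)$ is a possible pair. By hypothesis $K(\cdot\mid z)$ is normalized and positive on exactly those values. So \cref{lem:cross-entropy} applies at each $z$ and gives
\[
H(J\mid Z=z)\le\sum_jp(j\mid z)\bigl[-\log_2K(j\mid z)\bigr].
\]
Averaging over $z$ with weights $p(z)$ produces $\E[-\log_2K(J\mid Z)]$, with the expectation taken under the true joint law. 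Subtracting this from $H(J)$ reverses the inequality and yields the claim.

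The one point needing care is the support condition in the second step. If $K(j\mid z)$ vanished at a possible pair, the loss would be infinite. The bound would then hold trivially but would be useless. The posterior construction avoids this case by design: it sets the deletion probability to one exactly when survival is impossible, and keeps both entries positive otherwise. No other obstacle arises, and nothing beyond \cref{lem:entropy-chain,lem:cross-entropy} is needed.
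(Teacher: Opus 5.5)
Your proposal is correct and follows the paper's proof. The paper likewise expands $H(J,Z)$ in both orders to get the identity, then bounds $H(J\mid Z)$ by the conditional cross entropy of $K$ (\cref{lem:cross-entropy}) and subtracts from $H(J)$.
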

\begin{proof}
Expand $H(J,Z)$ in both orders and rearrange. Conditional cross entropy bounds $H(J\mid Z)$ by the expected logarithmic loss of $K$; subtracting from $H(J)$ gives the inequality.
\end{proof}
For each observed $z$, the \emph{posterior code} $K(j\mid z)$ assigns normalized probabilities to the possible selector values. It need not be the true posterior or recover the selector. For deletion, $J$ can indicate whether a specified input bit survives.

\subsection{Combining nonnegative masses}
\begin{lemma}[Log-sum inequality and homogeneous entropy]\label{lem:logsum}
For nonnegative lists $(a_i),(b_i)$, the log-sum inequality is
\[
\sum_i a_i\log_2\frac{a_i}{b_i}\ge
\left(\sum_i a_i\right)\log_2\frac{\sum_i a_i}{\sum_i b_i}.
\]
A positive $a_i$ with $b_i=0$ gives an infinite left side; zero terms use continuity. Define the binary homogeneous entropy
\[
\Phi(a,b)=(a+b)\log_2(a+b)-a\log_2a-b\log_2b.
\]
It is nonnegative, at most $a+b$, and satisfies $\Phi(v+w)\ge\Phi(v)+\Phi(w)$ for nonnegative two-dimensional vectors $v,w$.
\end{lemma}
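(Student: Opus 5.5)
The log-sum inequality will be reduced to Gibbs' inequality, already proved above, by a normalization. Discard indices with $a_i=0$; they contribute zero on the left by continuity, and they only enlarge $\sum_ib_i$ on the right, which decreases the right side. Put $A=\sum_ia_i$ and $B=\sum_ib_i$. If $A=0$ both sides vanish. If some positive $a_i$ has $b_i=0$ the left side is infinite and there is nothing to prove. Otherwise set $p_i=a_i/A$ and $q_i=b_i/B$; these are probability laws with $q>0$ on the support of $p$. Substituting gives
\[
\sum_ia_i\log_2\frac{a_i}{b_i}=A\,D(p\Vert q)+A\log_2\frac AB.
\]
Nonnegativity of $D(p\Vert q)$ then yields the inequality. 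The case $B=0$ with $A>0$ forces an infinite left side and is already covered.

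For the homogeneous entropy, assume $a+b>0$ and write $\Phi(a,b)=(a+b)\,h_2\bigl(a/(a+b)\bigr)$; at $a+b=0$ the value is zero by continuity. Nonnegativity and the bound $\Phi\le a+b$ then follow from $0\le h_2\le1$. The bound $h_2\le1$ is entropy at most $\log_2$ of the alphabet size, obtained by comparing with uniform masses in Gibbs' inequality.

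For superadditivity, I would express $\Phi$ through the log-sum sum itself. For a nonnegative vector $v=(v_0,v_1)$ with total $|v|=v_0+v_1$,
\[
-\Phi(v)=\sum_{c\in\{0,1\}}v_c\log_2\frac{v_c}{|v|}.
\]
This is the log-sum expression with numerators $v_c$ and denominators $|v|$ in both coordinates. Apply the log-sum inequality coordinatewise to the pairs of numerators $(v_c,w_c)$ and denominators $(|v|,|w|)$:
\[
v_c\log_2\frac{v_c}{|v|}+w_c\log_2\frac{w_c}{|w|}\ge(v_c+w_c)\log_2\frac{v_c+w_c}{|v|+|w|}.
\]
Sum over $c$ and use $|v+w|=|v|+|w|$. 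The right side becomes $-\Phi(v+w)$ and the left side is $-\Phi(v)-\Phi(w)$, which gives $\Phi(v+w)\ge\Phi(v)+\Phi(w)$.

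The only delicate point is the bookkeeping of zero entries and zero totals. If $|v|=0$ then $v=0$ and the claim is trivial. A zero numerator paired with a positive denominator contributes zero by the $0\log0$ convention. No positive numerator ever meets a zero denominator here, because $v_c\le|v|$.
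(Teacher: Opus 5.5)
Your proof is correct. The first two parts match the paper's argument: you reduce the log-sum inequality to Gibbs' inequality by normalizing, and you get the bounds $0\le\Phi(a,b)\le a+b$ by writing $\Phi(a,b)=(a+b)h_2(a/(a+b))$.

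The superadditivity step takes a different route. The paper proves concavity of $h_2$ from $h_2''(r)=-1/[(\ln 2)\,r(1-r)]<0$. It then observes that the proportion of $v+w$ is the $|v|,|w|$-weighted average of the proportions of $v$ and $w$, and multiplies the concavity inequality by the combined mass. You instead write $-\Phi(v)=\sum_c v_c\log_2(v_c/|v|)$ and apply the log-sum inequality you just proved, coordinatewise, with denominators $(|v|,|w|)$. This is in effect a joint-convexity (perspective) argument.

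Your version needs no differentiation and no separate concavity fact, and it makes the lemma self-contained by reusing its first half. It also extends verbatim to the multi-mass $\Phi(w_1,\ldots,w_r)$. It is the same log-sum mechanism the paper uses for convexity of $\Psi_Q$ in \cref{lem:survivor-cost}.

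The paper's version instead keeps in view the reading that merging two classes averages their conditional proportions. This is the intuition behind \cref{lem:positive-cell}, where the inequality is read as ``revealing a choice decreases uncertainty.''
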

\begin{proof}
For totals $A=\sum_i a_i>0$ and $B=\sum_i b_i>0$, the difference between the log-sum sides is $A D((a_i/A)_i\Vert(b_i/B)_i)$, by expanding logarithms. Gibbs' inequality proves nonnegativity; zero cases use continuity.

Writing $r=a/(a+b)$ and $\hb(r)=-r\log_2r-(1-r)\log_2(1-r)$ gives $\Phi(a,b)=(a+b)\hb(r)$. This proves the bounds since binary entropy lies in $[0,1]$. Moreover, $\hb''(r)=-1/[(\ln2)r(1-r)]<0$, so binary entropy is concave. Combining mass pairs averages their proportions with weights equal to their total masses. Apply concavity and multiply by the combined mass to obtain $\Phi(v+w)\ge\Phi(v)+\Phi(w)$. Continuity handles zero masses.
\end{proof}
Scaling the masses by $c\ge0$ scales $\Phi$ by $c$: the $\log_2c$ terms cancel. For any finite mass list, homogeneous entropy is total mass times the entropy of its normalized proportions. Thus two retained pattern classes of masses $a,b$ contribute $\Phi(a,b)$, including their observation mass $a+b$; omitted patterns do not justify renormalizing it.

\subsection{Stationarity and entropy rates}
A process is stationary when every finite block's distribution is independent of its starting position. Symbols may remain dependent at arbitrarily large separations.

\begin{lemma}[Entropy rate from longer observed pasts]\label{lem:entropy-rate}
For a stationary finite-alphabet process $Z$, the numbers $H(Z_0\mid Z_{-k}^{-1})$ decrease as $k$ increases, and
\[
h(Z):=\lim_{n\to\infty}\frac1nH(Z_1^n)
=\lim_{k\to\infty}H(Z_0\mid Z_{-k}^{-1}).
\]
\end{lemma}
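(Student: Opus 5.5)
The plan is to prove the two assertions of \cref{lem:entropy-rate} in order: first that $H(Z_0\mid Z_{-k}^{-1})$ is nonincreasing in $k$, and then that the block average $n^{-1}H(Z_1^n)$ has the same limit. Set $c_k=H(Z_0\mid Z_{-k}^{-1})$, with $c_0=H(Z_0)$.

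\emph{Monotonicity.} We have $c_{k+1}=H(Z_0\mid Z_{-k-1}^{-1})$. Its conditioning contains that of $c_k$ together with the extra variable $Z_{-k-1}$. By \cref{lem:side-information}, this extra conditioning can only lower the entropy, so $c_{k+1}\le c_k$. Every $c_k$ lies in $[0,\log_2|\mathcal Z|]$ because the alphabet is finite. A bounded nonincreasing sequence converges, so $c_\infty=\lim_kc_k$ exists.

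\emph{Identification with the block rate.} Expand the block entropy with the chain rule, \cref{lem:entropy-chain}:
\[
H(Z_1^n)=\sum_{j=1}^nH(Z_j\mid Z_1^{j-1}).
\]
Each summand conditions on the $j-1$ preceding symbols. By stationarity, shifting the indices by $-j$ gives
\[
H(Z_j\mid Z_1^{j-1})=H(Z_0\mid Z_{-(j-1)}^{-1})=c_{j-1}.
\]
Here stationarity is used only to say that the joint law of $(Z_1,\ldots,Z_j)$ equals that of $(Z_{-(j-1)},\ldots,Z_0)$, so the conditional entropy is unchanged. Hence
\[
\frac1nH(Z_1^n)=\frac1n\sum_{i=0}^{n-1}c_i.
\]
This is a Ces\`aro mean of the convergent sequence $(c_i)$. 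By the elementary Ces\`aro lemma it converges to $c_\infty$: given $\epsilon>0$, choose $K$ with $|c_i-c_\infty|<\epsilon$ for all $i\ge K$. The first $K$ terms then contribute at most $K\log_2|\mathcal Z|/n\to0$. This proves both that the limit defining $h(Z)$ exists and that it equals $\lim_kc_k$.

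There is no real obstacle here; the only care needed is the index shift. One must check that stationarity of the finite-dimensional laws alone justifies equating the two conditional entropies. It does, because conditional entropy is a functional of the joint law of the finitely many variables involved. For the later uses in the paper, such as \cref{lem:birch} and \cref{lem:thinning-entropy}, one may also want the infinite-past form $H(Z_0\mid Z_{-\infty}^{-1})$. I would define that quantity as $\lim_kc_k$, or else note that it equals this limit by the martingale convergence of the posterior probabilities on the finite alphabet. I would state this identification as a remark rather than include it in the lemma.
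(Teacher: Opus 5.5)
Your proof is correct and follows the paper's own argument: monotonicity by conditioning on one more past symbol, the chain rule plus stationarity to write $H(Z_1^n)=\sum_{i=0}^{n-1}c_i$, and a Ces\`aro argument in which the finitely many initial terms have vanishing weight. Your remark defining the infinite-past conditional entropy as the decreasing limit over finite pasts also matches the convention the paper states right after this lemma.
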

\begin{proof}
Conditioning on another past symbol decreases entropy; nonnegativity gives a limit. By the chain rule and stationarity, $H(Z_1^n)$ sums the first $n$ conditional entropies $H(Z_0\mid Z_{-(j-1)}^{-1})$. Their averages have the same limit: finitely many initial terms have vanishing weight, while all subsequent terms approach that limit.
\end{proof}
Finite-past conditional entropy therefore upper-bounds the entropy rate. The Birch lower bound instead reveals a source's internal state (\cref{lem:birch}). Conditioning on an infinite input means taking the decreasing limit over finite input prefixes.

\subsection{Potentials and random-length prefixes}
We seek to bound the average of a real function $r(w,a)$ over transitions. A potential is a bounded real function of the state. Adding its expected one-step change leaves stationary averages unchanged. In the channel application, a transition index $a$ is a raw input bit.

\begin{lemma}[Cancellation of a bounded potential]\label{lem:potential}
Fix transition kernels $P_a(w,\mathrm dw')$, each specifying a probability law for the next state, and a bounded measurable potential $V$, and write $(P_aV)(w)=\int V(w')P_a(w,\mathrm dw')$. Let $(W_i,A_i)$ be any process satisfying
\[
\E[V(W_{i+1})\mid W_i,A_i]=(P_{A_i}V)(W_i)
\quad\text{almost surely for every }i.
\]
If an integrable reward $r$ satisfies the row inequality
\[
r(w,a)+(P_aV)(w)-V(w)\le u
\]
for every permitted pair $(w,a)$, then stationary state marginals give $\E r(W_i,A_i)\le u$. The same inequality bounds nonstationary expected averages up to a boundary term $\osc(V)/n$, where $\osc(V)=\sup V-\inf V$.
\end{lemma}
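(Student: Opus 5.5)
The plan is to telescope the row inequality along the process and take expectations. Put $\Delta_i=V(W_{i+1})-V(W_i)$. The process hypothesis says that $\E[V(W_{i+1})\mid W_i,A_i]=(P_{A_i}V)(W_i)$. Taking expectations of both sides, which is legitimate because $V$ is bounded, gives
\[
\E\Delta_i=\E[(P_{A_i}V)(W_i)-V(W_i)].
\]
Evaluate the row inequality at the random pair $(W_i,A_i)$; this pair is permitted almost surely, since it is generated by the process. Then take expectations, using integrability of $r$ and boundedness of $(P_aV)$. This yields the per-step bound
\[
\E r(W_i,A_i)+\E\Delta_i\le u .
\]
No Markov property of the states is needed beyond the displayed conditional-expectation identity. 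That identity is exactly what \eqref{eq:physical-transition} and the posterior transition verified for arbitrary input laws.

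\textbf{Stationary case.} Suppose $W_i$ and $W_{i+1}$ have the same marginal law. Then $\E V(W_{i+1})=\E V(W_i)$, both finite because $V$ is bounded, so $\E\Delta_i=0$. The per-step bound then gives $\E r(W_i,A_i)\le u$ directly.

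\textbf{Nonstationary case.} Sum the per-step bound over $i=1,\ldots,n$. The potential differences telescope:
\[
\sum_{i=1}^n\E\Delta_i=\E V(W_{n+1})-\E V(W_1)\ge-\osc(V).
\]
Hence $\frac1n\sum_{i=1}^n\E r(W_i,A_i)\le u+\osc(V)/n$, which is the claimed boundary term.

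\textbf{Main obstacle.} The only delicate step is measure-theoretic. We must ensure that $(P_aV)(w)$ is measurable in $(w,a)$, so that $(P_{A_i}V)(W_i)$ is a random variable. We also need the row inequality to be applied only on the support of $(W_i,A_i)$. I would handle this by requiring the kernels to be measurable in $w$. For the finite state spaces used in \cref{sec:shared,sec:posterior}, this holds trivially. In the posterior case the state includes the conditional law $\mu$, and the transition $(w,\mu)\mapsto(w',T_c\mu)$ is a continuous deterministic map on a simplex, so measurability is immediate. The ``permitted pairs'' qualifier is covered by noting that the inequality is needed only almost surely along the process.
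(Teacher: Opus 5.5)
Your proposal is correct and matches the paper's proof. You take expectations of the row inequality at $(W_i,A_i)$, use the conditional-expectation hypothesis to replace $(P_{A_i}V)(W_i)$ by $V(W_{i+1})$, and telescope; stationarity then makes the endpoint terms equal, and otherwise $\osc(V)$ bounds them. Handling the stationary case step by step and adding measurability remarks are only minor refinements of that same argument.
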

\begin{proof}
Average each row inequality. Conditional expectation replaces $(P_{A_i}V)(W_i)$ by $V(W_{i+1})$ in expectation. Sum over $i$; intermediate potentials cancel:
\[
\sum_{i=1}^n\E r(W_i,A_i)\le nu+\E V(W_1)-\E V(W_{n+1}).
\]
Stationarity makes the endpoint expectations equal. Otherwise their difference is at most $\sup V-\inf V$; divide by $n$.
\end{proof}
The conditional-expectation hypothesis links the fixed numerical operator to the actual process. It does not require a Markov input or history-independent actions. For alternating states with rewards $2,0$, choose potentials $1,0$. The adjusted rewards are $2+0-1=1$ and $0+1-0=1$. The corrections cancel over each cycle, reducing the row bound from two to the true average one.

\begin{lemma}[The cost of a short observed prefix]\label{lem:random-prefix}
Suppose $Z_1,Z_2,\ldots$ are binary, and $Y=(Z_1,\ldots,Z_N)$ is a random prefix whose length $N$ is readable from $Y$. For any additional information $F$ and fixed integer $k\ge1$,
\[
H(Z_1^k\mid F)\le H(Y\mid F)+k\Pr(N<k).
\]
No independence between $N,Z$, and $F$ is required.
\end{lemma}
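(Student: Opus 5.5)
The plan is to split on the length. Let $E=\{N\ge k\}$. The event $E$ is a function of $Y$, because $N$ can be read from $Y$. The chain rule (\cref{lem:entropy-chain}) then gives
\[
H(Z_1^k\mid F)\le H(Z_1^k,Y\mid F)=H(Y\mid F)+H(Z_1^k\mid Y,F).
\]
It therefore suffices to show $H(Z_1^k\mid Y,F)\le k\Pr(N<k)$.

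For this remaining term, I will condition on the value of $Y$ and average. On $E$ the prefix $Y$ already contains $Z_1,\ldots,Z_k$. So for every realization $y$ with $|y|\ge k$ and every $f$, the conditional law of $Z_1^k$ given $Y=y,F=f$ is a point mass, with entropy zero. For a realization with $|y|<k$, the first $|y|$ symbols of $Z_1^k$ are fixed, so at most $2^{k-|y|}\le2^k$ completions remain. The uniform comparison in Gibbs' inequality bounds that conditional entropy by $k$ bits.

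Averaging over $(Y,F)$ with their joint law gives
\[
H(Z_1^k\mid Y,F)\le0\cdot\Pr(E)+k\Pr(N<k).
\]
This is exactly the claimed remainder. No independence between $N$, $Z$ and $F$ enters anywhere: each bound holds pointwise for every conditioning value, and the probability of the short event appears only through the averaging weights.

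The only delicate point is that conditional entropy given $(Y,F)$ is an average over the joint realizations, which is what lets the event $\{N<k\}$ factor out. This requires $N$ to be determined by $Y$. That holds because the length of $Y$ is observable, so the partition into short and long outcomes is measurable with respect to the conditioning variables and needs no extra indicator.

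The binomial variance bound that the paper cites "following" this lemma is Chebyshev's inequality applied to $N_n\sim\operatorname{Bin}(n,1-d)$:
\[
\Pr\{|N_n-n(1-d)|\ge n^{2/3}\}\le\frac{nd(1-d)}{n^{4/3}}=d(1-d)n^{-1/3}.
\]
I would state it as a remark immediately after the lemma.
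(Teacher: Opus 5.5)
Your proposal is correct and follows essentially the same route as the paper: adjoin $Y$ via the chain rule, then bound $H(Z_1^k\mid Y,F)$ pointwise by $0$ when $|y|\ge k$ and by $k$ otherwise, using that $Y$ reveals its own length. Your Chebyshev remark also matches the binomial deviation bound the paper states immediately after the lemma.
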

\begin{proof}
Add $Y$ to the described variables, then apply the chain rule:
\[
H(Z_1^k\mid F)\le H(Z_1^k,Y\mid F)
=H(Y\mid F)+H(Z_1^k\mid Y,F).
\]
Fix an observed word $Y=y$. If $|y|\ge k$, its prefix determines $Z_1^k$ and the remaining entropy is zero. Otherwise there are at most $2^k$ binary words, giving entropy at most $k$. Averaging yields $H(Z_1^k\mid Y,F)\le k\Pr(N<k)$. No extra event indicator is needed: $y$ reveals its length.
\end{proof}
For $N\sim\operatorname{Bin}(n,1-d)$, independent survival indicators give mean $n(1-d)$ and variance $nd(1-d)$. Markov's inequality applied to the squared deviation gives
\[
\Pr\{|N-n(1-d)|\ge t\}\le\frac{nd(1-d)}{t^2}.
\]
Taking $t=n^{2/3}$ makes this Chebyshev bound $d(1-d)n^{-1/3}\to0$. A prefix this far below the mean length is therefore missing with vanishing probability.

\subsection{From information bounds to reliable communication}
Let $W$ be uniform on $M$ messages, and let a decoder observing $Y$ have error probability $P_e$. Reveal the error indicator, of entropy at most one. The message is then determined on success and has at most $M-1$ possibilities on error. Hence $H(W\mid Y)\le1+P_e\log_2M$. Subtract from $H(W)=\log_2M$ and rearrange to obtain Fano's bound:
\[
(1-P_e)\log_2M\le I(W;Y)+1.
\]
Dividing by input length bounds every code sequence whose error tends to zero. Identifying the limiting maximum mutual information with operational capacity additionally uses the synchronization-channel coding theorem in \cref{sec:setting}.

\subsection{The finite-block capacity benchmark}
Here $C_n(d)$ is the maximum mutual information per input bit over all distributions on $n$-bit words; $C(d)$ is asymptotic capacity. The output is the variable-length trace. The difference is bounded by the information in artificial block boundaries.
\begin{lemma}[Finite-block capacity comparison]\label{lem:block}
For every $d\in[0,1]$ and integer $n\ge1$,
\[
C_n(d)-\frac{\log_2(n+1)}n\le C(d)\le C_n(d).
\]
\end{lemma}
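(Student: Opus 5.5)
The plan has two directions, and both reuse block constructions already in the paper.

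The upper direction is $C(d)\le C_n(d)$. I would show that $C_{kn}(d)\le C_n(d)$ for every integer $k\ge1$ and then pass to the limit $k\to\infty$ in \eqref{eq:capacity-limit}. Take any law on $kn$-bit words and split the input into $k$ consecutive blocks of length $n$. Give the receiver the $k$ separate block traces $Y^{(1)},\ldots,Y^{(k)}$ in place of their concatenation. This amounts to revealing the block boundaries, so by \cref{lem:side-information} or \cref{lem:data-processing} mutual information can only increase.

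The deletion masks of different blocks are independent. Hence, given the input, the block traces are conditionally independent and each depends only on its own block. The argument already displayed in \cref{prop:renewal-identity} (Step~4) then gives
\[
I(X;Y^{(1)},\ldots,Y^{(k)})\le\sum_{i=1}^k I(X^{(i)};Y^{(i)}),
\]
and each summand is at most $nC_n(d)$. Dividing by $kn$ and letting $k\to\infty$ along this subsequence gives $C(d)\le C_n(d)$, since the limit exists.

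The lower direction is $C(d)\ge C_n(d)-\log_2(n+1)/n$. Let $P$ attain $C_n(d)$ and send $k$ independent blocks drawn from $P$. The actual output is the concatenation $Y=\operatorname{concat}(Y^{(1)},\ldots,Y^{(k)})$. Let $G=(|Y^{(1)}|,\ldots,|Y^{(k)}|)$ be the list of block-trace lengths. This list has at most $(n+1)^k$ values, so $H(G)\le k\log_2(n+1)$. Together with $Y$, it determines the separate block traces. By \cref{lem:side-information},
\[
I(X;Y)\ge I(X;Y,G)-H(G)=I(X;Y^{(1)},\ldots,Y^{(k)})-H(G).
\]
Because the block input--output pairs are independent, the middle term equals $kI_P(X_1^n;Y_n)=knC_n(d)$. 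Dividing by $kn$ and taking $k\to\infty$ gives the claim, using $C_{kn}(d)\ge I(X;Y)/(kn)$ and the existence of the limit $C(d)$.

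The endpoint $d=1$ is trivial, because both sides vanish there.

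The step that needs the most care is the upper direction. There one must confirm that the boundary-revealing comparison uses only conditional independence of the block channels and makes no assumption about the input law, since the input blocks may be correlated. The Step~4 argument of \cref{prop:renewal-identity} already handles exactly this, so it can be invoked directly.
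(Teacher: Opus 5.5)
Your proposal is correct and follows the paper's proof essentially step for step. For the upper direction you split a length-$kn$ input into $k$ blocks, reveal the separate traces, and combine subadditivity of output entropy with additivity of conditional entropies to get $C_{kn}(d)\le C_n(d)$. For the lower direction you concatenate independent maximizing blocks and pay at most $k\log_2(n+1)$ for the trace-length vector via \cref{lem:side-information}. The only cosmetic difference is that you cite the two-block argument from Step~4 of \cref{prop:renewal-identity}, which extends to $k$ blocks by the same reasoning and does not depend on this lemma, whereas the paper repeats that argument inline.
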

\begin{proof}
Split a length-$kn$ input into $k$ blocks of length $n$, with separate traces $V_j$. Revealing these traces gives at least the information in their concatenation. Although the input blocks may be dependent, $H(V_1,\ldots,V_k)\le\sum_jH(V_j)$. Given the input, independent block deletions make conditional output entropies add. Subtracting them yields
\[
I(X_1^{kn};Y_{kn})\le I(X_1^{kn};V_1,\ldots,V_k)
\le\sum_{j=1}^k I(X_{(j-1)n+1}^{jn};V_j)
\le knC_n(d).
\]
Each block marginal is bounded by $nC_n(d)$. Maximizing over the length-$kn$ input and dividing by $kn$ gives $C_{kn}(d)\le C_n(d)$. Take $k\to\infty$ using \eqref{eq:capacity-limit}.

For the lower bound, concatenate $k$ independent blocks drawn from a maximizing length-$n$ distribution. Their separate mutual informations add to $knC_n(d)$. The vector of trace lengths identifies the block boundaries and has at most $(n+1)^k$ values, hence entropy at most $k\log_2(n+1)$. Removing it loses at most this entropy by \cref{lem:side-information}:
\[
I(X_1^{kn};Y_{kn})\ge knC_n(d)-k\log_2(n+1).
\]
The maximum defining $C_{kn}(d)$ is at least this source's value. Divide by $kn$ and take $k\to\infty$.
\end{proof}

For these independent blocks, the length vector actually has entropy $kH(N_n)$, where $N_n\sim\operatorname{Bin}(n,1-d)$. Using this value instead gives $C_n(d)-H(N_n)/n\le C(d)$, with the same upper bound \citep[Section~VIII]{FD10}.

%% file: sections/fair_input.tex
\section{Fair-input refinements near zero}\label{sec:fair-input}
This section improves the lower bound for independent fair input bits when the deletion probability $d$ is small. The refinements enter the reported pointwise enclosure and mean certified radius, although \cref{sec:family-simplification} proves the worst-case tolerance without them. We first lower-bound $H(D_1^n\mid X_1^n,Y_n)$ by conditioning on the deletion count in each constant input run. We then retain one-deletion terms to obtain an explicit estimate near $d=0$.

\subsection{An achievable rate from independent fair input bits}
Conditional on its length, the trace of independent fair input is a uniform binary word. Its entropy rate is therefore $1-d$ per input bit. Subtracting the full mask entropy $h_2(d)$ gives the baseline $(1-d)-h_2(d)$. The exact mutual-information identity adds $n^{-1}H(D_1^n\mid X_1^n,Y_n)$ to this calculation. We improve the baseline by lower-bounding that conditional entropy.

A run is a maximal constant input segment. If a run of length $\ell$ has $j$ deleted positions, each of the $\binom\ell j$ location sets produces the same surviving symbols. Conditional on $j$, these sets are equally likely and have entropy $\log\binom\ell j$. The input and all per-run deletion counts determine the full trace, even when complete runs vanish. Conditioning on those counts therefore leaves a conditional mask entropy that can be evaluated within each run and is no larger than $H(D_1^n\mid X_1^n,Y_n)$.

\begin{lemma}[Fair-input run-count lower bound]
\label{lem:iid-run-lower}
For every $0\le d\le1$,
\begin{equation}
C(d)\ge (1-d)-h_2(d)+\frac12\sum_{\ell\ge1}2^{-\ell}
\sum_{j=1}^{\ell-1}\binom\ell j d^j(1-d)^{\ell-j}\log\binom\ell j.
\label{eq:iid-run-full}
\end{equation}
Every finite subset of the summands is also a valid lower bound.
\end{lemma}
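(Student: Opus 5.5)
The plan is to specialize the mask identity of \cref{lem:mask-identity} to independent fair input bits and lower-bound each of its three terms per input bit. Fair input is stationary, and for every $n$ we have $I(X_1^n;Y_n)\le nC_n(d)$. Dividing by $n$ and using $C_n(d)\to C(d)$ from \eqref{eq:capacity-limit}, any lower estimate of $\liminf_n n^{-1}I(X_1^n;Y_n)$ is a capacity lower bound. The endpoints $d=0,1$ are exact or make every term vanish, so we may assume $0<d<1$.

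\textbf{Output entropy.} Conditional on $N_n$, the deleted positions are independent of the fair bits. The surviving bits are therefore independent fair bits, so $H(Y_n\mid N_n)=\E N_n=n(1-d)$, and hence $H(Y_n)\ge n(1-d)$. The mask term equals $nh_2(d)$ exactly.

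\textbf{Residual mask entropy.} Let $G$ be the vector of per-run deletion counts for the runs of $X_1^n$. As argued before the lemma, $(X_1^n,G)$ determines $Y_n$, so by \cref{lem:side-information}
\[
H(D_1^n\mid X_1^n,Y_n)\ge H(D_1^n\mid X_1^n,Y_n,G)=H(D_1^n\mid X_1^n,G).
\]
Given the input, the masks of different runs are independent. Given a run's count $j$, its deleted set is uniform over the $\binom\ell j$ subsets. The right side therefore equals $\E\sum_{\text{runs}}\log\binom{\ell}{j}$, where $j\sim\operatorname{Bin}(\ell,d)$ independently for each run. The terms $j=0$ and $j=\ell$ vanish, which is why the inner sum runs over $1\le j\le\ell-1$.

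\textbf{Counting runs.} It remains to count runs per input bit. For fair independent bits, a run of length exactly $\ell$ begins at a given interior position $i$ when $X_i\ne X_{i-1}$, followed by $\ell-1$ repeats and then a change. This event has probability $2^{-\ell-1}=\tfrac12 2^{-\ell}$. Summing over positions and discarding the at most two boundary-truncated runs, which contribute $O(\ell)$ entropy and nonnegative terms, gives a per-input-bit rate of at least
\[
\tfrac12\sum_\ell 2^{-\ell}\sum_j\binom\ell j d^j(1-d)^{\ell-j}\log\binom\ell j.
\]

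\textbf{Main obstacle.} The step needing care is the limit interchange. For each finite truncation in $\ell$, the count converges with an $O(1)$ boundary error, since complete runs of length $\ell$ in $[1,n]$ number $(n-\ell-1)2^{-\ell-1}+O(1)$ in expectation. The infinite series converges because $\log\binom\ell j\le\ell$ and $\sum_\ell \ell 2^{-\ell}<\infty$. Monotone convergence over the truncation level then gives the full sum. Because every summand is nonnegative, dropping any subset only lowers the bound, which yields the final sentence of the lemma. Combining the three estimates in \eqref{eq:alignment-identity} proves \eqref{eq:iid-run-full}.
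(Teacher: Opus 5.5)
Your proposal is correct and follows the paper's proof essentially step for step. Both arguments apply the mask identity with $H(Y_n)\ge n(1-d)$ for fair input and lower-bound $H(D_1^n\mid X_1^n,Y_n)$ by also conditioning on the per-run counts, which together with the input determine the trace. Both then use the run-start probability $2^{-\ell-1}$, truncate to finitely many run lengths, and pass monotonically to the full nonnegative series. The only cosmetic difference is that you discard $H(N_n)$ and the boundary runs as nonnegative terms, whereas the paper bounds them as $o(n)$; both treatments are valid.
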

\begin{proof}
\textbf{Step 1: evaluate the output entropy for this source.}
Take independent fair input bits $X_1^n$ and let $Y_n$ be their trace. Conditional on its length, the trace is a uniform binary word. The entropy chain rule therefore gives $H(Y_n)=n(1-d)+H(|Y_n|)$.

\textbf{Step 2: retain only within-run deletion ambiguity.}
The mask $D_1^n$ records which positions are deleted and is independent of the input, with entropy $nh_2(d)$. The conditional output entropy is smaller by $H(D_1^n\mid X_1^n,Y_n)$, the uncertainty in the mask after input and trace are known. Expanding $H(D_1^n,Y_n\mid X_1^n)$ in its two orders gives
\[
I(X_1^n;Y_n)=H(Y_n)-nh_2(d)+H(D_1^n\mid X_1^n,Y_n).
\]
Let $S$ contain the number of survivors in every input run. By \cref{lem:side-information}, $H(D_1^n\mid X_1^n,Y_n)\ge H(D_1^n\mid X_1^n,Y_n,S)$. Since $X_1^n$ and $S$ determine $Y_n$, the latter entropy equals $H(D_1^n\mid X_1^n,S)$. Conditional on the counts, the survivor locations are independent between runs and uniform within each run. Their entropies therefore add, with the per-run value calculated above.

\textbf{Step 3: count this contribution per transmitted bit.}
The conditional entropy is a sum over runs. Dividing by the number of input bits requires the expected number of length-$\ell$ runs per input bit, which we calculate next.

A specified run of length $\ell$ has $j$ deletions with probability $\binom\ell j d^j(1-d)^{\ell-j}$. The factor counts the possible masks, each with the stated product probability. The cases $j=0$ and $j=\ell$ contribute zero entropy, explaining the inner summation range. A length-$\ell$ run begins at a specified interior input position when the preceding bit changes, the next $\ell-1$ comparisons agree, and the following comparison changes. For fair independent bits this event has probability $2^{-\ell-1}$. That is the number of such runs per input bit in expectation, and it explains the outer coefficient.

First retain only finitely many run lengths. Their expected counts divided by $n$ converge to these coefficients. The two boundary runs have geometric length tails and bounded expected lengths, so their descriptions contribute zero after division by $n$. Also $H(|Y_n|)\le\log_2(n+1)$, which vanishes per input bit. The capacity limit in \eqref{eq:capacity-limit} now proves the inequality with the finite sum. Increasing the retained set preserves the lower bound because all its terms are nonnegative, and yields the displayed series.
\end{proof}

The numerical certificate needs a lower bound on a whole parameter interval, so its left endpoint cannot simply be substituted into the preceding expression. Fix $[a,b]\subset[0,1/2]$. It is sufficient to replace the baseline term and every retained positive summand by a value no larger than that term anywhere in the interval. The derivative of $(1-d)-h_2(d)$ is $-1-\log_2((1-d)/d)$, which is negative on the interior of this range. For $1\le j<\ell$, the logarithmic derivative of $d^j(1-d)^{\ell-j}$ is $(j-\ell d)/(d(1-d))$. Thus this weight first increases and then decreases, and its minimum on any closed interval is attained at an endpoint. Truncating at run length $N$ gives
\begin{align}
L_{a,b}^{(N)}={}&(1-b)-h_2(b)+\frac12\sum_{\ell=2}^{N}2^{-\ell}
\sum_{j=1}^{\ell-1}\binom\ell j\log\binom\ell j\notag\\
&\hspace{20mm}\cdot\min\{a^j(1-a)^{\ell-j},b^j(1-b)^{\ell-j}\}.
\label{eq:iid-cell-lower}
\end{align}
Each retained term is at most its value at every $d\in[a,b]$, so $L_{a,b}^{(N)}\le C(d)$ throughout the interval.

\subsection{An explicit polynomial remainder near zero}
Near $d=0$, we need an explicit lower expression to compare with the erasure upper bound $1-d$. Retaining only runs with one deleted position gives a linear term whose decrease with $d$ can be bounded by a quadratic term. The following calculation gives numerical bounds on both coefficients; it leaves no unspecified asymptotic remainder.

\textbf{Step 1: bound the one-deletion terms by a quadratic expression.}
Keeping only the terms with one deletion in \eqref{eq:iid-run-full} gives $d\ell(1-d)^{\ell-1}\log\ell$ for a run of length $\ell$. The elementary inequality $(1-d)^k\ge1-kd$ follows by induction: multiplying the bound for $k$ by $1-d$ gives $1-(k+1)d+kd^2\ge1-(k+1)d$. Applying it with $k=\ell-1$ yields
\[
C(d)\ge(1-d)-h_2(d)+dA-d^2B,
\]
where the two constants collect the linear and quadratic terms:
\[
A=\frac12\sum_{\ell\ge1}\ell2^{-\ell}\log\ell,
\qquad B=\frac12\sum_{\ell\ge1}\ell(\ell-1)2^{-\ell}\log\ell.
\]
\textbf{Step 2: bound the two coefficients in the required directions.}
Because $A$ is added and $B$ is subtracted, it suffices to prove a lower estimate for $A$ and an upper estimate for $B$. Both estimates below use finite calculations or explicit convergent sums.

All weights are nonnegative, so a finite partial sum bounds $A$ from below. Let $m_\ell$ be the integer determined by $2^{m_\ell}\le\ell^{256}<2^{m_\ell+1}$. Then $\log_2\ell\ge m_\ell/256$. Using the first twenty terms and a common denominator gives
\[
A\ge\sum_{\ell=1}^{20}\frac{\ell m_\ell}{512\,2^\ell}
=\frac{345626757}{268435456}>\frac{32}{25}.
\]
This is a finite integer calculation: compute the twenty $m_\ell$ by comparing integer powers, place the summands over denominator $512\,2^{20}$, and add their numerators. The numerator before cancellation is $691253514$.

For the other constant, $\log_2\ell\le(\ell+1)/2$. Equivalently $\ell^2\le2^{\ell+1}$; check $\ell=1,2,3$, then note that $((\ell+1)/\ell)^2<2$ for $\ell\ge3$ to continue by induction. Hence
\[
B\le\frac14\sum_{\ell\ge1}(\ell^3-\ell)2^{-\ell}
=\frac{26-2}{4}=6.
\]
The numbers $2$ and $26$ are obtained by applying the operation $x\frac{d}{dx}$ once and three times to $\sum_{\ell\ge0}x^\ell=(1-x)^{-1}$ and then setting $x=1/2$. The successive expressions are $x/(1-x)^2$, then $x(1+x)/(1-x)^3$, then $x(1+4x+x^2)/(1-x)^4$; each follows by differentiating the preceding expression and multiplying by $x$. Their first and third values at $x=1/2$ are $2$ and $26$.

\textbf{Step 3: put the binary-entropy term in the same form.}
The remaining term is $(1-d)-h_2(d)$. Its logarithmic singularity at zero will be retained explicitly as $d\log_2d$; only its linear remainder needs a numerical bound.

Two elementary logarithm bounds finish the estimate. Substitution in the integral for $\ln2$ gives $\ln2=2\int_0^{1/3}(1-t^2)^{-1}\,dt\ge2(1/3+1/81)>20/29$, because $(1-t^2)^{-1}\ge1+t^2$. Also $(1-d)\ln(1-d)\ge-d$: the difference plus $d$ vanishes at zero and has derivative $-\ln(1-d)\ge0$. Expanding $h_2(d)$ therefore gives
$(1-d)-h_2(d)\ge1+d\log_2d-(1+29/20)d$.
Combine this with $A\ge32/25$ and $B\le6$. The linear coefficient is $1+29/20-32/25=117/100$, so
\begin{equation}
C(d)\ge1+d\log_2d-\frac{117}{100}d-6d^2.
\label{eq:iid-quadratic}
\end{equation}
The endpoint convention is $d\log d=0$ at zero. Every remainder and constant in this bound is explicit.

%% file: sections/small_assembly.tex
\section{Assembly of the refined small-deletion intervals}\label{sec:small-assembly}
This section combines the run-based converse in \cref{sec:small-deletion} with the fair-input refinements in \cref{sec:fair-input}. Both ingredients have already been proved. The calculation preserves the original detailed enclosure near zero; the simplified worst-case proof uses the alternative cover of \cref{sec:family-simplification}.

\subsection{An upper anchor and a whole-interval lower bound}
The run-based checker proves an upper bound on capacity at one specified deletion probability. To cover an interval, we need a lower bound already valid throughout that interval and an upper bound that remains valid when deletions increase. We obtain the latter from channel degradation. This final step concerns where the proved bounds apply; it does not require another run-length approximation. Fix $[a,b]\subset[0,1/2]$. Let $L^-_{a,b}$ be a numerical lower endpoint for the exact fair-input expression in \eqref{eq:iid-cell-lower}: directed arithmetic places it at or below that expression. It already satisfies $L^-_{a,b}\le C(d)$ for every $d$ in that interval. At the left endpoint $a$, choose the length coefficient $\beta$ in the positive-row test so that $(1-\omega)B_0+\beta=L^-_{a,b}+\varepsilon$. The number $L^-_{a,b}+\varepsilon$ is the proposed upper target. Choosing $\beta$ to express that target does not prove it. The converse follows only after every finite row and the bound on all omitted run lengths pass their required inequalities, giving $C(a)\le L^-_{a,b}+\varepsilon$.

For $d\ge a$, the $d$-deletion channel can be produced by applying additional independent deletions to the $a$-deletion output. The extra deletion probability is $(d-a)/(1-a)$, since the two survival probabilities multiply to $1-d$. Data processing, \cref{lem:data-processing}, implies $C(d)\le C(a)$. The two bounds therefore give
\[
\boxed{L^-_{a,b}\le C(d)\le L^-_{a,b}+\varepsilon
\quad\text{for every }d\in[a,b].}
\]
The upper bound uses monotonicity from a single point, whereas the lower bound comes from an expression already proved valid throughout $[a,b]$. These are separate reasons for the two inequality directions in the enclosure. This is a valid whole-interval construction, not the definition of the published table. The published $L(d)$ uses right-endpoint lower anchors transported left by \cref{lem:transport}, and $U(d)$ uses left-endpoint upper anchors transported right, as specified in \eqref{eq:cell-functions}.

The first scalar construction records these checks through the function $F(d)=1+d\log_2d-(13/25)d+2d^2$. At $d_i=i/40000$, for $68\le i\le399$, it verifies $C(d_i)\le F(d_{i+1})$. The function decreases on the relevant range: $F'(d)=\log_2d+1/\ln2-13/25+4d$, and $d\le1/100<2^{-6}$ together with $1/\ln2<29/20$ makes this derivative negative. For $d_i\le d\le d_{i+1}$, degradation gives $C(d)\le C(d_i)\le F(d_{i+1})\le F(d)$. This explains its whole-interval conclusion.

The mixture parameter in that construction is $\omega_i=(19/100)\sqrt{d_i}$ and its finite run cutoff is $L_i=\lceil20/d_i\rceil$. The square root is evaluated by an enclosing interval; it is not silently treated as a rational witness parameter. The numerical interval contains the exact square root, and every row and remainder inequality is evaluated so that its endpoint bound holds for all values in that interval. In particular it holds at the intended value of $\omega_i$.

\textbf{The interval that includes the noiseless channel.}
The scalar checks start at a positive deletion probability. To include zero, we use the explicit lower bound \eqref{eq:iid-quadratic} and compare it directly with the erasure upper bound. It is sufficient to show that their separation increases up to the stated endpoint and is small there.

This gives a hand-checkable alternative enclosure near zero; it does not replace the published cells. For this comparison, subtract the lower bound \eqref{eq:iid-quadratic} from the erasure upper bound $C(d)\le1-d$ in \cref{lem:erasure}. Their difference is
$W(d)=-d\log_2d+(17/100)d+6d^2$, with $W(0)=0$.
Put $t=17/10000$. For $0<d\le t<2^{-9}$, differentiating gives
\[
W'(d)=-\log_2d-\frac1{\ln2}+\frac{17}{100}+12d
>9-\frac{29}{20}+\frac{17}{100}>0.
\]
Thus the width is largest at $t$. The exact integer comparison $10000^{32}\le17^{32}2^{295}$ implies $-\log_2t\le295/32$ after taking logarithms and dividing by $32$. Substitution yields the entirely rational enclosure-width bound
\[
W(d)\le\frac{17}{10000}\left(\frac{17}{100}+\frac{295}{32}\right)
+6\left(\frac{17}{10000}\right)^2
=\frac{3195643}{200000000}.
\]
Its midpoint error is at most half this width, namely $3195643/400000000<0.008$. The other small-deletion intervals use the same separation of obligations: the row and remainder inequalities prove the upper point values, the fair-input calculation proves the interval lower values, and the exact endpoints specify which real deletion probabilities are covered.

%% file: sections/equivalences.tex
\section{Duality identities and notation}\label{app:equivalent}
The following finite-channel identities place the converse constructions in their standard duality framework.

\subsection{Output codes and information radius}

Consider a finite channel with conditional probabilities $W(y\mid x)$, an input distribution $P_X(x)$, and its output law $P_Y(y)=\sum_xP_X(x)W(y\mid x)$. Let $Q(y)>0$ be a candidate output distribution. Relative entropy measures the expected extra description length incurred by using $Q$ in place of the true distribution. We now separate that description length into the information carried by the channel and an error due to the choice of $Q$.

For every summand of positive probability, split the logarithm by inserting the true output probability:
\[
\log_2\frac{W(y\mid x)}{Q(y)}
 =\log_2\frac{W(y\mid x)}{P_Y(y)}
  +\log_2\frac{P_Y(y)}{Q(y)}.
\]
Multiplying by $P_X(x)W(y\mid x)$ and summing gives
\begin{align*}
\E_X D\bigl(W(\cdot\mid X)\Vert Q\bigr)
 &=\sum_{x,y}P_X(x)W(y\mid x)
       \log_2\frac{W(y\mid x)}{P_Y(y)}\\
 &\quad+\sum_y\left(\sum_xP_X(x)W(y\mid x)\right)
       \log_2\frac{P_Y(y)}{Q(y)}\\
 &=I(X;Y)+D(P_Y\Vert Q).
\end{align*}
The first sum is the definition of mutual information. In the second, the inner sum is $P_Y(y)$, leaving the definition of output relative entropy. Zero-probability summands contribute zero and cause no division issue. Rearranging, using nonnegativity of relative entropy, and bounding an average by its largest term now yield
\[
I(X;Y)
 =\E_X D\bigl(W(\cdot\mid X)\Vert Q\bigr)-D(P_Y\Vert Q)
 \le\max_xD\bigl(W(\cdot\mid x)\Vert Q\bigr).
\]
Thus any fixed candidate $Q$ provides an upper bound that holds for every input distribution. This is why an output code can support a converse even when it was chosen without knowing the optimizing input law.

Minimizing the right side over $Q$ gives the finite-channel information-radius dual. For completeness, its reverse inequality follows from an optimizing input law $P_X^\star$, which exists by continuity on the finite probability simplex. Write $Q^\star$ for its output law and $C_W$ for its mutual information. Restrict the output alphabet to reachable symbols. Then $Q^\star$ is positive on this alphabet: adding a small input mass that reaches a missing output symbol increases output entropy by a term proportional to $t\log(1/t)$, while the remaining entropy terms change by at most order $t$. A missing reachable symbol would therefore contradict optimality.

For any input letter $x$, mix $P_X^\star$ with a mass $t$ on $x$. The output law becomes $Q_t=(1-t)Q^\star+tW(\cdot\mid x)$. Mutual information is output entropy minus average row entropy. Differentiating this expression at $t=0$ gives
\begin{align*}
\left.\frac{d}{dt}I(P_X^\star(1-t)+t\delta_x;W)\right|_{t=0}
 &=-\sum_y\bigl(W(y\mid x)-Q^\star(y)\bigr)\log_2Q^\star(y)\\
 &\quad-H\bigl(W(\cdot\mid x)\bigr)
       +\sum_zP_X^\star(z)H\bigl(W(\cdot\mid z)\bigr)\\
 &=D\bigl(W(\cdot\mid x)\Vert Q^\star\bigr)-C_W.
\end{align*}
The constant derivative of $-p\log_2p$ disappears because $\sum_y(W(y\mid x)-Q^\star(y))=0$. The last line follows by collecting the terms with $W(\cdot\mid x)$ into a divergence and the remaining terms into $-C_W$. Optimality makes this one-sided derivative nonpositive, so every row divergence to $Q^\star$ is at most $C_W$. Combining this with the preceding universal upper bound proves the dual characterization.

The predictor used in this paper assigns probabilities sequentially to a received word. It serves the same admissible-output-code role, while the survivor argument handles the conditional entropy of the deletion channel's variable-length output. Universal-coding and redundancy-capacity results provide the broader interpretation \citep{MF98,Ryabko79}; the elementary entropy identities are standard \citep[Sections~2.3--2.6]{CT06}.

\subsection{Finite-state potentials and average-reward duality}

A potential transfers contributions between neighboring states without changing their long-run average. To see the exact cancellation, let $\eta(w,a)$ be nonnegative state-action frequencies with total mass one. Let $P_a(w,w')$ be the probability of moving from state $w$ to state $w'$ under action $a$, and let $\rho(w)=\sum_a\eta(w,a)$ be the state frequency. An exact stationary occupation measure satisfies flow balance:
\[
\rho(w')=\sum_{w,a}\eta(w,a)P_a(w,w')
\quad\hbox{for every state }w'.
\]
The kernels are fixed independently of the unknown source. Their flow identity must be established for that source, as in \cref{lem:potential}; it is not automatic for an arbitrary proposed transition table. The right side counts arrivals to $w'$ and the left side counts visits to that state. Stationarity requires the two frequencies to agree.

For a potential $V$, its expected next value is $(P_aV)(w)=\sum_{w'}P_a(w,w')V(w')$. Expanding this expectation and applying flow balance gives
\begin{align*}
\sum_{w,a}\eta(w,a)(P_aV)(w)
 &=\sum_{w'}V(w')\sum_{w,a}\eta(w,a)P_a(w,w')\\
 &=\sum_{w'}\rho(w')V(w')
 =\sum_{w,a}\eta(w,a)V(w).
\end{align*}
All sums are finite, so rearranging their order is valid. Subtracting the last expression from the first proves
\[
\sum_{w,a}\eta(w,a)\bigl[(P_aV)(w)-V(w)\bigr]=0.
\]
If each row satisfies $r(w,a)+(P_aV)(w)-V(w)\le u$, multiply by $\eta(w,a)$ and sum. The potential terms cancel, and $\sum_{w,a}\eta(w,a)=1$, leaving the average-reward bound $\sum_{w,a}\eta(w,a)r(w,a)\le u$.

This is weak duality between stationary flow constraints and potential inequalities. The capacity proof needs the row inequalities: they control every occupation measure induced by an admissible physical source. Approximate flow balance would leave a residual potential term, and a separate relaxation's objective would not by itself be an achievable channel rate. \citet{HSPKS21} develop the direct connection between output test distributions and dynamic programming for finite-state-channel capacity. General average-reward optimization is treated in \citet{Puterman94}.

\subsection{Notation register}
\begin{longtable}{@{}p{3cm}p{9.3cm}@{}}\toprule Symbol & Meaning\\\midrule\endfirsthead
\toprule Symbol & Meaning\\\midrule\endhead
$d,1-d$ & Deletion and survival probabilities\\
$X_1^n,Y_n,N_n$ & Raw input, trace, and trace length\\
$C_n,C$ & Normalized finite-block information capacity and operational limit\\
$L,U,\widehat C$ & Lower, upper, and midpoint functions from the exact cell certificate\\
$\eps_\star,\Delta_\star$ & Certified uniform midpoint-error bound and twice that bound\\
$F,Z,G_i$ & Infinite raw future, survivor process, and independent geometric spacings\\
$Q,\ell_Q$ & Global normalized output predictor and its code loss\\
$\Psi_Q$ & Conditional relative-entropy perspective on nonnegative measures\\
$R,m,b$ & Raw-window depth, number of desired survivors, and retained-mask cutoff in the converse; the tail prefix has length $m-b$\\
$w,u,a,c$ & Raw window, survivor tail prefix, prepended bit, and outgoing raw bit\\
$V,t,K$ & Finite potential, entropy weight, and normalized action or selector code; its role is explicit in each row\\
$B_a,T_a,P_w$ & Prepend/drop operator, deletion mixture, and composed window operator\\
$T,\pi,M$ & Exact source transition, stationary law, and survivor transition\\
$\Phi,F_j,E_j$ & Homogeneous entropy, embedding entropy, and nonnegative incremental alignment cell\\
$H,h,j,k$ & Maximum alignment raw depth, raw-depth index, survivor count, and rectangle survivor cutoff\\
$B_m$ & Birch conditional-entropy lower bound\\
$A_i,B_i$ & Normalized upper and lower coefficients on continuum cell $i$\\
$P_\ell,\overline L,D_0$ & Renewal run-length law, its mean, and probability a complete run vanishes\\
$\lambda,\nu$ & Poisson-repeat intensity and output-run density in the renewal section; transport uses its own explicitly defined retention $\lambda$\\\bottomrule
\end{longtable}
Local symbols are introduced within their certificate families and their roles are recalled when needed. The index $h$ in alignment cells denotes a raw input depth.